\title{Learning-Augmented Online Bipartite Fractional Matching}
\author{%
  Davin Choo\thanks{Part of work done while the author was affiliated with the National University of Singapore, Singapore.}$\hspace{5pt}$\footnotemark[4] \\
  Harvard John A. Paulson School Of Engineering And Applied Sciences \\
  Harvard University \\
  Boston, Massachusetts, USA \\
  \texttt{davinchoo@seas.harvard.edu} \\
  \And
  Billy Jin\thanks{Work done while the author was at the University of Chicago Booth School of Business, USA. }$\hspace{5pt}$\footnotemark[4] \\
  Daniels School of Business \\
  Purdue University\\
  West Lafayette, Indiana, USA \\
  \texttt{jin608@purdue.edu} \\
  \And
  Yongho Shin\thanks{Part of work done while the author was affiliated with Yonsei University, South Korea.}$\hspace{5pt}$\thanks{Equal contribution.} \\
  Institute of Computer Science \\
  University of Wrocław \\
  Wrocław, Poland  \\
  \texttt{yongho@cs.uni.wroc.pl} \\
}
\definecolor{darkgreen}{rgb}{0,0.5,0}
\theoremstyle{plain}
\newtheorem{theorem}{Theorem}
\newtheorem{proposition}[theorem]{Proposition}
\newtheorem{lemma}[theorem]{Lemma}
\theoremstyle{definition}
\newtheorem{definition}[theorem]{Definition}
\newtheorem{observation}[theorem]{Observation}
\theoremstyle{remark}
\crefname{observation}{Observation}{Observations}
\Crefname{observation}{Observation}{Observations}
\newcommand{\cC}{\mathcal{C}}
\newcommand{\cG}{\mathcal{G}}
\newcommand{\cM}{\mathcal{M}}
\newcommand{\cR}{\mathcal{R}}
\newcommand{\E}{\mathbb{E}}
\newcommand{\R}{\mathbb{R}}
\newcommand{\wh}{\widehat}
\newcommand{\Balance}{\textsc{Balance}}
\newcommand{\Coinflip}{\textsc{CoinFlip}}
\newcommand{\Waterfill}{\textsc{Waterfilling}}
\newcommand{\LearnAugBal}{\textsc{LearningAugmentedBalance}}
\newcommand{\LAB}{\textsc{LAB}}
\newcommand{\PushAndWater}{\textsc{PushAndWaterfill}}
\newcommand{\PAW}{\textsc{PAW}}
\newcommand{\Aup}[1]{A^{(#1)}}
\newcommand{\Xup}[1]{X^{(#1)}}
\newcommand{\Z}{\mathbb{Z}}
\newcommand{\region}[1]{\mathcal{D}_\mathsf{#1}}
\newcommand{\trajcost}[1]{\mathsf{cost}(#1)}
\newcommand{\robadv}{\cR}
\newcommand{\conadv}{\cC}
\newcommand{\tgtalg}{\cM}
\newcommand{\lev}[1]{d^{(#1)}}
\newcommand{\modalg}{\overline{\cM}}
\newcommand{\modlev}[1]{\overline{d}^{(#1)}}
\newcommand{\modx}{\overline{x}}
 \newcommand{\modavg}{\overline{d}}
\newcommand{\tstar}{t^{\star}}
\newcommand{\pat}[1]{p^{(#1)}}
\newcommand{\qat}[1]{q^{(#1)}}
\newcommand{\tsat}{t_s}
\newcommand{\xbar}{\overline{x}}
\newcommand{\dbar}{\overline{d}}
\newcommand{\roblev}[1]{\ell^{(#1)}}
\newcommand{\ALG}{\mathsf{ALG}}
\newcommand{\OPT}{\mathsf{OPT}}
\newcommand{\ADVICE}{\mathsf{ADVICE}}
\newcommand{\DUAL}{\mathsf{DUAL}}
\newcommand{\ALGF}{\mathsf{ALG}_{\mathsf{frac}}}
\newcommand{\ALGI}{\mathsf{ALG}_{\mathsf{int}}}
\newcommand{\I}{\mathbb{I}}
\newcommand{\Var}{\mathrm{Var}}
\begin{document}

\maketitle

\begin{abstract}
Online bipartite matching is a fundamental problem in online optimization, extensively studied both in its integral and fractional forms due to its theoretical significance and practical applications, such as online advertising and resource allocation.
Motivated by recent progress in learning-augmented algorithms, we study online bipartite fractional matching when the algorithm is given advice in the form of a suggested matching in each iteration.
We develop algorithms for both the vertex-weighted and unweighted variants that provably dominate the na\"{i}ve ``coin flip'' strategy of randomly choosing between the advice-following and advice-free algorithms.
Moreover, our algorithm for the vertex-weighted setting extends to the AdWords problem under the small bids assumption, yielding a significant improvement over the seminal work of Mahdian, Nazerzadeh, and Saberi (EC 2007, TALG 2012).
Complementing our positive results, we establish a hardness bound on the robustness-consistency tradeoff that is attainable by any algorithm.
We empirically validate our algorithms through experiments on synthetic and real-world data.
\end{abstract}

\section{Introduction}
\label{sec:intro}

Online bipartite matching is a fundamental problem in online optimization with significant applications in areas such as online advertising \cite{mehta2007adwords, feldman2009onlinead}, resource allocation \cite{devanur2019near}, and ride-sharing platforms \cite{dickerson2021allocation,feng2024two}. 
In its classical formulation \cite{karp1990optimal, aggarwal2011online}, the input is a bipartite graph where one side of (possibly weighted) \emph{offline} vertices is known in advance, while the other side of \emph{online} vertices arrives sequentially one at a time.
When an online vertex $v$ arrives, its incident edges are revealed, and the algorithm irrevocably decides whether to match $v$ and, if so, to which currently unmatched neighbor.
The objective is to maximize the total weight of the matched offline vertices.
Algorithms for online bipartite matching are often evaluated by their \emph{competitive ratio}: An algorithm is \emph{$\rho$-competitive} if it always outputs a matching whose (expected) total weight is at least $\rho$ times the weight of the best matching in hindsight.
In a seminal paper, \cite{karp1990optimal} proposed the \textsc{Ranking} algorithm and showed it is $(1-\nicefrac{1}{e})$-competitive for the unweighted setting.
This competitive ratio is best-possible, and was later extended to the vertex-weighted case by \cite{aggarwal2011online}.

Online bipartite matching has also been studied in the fractional setting, where edges can be fractionally chosen, provided that the total fractional value on the edges incident to any vertex does not exceed one \cite{wang2015two, huang2019tight, huang2020fully2, hosseini2024class}.
Fractional matching is important both theoretically and practically.
It naturally models settings where  online arrivals are divisible or offline vertices have large capacities \cite{kalyanasundaram2000optimal, mehta2007adwords, buchbinder2007online, feldman2009onlinead, mahdian2012online, devanur2016whole, feng2024batching}, and it forms the basis for designing integral algorithms using rounding techniques \cite{feldman2016online, buchbinder2023lossless, naor2025online}.
For fractional vertex-weighted online bipartite matching, the \textsc{Balance} algorithm of \cite{buchbinder2007online} gets a competitive ratio of $(1-\nicefrac{1}{e})$, which is best-possible and matches the ratio in the integral case. 

The main challenge in online bipartite matching is that irrevocable decisions must be made without knowledge of future arrivals.
Uncertainty in the arrival sequence is typically modeled either \emph{adversarially} or \emph{stochastically}.
The adversarial model assumes no structure and measures worst-case performance, but can be overly pessimistic.
On the other hand, the stochastic model assumes arrivals are drawn from a known distribution \cite{feldman2009online}, but such distributions are often estimated and may be inaccurate.
These models thus represent two extremes, each with practical limitations.
A middle ground is offered by \emph{algorithms with predictions}, or \emph{learning-augmented algorithms} \cite{mahdian2012online, lykouris2021competitive}, which incorporates advice -- derived from data, forecasts, or experts -- of unknown quality.
The performance is typically measured in terms of its \emph{robustness} (guaranteed performance regardless of advice quality) and \emph{consistency} (performance when advice is accurate) \cite{lykouris2021competitive, kumar2018improving}.\footnote{A third property, \emph{smoothness}, requires graceful degradation with advice quality \cite{elenter2024overcoming}. See \cref{sec:prelim}.} 
In online bipartite matching, an algorithm is $r$-robust if its competitive ratio is at least $r$, and $c$-consistent if it achieves at least a $c$-fraction of the total weight from following the advice (see \cref{def:rc}).
A natural baseline is the \Coinflip{} algorithm, which randomly chooses between robustness- and consistency-optimal strategies.
For matching, its tradeoff curve is the line segment between $(1 - \nicefrac{1}{e}, 1 - \nicefrac{1}{e})$ and $(0,1)$ in the vertex-weighted case, or $(\nicefrac{1}{2}, 1)$ in the unweighted case \cite{jin2022online}.

This paper investigates the robustness-consistency tradeoff of online bipartite matching under the learning-augmented framework, building on prior work including \cite{mahdian2007allocating, mahdian2012online, aamand2022optimal, jin2022online, spaeh2023online, choo2024online}.
Particularly relevant are the works of Mahdian et al. \cite{mahdian2007allocating,mahdian2012online} and Spaeh and Ene \cite{spaeh2023online}.
Mahdian et al.\ studied the AdWords problem (introduced in \cite{mehta2007adwords}), with advice in the form of a recommendation assigning each online impression to a specific offline advertiser.
They proposed a learning-augmented algorithm under the \emph{small bids} assumption that outperforms the na\"{i}ve \Coinflip{} strategy, but only over part of the robustness range.
Meanwhile, \cite{spaeh2023online} generalized this result to Display Ads and the generalized assignment problem \cite{feldman2009onlinead}.
However, as shown in \cref{fig:intro:fig}, neither of these algorithms dominate \Coinflip{} across the full robustness spectrum.
This raises a natural question: \emph{Does there exist a learning-augmented algorithm for online bipartite matching that dominates \Coinflip{} across the entire range of robustness?}

\begin{figure}[htb]
\centering
\begin{subfigure}[t]{0.45\textwidth}
    \centering
    \includegraphics[width=\textwidth] {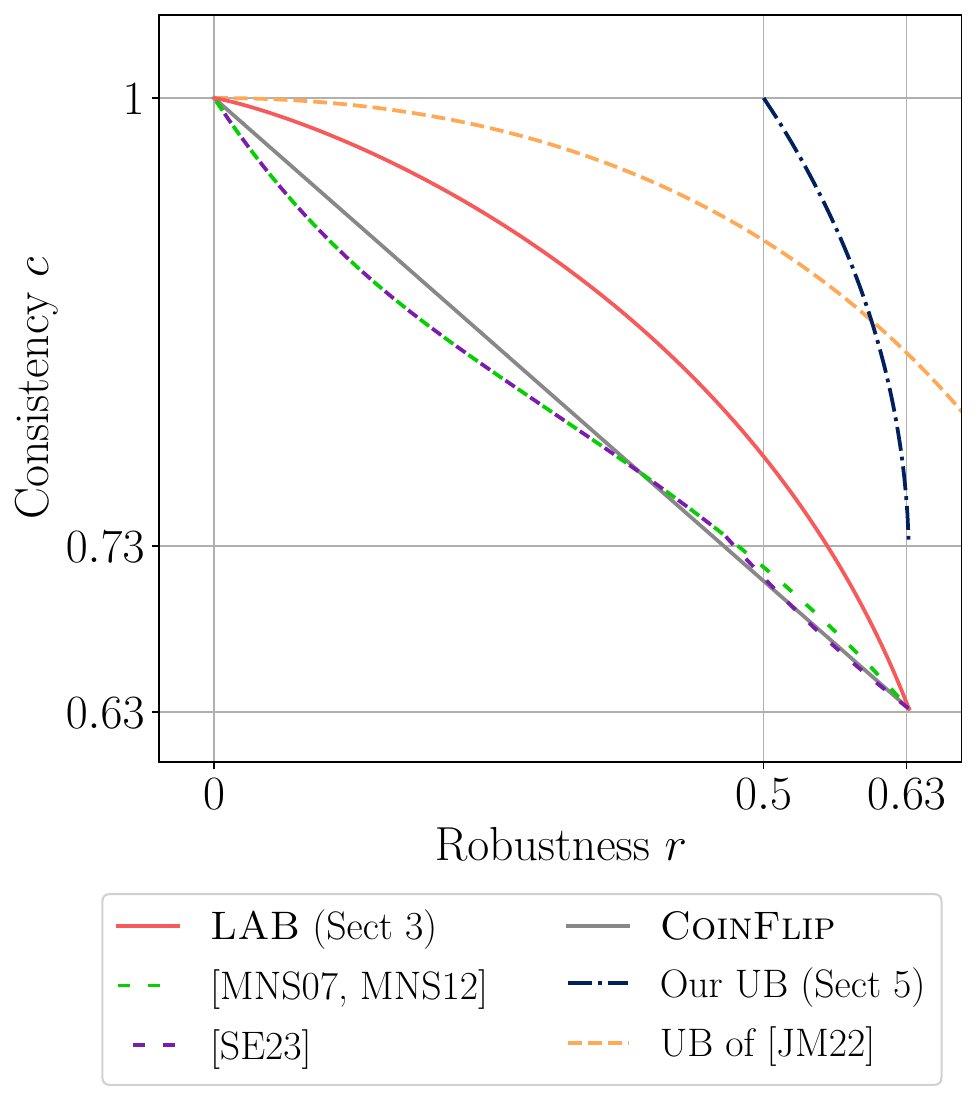}
    \caption{Vertex-weighted setting and AdWords under small bids assumption}
\end{subfigure}
\hfill
\begin{subfigure}[t]{0.45\textwidth}
    \centering
    \includegraphics[width=\textwidth]{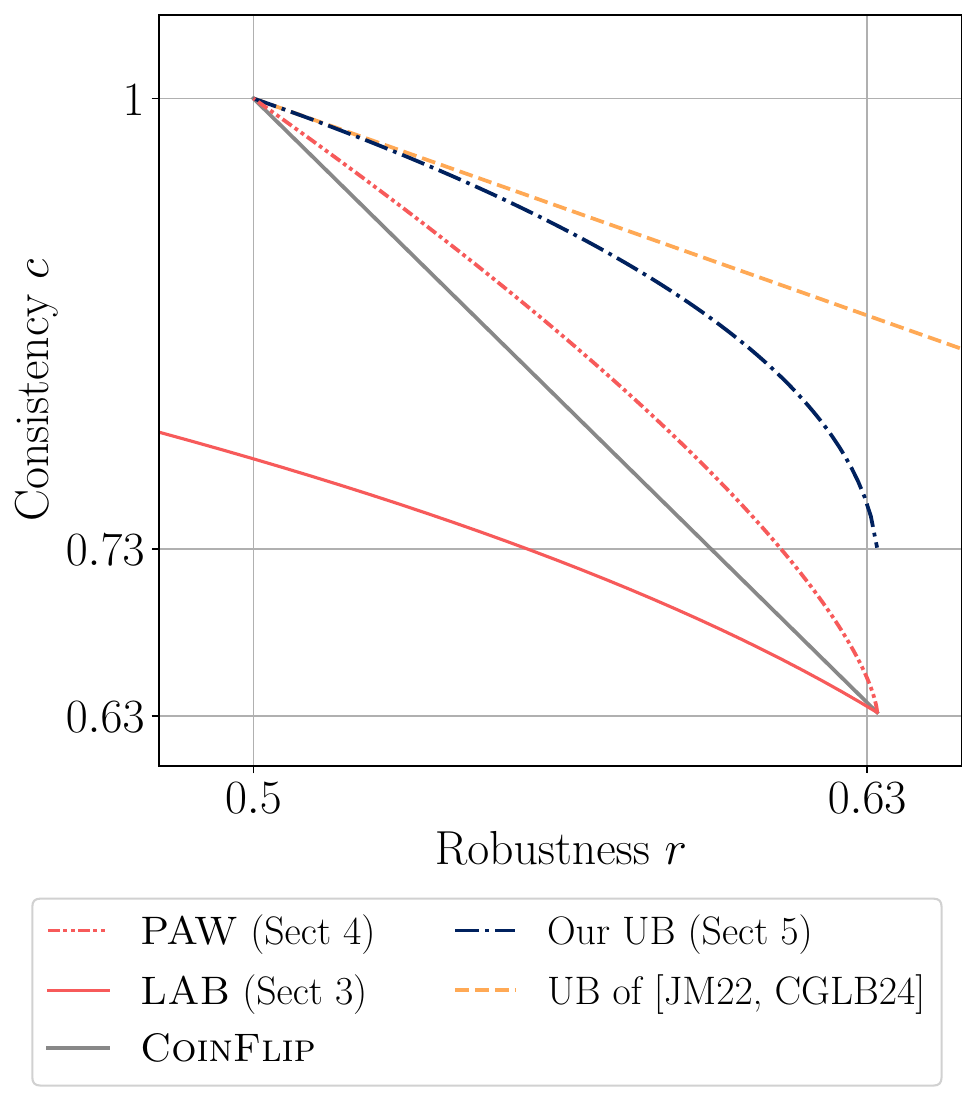}
    \caption{Unweighted setting}
\end{subfigure}
\caption{Robustness-consistency tradeoffs of previous works and our results.}
\label{fig:intro:fig}
\end{figure}

\subsection{Our contributions}

We answer the above question affirmatively by presenting learning-augmented algorithms for both vertex-weighted and unweighted online bipartite fractional matching whose robustness-consistency tradeoffs Pareto-dominate that of \Coinflip{} across the \emph{entire} range of robustness (see \cref{fig:intro:fig}).

Motivated by \cite{mahdian2007allocating, mahdian2012online, spaeh2023online}, we take the advice to be a feasible fractional matching that is revealed in an online fashion: upon arrival of each online vertex $v$, the algorithm is given as advice fractional matching values for each neighboring edge of~$v$.
Moreover, as in \cite{mahdian2007allocating, mahdian2012online, spaeh2023online}, our algorithms are parameterized by a tradeoff parameter $\lambda \in [0, 1]$ that represents how closely we follow the advice.
At the extremes, our algorithms blindly follow the advice when $\lambda = 1$ and revert to \textsc{Balance} when $\lambda = 0$.

For the vertex-weighted setting, we present an algorithm \LearnAugBal{} (\LAB{}) with the following guarantees:
\begin{restatable}{theorem}{vwalgmain}
\label{thm:vertex_weighted}
For any tradeoff parameter $\lambda \in [0,1]$, \LearnAugBal{} is an $r(\lambda)$-robust and $c(\lambda)$-consistent algorithm for vertex-weighted online bipartite fractional matching, where
\begin{align*}
r(\lambda) 
:= 1 - e^{\lambda-1} -\left( e^{\lambda-1} -\lambda \right) \ln (1-\lambda e^{1 - \lambda} ) - \lambda(1-\lambda)
\quad \text{and} \quad
c(\lambda)
:= 1 + \lambda - e^{\lambda - 1}.
\end{align*}
\end{restatable}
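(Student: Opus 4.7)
The plan is to analyze \LAB{} using a water-filling and primal-dual framework, generalizing the classical analysis of Balance by \cite{buchbinder2007online}. Recall that fractional Balance maintains a load $d^{(u)} \in [0,1]$ for each offline vertex $u$ and, upon arrival of an online vertex $v$, water-fills the loads of $v$'s neighbors using the exponential rate function $g(d) = e^{d-1}$. I expect \LAB{} extends this by first devoting a $\lambda$-fraction of each incoming vertex to mimic the advice (assigning $\lambda\, a_{uv}$ to each neighbor $u$, capped by remaining capacity), and then spending the remaining $(1-\lambda)$-fraction via Balance-style water-filling on modified levels that fold in the advice-induced load. Establishing this as a well-defined feasible fractional algorithm (in particular, ensuring no vertex is overfilled when advice and Balance compete for capacity) is the first step.

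For the consistency bound $c(\lambda) = \lambda + (1 - e^{\lambda - 1})$, I would decompose the algorithm's value along the two phases. The advice-mimicking phase contributes at least $\lambda \cdot w(\ADVICE)$ directly, since $\ADVICE$ is a feasible matching that we follow with weight $\lambda$. For the water-filling phase, I would argue via a per-vertex potential argument that the Balance-style top-up gains an additional $(1 - e^{\lambda-1}) \cdot w(\ADVICE)$; this is the truncated analog of the classical $1 - 1/e$ analysis applied over the load interval $[\lambda, 1]$ instead of $[0,1]$, which follows because $\int_\lambda^1 e^{d-1}\, dd = 1 - e^{\lambda-1}$. Summing the two contributions yields $c(\lambda)$.

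The robustness bound is where the main difficulty lies. I would use a primal-dual argument, assigning each offline vertex $u$ a dual $\alpha_u$ as a non-decreasing function of its final load $d^{(u)}$, and each online vertex $v$ a dual $\beta_v$ capturing $v$'s water contribution during its arrival. The goal is to verify (i) approximate dual feasibility $\alpha_u + \beta_v \ge r(\lambda) \cdot w(u)$ for every edge $(u,v)$ in $\OPT$, and (ii) $\sum_u \alpha_u + \sum_v \beta_v \le w(\ALG)$. The precise form of $r(\lambda)$ hints at a case split at the threshold load $\lambda e^{1-\lambda}$, which corresponds to the critical advice mass at which the advice phase alone can saturate a vertex before Balance gets a chance to contribute; integrating the resulting differential relations in each regime should produce the logarithmic term $(e^{\lambda-1} - \lambda)\ln(1 - \lambda e^{1-\lambda})$ and the $\lambda(1-\lambda)$ correction. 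The hardest step will be designing the dual recipe so that the worst-case edge in $\OPT$ achieves equality, especially under adversarial advice that falsely saturates offline vertices and thereby deprives Balance of usable capacity; I anticipate this reduces to a one-parameter family of ODEs whose boundary conditions track the partition between advice-saturated and Balance-saturated regions, and whose explicit integration recovers the stated $r(\lambda)$.
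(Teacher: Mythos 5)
Your proposal analyzes a different algorithm from the one the theorem is about. \LAB{} is not a two-phase ``push a $\lambda$-fraction of the advice, then water-fill the rest'' procedure (that design is essentially what the paper's \emph{other} algorithm, \PAW{}, does in the unweighted setting, and it provably yields different tradeoff curves). \LAB{} is a single-phase Balance-type algorithm whose only modification is an \emph{advice-dependent penalty function} $f(A,X)$, built from $f_0(z)=\min\{e^{z+\lambda-1},1\}$ and an $f_1$ defined via the Lambert $W$ function; at every instant it pushes flow to the neighbor maximizing $w_u(1-f(A_u,X_u))$. Because your algorithm is not \LAB{}, your argument cannot establish the stated theorem, and your consistency ``decomposition'' does not survive scrutiny even for your own algorithm: crediting the advice phase with $\lambda\cdot\ADVICE$ \emph{and} the water-filling phase with $(1-e^{\lambda-1})\cdot\ADVICE$ double-counts capacity at offline vertices that both phases fill, and the identity $\int_\lambda^1 e^{d-1}\,dd = 1-e^{\lambda-1}$ is a numerical coincidence with $c(\lambda)$, not a proof. (Indeed, the paper's analysis of the genuinely two-phase \PAW{} gives consistency $1-(1-\lambda)e^{\lambda-1}$, not $1+\lambda-e^{\lambda-1}$.)

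Your robustness sketch is closer in spirit to the actual proof: the paper does use online primal-dual, the threshold $\lambda e^{1-\lambda}$ does mark a regime change (it is where $f_1$ switches from $\frac{e^{\lambda-1}-\lambda}{1-z}$ to the Lambert-$W$ branch), and the logarithmic and $\lambda(1-\lambda)$ terms do arise from integrating $f_1$ over the two regimes. But the concrete dual construction ($\alpha_u$ accumulating $x_{u,v}\,w_u f(A_u,X_u)$ and $\beta_v=\max_u w_u(1-f(A_u,X_u))$), the reduction of robustness to $\int_0^X f(A,z)\,dz + 1 - f(A,X)\ge r(\lambda)$ and of consistency to a per-offline-vertex inequality, and the verification of these bounds are all missing. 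So is the entire second half of the paper's proof: the guarantees are first established for \emph{integral} advice (where $f$ collapses to $f_0$ or $f_1$) and then extended to \emph{fractional} advice by a nontrivial trajectory-reduction argument showing that the worst case for consistency is attained by ``irreducible'' advice/allocation trajectories. Without the correct algorithm definition and these steps, the proposal does not constitute a proof of the theorem.
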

This algorithm is based on \Balance{} where the penalty function is modified to be \emph{advice-dependent}.
To analyze this algorithm, we adopt the standard primal-dual analysis of online bipartite matching and first prove its performance when the advice is integral.
We then prove that the robustness and consistency are minimized when the advice is integral, yielding the same guarantees for the general fractional advice case.

We further show that \LAB{} extends to the AdWords problem under the small bids assumption, yielding a significant improvement over \cite{mahdian2007allocating, mahdian2012online}.

\begin{restatable}{theorem}{adwordsmain}
\label{thm:adw:main}
Consider the small bids assumption where the maximum bid-to-budget ratio is bounded by some sufficiently small $\varepsilon > 0$.
For any tradeoff parameter $\lambda \in [0, 1]$, there exists an $r(\lambda) \cdot (1 - 3 \sqrt{\varepsilon \ln (1/\varepsilon)} )$-robust and 
$c(\lambda) \cdot (1 - 3 \sqrt{\varepsilon \ln (1/\varepsilon)})$-consistent algorithm for AdWords with advice, where $r(\lambda)$ and $c(\lambda)$ are the same as in \cref{thm:vertex_weighted}.
\end{restatable}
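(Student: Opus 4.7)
The plan is a standard small-bids reduction in the spirit of Mehta--Saberi--Vazirani--Vazirani and Mahdian--Nazerzadeh--Saberi: we adapt \LAB{} to AdWords by viewing, for each advertiser $u$ with budget $B_u$, the fraction $\ell_u$ of its budget already spent as the offline ``level'' appearing in the fractional analysis of \Cref{thm:vertex_weighted}. The advice is normalized to $\hat{a}_{uv} := b_{uv}\hat{x}_{uv}/B_u$ so that $\sum_v \hat{a}_{uv} \le 1$ for each $u$, and is plugged directly into the advice-dependent trade-off function used by \LAB{}. On arrival of impression $v$ with bids $(b_{uv})_u$ and normalized advice $(\hat{a}_{uv})_u$, the algorithm greedily allocates $v$ to
\[
u^\star \in \argmax_{u\,:\,\ell_u + b_{uv}/B_u \le 1}\; b_{uv}\cdot g(\ell_u, \hat{a}_{uv}),
\]
where $g$ is the advice-dependent trade-off function from \LAB{}.

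\textbf{Primal--dual and main obstacle.} I would set up the primal--dual framework exactly as in \Cref{thm:vertex_weighted}, producing duals $\alpha_v$ for impressions and $\beta_u$ for advertisers that in the continuous regime certify ratios $r(\lambda)$ against $\OPT$ and $c(\lambda)$ against $\ADVICE$. The hard part is that $\ell_u$ now moves in discrete jumps of size $b_{uv}/B_u \le \varepsilon$ rather than continuously, so the key inequality $\alpha_v + \beta_{u^\star} \ge b_{u^\star v}$ only holds approximately. The main technical step is a discretized dual-feasibility bound
\[
\alpha_v + \beta_{u^\star} \ge \bigl(1 - 3\sqrt{\varepsilon \ln(1/\varepsilon)}\bigr)\, b_{u^\star v}
\]
uniformly over all impressions. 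To establish it I would (i) truncate the analysis at levels $\ell_u \le 1 - \Theta(\sqrt{\varepsilon\ln(1/\varepsilon)})$ and show the remaining unused budget contributes a negligible amount to both $\OPT$ and $\ADVICE$, and (ii) use a Lipschitz estimate on $g(\cdot,\hat{a})$ over the truncated range to control the per-jump slack, checking that the advice argument enters linearly so the Lipschitz constant matches the one in \Cref{thm:vertex_weighted}. This mirrors the discretization arguments of~\cite{mehta2007adwords, mahdian2012online}, with extra care needed to handle the advice-dependent $g$.

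\textbf{Putting it together.} Combining the discretized dual-feasibility bound with weak LP duality yields $\ALG \ge (1 - 3\sqrt{\varepsilon\ln(1/\varepsilon)})\cdot \DUAL$, and the construction of $(\alpha,\beta)$ inherits $\DUAL \ge r(\lambda)\cdot \OPT$ (respectively $\DUAL \ge c(\lambda)\cdot \ADVICE$) directly from the proof of \Cref{thm:vertex_weighted}. Multiplying these two factors produces the claimed $r(\lambda)\cdot(1-3\sqrt{\varepsilon\ln(1/\varepsilon)})$-robustness and $c(\lambda)\cdot(1-3\sqrt{\varepsilon\ln(1/\varepsilon)})$-consistency.
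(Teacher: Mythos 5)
Your proposal takes a genuinely different route from the paper, and the route has a concrete gap. The paper never analyzes a discrete greedy allocation at all: it first extends \LAB{} to \emph{fractional} AdWords (where each impression may be split among advertisers, so budget levels move continuously), shows this fractional algorithm achieves $r(\lambda)$ and $c(\lambda)$ with \emph{no} loss by rerunning the vertex-weighted primal--dual argument verbatim, and then converts fractional to integral by an online randomized rounding (sample advertiser $u$ with probability $\gamma x_{u,v}$, $\gamma = 1-\varepsilon-\sqrt{\varepsilon\ln(1/\varepsilon)}$). The entire $\bigl(1-3\sqrt{\varepsilon\ln(1/\varepsilon)}\bigr)$ factor comes from a Bernstein concentration bound on each advertiser's budget consumption under this rounding. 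Your plan instead runs a deterministic integral greedy and tries to absorb the discretization error into approximate dual feasibility.

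The gap is in your step (ii). The advice-dependent penalty $f(A,X)$ of \LAB{} is not Lipschitz uniformly in $\lambda$, and is not even continuous: across the line $A=X$ with $X<\lambda e^{1-\lambda}$ it jumps from $f_1(X)$ to $\max\{f_0(0),f_1(X)\}=e^{\lambda-1}>f_1(X)$, and at $\lambda=1$ it degenerates to the step function that is $0$ for $A>X$ and $1$ for $A\le X$. A per-jump slack bound of the form ``(Lipschitz constant)$\times\varepsilon$'' therefore either fails outright (at $\lambda=1$, where the theorem is still claimed to hold) or yields a loss that blows up as $\lambda\to 1$ (e.g.\ $f_1'(1)=\frac{1}{1-\lambda}$), whereas the statement requires a loss factor independent of $\lambda$. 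Relatedly, the specific shape $\sqrt{\varepsilon\ln(1/\varepsilon)}$ is an artifact of balancing the deviation against the variance in a Chernoff/Bernstein bound; a deterministic discretization argument of the kind you describe would naturally produce an $O(\varepsilon)$-type loss (possibly $\lambda$-dependent), not this form, so the truncation threshold $1-\Theta(\sqrt{\varepsilon\ln(1/\varepsilon)})$ in your step (i) is unmotivated and the final bound does not follow from the steps you propose. To repair the argument along the paper's lines, you should decouple the two issues: prove the lossless guarantee for the continuous fractional allocation (where your primal--dual setup does carry over exactly), and obtain integrality by independent randomized rounding, paying the $\bigl(1-3\sqrt{\varepsilon\ln(1/\varepsilon)}\bigr)$ factor only there.
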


To achieve this result, we first extend \LAB{} to the \emph{fractional} AdWords setting while preserving its robustness and consistency, and then employ a reduction from \cite{feng2024batching} to reduce the integral AdWords problem to the fractional problem with small loss under the small bids assumption.

Observe in \cref{fig:intro:fig} that the robustness-consistency tradeoff of \LAB{} lies below the linear tradeoff of \Coinflip{} in the \emph{unweighted} setting: the top-left endpoint of the tradeoff for \LAB{} is $(r,c) = (0, 1)$, whereas in the unweighted setting \Coinflip{} can be implemented to be $1/2$-robust even when $c = 1$.
This happens because any maximal matching in an unweighted graph is automatically $\frac12$-robust.
To beat \Coinflip{} in the unweighted setting, a tighter analysis of \LAB{} would be required but this proved difficult using our current analysis framework for \LAB{}, even when the advice is integral.
Instead, we present another algorithm called \PushAndWater{} (\PAW{}) for the unweighted setting with integral advice that circumvents the aforementioned challenge in the analysis.

\begin{restatable}{theorem}{uwalgmain}
\label{thm:uw:main}
For any tradeoff parameter $\lambda \in [0,1]$, \PushAndWater{} is $r(\lambda)$-robust and $c(\lambda)$-consistent for unweighted online bipartite fractional matching with integral advice, where 
\begin{equation*}
    \textstyle r(\lambda) := 1 - \left(1 - \lambda + \nicefrac{\lambda^2}{2} \right) e^{\lambda - 1}
    \;\;\text{and}\;\;
    c(\lambda) := 1 - \left( 1 - \lambda \right) e^{\lambda - 1}.
\end{equation*}
\end{restatable}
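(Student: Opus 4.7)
The plan is to perform an online primal-dual analysis of \PAW{}. I would first make the algorithm concrete: when an online vertex $v$ arrives, if the advice specifies a neighbor $a(v) \in N(v)$ with current load $y_{a(v)} < 1$, push mass $\min\{\lambda,\, 1 - y_{a(v)}\}$ along the edge $(a(v), v)$; then distribute the remaining mass at $v$ by waterfilling over $N(v)$, i.e., uniformly raise the loads of the lowest-loaded neighbors until $v$ is fully matched or all its neighbors saturate at load $1$. The primal is the standard fractional bipartite matching LP and the dual is its vertex cover LP, with variables $\alpha_u$ on the offline side and $\beta_v$ on the online side.

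I would then maintain these duals online using a Balance-style splitting rule: each infinitesimal unit $dy$ of mass deposited on offline vertex $u$ when its load is $y$ contributes $h(y)\, dy$ to $\alpha_u$ and $(1-h(y))\,dy$ to $\beta_v$, for an appropriately chosen splitting function $h$ (essentially the exponential $h(y)=e^{y-1}$, modulo normalization for the two separate bounds, and possibly augmented during the push step). This guarantees the identity $\sum_u \alpha_u + \sum_v \beta_v = \text{ALG}$. Robustness then reduces to showing $\alpha_u + \beta_v \ge r(\lambda)$ for every edge $(u,v) \in \text{OPT}$, using the key monotonicity fact that the load of $u$ at the moment $v$ arrived was at least the water level $\ell_v$ reached during $v$'s waterfill step (otherwise waterfilling would have raised $u$ further). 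Consistency reduces to the analogous bound $\alpha_{a(v)} + \beta_v \ge c(\lambda)$ on advice edges, where the $\lambda$-push provides a direct structural gain even if $v$'s water level is low.

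A careful case analysis is needed. For robustness, the worst case is when $u = a(v')$ for some future $v'$, so $u$'s load first jumps by $\lambda$ due to a push before any useful waterfilling happens -- this ``wasted'' push when the advice disagrees with OPT is precisely what drives the extra $(\lambda^2/2)\, e^{\lambda-1}$ deficit of $r(\lambda)$ relative to $c(\lambda)$. The truncated-push corner case ($1 - y_{a(v)} < \lambda$) is handled by noting that $a(v)$ is already near saturation, so $\alpha_{a(v)}$ already absorbs the loss. Consistency is cleaner: each advice edge either receives a full $\lambda$-push, or $a(v)$ is near-saturation so that $\alpha_{a(v)}$ is already large enough on its own.

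The main obstacle is calibrating the splitting function $h$ and identifying the extremal configuration so that the two dual-feasibility inequalities are simultaneously tight at the claimed curves $r(\lambda) = 1 - (1 - \lambda + \lambda^2/2)\, e^{\lambda - 1}$ and $c(\lambda) = 1 - (1 - \lambda)\, e^{\lambda - 1}$. Concretely, this amounts to identifying an upper-triangular-type instance in which the adversary maximally misaligns the integral advice with the hindsight optimum, and verifying that the ODE governing the extremal load profile reproduces these closed forms; the quadratic $\lambda^2/2$ term strongly suggests that the extremal loss accumulates as the integral of a linearly-growing ``misdirected push'' over the load interval $[0,\lambda]$. Unifying the push and waterfill contributions in a single primal-dual accounting that is simultaneously tight for arbitrary OPT edges and for advice edges is the most delicate step.
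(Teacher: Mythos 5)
Your framework is the paper's: an online primal--dual accounting in which each infinitesimal unit of flow deposited on $u$ at level $z$ is split between $\alpha_u$ and $\beta_v$ by a function of $z$, giving $\sum_u\alpha_u+\sum_v\beta_v=\ALG$ exactly, with robustness proved via approximate dual feasibility on all edges and consistency via feasibility only on advice edges. However, the proposal stops exactly where the proof begins. The entire technical content is the construction of the splitting function together with the lower bounds $\alpha_u\ge\int_0^\ell g(z)\,dz$ and the push-aware bound on $\beta_v$, and you explicitly defer this (``the main obstacle is calibrating $h$''). Moreover, your guiding ansatz --- a single, ``essentially exponential'' $h(y)=e^{y-1}$ made simultaneously tight for OPT edges and advice edges --- is not what works. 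The paper uses \emph{two different} splitting functions, which is legitimate precisely because the split is pure bookkeeping and does not change the algorithm: for robustness $g_r(z)=e^{\lambda-1}(z+1-\lambda)$ on $[0,\lambda)$ and $e^{z-1}$ on $[\lambda,1]$ (the linear piece solves the ODE $g(\ell)=g'(\ell)(1-\lambda+\ell)$ forced by the push phase, and $\int_0^\lambda g_r$ is where the $\lambda^2/2$ term comes from), while for consistency $g_c(z)=e^{\lambda-1}$ constant on $[0,\lambda)$ and $e^{z-1}$ on $[\lambda,1]$. With a pure exponential split neither bound comes out to the stated closed forms, and there is no reason a single function makes both inequalities tight.

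A second, concrete error: you misstate the algorithm. \PAW{}'s push phase tops the advised vertex \emph{up to} level $\lambda$, i.e., pushes $(\lambda-d)_+$ where $d$ is its current level, whereas you push $\min\{\lambda,\,1-d\}$ regardless of $d$. These differ whenever $d>0$ (the advised vertex may already carry water from earlier waterfilling), and the difference is not cosmetic: the robustness analysis hinges on the waterfill phase receiving exactly $1-\lambda+\min(d,\lambda)$ units, and on the ``wasted'' push being bounded by $\int_{\min(d,\lambda)}^{\lambda}(1-g(z))\,dz$. Under your rule, a push of $\lambda$ into a vertex at high level $d$ earns $\beta_v$ only about $\lambda(1-g(d))$, which is strictly less than what the stated $r(\lambda)$ requires; the case analysis of the paper's Lemma on $\beta_v$ would not go through as written. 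So both the algorithm and the splitting functions need to be pinned down before the claimed constants can be verified.
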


\PAW{} is based on the unweighted version of \Balance{}, also known as \Waterfill{}, with one additional step at each iteration where it first increases the fractional value of the currently advised edge until the ``level'' of the advised offline vertex reaches the tradeoff parameter $\lambda$.
We analyze \PAW{} using primal-dual but with a different construction of dual variables from \LAB{}.

We complement our algorithmic results by presenting an upper bound on the robustness-consistency tradeoff of any learning-augmented algorithm for the unweighted setting with integral advice in \cref{sec:ub}, improving upon the previous upper bound results \cite{jin2022online, choo2024online} (see \cref{fig:intro:fig}).
Note that this result implies the same impossibility for more general problems including the vertex-weighted setting and the AdWords problem.
To obtain our hardness result, we construct two adaptive adversaries --- one for robustness and the other for consistency.
The construction of these adversaries is inspired by the standard upper-triangular worst-case instances  \cite{karp1990optimal}, while we modify this construction to make the two adversaries have the same behavior until the first half of the online vertices are revealed.
Due to this modification, the two adversaries are indistinguishable until the halfway point of the execution while inheriting the difficulty from the standard worst-case instances.
We then identify a set of conditions characterizing the behavior of Pareto-optimal algorithms on our hardness instance and solve a factor-revealing LP to upper bound the best possible consistency subject to the constraint on the robustness to be $r$, for each $r \in [\nicefrac{1}{2}, 1-\nicefrac{1}{e}]$.

Lastly, we implemented and evaluated our proposed algorithms \textsc{LAB} and \textsc{PAW} in \cref{sec:exp} against advice-free baselines on synthetic and real-world graph instances, for varying advice quality parameterized by a noise parameter $\gamma$, where larger $\gamma$ indicates poorer advice quality.
As predicted by our analysis, the attained competitive ratios of both \textsc{LAB} and \textsc{PAW} begin at 1 under perfect advice and smoothly degrades as the $\gamma$ increases.
Unsurprisingly, for sufficiently large $\gamma$, the worst case optimal advice-free algorithm \textsc{Balance} outperforms both \textsc{LAB} and \textsc{PAW}.

\subsection{Related work}

\paragraph{Learning-augmented algorithms for online matching.}
In addition to the works of \cite{mahdian2007allocating, mahdian2012online} and \cite{spaeh2023online}, several other papers study learning-augmented algorithms for online bipartite matching.
\cite{antoniadis2020secretary} studied the edge-weighted version under the random arrival model \cite{korula2009algorithms, kesselheim2013optimal}, where the advice estimates the edge-weight that each online vertex is assigned.
\cite{aamand2022optimal} considered a model where the advice predicts the degree of each offline vertex. They analyzed the performance of a greedy algorithm called \textsc{MinPredictedDegree} that uses this advice, within the random graph model of \cite{chung2003spectra}.
\cite{jin2022online} considered the two-stage model of \cite{feng2024two} with the advice modeled as a feasible matching in the first stage.
They developed optimal learning-augmented algorithms for the unweighted, vertex-weighted, edge-weighted, and AdWords variants.
\cite{choo2024online} showed that, in the unweighted setting under adversarial arrival, no algorithm that is 1-consistent can achieve robustness better than $1/2$. However, in the random arrival model, they designed a 1-consistent algorithm that achieves $(\beta - o(1))$-robustness using advice in the form of a histogram of arrival types, where $\beta$ is the competitive ratio of the best advice-free algorithm in the same setting.

\paragraph{Learning-augmented algorithms more broadly.} 
Since the seminal work of \cite{lykouris2021competitive}, there has been a surge of interest in incorporating unreliable advice into algorithm design and analyzing performance as a function of advice quality across various areas of computer science.
This framework has been especially successful in online optimization, where the core challenge lies in handling uncertainty about future inputs. In this context, advice can serve as a useful proxy for the unknown future. Beyond online bipartite matching, a wide range of online optimization problems have been studied under the learning-augmented framework.
Examples include caching and paging \cite{lykouris2021competitive, jiang2022online, im2022parsimonious, bansal2022learning}, ski rental \cite{kumar2018improving, wang2020online, shin2023improved, zhao24learning}, covering problems \cite{bamas2020primal, grigorescu2022learning}, scheduling \cite{lattanzi2020online, azar2021flow, im2023non, benomar24non}, metric or graph problems \cite{azar2022online, antoniadis2023online, sadek24algorithms}, causal graph learning \cite{choo2023active}, and distribution learning \cite{bhattacharyya2025learning,bhattacharyya2025product}.
For an overview of this growing area, we refer the reader to the survey by \cite{mitzenmacher2022algorithms}\footnote{See also \url{https://algorithms-with-predictions.github.io/}.}.

\paragraph{Online matching.} 
The study of online matching began with the seminal work of \cite{karp1990optimal}, who introduced the randomized \textsc{Ranking} algorithm and proved it to be $(1 - 1/e)$-competitive for online bipartite integral matching, the best possible in this setting.
Due to its foundational importance, the analysis of \textsc{Ranking} has been revisited and extended in numerous subsequent works, including \cite{goel2008online, birnbaum2008line, devanur2013randomized, eden2021economics}. The online matching problem has since been studied under a variety of extensions and settings, such as the edge-weighted case \cite{fahrbach2022edge, shin21making, gao2022improved, blanc2022multiway}, ad allocation \cite{mehta2007adwords, buchbinder2007online, goel2008online, feldman2009onlinead, huang2024adwords}, random and stochastic arrivals \cite{feldman2009online, korula2009algorithms, karande2011online, kesselheim2013optimal, jin2021improved, huang2021online, huang2022power}, and two-sided or general arrival models \cite{gamlath2019online, huang2020fully, huang2020fully2}. For a comprehensive overview of the field, we refer interested readers to the surveys by \cite{mehta2013online} and \cite{huang2024online}.

\paragraph{Paper outline.}
We begin with preliminaries such as definitions and notation used in the paper in \cref{sec:prelim}.
We then present our algorithmic results in the following two sections.
In \cref{sec:vwalg}, we present the \LAB{} algorithm for the vertex-weighted setting with fractional advice and then show that this algorithm extends to AdWords with small bids.
\cref{sec:uwalg} presents the \PAW{} algorithm for the unweighted setting with integral advice.
We then provide in \cref{sec:ub} our upper bound result on the robustness-consistency tradeoff for the unweighted setting with integral advice.
The experimental results are  in \cref{sec:exp}, followed by concluding remarks in \cref{sec:conclusion}.

\section{Preliminaries}
\label{sec:prelim}

\textbf{Online bipartite matching.}
In the \emph{vertex-weighted online bipartite fractional matching} problem, we have a bipartite graph $ G = (U \cup V, E) $ and a weight $w_u \geq 0$ for each $u \in U$.
If $w_u = 1$ for every $u \in U$, then the problem is called \emph{unweighted}.
The vertices in $U$ are the \emph{offline} vertices, and their weights are known to the algorithm from the very beginning.
On the other hand, the vertices in $V$ are the \emph{online} vertices, and arrive one by one. Whenever $v \in V$ arrives, its neighborhood $N(v) := \{ u \in U \mid (u, v) \in E \}$ is revealed. Since the online vertices arrive sequentially, we use the notation $t \prec v$ to mean that $t$ arrives earlier than $v$. 
Similarly, for each offline vertex $u \in U$, we also use $N(u) := \{v \in V \mid (u, v) \in E \}$ to denote the neighborhood of $u$.

We use the analogy of \emph{waterfilling} to describe the behavior of the algorithm.
When $v \in V$ arrives and its neighborhood $N(v)$ is revealed, the algorithm decides at that moment the amount $x_{u, v}$ of water to send from $v$ to each $u \in N(v)$ subject to the constraints that:
\begin{itemize}
    \item the total amount of water supplied from $v$ does not exceed 1, i.e., $\sum_{u \in N(v)} x_{u, v} \leq 1$;
    \item each offline vertex $u \in U$ can hold at most 1 unit of water, i.e., $\sum_{t \in N(u) : t \preceq v} x_{u, t} \leq 1$.
\end{itemize}
This decision is irrevocable, meaning that, $\{ x_{u, v} \}_{u \in N(v)}$ cannot be modified in the subsequent iterations.
Let $x \in \R^{E}$ be the final solution of the algorithm.
Note that $x$ is a fractional matching in the hindsight graph $G$.
The weight of this solution is defined to be $\sum_{(u, v) \in E} w_u x_{u, v}$.
The objective of this problem is to maximize the weight of the solution.

\textbf{Advice.}
Each online vertex $v \in V$ arrives with a \emph{suggested allocation} $\{a_{u, v}\}_{u \in N(v)}$, where we assume $a = \{a_{u,v}: (u,v) \in E\} \in \mathbb{R}^E$ is a feasible fractional matching in the hindsight graph $G$. 

\paragraph{AdWords.}
In the AdWords problem, the offline vertices $U$ are called \emph{advertisers}, and the online vertices $V$ are called \emph{impressions}.  
Each advertiser $u \in U$ starts with a budget $B_u \geq 0$.  
When an impression $v \in V$ arrives, each advertiser $u \in U$ submits a bid $b_{u,v} \geq 0$ for this impression.  
The algorithm then irrevocably assigns the impression to one of the advertisers.\footnote{By introducing an auxiliary advertiser who always bids 0 on every impression, we can assume without loss of generality that the algorithm assigns every impression.}  
If impression $v$ is assigned to advertiser $u$, the algorithm earns a revenue of $b_{u,v}$, provided that $u$ can afford the bid from its remaining budget.  
The objective is to maximize the total revenue earned by the algorithm.

We remark that vertex-weighted online bipartite matching is the special case of AdWords where $b_{u, v} = B_u$ if $(u,v) \in E$, and $b_{u,v} = 0$ if $(u,v) \not\in E$. 

Integral AdWords is commonly studied under the \emph{small bids} assumption~\cite{mehta2007adwords, buchbinder2007online, mahdian2012online}.
Under this assumption, the bid-to-budget ratio is assumed to be bounded by some small $\varepsilon > 0$.  In other words, for all $u \in U$ and $v \in V$, we have  $b_{u, v} \leq \varepsilon B_u$.

\textbf{Performance measures.}
Denote the value of the final output of an algorithm by $\ALG$, the value of an optimal solution in the hindsight instance by $\OPT$, and the value obtained by the advice by $\ADVICE$. We can then formally define the \emph{robustness} and \emph{consistency} of a learning-augmented algorithm.

\begin{definition}[Robustness and Consistency]
\label{def:rc}
For some $r \in [0, 1]$, we say an algorithm is \emph{$r$-robust} if $\E[\ALG] \geq r \cdot \OPT$ for any instance of the problem.
On the other hand, for some $c \in [0, 1]$, we say an algorithm is \emph{$c$-consistent} if $\E[\ALG] \geq c \cdot \ADVICE$ for any instance of the problem.
\end{definition}

Notice that, when we define the \emph{error} of the advice to be $\eta := \ADVICE / \OPT \in [0, 1]$, the consistency implies the  \emph{smoothness} of the algorithm since we have $\E[\ALG] \geq c \eta \cdot \OPT$.

\textbf{Primal-dual analysis.}
To prove the robustness and consistency of our algorithms, we adopt the standard primal-dual analysis for online bipartite matching \cite{devanur2013randomized}.
Observe that, for vertex-weighted bipartite matching, the primal and dual LPs are formulated as follows:

\begin{minipage}{\textwidth}
\centering
\begin{minipage}[t]{0.49 \textwidth}
    \begin{align*}
    \max\; & \textstyle \sum_{(u, v) \in E} w_u x_{u,v} \\
    \text{s.t.}\; & \textstyle \sum_{v \in N(u)} x_{u, v} \leq 1, & \forall u \in U, \\
    & \textstyle \sum_{u \in N(v)} x_{u, v} \leq 1, & \forall v \in V, \\
    & x_{u, v} \geq 0, & \forall (u, v) \in E; \\
    \end{align*}
\end{minipage}
\begin{minipage}[t]{0.49 \textwidth}
    \begin{align*}
    \min \; & \textstyle \sum_{u \in U} \alpha_u + \sum_{v \in V} \beta_v \\
    \text{s.t.} \; & \alpha_u + \beta_v \geq w_u, & \forall (u, v) \in E, \\
    & \alpha_u \geq 0, & \forall u \in U, \\
    & \beta_v \geq 0, & \forall v \in V.
    \end{align*}
\end{minipage}
\end{minipage}

The following lemma is the cornerstone of the primal-dual analysis.

\begin{lemma}[see, e.g., \cite{devanur2013randomized, fahrbach2022edge}] \label{lem:pre:pda}
Let $x \in \R^E_+$ be a feasible fractional matching output by an algorithm.
For some $\rho \in [0, 1]$, if there exists $(\alpha, \beta) \in \R^U_+ \times \R^V_+$ satisfying
\begin{itemize}
    \item (reverse weak duality) $\sum_{(u, v) \in E} w_u x_{u, v} \geq \sum_{u \in U} \alpha_u + \sum_{v \in V} \beta_v$ and
    \item (approximate dual feasibility) $\alpha_u + \beta_v \geq \rho \cdot w_u \;$ for every $(u, v) \in E$,
\end{itemize}
we have $\ALG \geq \rho \cdot \OPT$.
\end{lemma}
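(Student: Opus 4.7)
The plan is to reduce the claim to weak linear programming duality applied to the bipartite matching LP displayed just above the lemma. First I would identify $\ALG$ with the primal objective $\sum_{(u,v) \in E} w_u x_{u,v}$ evaluated at the algorithm's output $x$, and $\OPT$ with the primal LP optimum. The latter equality is valid because the bipartite matching polytope is integral, so the hindsight integral maximum-weight matching coincides with the LP relaxation's optimum. With these identifications in place, it suffices to exhibit a feasible dual solution whose objective value is at most $\ALG/\rho$.

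The dual candidate I would use is the scaled pair $(\alpha/\rho, \beta/\rho)$. Dividing the approximate dual feasibility hypothesis $\alpha_u + \beta_v \geq \rho\, w_u$ by $\rho$ yields exactly the dual constraint $\alpha_u/\rho + \beta_v/\rho \geq w_u$ for every edge $(u, v) \in E$. Together with the (implicit) nonnegativity of $\alpha$ and $\beta$, this shows $(\alpha/\rho, \beta/\rho)$ is dual-feasible. Weak LP duality then gives $\OPT \leq (1/\rho)\bigl( \sum_{u \in U} \alpha_u + \sum_{v \in V} \beta_v \bigr)$, and combining this with the reverse weak duality hypothesis yields $\ALG \geq \sum_{u \in U} \alpha_u + \sum_{v \in V} \beta_v \geq \rho \cdot \OPT$, which is precisely the desired conclusion.

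I do not expect any substantive obstacle, since this lemma is a well-known packaging of the primal-dual method \cite{devanur2013randomized, fahrbach2022edge} and the whole argument collapses to one line once the rescaling $(\alpha/\rho, \beta/\rho)$ is written down. The one subtlety worth flagging in the write-up is that the lemma as stated does not explicitly demand $\alpha, \beta \geq 0$, yet nonnegativity is required for genuine dual feasibility; in every construction used later in the paper this will hold by design, and if one prefers a fully self-contained derivation, the bound can be obtained directly by fixing an optimal hindsight matching $x^*$, bounding $w_u \leq (\alpha_u + \beta_v)/\rho$ on each edge used by $x^*$, and then applying the packing constraints $\sum_{v \in N(u)} x^*_{u,v} \leq 1$ and $\sum_{u \in N(v)} x^*_{u,v} \leq 1$ along with nonnegativity, which reproduces the same inequality without citing LP duality as a black box.
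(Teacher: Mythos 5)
Your proposal is correct and matches the paper's own proof: both rescale to $(\nicefrac{\alpha}{\rho}, \nicefrac{\beta}{\rho})$, observe dual feasibility from the approximate-feasibility hypothesis, and chain reverse weak duality with weak LP duality. Your remark about the implicit nonnegativity of $(\alpha,\beta)$ is a fair observation that the paper glosses over as well, and it indeed holds for every dual construction used later.
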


\begin{proof}
    Observe that $\left( \nicefrac{\alpha}{\rho}, \nicefrac{\beta}{\rho} \right) \in \R^U \times \R^V$ is feasible to the dual LP. We thus have
    \[
        \ALG
        = \sum_{(u, v) \in E} w_u x_{u, v}
        \geq \sum_{u \in U} \alpha_u + \sum_{v \in V} \beta_v
        = \rho \cdot \left[ \sum_{u \in U} \nicefrac{\alpha_u}{\rho} + \sum_{v \in V} \nicefrac{\beta_v}{\rho} \right]
        \geq \rho \cdot \OPT,
    \]
    where the last inequality comes from the weak duality of LP.
\end{proof}

Similarly, the following are an LP relaxation of AdWords and its dual LP:

\begin{minipage}{\textwidth}
    \centering
    \begin{minipage}[t]{0.49 \textwidth}
        \begin{align*}
            \max\; & \textstyle \sum_{u \in U} \sum_{v \in V} b_{u, v} x_{u,v} \\
            \text{s.t.}\; & \textstyle \frac{1}{B_u} \sum_{v \in V} b_{u,v} x_{u, v} \leq 1, & \forall u \in U, \\
            & \textstyle  \sum_{u \in U} x_{u, v} \leq 1, & \forall v \in V, \\
            & x_{u, v} \geq 0, & \forall u \in U, v \in V; \\
        \end{align*}
    \end{minipage}
    \begin{minipage}[t]{0.49 \textwidth}
        \begin{align*}
            \min \; & \textstyle \sum_{u \in U} \alpha_u + \sum_{v \in V} \beta_v \\
            \text{s.t.} \; & \textstyle  \frac{b_{u, v}}{B_u} \alpha_u + \beta_v \geq b_{u, v}, &  \forall u \in U, v \in V, \\
            & \alpha_u \geq 0, & \forall u \in U, \\
            & \beta_v \geq 0, & \forall v \in V.
        \end{align*}
    \end{minipage}
\end{minipage}

The below lemma is an adaptation of \cref{lem:pre:pda} to AdWords.
The proof is omitted.

\begin{lemma} \label{lem:pre:adw:pda}
    Let $x \in \R^{U \times V}$ be a feasible solution to the primal LP relaxation of AdWords.
    For some $\rho \in [0, 1]$, if there exists $(\alpha, \beta) \in \R^U \times \R^V$ satisfying
    \begin{itemize}
        \item (reverse weak duality) $ \sum_{u \in U} \sum_{v \in V} b_{u, v} x_{u, v} \geq \sum_{u \in U} \alpha_u + \sum_{v \in V} \beta_v$ and
        \item (approximate dual feasibility) $\frac{b_{u, v}}{B_u} \alpha_u + \beta_v \geq \rho \cdot b_{u, v} \;$ for every $u \in U$ and $v \in V$,
    \end{itemize}
    we have $\sum_{u \in U} \sum_{v \in V} b_{u, v} x_{u, v} \geq \rho \cdot \OPT$.
\end{lemma}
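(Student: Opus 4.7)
The plan is to follow the exact same strategy as the proof of \Cref{lem:pre:pda}, since the structural argument is unchanged when moving from vertex-weighted matching to AdWords---only the shape of the dual constraints is different. Concretely, I would exhibit a feasible solution to the AdWords dual LP whose objective value is at most $\frac{1}{\rho}\bigl(\sum_u \alpha_u + \sum_v \beta_v\bigr)$, and then invoke weak LP duality together with the reverse weak duality hypothesis.

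First, I would define the rescaled dual variables $\widetilde{\alpha} := \alpha / \rho$ and $\widetilde{\beta} := \beta / \rho$ and verify that $(\widetilde{\alpha}, \widetilde{\beta})$ is feasible for the AdWords dual LP. The only non-trivial constraint, $\frac{b_{u,v}}{B_u} \widetilde{\alpha}_u + \widetilde{\beta}_v \geq b_{u,v}$ for every $u \in U, v \in V$, follows directly by dividing the approximate dual feasibility hypothesis by $\rho$. The non-negativity constraints on $\widetilde{\alpha}$ and $\widetilde{\beta}$ inherit from those on $\alpha$ and $\beta$, which (as in \Cref{lem:pre:pda}) are standardly ensured by the way these duals are later constructed in the algorithm analyses.

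Second, weak LP duality applied to the AdWords primal-dual pair gives
\[
    \OPT \;\leq\; \sum_{u \in U} \widetilde{\alpha}_u + \sum_{v \in V} \widetilde{\beta}_v \;=\; \frac{1}{\rho}\Bigl(\sum_{u \in U}\alpha_u + \sum_{v \in V}\beta_v\Bigr),
\]
and so $\sum_u \alpha_u + \sum_v \beta_v \geq \rho \cdot \OPT$. Chaining with the reverse weak duality assumption $\sum_{u, v} b_{u,v} x_{u,v} \geq \sum_u \alpha_u + \sum_v \beta_v$ then yields the claim. I do not expect any real obstacle here: the argument is a mechanical translation of the vertex-weighted case. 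The only minor subtlety is the $\rho = 0$ edge case, where the statement degenerates to $\sum_{u,v} b_{u,v} x_{u,v} \geq 0$ which holds trivially since $x$ and $b$ are non-negative, so the division by $\rho$ can be restricted to $\rho > 0$ without loss of generality.
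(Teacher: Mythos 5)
Your proposal is correct and is exactly the argument the paper intends: the paper explicitly omits the proof of this lemma as a routine adaptation of \Cref{lem:pre:pda}, namely rescaling the duals by $\rho$, checking feasibility for the AdWords dual LP, and chaining weak duality with the reverse weak duality hypothesis. Your handling of the $\rho = 0$ edge case is a harmless extra precaution.
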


\paragraph{Lambert $W$ function.}
In the definition of \LAB{} in \cref{sec:vwalg}, we use the (principal branch of) Lambert $W$ function. Define $W : \left[ - \frac{1}{e}, \infty \right) \to [-1, \infty)$ to be the inverse function of $y e^y$ on $[-1, \infty)$, i.e., for any $z \in \left[ - \frac{1}{e}, \infty \right)$, we have
\begin{equation} \label{eq:pre:Wprop0}
    W(z) \cdot e^{W(z)} = z.
\end{equation}
It is known that $W$ is increasing on $\left[ - \frac{1}{e}, \infty \right)$.
Its derivative can be written as follows:
\begin{equation} \label{eq:pre:Wprop1}
    W'(x) = \frac{W(x)}{x(1 + W(x))}.
\end{equation}
Lastly, the next equality can easily be derived from \cref{eq:pre:Wprop0}:
\begin{equation} \label{eq:pre:Wprop2}
    \frac{W(z)}{z} = e^{-W(z)}.
\end{equation}
\section{Vertex-weighted matching with advice}
\label{sec:vwalg}

We now present our algorithm \LearnAugBal{} (\LAB{}) for vertex-weighted online bipartite matching with advice and provide a proof sketch showing that it achieves the robustness-consistency tradeoff stated in \cref{thm:vertex_weighted}.
Detailed pseudocode is given in \cref{sec:appendix-pseudocodes} and a full analysis is provided in the supplementary material.

\textbf{Algorithm description.}
Given a tradeoff parameter $\lambda \in [0, 1]$, we define $f_0 : [0, 1] \to [0, 1]$ and $f_1 : [0, 1] \to [0, 1]$ as follows, where $W$ is the Lambert $W$ function:
\begin{equation}
f_0(z) := \min \{ e^{z+\lambda-1}, 1 \},
\quad \text{and} \quad
f_1(z) :=
\begin{cases}
\frac{e^{\lambda-1} - \lambda}{1-z}, & \text{ if } z \in [0, \lambda e^{1-\lambda}), \\
\frac{-\lambda}{W(-\lambda e^{1-\lambda - z})}, & \text{ if } z \in [\lambda e^{1-\lambda}, 1), \\
1, & \text{ if } z = 1,
\end{cases}
\label{eq:vw:penaltyint} 
\end{equation}

Based on these functions, we define $f : [0, 1]^2 \to [0, 1]$ such that
\begin{equation} \label{eq:vw:penaltyfrac}
f(A, X) :=
\begin{cases}
f_1(X), & \text{ if } A > X, \\
\max\{ f_0(X-A), f_1(X) \}, & \text{ if } A \leq X.
\end{cases}
\end{equation}

For clarity, let us describe \LAB{} as a continuous process.
Upon the arrival of each online vertex $v \in V$ along with the advice $\{a_{u, v}\}_{u \in N(v)}$, define $A_u := \sum_{t \in N(u) : t \preceq v} a_{u, t}$ as the total advice-allocated amount to each offline vertex $u \in N(v)$, up to and including $v$.
\LAB{} then continuously pushes an infinitesimal unit of flow from $v$ to the neighbor $u \in N(v)$ maximizing $w_u (1 - f(A_u, X_u))$, where $X_u$ is the total amount 
allocated to $u$ by the algorithm right before it starts pushing this infinitesimal unit of flow.
This continues until $v$ is fully matched (i.e.\ one unit of flow is pushed) or all neighbors are saturated.

\paragraph{Intuition behind the algorithm.}
First, we give intuition for the algorithm.
For an online vertex $v$ and an offline neighbor $u \in N(v)$, the amount allocated from $v$ to $u$ should depend on three factors.
Firstly, a higher $w_u$ should lead to larger $x_{u,v}$.
Secondly, the more $u$ is filled, the less desirable it is to allocate to it further, preserving capacity for future vertices.
Thirdly, vertices favored by the advice should receive more allocation.

The classical \Balance{} algorithm handles the first two factors by choosing the offline vertex with the highest potential value $w_u (1 - g(X_u))$ via a convex increasing penalty function $g(z) = e^{z - 1}$.
To incorporate the third factor, \LAB{} introduces an advice-aware penalty function $f(A, X)$ that also depends on the total advice allocation $A$; see \cref{fig:vw:fig}.
This function is increasing in $X$ (penalizing already-filled vertices) and decreasing in $A$ (lower penalty for vertices recommended by the advice), thereby encouraging alignment with the advice.

The penalty function $f$ used by our algorithm is defined in \cref{eq:vw:penaltyfrac} based on the functions $f_0$ and $f_1$ from \cref{eq:vw:penaltyint}.
While $f_0$ and $f_1$ are derived from the primal-dual analysis, and their exact forms are not crucial for intuition, the structure of $f$ admits a natural interpretation.
Intuitively, if an offline vertex $u$ has received less allocation than the advice suggests (i.e., $A_u > X_u$), then the penalty function treats $u$ as if it were already saturated under the advice. Conversely, if $u$ has been filled beyond the advised amount (i.e., $A_u \leq X_u$), then the penalty effectively treats the excess allocation $X_u - A_u$ as if it were added despite the advice indicating $u$ should be unmatched.

\begin{figure}
    \centering
    \begin{subfigure}{0.32\textwidth}
        \centering
        \includegraphics[width=\textwidth]  {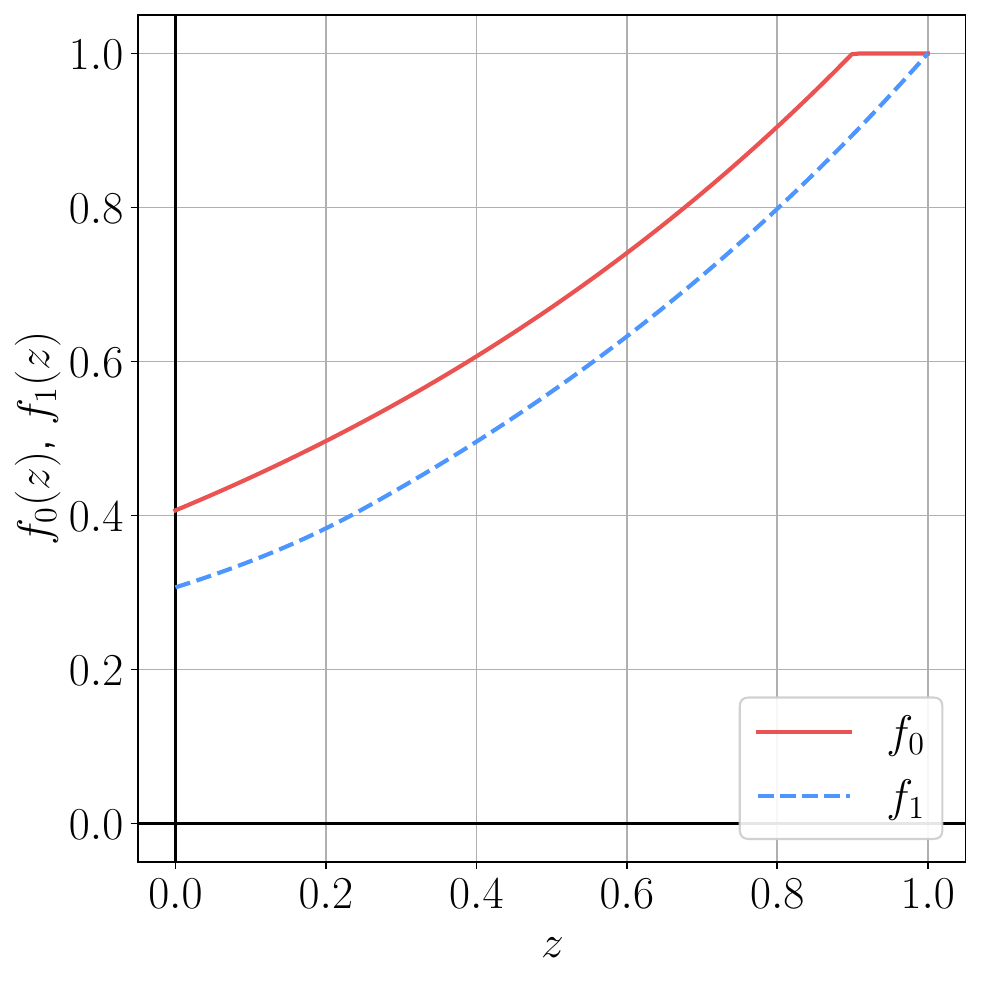}
        \caption{$f_0$, $f_1$ with $\lambda = 0.1$}
    \end{subfigure}
    \begin{subfigure}{0.32\textwidth}
        \centering
        \includegraphics[width=\textwidth]{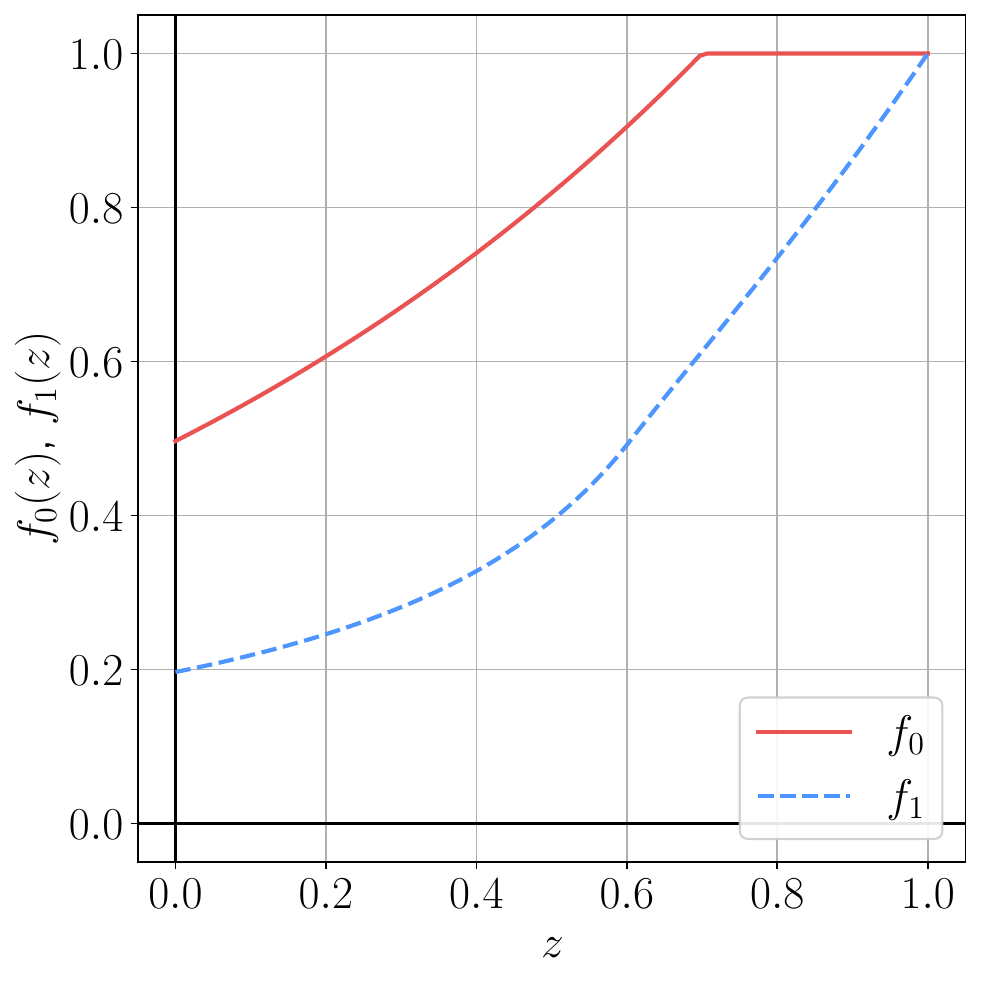}
        \caption{$f_0$, $f_1$ with $\lambda = 0.3$}
    \end{subfigure}
    \begin{subfigure}{0.32\textwidth}
        \centering
        \includegraphics[width=\textwidth]{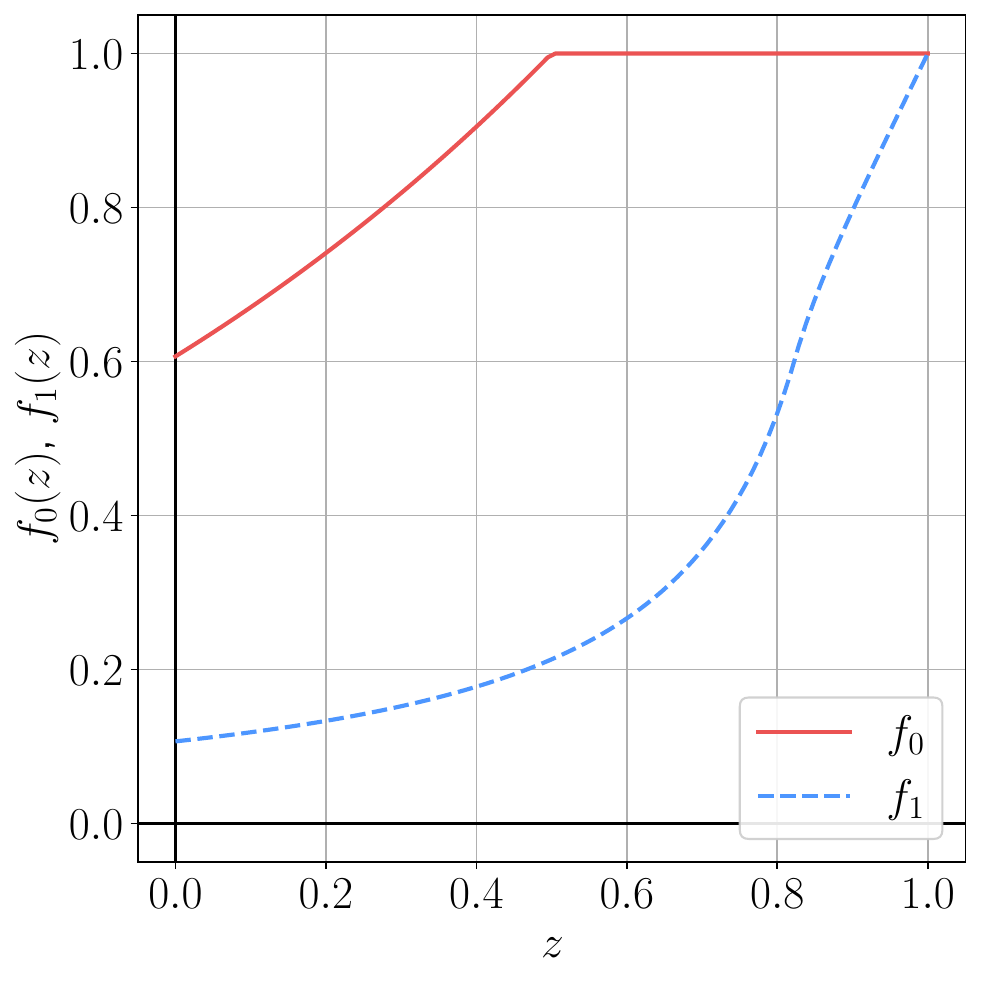}
        \caption{$f_0$, $f_1$ with $\lambda = 0.5$}
    \end{subfigure}
    \\
    \vspace{10pt}
    \begin{subfigure}{0.32\textwidth}
        \centering
        \includegraphics[width=\textwidth]{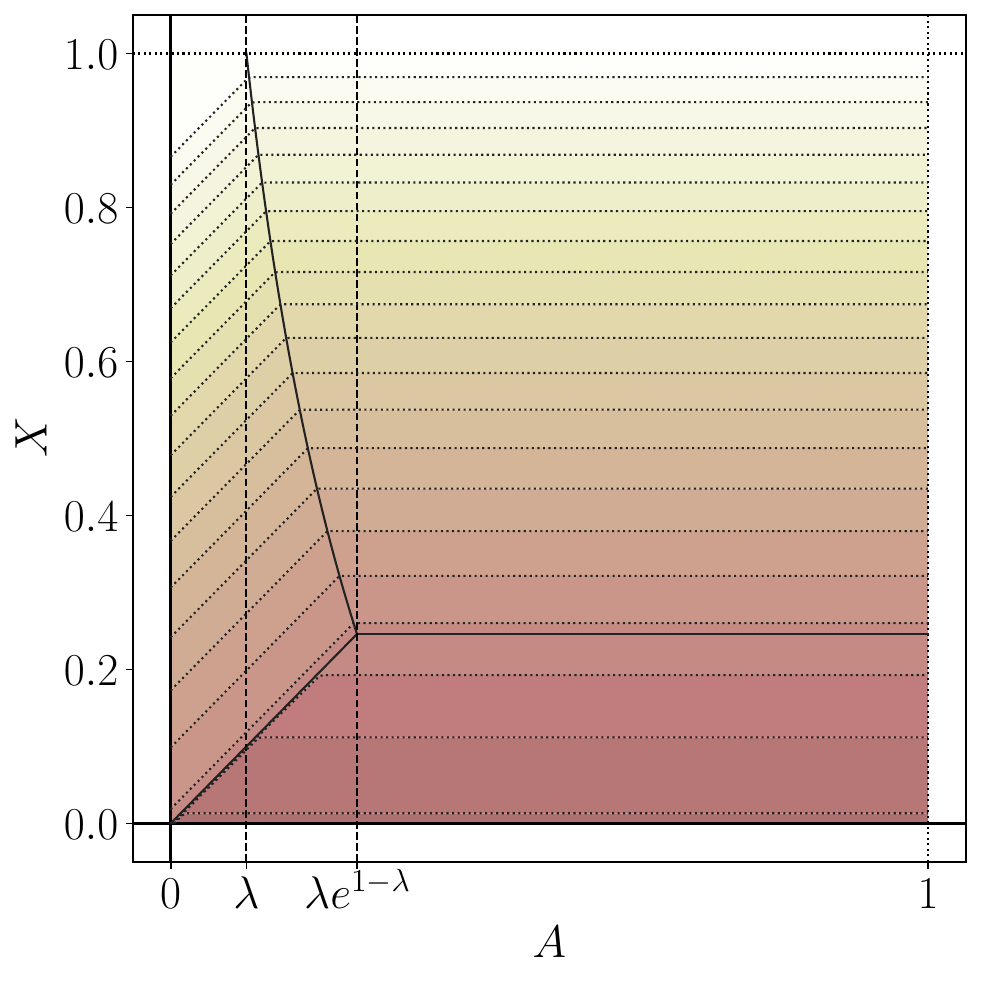}
        \caption{$f$ with $\lambda = 0.1$}
    \end{subfigure}
    \begin{subfigure}{0.32\textwidth}
        \centering
        \includegraphics[width=\textwidth]{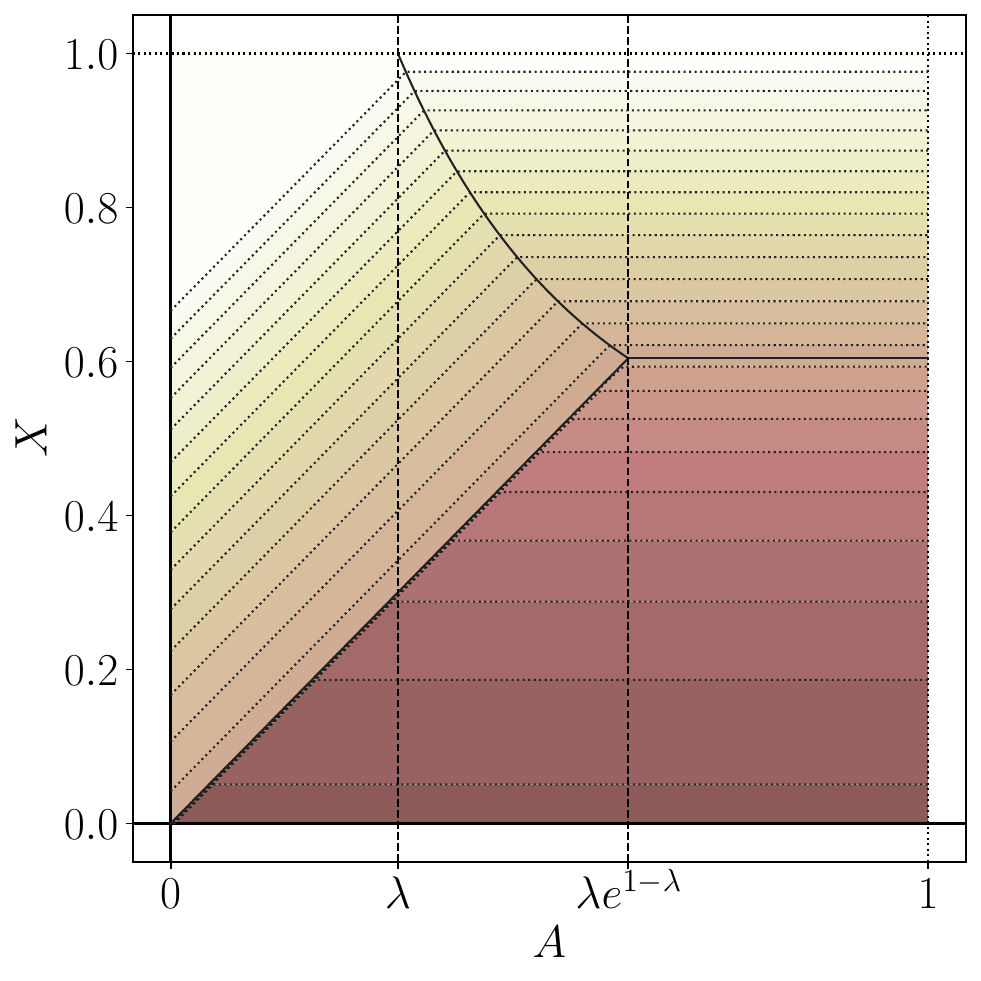}
        \caption{$f$ with $\lambda = 0.3$}
    \end{subfigure}
    \begin{subfigure}{0.32\textwidth}
        \centering
        \includegraphics[width=\textwidth]{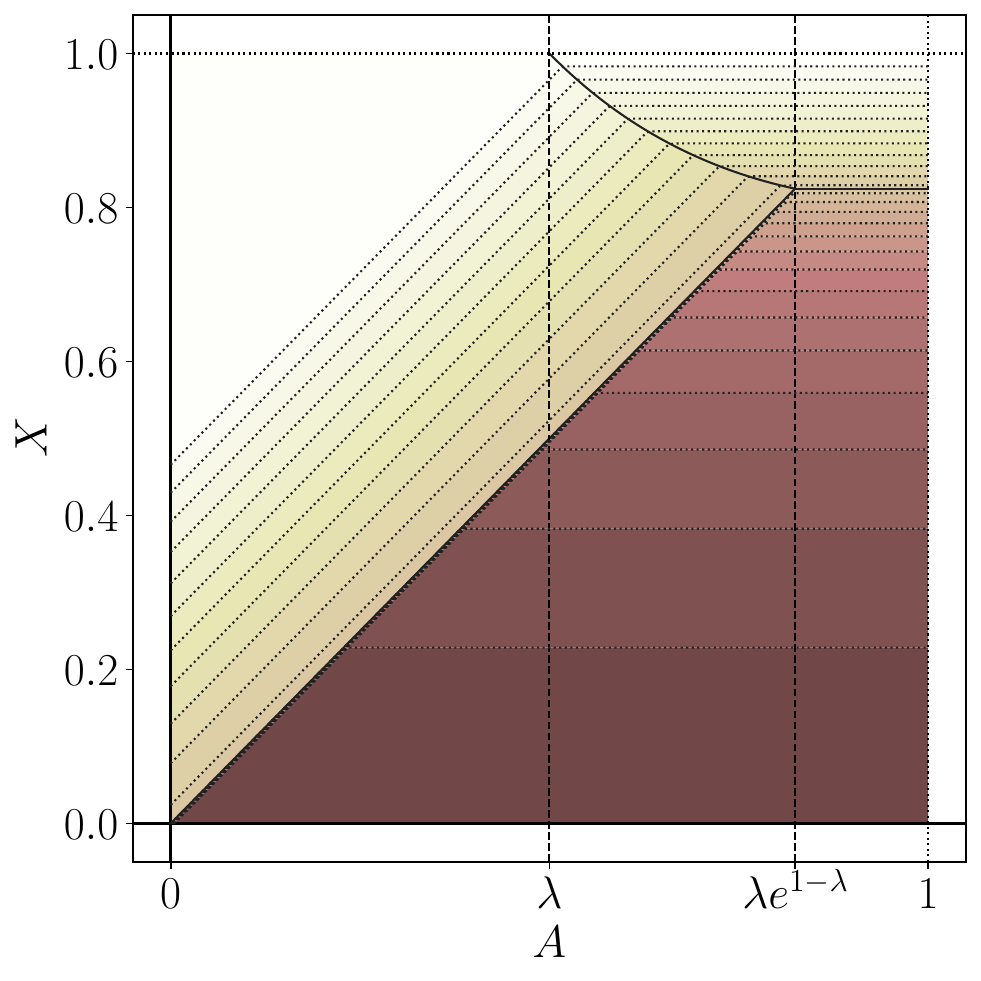}
        \caption{$f$ with $\lambda = 0.5$}
    \end{subfigure}
    \caption{$f_0$, $f_1$, and $f$ with $\lambda \in \{0.1, 0.3, 0.5\}$. (a)-(c) depict the function values of $f_0$ and $f_1$ with respect to $z \in [0, 1]$. (d)-(f) depict the contour plots with respect to $A \in [0, 1]$ and $X \in [0, 1]$: the brighter the color is, the closer to 1 the function value is.}
    \label{fig:vw:fig}
\end{figure}

The algorithm also takes as input a parameter $\lambda \in [0,1]$, which determines how much it trusts the advice; this is directly reflected in the choice of $f$.
When $\lambda = 0$, the penalty reduces to $f(A,X) = e^{X-1}$, and the algorithm recovers the classical \Balance{} algorithm with a $(1 - 1/e)$-competitive guarantee.
At the other extreme, when $\lambda = 1$, the penalty becomes
\[
    f(A,X) = \begin{cases}
        0, &\text{if $A > X$,} \\
        1, &\text{if $A \leq X$,}
    \end{cases}
\]
causing the algorithm to follow the advice exactly.
As $\lambda$ varies from $0$ to $1$, the algorithm achieves the robustness-consistency tradeoff described in \cref{thm:vertex_weighted}.
We illustrate the behavior of $f_0$, $f_1$, and $f$ across different values of $\lambda$ in \cref{fig:vw:fig}.

We now restate and prove \cref{thm:vertex_weighted} in the remaining of this section:
\vwalgmain*

Our analysis proceeds as follows.
In \cref{sec:vwalg:pd}, we perform a primal-dual analysis to bound robustness and consistency of \LAB{} using expressions involving $f$.
In \cref{sec:vwalg:int}, we analyze these expressions under integral advice to establish the guarantees in \cref{thm:vertex_weighted}.
In \cref{sec:vwalg:frac}, we extend the results to fractional advice.
Finally, in \cref{sec:vwalg:adw}, we show how \LAB{} extends to the AdWords problem under the small bids assumption.

\subsection{Primal-dual analysis}
\label{sec:vwalg:pd}
We prove \cref{thm:vertex_weighted} using an online primal-dual analysis.
Recall the primal and dual LPs for vertex-weighted bipartite matching in \cref{sec:prelim}.
We will maintain dual variables $(\alpha, \beta) \in \R^U \times \R^V$ online, such that the objective value of the dual is equal to the total weight obtained by the algorithm.

\paragraph{Properties of penalty functions.}
Before presenting the construction of our dual variables, we first argue that the penalty functions used in the algorithm indeed satisfy the properties that we previously discussed.
\begin{lemma}
    The functions $f_0$ and $f_1$ from \cref{eq:vw:penaltyint} and $f$ from \cref{eq:vw:penaltyfrac} satisfy the following properties:
    \begin{enumerate}
        \item \label{prop:vw:func1} $f_0$ and $f_1$ are increasing with $f_0(1) = f_1(1) = 1$;
        \item \label{prop:vw:func2} for any $z \in [0, 1]$, $f_0(z) \geq f_1(z)$;
        \item \label{prop:vw:func3} for any $A \in [0, 1]$, $f(A, X)$ is increasing on $X \in [0, 1]$ with $f(A, 1) = 1$;
        \item \label{prop:vw:func4} for any $X \in [0, 1]$, $f(A, X)$ is decreasing on $A \in [0, 1]$;
        \item \label{prop:vw:func5} for any $X \in [0, 1]$, $f(0, X) = f_0(X)$ and $f(1, X) = f_1(X)$.
    \end{enumerate}
\end{lemma}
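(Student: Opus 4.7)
The plan is to establish Properties~1 and~2 (the substantive content about $f_0$ and $f_1$) first, and then deduce Properties~3--5 by case analysis on the definition \eqref{eq:vw:penaltyfrac} of $f$. A recurring tool throughout is the identity $W(-\mu e^{-\mu}) = -\mu$ for $\mu \in [0,1]$, which is a direct consequence of \eqref{eq:pre:Wprop0} since $-\mu$ lies in the range $[-1,0]$ of the principal branch of $W$.

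For Property~1, the claims about $f_0$ are immediate from the monotonicity of the exponential together with $\lambda \in [0,1]$. For $f_1$, I would split into three tasks: (a) verify that $-\lambda e^{1-\lambda-z} \in [-1/e, 0)$ on the domain of the second piece, so the Lambert-$W$ expression is well-defined, which reduces to checking $\mu := \lambda e^{1-\lambda} \in [0,1]$ (whose maximum $1$ is attained at $\lambda = 1$); (b) apply the identity to compute $f_1(\lambda e^{1-\lambda}) = e^{\lambda-1}$ from both pieces (continuity at the junction) and $f_1(1) = 1$; (c) prove piecewise monotonicity. On the first piece, $(e^{\lambda-1}-\lambda)/(1-z)$ is manifestly increasing because the numerator is non-negative (the map $\lambda \mapsto e^{\lambda-1} - \lambda$ has non-positive derivative on $[0,1]$ and vanishes at $\lambda = 1$) and the denominator is positive and decreasing. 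On the second piece, writing $h(z) := -\lambda e^{1-\lambda-z}$ and $y := W(h(z)) \in (-1,0)$, differentiation using \eqref{eq:pre:Wprop1} together with $h'(z) = -h(z)$ yields $f_1'(z) = -\lambda/[y(1+y)] > 0$.

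Property~2 is the main obstacle. On the second piece $z \in [\lambda e^{1-\lambda}, 1]$, the key move is to again substitute $y = W(-\lambda e^{1-\lambda-z})$; from $y e^y = -\lambda e^{1-\lambda-z}$ one derives $e^{z+\lambda-1} = \lambda/(-y e^y)$, so the inequality $f_0(z) \geq f_1(z)$ becomes $\min\{\lambda/(-y e^y), 1\} \geq -\lambda/y$. Each case is elementary: when $\lambda/(-y e^y) \leq 1$, the inequality reduces to $e^y \leq 1$ (true since $y \leq 0$); when $\lambda/(-y e^y) > 1$, it reduces to $y \leq -\lambda$, which holds because $y$ ranges over $[-\lambda e^{1-\lambda}, -\lambda]$ on this piece (using $e^{1-\lambda} \geq 1$). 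On the first piece $z \in [0, \lambda e^{1-\lambda})$, I would split on whether $z \leq 1-\lambda$ (so $f_0(z) = e^{z+\lambda-1}$) or $z > 1-\lambda$ (so $f_0(z) = 1$). In the first subcase, $g(z) := e^{z+\lambda-1}(1-z)$ satisfies $g'(z) = -z e^{z+\lambda-1} \leq 0$, so the inequality $g(z) \geq e^{\lambda-1} - \lambda$ is most constraining at the right endpoint $z = \mu := \lambda e^{1-\lambda}$; substituting and using $\lambda = \mu e^{\lambda-1}$, one computes $g(\mu) = e^\mu(e^{\lambda-1} - \lambda)$, so the inequality collapses to $e^\mu \geq 1$. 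In the (possibly empty) second subcase, monotonicity of $f_1$ on its first piece reduces the inequality to the limit as $z \to \lambda e^{1-\lambda}$, where $f_1 \to e^{\lambda-1} \leq 1 = f_0(z)$.

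Properties~3--5 then follow by inspection of \eqref{eq:vw:penaltyfrac}. For Property~5, plugging $A = 0$ gives $\max\{f_0(X), f_1(X)\} = f_0(X)$ by Property~2, and $A = 1$ gives $f_1(X)$ via the $A > X$ branch for $X < 1$, with $f(1,1) = \max\{f_0(0), f_1(1)\} = 1 = f_1(1)$. For Property~3, $f(A, X)$ is monotone in $X$ on each side of the threshold $X = A$ by Property~1; the only concern is the jump at $X = A$, but $f(A,A) = \max\{e^{\lambda-1}, f_1(A)\} \geq f_1(A) = \lim_{X \to A^-} f(A,X)$, so the jump goes upward. The endpoint $f(A, 1) = 1$ then follows from the $A \leq X$ branch together with $f_1(1) = 1$. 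Property~4 is symmetric: as $A$ increases past $X$, $f$ drops from $\max\{e^{\lambda-1}, f_1(X)\}$ to $f_1(X)$, and elsewhere monotonicity in $A$ is inherited from Property~1 through the $A \leq X$ branch, where $f_0(X - A)$ is non-increasing in $A$.
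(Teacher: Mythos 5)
Your proposal is correct, and its overall architecture matches the paper's: establish Properties~1 and~2 about $f_0$ and $f_1$ first, then read off Properties~3--5 from the definition of $f$. The one substantive divergence is in Property~2, the heart of the lemma. The paper proves the \emph{stronger} intermediate bound $f_1(z) \leq e^{z-1}$ (and pairs it with $f_0(z) \geq e^{z-1}$): on the first piece this comes from monotonicity of $z \mapsto (1-z)e^{z-1}$, and on the second piece from showing $W(-\lambda e^{1-\lambda-z}) < -\lambda e^{1-z}$, an argument that simultaneously certifies well-definedness of $f_1$. You instead compare $f_0$ and $f_1$ directly: on the second piece via the substitution $y = W(-\lambda e^{1-\lambda-z}) \in [-\lambda e^{1-\lambda}, -\lambda]$, which reduces both cases of the $\min$ in $f_0$ to the elementary facts $e^y \leq 1$ and $y \leq -\lambda$; on the first piece via monotonicity of $e^{z+\lambda-1}(1-z)$ and the clean evaluation $g(\mu) = e^{\mu}(e^{\lambda-1}-\lambda)$ at $\mu = \lambda e^{1-\lambda}$. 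Your route is arguably more transparent for this lemma in isolation (the case analysis collapses to one-line inequalities), while the paper's route yields the reusable bound $f_1 \leq e^{\,\cdot\,-1} \leq f_0$ and folds the well-definedness check into the same computation (you handle well-definedness separately, which is fine). Your treatment of Properties~3 and~4 is also slightly more careful than the paper's — you explicitly verify that the discontinuity of $f$ across the line $X = A$ jumps in the right direction, a point the paper leaves implicit — and all of your individual computations (continuity at the junction $z = \lambda e^{1-\lambda}$, the derivative $-\lambda/[y(1+y)] > 0$, and the endpoint identities) check out.
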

\begin{proof}
    Let us first prove Property \ref{prop:vw:func1}.
    It is trivial to see that $f_0$ is an increasing function with $f_0(1) = 1$.
    We can also see that $f_1$ is an increasing function since $z \mapsto -\lambda e^{1 - \lambda - z}$ is increasing on $z \in [\lambda e^{1-\lambda}, 1]$, and $t \mapsto W(t)$ is also increasing on $t \geq -\lambda e^{1-\lambda-\lambda e^{1-\lambda}} \geq -1/e$.
    We have $f_1(1) = 1$ by definition.

    For Property \ref{prop:vw:func2}, we prove a stronger statement that $f_1(z) \leq e^{z-1}$ for any $z \in [0, 1]$;
    note that $f_0(z) = e^{z + \lambda - 1} \geq e^{z - 1}$ since $\lambda \in [0, 1]$.
    Indeed, for $z \in [0, \lambda e^{1 - \lambda})$, observe that $f_1(z) \leq e^{z - 1}$ is implied by
    \[
        \frac{1-z}{e^{1 - z}}
        > \frac{1 - \lambda e^{1-\lambda}}{e^{1 - \lambda e^{1 - \lambda}}}
        \geq \frac{1 - \lambda e^{1 - \lambda}}{e^{1 - \lambda}}
        = e^{\lambda - 1} - \lambda,
    \]
    where the first inequality comes from that $z \mapsto \frac{1-z}{e^{1-z}}$ is decreasing on $z \geq 0$, and the second from that $e^{1 - \lambda e^{1 - \lambda}} \leq e^{1 - \lambda}$ for $\lambda \in [0, 1]$.
    Meanwhile, for $\lambda \in [0, 1]$ and $z \in [\lambda e^{1 - \lambda}, 1)$, we have
    \[
        -\lambda e^{1-z} \cdot e^{-\lambda e^{1-z}}
        > -\lambda e^{1-z} \cdot e^{-\lambda}
        \geq - \lambda e^{1-\lambda} \cdot e^{- \lambda e^{1 - \lambda}}
        \geq -\frac{1}{e},
    \]
    where the first inequality follows from that $ -\lambda e^{1-z} < -\lambda$, and the second from that $z \geq \lambda e^{1-\lambda}$.
    We thus have
    \[
         -\lambda e^{1-z}
         > W(-\lambda e^{1 - \lambda - z})
         \geq -\lambda e^{1-\lambda}
         \geq -1,
    \]
    where the first inequality implies $f_1(z) \leq e^{z-1}$ while the last two inequalities show that $f_1$ is well-defined.

    The rest of the properties can be easily shown by the above two properties.
    For Property \ref{prop:vw:func3}, it is easy to see that $f(A, X)$ is increasing on $X$ by the definition of $f$ as well as Property \ref{prop:vw:func1}.
    Note also that $f(A, 1) = 1$ again due to Property \ref{prop:vw:func1}.
    For Property \ref{prop:vw:func4}, $f_0(X-A)$ is decreasing on $A \in [0, X]$ by Property \ref{prop:vw:func1} .
    The proof then follows from the definition of $f$.
    Lastly, for Property \ref{prop:vw:func5}, notice that $f(0, X) = \max\{f_0(X), f_1(X)\} = f_0(X)$ for any $X \in [0, 1]$ due to Property \ref{prop:vw:func2}.
    It is trivial by definition to see $f(1, X) = f_1(X)$ for any $X \in [0, 1]$.
\end{proof}

\paragraph{Construction of dual variables.}
We now describe the construction of the dual variables.
First, initialize all dual variables to 0.
Consider now an iteration when an online vertex $v \in V$ arrives.
For each offline vertex $u \in U$, let $\Aup{v}_u$ denote the total amount allocated to $u$ by the advice up to and including $v$.
For each neighbor $u \in N(v)$ of $v$, let $x_{u, v}$ denote the amount that $u$ is filled by the algorithm in this iteration, and let $\Xup{v}_u$ be the total amount allocated to $u$ by the algorithm by the end of this iteration.
We set
\begin{itemize}
    \item $\alpha_u \gets \alpha_u + x_{u, v} \cdot w_u f(\Aup{v}_u,\Xup{v}_u)$ for every $u \in N(v)$, and
    \item $\beta_v \gets \max_{u \in N(v)} \left\{ w_u (1-f(\Aup{v}_u, \Xup{v}_u)) \right\}$.  
\end{itemize}
Note that the $\alpha_u$ variables are potentially increased in each iteration of the algorithm. On the other hand, each $\beta_v$ variable is only updated once throughout the execution of the algorithm -- in the iteration when $v$ arrives. 

\begin{observation} \label{obs:vw:betav}
    For each $v \in V$, if $\sum_{u \in N(v)} x_{u, v} < 1$, we then have $\beta_v = 0$.
    Moreover, for every $u \in N(v)$ with $x_{u, v} > 0$, $w_u (1 - f(\Aup{v}_u, \Xup{v}_u))$ is constant. 
\end{observation}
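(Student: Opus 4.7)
The plan is to interpret \LAB{} as a continuous waterfilling process. For each $u \in N(v)$, define the potential $\phi_u(X) := w_u(1 - f(A_u^{(v)}, X))$, which is non-increasing in $X$ by the monotonicity of $f$ in its second argument. Let $M(t) := \max_{u \in N(v)} \phi_u(X_u(t))$ be the current maximum potential during the processing of $v$. The greedy rule in \LAB{} pushes infinitesimal flow exactly to the neighbors attaining $M(t)$, so $M(t)$ is non-increasing over time.

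For the first claim, if $v$ is not fully matched at termination, the algorithm must have halted because every neighbor was saturated: for each $u \in N(v)$, either $X_u^{(v)} = 1$ (capacity exhausted) or $\phi_u(X_u^{(v)}) = 0$ (no marginal value). Since $f(A, 1) = 1$, both cases give $f(A_u^{(v)}, X_u^{(v)}) = 1$, whence $\beta_v = \max_{u \in N(v)} \phi_u(X_u^{(v)}) = 0$.

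For the second claim, the plan is to invoke the standard waterfilling invariant: once a neighbor $u$ enters the active set (i.e., $\phi_u$ reaches $M$), the algorithm maintains $\phi_u = M$ for the rest of the iteration by distributing flow among all active neighbors so that their potentials descend in lockstep. Hence every $u$ with $x_{u,v} > 0$ is active at the termination time $T$, yielding $\phi_u(X_u^{(v)}) = M(T) = \beta_v$; in particular these values coincide across all such $u$.

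The main technical obstacle is that $f(A, \cdot)$ can have an upward jump at $X = A$ (precisely when $A < \lambda e^{1-\lambda}$), which would naively cause $\phi_u$ to drop discontinuously below $M$. To handle this, I would interpret the dynamics so that when an active $u$ reaches its jump, it is paused at $X_u = A_u^{(v)}$ while the remaining active neighbors continue to be pushed; once $M(t)$ decreases to $u$'s post-jump potential, $u$ rejoins the active set and the algorithm resumes pushing through the jump. This phase-by-phase continuation preserves the invariant $\phi_u = M$ for all currently active $u$, delivering the common value of $\phi_u(X_u^{(v)})$ among all $u$ with $x_{u,v} > 0$.
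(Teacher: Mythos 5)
Your treatment of the first claim, and your overall skeleton for the second claim (greedy push to the argmax of the potential, hence equal potentials at termination), coincide with the paper's proof, which disposes of both claims in two sentences. Where you go beyond the paper is in flagging the discontinuity of $f(A,\cdot)$ at $X=A$: this is a genuine feature of the penalty function (for $A < \lambda e^{1-\lambda}$ one has $\lim_{X\to A^-}f(A,X)=f_1(A) < e^{\lambda-1}=f_0(0)=f(A,A)$, so the potential drops downward by a positive amount), and the paper's one-line justification of the second claim silently assumes it away.

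However, your pause-and-rejoin patch does not close the gap. The invariant you maintain is ``$\phi_u = M$ for all \emph{currently active} $u$,'' but the conclusion you need is about all $u$ with $x_{u,v}>0$, and the iteration can terminate while a neighbor is paused. Concretely, take $\lambda = \tfrac12$ and an online vertex $v$ with two unit-weight neighbors $u_1,u_2$, both at level $0$ when $v$ arrives, with cumulative advice $\Aup{v}_{u_1}=0.3$ and $\Aup{v}_{u_2}=1$ (accumulated from earlier arrivals whose flow the algorithm routed to heavier decoy vertices). Since $f(A,X)=f_1(X)$ whenever $X<A$, both potentials equal $1-f_1(X)$ and the two vertices fill in lockstep until each reaches level $0.3$; at that moment $u_1$'s potential jumps down from $1-f_1(0.3)\approx 0.848$ to $1-e^{-1/2}\approx 0.394$, while $u_2$ stays at $\approx 0.848$. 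The remaining $0.4$ units of $v$'s flow then go entirely to $u_2$, whose potential would only descend to $1-e^{-1/2}$ after absorbing $\lambda e^{1-\lambda}-0.3\approx 0.524>0.4$ further units. The iteration therefore ends with $\Xup{v}_{u_1}=0.3$, $\Xup{v}_{u_2}=0.7$, and final potentials $\approx 0.394$ versus $1-f_1(0.7)\approx 0.645$: both neighbors received positive flow but their potentials differ (one can check the pseudocode's threshold rule produces the same allocation). So the ``common value among all $u$ with $x_{u,v}>0$'' you assert in your last sentence does not follow from the phase-by-phase continuation; a complete proof must either show that such terminal configurations are unreachable or handle paused vertices explicitly, and as it stands neither your argument nor the paper's addresses this case.
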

\begin{proof}
    For the first statement, notice that the algorithm pushes water from $v$ until $v$ is fully matched or its neighbors $N(v)$ are saturated.
    Therefore, $\sum_{u \in N(v)} x_{u, v} < 1$ implies that $\Xup{v}_u = 1$ for all $u \in N(v)$, and hence, $\beta_v = 0$ since $f(\cdot, 1) = 1$.
    
    The second statement is because the algorithm always allocates an infinitesimal unit to a neighbor with highest potential, so all neighbors $u \in N(v)$ with $x_{u,v} > 0$ must have the same potential at the end of the iteration.
\end{proof}

\begin{lemma}
    \label{lem:pd_equal_vertex_weighted}
    The value of the algorithm is equal to the objective value of $(\alpha, \beta)$ in the dual LP. 
\end{lemma}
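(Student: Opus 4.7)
The plan is to prove the stronger invariant that after processing each online vertex, the primal value accrued by the algorithm equals the dual objective value so far. Since all primal and dual variables start at $0$, it suffices to show that for every online vertex $v \in V$, the increase in the dual objective during $v$'s iteration equals the increase in the primal objective, which is $\sum_{u \in N(v)} w_u x_{u,v}$. From the update rules, the dual increase during $v$'s iteration is
\begin{equation*}
\Delta_v := \sum_{u \in N(v)} x_{u,v} \cdot w_u\, f(\Aup{v}_u, \Xup{v}_u) \;+\; \beta_v,
\qquad \beta_v = \max_{u \in N(v)} w_u\bigl(1 - f(\Aup{v}_u, \Xup{v}_u)\bigr).
\end{equation*}
So the goal reduces to showing $\Delta_v = \sum_{u \in N(v)} w_u x_{u,v}$ for every $v$.

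I would split into the two cases identified by \Cref{obs:vw:betav}. In the first case, $v$ is fully matched, i.e.\ $\sum_{u \in N(v)} x_{u,v} = 1$. By the second part of \Cref{obs:vw:betav}, the quantity $w_u(1 - f(\Aup{v}_u, \Xup{v}_u))$ takes a common value $P$ across all $u \in N(v)$ with $x_{u,v} > 0$; and since the algorithm always pushes to the neighbor of maximum potential, any $u \in N(v)$ with $x_{u,v} = 0$ has a potential $\le P$, so $\beta_v = P$. Rewriting $x_{u,v} w_u f(\Aup{v}_u, \Xup{v}_u) = w_u x_{u,v} - x_{u,v}\cdot P$ for the active neighbors and summing, the $-P$ contributions telescope with $\sum_u x_{u,v} = 1$ against the $+P$ from $\beta_v$, yielding $\Delta_v = \sum_{u \in N(v)} w_u x_{u,v}$.

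In the second case, $v$ is not fully matched, so by the first part of \Cref{obs:vw:betav} every $u \in N(v)$ must be saturated, i.e.\ $\Xup{v}_u = 1$. By Property~\ref{prop:vw:func3} of the penalty functions, $f(\Aup{v}_u, 1) = 1$ for every such $u$, so $\beta_v = 0$ and each $\alpha_u$ increment is exactly $x_{u,v} w_u$. Again $\Delta_v = \sum_{u \in N(v)} w_u x_{u,v}$, as required. Summing over all $v \in V$ completes the proof.

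The argument is essentially bookkeeping, so I do not anticipate a serious obstacle; the only subtlety is making sure in the full-matching case that the max defining $\beta_v$ is indeed attained on a vertex receiving positive flow (rather than on some already-saturated neighbor with potential $0$), which is precisely what the ``greedy-on-potential'' behavior of the algorithm guarantees and is the content of \Cref{obs:vw:betav}.
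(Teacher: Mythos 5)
Your proof is correct and follows essentially the same route as the paper's: both establish the per-iteration invariant $\Delta\ALG = \Delta\DUAL$ and invoke \Cref{obs:vw:betav} to handle the two cases (the paper merges them by writing $\beta_v = \sum_{u\in N(v)} x_{u,v}\beta_v$, which is valid in both cases, while you split them explicitly). The subtlety you flag — that the max defining $\beta_v$ is attained on a neighbor with positive flow — is exactly the point the paper's step (b) relies on as well.
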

\begin{proof}
    Let $\ALG$ and $\DUAL$ denote the total weight obtained by the algorithm and the objective value of $(\alpha, \beta)$ at any iteration.
    At the very beginning, we have $\ALG = \DUAL = 0$.
    We will show that $\Delta \ALG = \Delta \DUAL$ in each iteration of the algorithm.
    Consider some iteration when an online vertex $v$ arrives.
    In that iteration, we have
    $
        \Delta \ALG = \sum_{u \in N(v)} w_u x_{u, v}.
    $
    Let us now calculate $\Delta \DUAL$.
    For clarity, let $A_u := \Aup{v}_u$ and $X_u := \Xup{v}_u$.
    We then have
    \begin{align*}
        \Delta \DUAL
        & = \sum_{u \in U} \Delta \alpha_u + \beta_v \\
        & \stackrel{(a)}{=} \sum_{u \in N(v)} x_{u, v} \cdot w_u f(A_u, X_u) + \sum_{u \in N(v)} x_{u, v} \cdot \beta_v \\
        & = \sum_{u \in N(v)} x_{u, v} \cdot \left(w_u f(A_u, X_u) + \beta_v\right)\\
        & \stackrel{(b)}{=} \sum_{u \in N(v)} x_{u, v} \cdot \left(w_u f(A_u, X_u) + w_u(1 - f(A_u, X_u))\right) \\
        & = \sum_{u \in N(v)} w_u x_{u, v} \\
        & = \Delta \ALG,
    \end{align*}
    where both (a) and (b) follow from \cref{obs:vw:betav}.
\end{proof}

We now analyze the robustness of the algorithm.
\begin{lemma}
    \label{lem:vw_r}
    The algorithm is $r$-robust for any $r$ satisfying that, for any $(u, v) \in E$, 
    \begin{equation}
    \label{eq:vw_r}
        r \leq  \int_0^{\Xup{v}_u} f(\Aup{v}_u,z) \,dz + (1 - f(\Aup{v}_u, \Xup{v}_u)).
    \end{equation}
\end{lemma}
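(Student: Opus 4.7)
The plan is to apply \Cref{lem:pre:pda} with $\rho = r$. \Cref{lem:pd_equal_vertex_weighted} already establishes reverse weak duality (in fact, equality), so the remaining task is to verify approximate dual feasibility: for every edge $(u,v) \in E$, we must show $\alpha_u + \beta_v \geq r \cdot w_u$. I will bound $\alpha_u$ and $\beta_v$ separately against the two summands on the right-hand side of \eqref{eq:vw_r} multiplied by $w_u$, and combine.

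The more involved of the two is the bound on $\alpha_u$. Interpreting the algorithm as a continuous process, in each iteration $t$ the infinitesimal update to $\alpha_u$ as an infinitesimal amount is pushed to $u$ is $w_u f(\Aup{t}_u, z)\,dz$, where $z = X_u$ increases from $X_u^{t-}$ to $\Xup{t}_u$ and $\Aup{t}_u$ is held fixed throughout the iteration. Since $\alpha_u$ is monotonically non-decreasing, at the end of iteration $v$ we have
\[
    \alpha_u \;\geq\; w_u \sum_{t \preceq v} \int_{X_u^{t-}}^{\Xup{t}_u} f(\Aup{t}_u, z)\,dz.
\]
The key observation is now that, by Property~\ref{prop:vw:func4}, $f(\cdot,z)$ is decreasing in its first argument, and the advice totals satisfy $\Aup{t}_u \leq \Aup{v}_u$ for $t \preceq v$ since $a \geq 0$. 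Hence $f(\Aup{t}_u, z) \geq f(\Aup{v}_u, z)$ pointwise, and because the intervals $[X_u^{t-}, \Xup{t}_u]$ over $t \preceq v$ concatenate to $[0, \Xup{v}_u]$, we obtain
\[
    \alpha_u \;\geq\; w_u \int_0^{\Xup{v}_u} f(\Aup{v}_u, z)\,dz.
\]

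For $\beta_v$, the bound is immediate from its definition as a maximum over $N(v)$: taking $u' = u \in N(v)$ gives $\beta_v \geq w_u\bigl(1 - f(\Aup{v}_u, \Xup{v}_u)\bigr)$. Adding the two bounds and using the hypothesis on $r$ yields
\[
    \alpha_u + \beta_v \;\geq\; w_u\!\left[\int_0^{\Xup{v}_u} f(\Aup{v}_u, z)\,dz + \bigl(1 - f(\Aup{v}_u, \Xup{v}_u)\bigr)\right] \;\geq\; r \cdot w_u,
\]
so \Cref{lem:pre:pda} yields $\ALG \geq r \cdot \OPT$, as required.

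The main obstacle is the $\alpha_u$ bound: the per-iteration contribution uses the advice total at the \emph{time} of allocation, whereas \eqref{eq:vw_r} fixes the first argument at $\Aup{v}_u$. The monotonicity of $f$ in $A$ (Property~\ref{prop:vw:func4}), combined with the non-decreasingness of advice totals over time, is precisely what lets us replace each $\Aup{t}_u$ by the later value $\Aup{v}_u$ and telescope the sum into a single integral. No extra case analysis is needed for whether $v$ is fully matched, since the $\beta_v$ bound is a direct consequence of its max-based definition.
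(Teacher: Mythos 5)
Your proof is correct and follows essentially the same route as the paper's: lower-bound $\alpha_u$ by its value at iteration $v$, swap each $\Aup{t}_u$ for $\Aup{v}_u$ via the monotonicity of $f$ in $A$, convert to the integral $\int_0^{\Xup{v}_u} f(\Aup{v}_u,z)\,dz$, and bound $\beta_v$ by instantiating the max at $u$. The only (harmless) difference is that you recast the discrete dual update $x_{u,t}\, w_u f(\Aup{t}_u, \Xup{t}_u)$ as a continuous integral from the outset — this under-counts the actual $\alpha_u$ since $f$ is increasing in $X$, so your stated lower bound still holds for the paper's construction.
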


\begin{proof}
Due to \cref{lem:pre:pda,lem:pd_equal_vertex_weighted}, it suffices to prove that that dual is approximately feasible, i.e.,
\[
\alpha_u + \beta_v \geq r \cdot w_u  \text{for all $(u, v) \in E$}.
\]
Note that, for any $(u, v) \in E$, we have
\begin{align*}
    \alpha_u + \beta_v &
    \stackrel{(a)}{\geq} \sum_{t \preceq v}x_{u, t} \cdot  w_u f(\Aup{t}_u, \Xup{t}_u) + \max_{u' \in N(v)} \left\{  w_{u'}(1 - f(\Aup{v}_{u'}, \Xup{v}_{u'})) \right\} \\
    &\geq \sum_{t \preceq v} x_{u, t} \cdot w_u f(\Aup{t}_u, \Xup{t}_u) +  w_{u} (1 - f(\Aup{v}_u, \Xup{v}_u)) \\
    &\stackrel{(b)}{\geq} \sum_{t \preceq v} x_{u, t} \cdot w_u f(\Aup{v}_u, \Xup{t}_u) +  w_{u}(1 - f(\Aup{v}_u, \Xup{v}_u))  \\
    &\stackrel{(c)}{\geq} w_u \cdot \left[ \int_0^{\Xup{v}_u} f(\Aup{v}_u, z) \,dz + (1 - f(\Aup{v}_u, \Xup{v}_u)) \right],
\end{align*}
where (a) is because $\alpha_u$ does not decrease throughout the execution, (b) is because $f(A,X)$ is decreasing in $A$, and (c) is because  $f(A,X)$ is increasing in $X$. 
Therefore, the algorithm is $r$-robust whenever $r$ satisfies that, for any $(u, v) \in E$, 
\[
    r \leq \int_0^{\Xup{v}_u} f(\Aup{v}_u, z) \,dz + (1 - f(\Aup{v}_u, \Xup{v}_u)).
\]
\end{proof}

Next, we analyze the consistency of the algorithm. 

\begin{lemma}
    \label{lem:vw_c}
    The algorithm is $c$-consistent, for any value of $c$ satisfying that, for every $u \in U$,
    \begin{equation}
    \label{eq:vw_c}
        \sum_{t \in N(u)} \left[ x_{u, t} \cdot f(\Aup{t}_u, \Xup{t}_u) + a_{u, t} \cdot (1-f(\Aup{t}_u, \Xup{t}_u)) \right]
        \geq c\cdot A_u,
    \end{equation}
    where $A_u := \sum_{t \in N(u)} a_{u, t}$ denotes the total amount that $u$ is eventually filled by the advice.
    \end{lemma}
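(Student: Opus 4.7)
The plan is to carry out a primal-dual accounting that parallels the robustness proof in \Cref{lem:vw_r}, but now charging the dual objective against the advice value $\ADVICE = \sum_{(u,v) \in E} w_u a_{u,v}$ rather than $\OPT$. By \Cref{lem:pd_equal_vertex_weighted}, we have $\ALG = \sum_{u \in U} \alpha_u + \sum_{v \in V} \beta_v$, so it suffices to show that this dual sum is at least $c \sum_{u \in U} w_u A_u$.

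First I would expand the $\alpha$-contribution using its online update rule, namely
\[
\alpha_u = \sum_{t \in N(u)} x_{u,t} \cdot w_u\, f(\Aup{t}_u, \Xup{t}_u).
\]
Next I would lower bound each $\beta_v$ by exploiting that the advice $\{a_{u,v}\}$ is a feasible fractional matching, so $\sum_{u \in N(v)} a_{u,v} \leq 1$. Since $\beta_v = \max_{u \in N(v)} w_u(1 - f(\Aup{v}_u, \Xup{v}_u))$ dominates any convex combination of the terms $w_u(1 - f(\Aup{v}_u, \Xup{v}_u))$ over $u \in N(v)$ with total weight at most $1$, I obtain
\[
\beta_v \geq \sum_{u \in N(v)} a_{u,v} \cdot w_u \bigl(1 - f(\Aup{v}_u, \Xup{v}_u)\bigr).
\]

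Then I would swap the order of summation to group contributions by each offline vertex $u$. Combining the two bounds and interchanging sums over $v \in V$ and $u \in N(v)$ with sums over $u \in U$ and $t \in N(u)$, the dual objective becomes at least
\[
\sum_{u \in U} w_u \sum_{t \in N(u)} \Bigl[ x_{u,t}\, f(\Aup{t}_u, \Xup{t}_u) + a_{u,t} \bigl(1 - f(\Aup{t}_u, \Xup{t}_u)\bigr) \Bigr].
\]
Applying the hypothesis \Cref{eq:vw_c} to each inner sum then yields $\sum_{u \in U} \alpha_u + \sum_{v \in V} \beta_v \geq c \sum_{u \in U} w_u A_u = c \cdot \ADVICE$, which is exactly $c$-consistency.

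I do not anticipate a substantive obstacle here, since all the heavy lifting is packaged in \Cref{lem:pd_equal_vertex_weighted} and the pointwise inequality \Cref{eq:vw_c} is assumed in the statement. The only delicate point is justifying the $\beta_v$ lower bound: I need the fact that a maximum of nonnegative quantities dominates any convex sub-combination, together with feasibility of the advice at $v$. The subsequent derivation of explicit numerical values for $c(\lambda)$ (matching the expression in \Cref{thm:vertex_weighted}) will then require analyzing \Cref{eq:vw_c} under specific advice configurations, but that is the content of \cref{sec:vwalg:int,sec:vwalg:frac} and is outside the scope of this lemma.
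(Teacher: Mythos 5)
Your proposal is correct and follows essentially the same route as the paper: expand $\alpha_u$ via its update rule, lower bound $\beta_v$ using the feasibility of the advice at $v$ (the paper writes this as $\beta_v \geq \beta_v \sum_{u\in N(v)} a_{u,v}$ followed by $\beta_v \geq w_u(1-f(\Aup{v}_u,\Xup{v}_u))$ for each $u$, which is the same convex-combination bound you describe), swap the order of summation, and apply the hypothesis vertex by vertex. No gap.
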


    \begin{proof}
    Our goal here is to prove that
    $
        \ALG \geq c \cdot \ADVICE, 
    $
    where $\ALG$ is the value earned by the algorithm, and $\ADVICE$ is the value earned by the advice.
    On the one hand, we have  
    \[
        \ADVICE = \sum_{u \in U} w_u A_u.
    \]
    On the other hand, due to \cref{lem:pd_equal_vertex_weighted}, we have
    \begin{align*}
        \ALG 
        & = \sum_{u \in U} \alpha_u + \sum_{t \in V} \beta_t \\
        & \geq \sum_{u \in U}\alpha_u + \sum_{t \in V} \left( \beta_t \cdot \sum_{u \in N(t)} a_{u, t} \right) \\
        & = \sum_{u \in U}\alpha_u + \sum_{u \in U} \sum_{t \in N(u)} a_{u, t}  \beta_t\\
        & = \sum_{u \in U} \left(\alpha_u + \sum_{t \in N(u)} a_{u, t} \beta_t \right),
    \end{align*}
    where the inequality is due to the feasibility of the advice $\{a_e\}_{e \in E}$.
    Therefore, to show $\ALG \geq c \cdot \ADVICE$, it suffices to show
    \begin{equation}\label{eq:consistency}
        \alpha_u + \sum_{t \in N(u)} a_{u, t} \beta_t \geq c \cdot w_u A_u  \text{for all $u \in U$.}
    \end{equation}
    By construction, observe that the left-hand side of \cref{eq:consistency} is bounded by
    \begin{align*}
        \alpha_u + \sum_{t \in N(u)} a_{u, t} \beta_t 
        &\geq \sum_{t \in N(u)} x_{u, t} \cdot w_u f(\Aup{t}_u,\Xup{t}_u) + \sum_{t \in N(u)}  a_{u, t} \cdot w_u (1-f(\Aup{t}_u, \Xup{t}_u)) \\
        &= w_u \cdot \sum_{t \in N(u)} \left[ x_{u, t} \cdot f(\Aup{t}_u, \Xup{t}_u) + a_{u, t} \cdot (1-f(\Aup{t}_u, \Xup{t}_u)) \right].
    \end{align*}
    Therefore, \cref{eq:consistency} holds for any value of $c$ satisfying that, for all $u \in U$,
    \[
        \sum_{t \in N(u)} \left[ x_{u, t} \cdot f(\Aup{t}_u, \Xup{t}_u) + a_{u, t} \cdot (1-f(\Aup{t}_u, \Xup{t}_u)) \right]
        \geq c\cdot A_u.
    \]
\end{proof}

\subsection{Integral advice} \label{sec:vwalg:int}
We first prove \cref{thm:vertex_weighted} when the advice is integral. This serves as a warm-up, and several ideas from this setting will carry over to the case of fractional advice. When the advice is integral, the expressions for robustness and consistency in \cref{lem:vw_r,lem:vw_c} simplify considerably. 

\begin{lemma}
\label{lem:vw_rc_int}
    When the advice is integral, the algorithm is $r$-robust and $c$-consistent where
    \begin{align*}
    r & = \min_{X \in [0,1]} \min\left\{ \int_0^X f_0(z) \,dz + (1 - f_0(X)), \, \int_0^X f_1(z) \,dz + (1 - f_1(X)) \right\}  \text{and} \\
    c & = \min_{X \in [0,1]} \min_{Y \in [0, X]} \left\{  \int_0^{Y} f_0(z) \,dz + (X - Y) \cdot f_1(X) + (1 - f_1(X))\right\}.
    \end{align*}
\end{lemma}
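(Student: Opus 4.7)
The plan is to specialize the primal-dual inequalities of \Cref{lem:vw_r,lem:vw_c} using the key observation that, for integral advice, $\Aup{v}_u \in \{0, 1\}$ for every $u \in U$ and $v \in V$. Combined with Property~\ref{prop:vw:func5}, this collapses $f(\Aup{v}_u, \cdot)$ to either $f_0$ (when $\Aup{v}_u = 0$) or $f_1$ (when $\Aup{v}_u = 1$). For robustness this is immediate: the per-edge constraint of \Cref{lem:vw_r}, $r \leq \int_0^{\Xup{v}_u} f(\Aup{v}_u, z)\, dz + (1 - f(\Aup{v}_u, \Xup{v}_u))$, becomes either $\int_0^X f_0(z)\,dz + (1 - f_0(X))$ or $\int_0^X f_1(z)\,dz + (1 - f_1(X))$ for $X = \Xup{v}_u \in [0,1]$, and taking the worst case over both cases and over $X \in [0,1]$ yields the stated formula for $r$.

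For consistency, the case $A_u = 0$ is trivial, so fix an offline vertex $u$ with $A_u = 1$ and let $t^* \in N(u)$ be the unique iteration with $a_{u, t^*} = 1$. Set $Y = \Xup{t^*}_u - x_{u, t^*}$ (allocation to $u$ just before $t^*$), $Z = \Xup{t^*}_u$ (allocation to $u$ right after $t^*$), and $X = \sum_{t \in N(u)} x_{u, t}$ (final allocation), so $0 \leq Y \leq Z \leq X \leq 1$. Since $\Aup{t}_u = 0$ for $t \prec t^*$ and $\Aup{t}_u = 1$ for $t \succeq t^*$, splitting the LHS of \Cref{eq:vw_c} at $t^*$ gives
\begin{equation*}
    \sum_{t \prec t^*} x_{u,t}\, f_0(\Xup{t}_u) \;+\; (Z - Y)\, f_1(Z) + (1 - f_1(Z)) \;+\; \sum_{t \succ t^*} x_{u,t}\, f_1(\Xup{t}_u).
\end{equation*}
Using monotonicity of $f_0$ and $f_1$, each sum is lower bounded by the Riemann integrals $\int_0^Y f_0(z)\,dz$ and $\int_Z^X f_1(z)\,dz$ respectively, so the LHS is at least $\int_0^Y f_0(z)\,dz + g(Z)$, where $g(Z) := (Z-Y) f_1(Z) + (1 - f_1(Z)) + \int_Z^X f_1(z)\,dz$.

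The main obstacle is to minimize $g$ over $Z \in [Y, X]$ in order to match the $Z$-free expression in the lemma statement. Differentiating yields $g'(Z) = (Z - Y - 1)\, f_1'(Z)$; since $Z - Y \leq X \leq 1$ and $f_1$ is increasing, $g'(Z) \leq 0$, so $g$ is non-increasing and is minimized at $Z = X$, where $g(X) = (X-Y) f_1(X) + (1 - f_1(X))$. Substituting back and minimizing over $0 \leq Y \leq X \leq 1$ yields the consistency bound claimed in the lemma. The rest is routine bookkeeping using the already-established properties of $f_0$, $f_1$, and $f$.
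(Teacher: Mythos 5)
Your proposal is correct and follows essentially the same route as the paper: integrality collapses $f(\Aup{v}_u,\cdot)$ to $f_0$ or $f_1$, robustness is immediate from \Cref{lem:vw_r}, and consistency is handled by splitting the sum in \Cref{eq:vw_c} at the advised iteration and bounding the pre-advice sum by $\int_0^Y f_0$. The only (harmless) difference is that you retain the post-advice terms $\sum_{t \succ t^*} x_{u,t} f_1(\Xup{t}_u)$ and show via the derivative computation $g'(Z) = (Z-Y-1)f_1'(Z) \le 0$ that they cannot lower the bound, whereas the paper simply discards these nonnegative terms and takes $X = \Xup{t^*}_u$.
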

\begin{proof}
    Recall that $f(0, X) = f_0(X)$ and $f(1, X) = f_1(X)$ for any $X \in [0, 1]$.
    Note also that, for any $(u, v) \in E$, $\Aup{v}_u$ is either 0 or 1.
    The expression for robustness $r$ then follows directly from \cref{lem:vw_r}.

    Let us now turn to consistency.
    Consider any offline vertex $u \in U$.
    If $u$ is exposed in the advice (i.e., $A_u = 0$), \cref{eq:vw_c} in \cref{lem:vw_c} trivially holds.
    On the other hand, if $u$ is matched by the advice with an online vertex $v \in N(u)$ (i.e., $a_{u, v} = 1$), we can observe that $\Aup{t}_u = 0$ for any $t \prec v$, and $\Aup{v}_u = 1$.
    Therefore, \cref{eq:vw_c} reduces to
    \[
        \sum_{t \in N(u) : t \prec v} x_{u, t} \cdot f_0(\Xup{t}_u) + x_{u, v} \cdot f_1(\Xup{v}_u) + (1 - f_1(\Xup{v}_u) \geq c,
    \]
    which is implied by 
    \[
        \int_0^{\Xup{v}_u - x_{u, v}} f_0(z) \,dz + x_{u, v} \cdot f_1(\Xup{v}_u) + (1 - f_1(\Xup{v}_u)) \geq c
    \]
    due to the fact that $f_0$ is increasing.
    Hence, the algorithm is $c$-consistent for 
    \[
        c = \min_{X \in [0,1]} \min_{Y \in [0, X]} \left\{  \int_0^{Y} f_0(z) \,dz + (X - Y) \cdot f_1(X) + (1 - f_1(X))\right\}.
    \]
\end{proof}

\newcommand{\tl}{t(\lambda)}
\newcommand{\fzero}{f(0,\cdot)}
\newcommand{\fone}{f(1,\cdot)}
\newcommand{\C}{\mathcal{C}}

We now prove \cref{thm:vertex_weighted} when the advice is integral.
Recall that we defined $f_0$ and $f_1$ in \cref{eq:vw:penaltyint} as follows:
\[
    f_0(z) := \min \{ e^{z+\lambda-1}, 1 \}, \text{ and }
    f_1(z) := \begin{cases}
    \frac{e^{\lambda-1} - \lambda}{1-z}, & \text{ if } z \in [0, \lambda e^{1-\lambda}), \\
    \frac{-\lambda}{W(-\lambda e^{1-\lambda - z})}, & \text{ if } z \in [\lambda e^{1-\lambda}, 1), \\
    1, & \text{ if } z = 1.
    \end{cases}
\]

The below lemma determines the robustness of \LAB{}.
\begin{lemma}
    \label{lem:vw_r_int}
    We have
    \begin{align*}
        r(\lambda)
        & := 1 - e^{\lambda-1} -\left( e^{\lambda-1} -\lambda \right) \ln (1-\lambda e^{1 - \lambda} ) - \lambda(1-\lambda) \\
        & = \min_{X \in [0,1]} \min\left\{ \int_0^X f_0(z) \,dz + (1 - f_0(X)), \, \int_0^X f_1(z) \,dz + (1 - f_1(X)) \right\}.
    \end{align*}
\end{lemma}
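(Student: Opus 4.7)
The plan is to analyze the two inner expressions separately. Define
\[
g_0(X) := \int_0^X f_0(z)\,dz + (1-f_0(X)), \qquad g_1(X) := \int_0^X f_1(z)\,dz + (1-f_1(X)),
\]
so that the quantity I must identify with $r(\lambda)$ is $\min_{X \in [0,1]} \min\{g_0(X), g_1(X)\} = \min\{\min_X g_0(X),\, \min_X g_1(X)\}$. The strategy is to compute each minimum, show $\min_X g_1(X) \le \min_X g_0(X)$, and then verify that $\min_X g_1(X)$ equals the stated closed form.

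First, I analyze $g_0$ by splitting at $X = 1-\lambda$. On $[0, 1-\lambda]$, $f_0(z) = e^{z+\lambda-1}$, and a direct antiderivative gives $g_0(X) \equiv 1 - e^{\lambda-1}$. On $[1-\lambda, 1]$, $f_0(X) = 1$, so $g_0(X) = (1 - e^{\lambda - 1}) + (X - (1-\lambda)) = \lambda + X - e^{\lambda-1}$, which is strictly increasing. Hence $\min_X g_0(X) = 1 - e^{\lambda-1}$.

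Next, I show $g_1$ is non-increasing on $[0,1]$ so that $\min_X g_1(X) = g_1(1) = \int_0^1 f_1(z)\,dz$. On $[0, \lambda e^{1-\lambda})$, a routine computation gives $g_1'(X) = -X(e^{\lambda-1} - \lambda)/(1-X)^2 \le 0$, using that $e^{\lambda-1} - \lambda \ge 0$ for $\lambda \in [0,1]$. On $[\lambda e^{1-\lambda}, 1)$, applying the Lambert $W$ derivative formula in \Cref{eq:pre:Wprop1} together with the identity \Cref{eq:pre:Wprop2} yields, after simplification, $g_1'(X) = -\lambda / (1 + W(-\lambda e^{1-\lambda-X}))$. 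Because $W$ maps the relevant interval into $[-\lambda e^{1-\lambda}, -\lambda] \subset (-1, 0]$, the denominator is positive and $g_1'(X) \le 0$.

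The core computation is evaluating $g_1(1) = \int_0^1 f_1(z)\,dz$, split at $z = \lambda e^{1-\lambda}$. The first piece integrates elementarily to $-(e^{\lambda-1} - \lambda)\ln(1 - \lambda e^{1-\lambda})$. For the second piece, I use the substitution $y = W(-\lambda e^{1-\lambda - z})$; differentiating $y e^y = -\lambda e^{1-\lambda-z}$ gives $dy/dz = -y/(1+y)$, so $dz = -(1+y)/y\,dy$. The limits transform via the defining property of $W$ to $y_0 = -\lambda e^{1-\lambda}$ (at $z = \lambda e^{1-\lambda}$) and $y_1 = -\lambda$ (at $z = 1$). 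The integrand $-\lambda/y$ times $-(1+y)/y$ becomes $\lambda(y^{-2} + y^{-1})$, whose antiderivative is $-\lambda/y + \lambda \ln |y|$. Plugging in the endpoints, the logarithmic terms collapse to $-\lambda(1-\lambda)$ (from $\ln(\lambda) - \ln(\lambda e^{1-\lambda})$), and the reciprocal terms give $1 - e^{\lambda-1}$; the second piece totals $1 - e^{\lambda - 1} - \lambda(1-\lambda)$. Adding the two pieces yields exactly $r(\lambda)$.

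Finally I must verify $r(\lambda) \le 1 - e^{\lambda - 1}$, i.e.\ $-(e^{\lambda-1}-\lambda) \ln(1 - \lambda e^{1-\lambda}) \le \lambda(1-\lambda)$, so that the overall minimum is indeed $r(\lambda)$. Both sides vanish at $\lambda \in \{0,1\}$; I plan to establish the inequality by checking the boundary behavior and comparing derivatives, or more cleanly by the bound $-\ln(1-t) \le t/(1-t)$ applied with $t = \lambda e^{1-\lambda}$ after verifying it suffices. I expect the main obstacle to be Step 3 — carrying out the Lambert $W$ substitution without error, in particular correctly transforming the limits using the identity $W(y e^y) = y$ valid for $y \in [-1, \infty)$; Step 4 should follow by a short analytic argument once the closed form in Step 3 is in hand.
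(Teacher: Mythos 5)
Your Steps 1--3 coincide with the paper's proof: the same split of the $f_0$ term at $X=1-\lambda$ giving the constant $1-e^{\lambda-1}$, the same monotonicity argument showing the $f_1$ expression is minimized at $X=1$, and the same Lambert-$W$ substitution $w=W(-\lambda e^{1-\lambda-z})$ with $dz=-\tfrac{1+w}{w}\,dw$ evaluating $\int_0^1 f_1$ to the closed form $r(\lambda)$. Those steps are sound.

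The genuine gap is in Step 4. The bound you propose, $-\ln(1-t)\le \tfrac{t}{1-t}$ with $t=\lambda e^{1-\lambda}$, does not suffice. Using $e^{\lambda-1}-\lambda=\tfrac{1-\lambda e^{1-\lambda}}{e^{1-\lambda}}$, that bound gives
\[
-\bigl(e^{\lambda-1}-\lambda\bigr)\ln\bigl(1-\lambda e^{1-\lambda}\bigr)\;\le\;\frac{1-\lambda e^{1-\lambda}}{e^{1-\lambda}}\cdot\frac{\lambda e^{1-\lambda}}{1-\lambda e^{1-\lambda}}\;=\;\lambda,
\]
whereas you need the left-hand side to be at most $\lambda(1-\lambda)$, which is strictly smaller for $\lambda\in(0,1)$. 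The paper instead uses the sharper inequality $-\ln(1-t)\le \tfrac{t}{\sqrt{1-t}}$ (its Proposition~\ref{lem:vw:int:logineq}), which yields the upper bound $\lambda\sqrt{1-\lambda e^{1-\lambda}}$ and reduces the claim to $\sqrt{1-\lambda e^{1-\lambda}}\le 1-\lambda$, itself verified by showing $t^2+(1-t)e^t\ge 1$ on $(0,1)$. Your fallback of "comparing derivatives" could in principle work, but as stated your concrete plan for Step 4 fails; you need the $\sqrt{1-t}$ refinement (or an equivalent sharpening) to close the argument.
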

\begin{proof}
    We first consider the term defined by $f_0$.
    Observe that, for any $X \in [0, 1 - \lambda]$,
    \[
        \int_0^X f_0(z) \,dz + (1 - f_0(X)) = 1- e^{\lambda - 1}.
    \]
    Moreover, for any $X \in (1 - \lambda , 1]$, we can also see that
    \[
        \int_0^X f_0(z) \,dz + (1 - f_0(X)) \geq \int_0^{1 - \lambda} f_0(z) \,dz = 1 - e^{\lambda - 1}.
    \]

    We now turn to the other term defined by $f_1$. Let \[
        I_1(X) := \int_0^X f_1(z) \,dz + (1 - f_1(X)).
    \]
    For $X \in (0, 1)$, one can calculate the derivative of $I_1(X)$ as follows (see \cref{eq:pre:Wprop1}):
    \begin{align*}
        \frac{d}{dX}\left[ I_1(X) \right]
        & = f_1(X)-\frac{d}{dX}f_1(X) \\
        & = \begin{cases}
            \displaystyle -\frac{(e^{\lambda-1}-\lambda) X} {(1-X)^2}, & \text{for } X \in (0, \lambda e^{1 - \lambda}), \\[10pt]
            \displaystyle -\frac{\lambda}{1+          W \left( -\lambda e^{\,1-\lambda-X} \right)}, & \text{for } X \in [\lambda e^{1 - \lambda}, 1).
    \end{cases}
    \end{align*}
    Observe that $e^{\lambda - 1} - \lambda \geq 0$ for any $\lambda \in [0, 1]$.
    Note also that, for any $\lambda < 1$ and $X \geq \lambda e^{1 - \lambda}$, we have $1 + W(- \lambda e^{1 - \lambda - X}) > 0$ since $W$ is an increasing function with $W(-\lambda e^{1 - \lambda - \lambda e^{1- \lambda}}) = -\lambda e^{1 - \lambda} > -1$.
    This implies that $I_1(X)$ is decreasing in on $(0, 1)$, and hence, its minimum is attained at $X = 1$.

    We now compute \(I_{1}(1)\). Let $w(z) := W (-\lambda e^{\,1-\lambda-z} )$. We then have
    \[
        I_{1}(1)
        = \int_0^1 f_1(z) \,dz
        = \underbrace{\int_{0}^{\lambda e^{1 - \lambda }}\frac{e^{\lambda-1}-\lambda}{1-z} \,dz}_{(A)} + \underbrace{\int_{\lambda e^{1 - \lambda }}^{1} \left( -\frac{\lambda}{w(z)} \right) \,dz}_{(B)}.
    \]
    The first part is 
    \[
        (A)
        = \int_{0}^{\lambda e^{1 - \lambda}} \frac{e^{\lambda-1}-\lambda}{1-z} \,dz
        = -\left( e^{\lambda-1}-\lambda \right) \ln \left( 1-\lambda e^{1 - \lambda} \right).
    \]
    For the second part, since \( w e^{w} = -\lambda e^{1-\lambda-z} \), we have \(dz=-\tfrac{1+w}{w}\,dw\).
    Observe also that $w(1) = W(-\lambda e^{-\lambda}) = -\lambda$ and $w(\lambda e^{1 - \lambda}) = W(-\lambda e^{1 - \lambda - \lambda e^{1 - \lambda}}) = -\lambda\,e^{1-\lambda}.$
    We thus have
    \begin{align*}
        (B)
        & = \int_{\lambda e^{1 - \lambda}}^{1} \left( -\frac{\lambda}{w(z)} \right) \,dz \\
        & = \lambda \int_{-\lambda e^{1 - \lambda}}^{-\lambda} \left( \frac{1}{w^{2}} + \frac{1}{w} \right) dw \\
        & = \lambda \left[ -\frac{1}{w} + \ln (-w) \right]_{-\lambda e^{1 - \lambda}}^{- \lambda} \\
        & = 1 - e^{\lambda - 1} - \lambda(1 - \lambda).
    \end{align*}
    We can therefore conclude that 
    \[
        I_1(1)
        = (A) + (B)
        = 1 - e^{\lambda-1} -\left( e^{\lambda-1} -\lambda \right) \ln (1-\lambda e^{1 - \lambda} ) - \lambda(1-\lambda).
    \]
    
    It remains to show that $I_1(1) \leq 1 - e^{\lambda-1}$ for all $\lambda \in [0,1]$.
    This is equivalent to showing that $G(\lambda) \geq 0$ for all $\lambda \in [0,1]$, where 
    \[
        G(\lambda) := \lambda (1-\lambda) + (e^{\lambda-1} - \lambda)\ln(1-\lambda e^{1-\lambda}).
    \]
    To prove this, we need the following inequality:

    \begin{proposition} \label{lem:vw:int:logineq}
        For any $t \in (0, 1)$, we have
        $
            \ln (1 - t) \geq \frac{-t}{\sqrt{1-t}}.
        $
    \end{proposition}
    \begin{proof}
        Let us substitute $s := \sqrt{1-t}$ for simplicity.
        It then suffices to show that, for any $s \in (0, 1)$,
        \[
            \ln s^2  \geq \frac{s^2 - 1}{s} \iff s - 2 \ln s - \frac{1}{s} \leq 0.
        \]
        Let $h(s) := s - 2 \ln s - \frac{1}{s}$.
        Observe that $h(1) = 0$.
        Note also that $h'(s) = 1 - \frac{2}{s} + \frac{1}{s^2} = \frac{(s - 1)^2}{s^2} \geq 0$  for any $s \in (0, 1)$.
        This implies that $h$ is increasing on $(0, 1)$ with $h(1) = 0$, completing the proof.
    \end{proof}
    
    We are now ready to prove $G(\lambda) \geq 0$ for all $\lambda \in [0, 1]$. 
    Note that $G(0) = G(1) = 0$.
    For $\lambda \in (0, 1)$, it suffices to show that
    \begin{equation} \label{eq:vw:int:rob1}
        1 - \lambda \geq \left( 1 - \frac{1}{\lambda e^{1 - \lambda}} \right) \ln (1 - \lambda e^{1 - \lambda}).
    \end{equation}
    Since $1 - \frac{1}{\lambda e^{1 - \lambda}} < 0$ for $\lambda \in (0, 1)$, due to \cref{lem:vw:int:logineq} with $t := \lambda e^{1 - \lambda} \in (0, 1)$, we have
    \[
        \left( 1 - \frac{1}{\lambda e^{1 - \lambda}} \right) \ln (1 - \lambda e^{1 - \lambda}) \leq \left( 1 - \frac{1}{\lambda e^{1 - \lambda}} \right) \cdot \frac{-\lambda e^{1- \lambda}}{\sqrt{1 - \lambda e^{1 - \lambda}}} = \sqrt{1 - \lambda e^{1 - \lambda}}.
    \]

    We claim that
    \[
        1 - \lambda \geq \sqrt{1 - \lambda e^{1 - \lambda}}.
    \]
    Note that this claim immediately implies \cref{eq:vw:int:rob1}.
    For simplicity, let $t := 1 - \lambda$.
    It is then equivalent to showing that, for any $t \in (0, 1)$,
    \[
        g(t) := t^2 + (1 - t) e^t \geq 1.
    \]
    Since we have $g'(t) = t(2 - e^t)$, we can infer that $\inf_{t \in (0, 1)} g(t) = \min\{g(0), g(1)\} = 1$.
\end{proof}

The next lemma determines the consistency of \LAB{}.
\begin{lemma}
    \label{lem:vw_c_int}
    We have
    \begin{align*}
        c(\lambda)
        & := 1 + \lambda - e^{\lambda - 1} \\
        & = \min_{X \in [0,1]} \min_{Y \in [0, X]} \left\{  \int_0^{Y} f_0(z) \,dz + (X - Y) \cdot f_1(X) + (1 - f_1(X))\right\}.
    \end{align*}
\end{lemma}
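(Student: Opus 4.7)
}
The plan is to fix $X \in [0,1]$ and analyze the inner minimization in $Y \in [0,X]$ of
$$
H(X, Y) := \int_0^{Y} f_0(z)\,dz + (X-Y)\,f_1(X) + (1-f_1(X)),
$$
showing that for every $X$ the minimum in $Y$ equals the constant $c(\lambda) = 1 + \lambda - e^{\lambda-1}$. Since $\partial H / \partial Y = f_0(Y) - f_1(X)$ and $f_0$ is increasing, the nature of the optimizer $Y^\star$ depends on whether $f_1(X)$ lies in the range of $f_0$ on $[0,X]$. I expect the analysis to split exactly at the threshold $X_0 := \lambda e^{1-\lambda}$ — the same point at which the definition of $f_1$ changes branches.

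First I would handle the easy regime $X \in [0, X_0]$. Here $f_1$ is defined by the first branch, and a direct calculation shows $f_1(X) \le e^{\lambda-1} = f_0(0) \le f_0(Y)$ for all $Y \ge 0$, so $\partial H / \partial Y \ge 0$ and the minimum is attained at $Y = 0$. Substituting then gives
$$
H(X, 0) = X f_1(X) + 1 - f_1(X) = 1 - (1-X)\,\frac{e^{\lambda-1}-\lambda}{1-X} = 1 + \lambda - e^{\lambda-1},
$$
which is exactly $c(\lambda)$.

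The main regime is $X \in [X_0, 1]$, which I expect to be the only technical step. Here $f_1(X) \ge e^{\lambda-1}$, so the critical point $Y^\star$ with $f_0(Y^\star) = f_1(X)$ lies in $[0, 1-\lambda] \subseteq [0, X]$ and corresponds to a genuine minimum of $H(X,\cdot)$. Writing $w := W(-\lambda e^{1-\lambda - X})$, the defining identity \eqref{eq:pre:Wprop0} gives $-\lambda/w = e^{w + \lambda - 1 + X}$, so the equation $e^{Y^\star + \lambda - 1} = f_1(X) = -\lambda/w$ yields the clean closed form
$$
Y^\star = X + w, \qquad \text{i.e.,} \qquad Y^\star - X = w.
$$
Substituting into $H(X, Y^\star)$, the term $\int_0^{Y^\star} f_0(z)\,dz = e^{Y^\star + \lambda - 1} - e^{\lambda - 1}$ simplifies to $-\lambda/w - e^{\lambda - 1}$, and then
$$
H(X, Y^\star) = -\tfrac{\lambda}{w} - e^{\lambda-1} + w \cdot \bigl(-\tfrac{\lambda}{w}\bigr) + 1 + \tfrac{\lambda}{w} = 1 + \lambda - e^{\lambda-1} = c(\lambda).
$$
The key obstacle — essentially the only nontrivial calculation — is confirming that $Y^\star = X + w$ and verifying that $Y^\star \in [0, 1-\lambda]$, which follows from monotonicity of $W$ applied at the two endpoints $X = X_0$ (giving $Y^\star = 0$) and $X = 1$ (giving $Y^\star = 1-\lambda$). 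Combining the two regimes shows $\min_Y H(X,Y) = c(\lambda)$ for every $X \in [0,1]$, completing the proof.
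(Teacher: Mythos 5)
Your proposal is correct and follows essentially the same route as the paper: differentiate in $Y$, split at $X = \lambda e^{1-\lambda}$, take $Y^\star = 0$ in the first regime, and in the second regime solve $f_0(Y^\star) = f_1(X)$ and use the Lambert-$W$ identity to get $(X - Y^\star) f_1(X) = \lambda$. The only blemish is a sign slip in your final display: since $X - Y^\star = -w$, the middle term should be $(-w)\cdot\bigl(-\tfrac{\lambda}{w}\bigr) = +\lambda$ rather than $w\cdot\bigl(-\tfrac{\lambda}{w}\bigr)$; with that fix the sum is indeed $1+\lambda-e^{\lambda-1}$.
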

\begin{proof}
For simplicity of presentation, we define
\[
    J(X, Y) := \int_0^Y f_0(z) \,dz + (X - Y) f_1(X) + (1 - f_1(X)).
\]
Notice that, by definition,
\begin{equation} \label{eq:vw:int:c}
    c(\lambda) = \min_{X \in [0, 1]} \min_{Y \in [0, X]} J(X, Y).
\end{equation}
Let us first calculate the inner minimizer $Y^\star(X)$ for each $X$ fixed.
Observe that
\begin{equation*}
\label{eq:cons_derivative}
    \frac{\partial}{\partial Y} J(X, Y) = f_0(Y) - f_1(X).
\end{equation*}
If $X < \lambda e^{1 - \lambda}$, we have
$
    f_1(X) = \frac{e^{\lambda - 1} - \lambda}{1 - X} < e^{\lambda - 1} \leq f_0(Y) \text{ for any $Y \in [0, X]$},
$
implying that $J(X, Y)$ is increasing on $Y \in [0, X]$. Therefore, the minimum is attained at $Y^\star(X) := 0$.
On the other hand, if $X \geq \lambda e^{1 - \lambda}$, since $f_1(X) \in [e^{\lambda - 1}, 1]$ and $f_0(Y) = e^{\lambda - 1 + Y} \in [e^{\lambda - 1}, 1]$ on $Y \in [0, 1 - \lambda]$, we can see that $J(X,Y)$ is minimized at $Y^\star(X)$ such that $f_0(Y^\star(X)) = f_1(X)$.
We therefore have
\[
    Y^\star(X) = \begin{cases}
        0, & \text{ if $X < \lambda e^{1 - \lambda}$, } \\
        1 - \lambda + \ln f_1(X), & \text{ if $X \geq \lambda e^{1 - \lambda}$.}
    \end{cases}
\]
Note that $Y^\star(X) \in [0, 1 - \lambda]$.

Let us now compute $c(\lambda)$ from \cref{eq:vw:int:c}. 
We claim $J(X, Y^\star(X)) = 1 + \lambda - e^{\lambda - 1}$ for any $X \in [0, 1]$; note that this claim immediately implies the lemma.
If $X < \lambda e^{1 - \lambda}$, we know $Y^\star (X) = 0$.
We therefore have
\[
    J(X,0)
    =X \, \frac{e^{\lambda - 1} - \lambda }{1-X} + \left( 1 -\frac{e^{\lambda - 1} - \lambda }{1-X} \right)
    =1 + \lambda - e^{\lambda - 1}.
\]

We now consider the other case when $X \geq \lambda e^{1 - \lambda}$.
Since $Y^\star(X) = 1 - \lambda + \ln f_1(X) \leq 1 - \lambda$, we have
\begin{equation*} \label{eq:vw:int:intf0}
    \int_0^{Y^\star(X)} f_0(z) \,dz = e^{\lambda - 1} (e^{Y^\star(X)} - 1) = f_1(X) - e^{\lambda - 1}.
\end{equation*}
Moreover, we can also derive that
\begin{align*}
    X - Y^\star(X)
    & = X - 1 + \lambda - \ln f_1(X) \\
    & = \ln \left( \frac{W(-\lambda e^{1 - \lambda - X})}{-\lambda e^{1 - \lambda - X}} \right) \\
    & = - W(- \lambda e^{1 - \lambda - X}),
\end{align*}
where the second equality is due to \cref{eq:pre:Wprop2}, implying that
\[
    (X - Y^\star(X)) f_1(X)
    = - W(- \lambda e^{1 - \lambda - X}) \cdot \frac{- \lambda}{W(- \lambda e^{1 - \lambda - X})}
    = \lambda.
\]
From these equations, we finally have
\begin{align*}
    J(X, Y^\star(X))
    & = \int_0^{Y^\star(X)} f_0(z) \,dz + (X - Y^\star(X)) f_1(X) + (1 - f_1(X)) \\
    & = f_1(X) - e^{\lambda - 1} + \lambda + 1 - f_1(X) \\
    & = 1 + \lambda - e^{\lambda - 1}.
\end{align*}
\end{proof}

\subsection{Fractional advice} \label{sec:vwalg:frac}
In this subsection, we show that even when the advice can be fractional, \LAB{} achieves the same robustness and consistency ratios as in the integral case. Recall the definition of $f(A, X)$ :
\begin{equation*} 
    f(A, X) := \begin{cases}
        f_1(X), & \text{ if } A > X, \\
        \max\{ f_0(X-A), f_1(X) \}, & \text{ if } A \leq X.
    \end{cases}
\end{equation*}
where $f_1(\cdot)$ and $f_0(\cdot)$ are the penalty functions used in the integral case.

We first consider robustness. Together with \cref{lem:vw_r}, the below lemma shows that the algorithm is $r(\lambda)$-robust. Its proof easily follows from the integral case.
\begin{lemma} \label{lem:vw:frac:r}
    For any $A \in [0, 1]$ and $X \in [0, 1]$, we have
    \[
        \int_0^X f(A, z) dz + (1 - f(A, X)) \geq r(\lambda).
    \]
\end{lemma}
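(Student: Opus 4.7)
The plan is to reduce the fractional-advice statement directly to the integral-advice case, whose two key estimates
\[
    \int_0^X f_0(z)\, dz + (1 - f_0(X)) \geq r(\lambda) \qquad \text{and} \qquad \int_0^X f_1(z)\, dz + (1 - f_1(X)) \geq r(\lambda)
\]
are already established for every $X \in [0,1]$ in \Cref{lem:vw_r_int}. Essentially, the definition of $f(A,X)$ as a $\max$ of $f_0$ and $f_1$ pieces makes it a pointwise upper envelope of these two functions in a suitable shifted sense, and the trick is to always lower-bound $f$ by whichever of $f_0(\cdot - A)$ or $f_1(\cdot)$ matches the branch that defines $f(A,X)$ at the endpoint.

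I would split the argument into two cases based on the relationship between $A$ and $X$. In the easy case $A > X$, the first line of the definition of $f$ gives $f(A,z) = f_1(z)$ for every $z \in [0,X]$ and $f(A,X) = f_1(X)$, so the claim is immediate from the $f_1$ bound in \Cref{lem:vw_r_int}. In the case $A \leq X$, I would decompose the integral as
\[
    \int_0^X f(A,z)\, dz = \int_0^A f_1(z)\, dz + \int_A^X \max\{f_0(z - A), f_1(z)\}\, dz,
\]
using that for $z < A$ the first branch of the definition of $f$ applies, and for $z \geq A$ the second branch applies.

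Now I would split further depending on which term attains the $\max$ at $z = X$. If $f(A,X) = f_1(X)$, I lower-bound $\max\{f_0(z-A), f_1(z)\} \geq f_1(z)$ on $[A,X]$, assembling the two pieces into $\int_0^X f_1(z)\, dz$ and reducing to the $f_1$ bound from \Cref{lem:vw_r_int}. If instead $f(A,X) = f_0(X-A)$, I lower-bound $\max\{f_0(z-A), f_1(z)\} \geq f_0(z-A)$ on $[A,X]$, drop the nonnegative term $\int_0^A f_1(z)\, dz$, and substitute $y = z - A$ to obtain
\[
    \int_0^X f(A,z)\, dz + (1 - f(A,X)) \geq \int_0^{X-A} f_0(y)\, dy + (1 - f_0(X-A)) \geq r(\lambda),
\]
which is exactly the $f_0$ bound in \Cref{lem:vw_r_int} applied at parameter $X - A \in [0,1]$.

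There is no real obstacle here; the structure of $f$ as defined in \Cref{eq:vw:penaltyfrac} was clearly designed to admit precisely this kind of branch-by-branch reduction. The only subtlety is recognizing that in the $f_0$ subcase of $A \leq X$, the leftover contribution $\int_0^A f_1(z)\, dz$ is nonnegative and can simply be discarded, so that what is left matches the integral-case expression with a shifted argument. Once the reductions are set up correctly, the lemma follows by citing \Cref{lem:vw_r_int} in each subcase.
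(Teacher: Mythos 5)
Your proposal is correct and follows essentially the same route as the paper: both arguments split on whether $f(A,X)$ equals $f_1(X)$ or $f_0(X-A)$ at the endpoint, lower-bound $f(A,\cdot)$ by the corresponding shifted branch over the relevant interval (discarding the nonnegative leftover $\int_0^A$ piece in the $f_0$ subcase), and conclude by citing \Cref{lem:vw_r_int}. The only cosmetic difference is that the paper handles the $f_1$ subcase uniformly via the pointwise bound $f(A,z) \geq f_1(z)$ on all of $[0,X]$ rather than splitting the integral at $z=A$ first.
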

\begin{proof}
    From the definition of $f(A,X)$, we know that either $f(A,X) = f_1(X)$ or $f(A,X) = f_0(X-A)$. First, suppose $f(A, X) = f_1(X)$. Then
    \[
        \int_0^X f(A, z) dz + (1 - f(A, X)) \geq  \int_0^X f_1(z) dz + (1 - f_1(X)) \geq r(\lambda),
    \]
    where the first inequality comes from the fact that $f(A, z) \geq f_1(z)$ for every $z \in [0, 1]$ and the second from \cref{lem:vw_r_int} from the integral case.
    On the other hand, if $f(A,X) \neq f_1(X)$ and hence $f(A, X) = f_0(X-A)$, this implies that $X \geq A$. Hence, we  obtain
    \begin{align*}
        \int_0^X f(A, z) dz + (1 - f(A, X))
        & \geq \int_A^X f(A, z) dz + (1 - f(A, X)) \\
        & \geq \int_A^X f_0(z-A) dz + (1 - f_0(X-A)) \\
        & \geq r(\lambda),
    \end{align*}
    where the second inequality follows from that $f(A, z) \geq f_0(z-A)$ for every $z \in [A, 1]$ and the last inequality by \cref{lem:vw_r_int} from the integral case.
\end{proof}

\begin{figure}
    \centering
    \begin{subfigure}{0.32\textwidth}
        \centering
        \includegraphics[width=\textwidth]{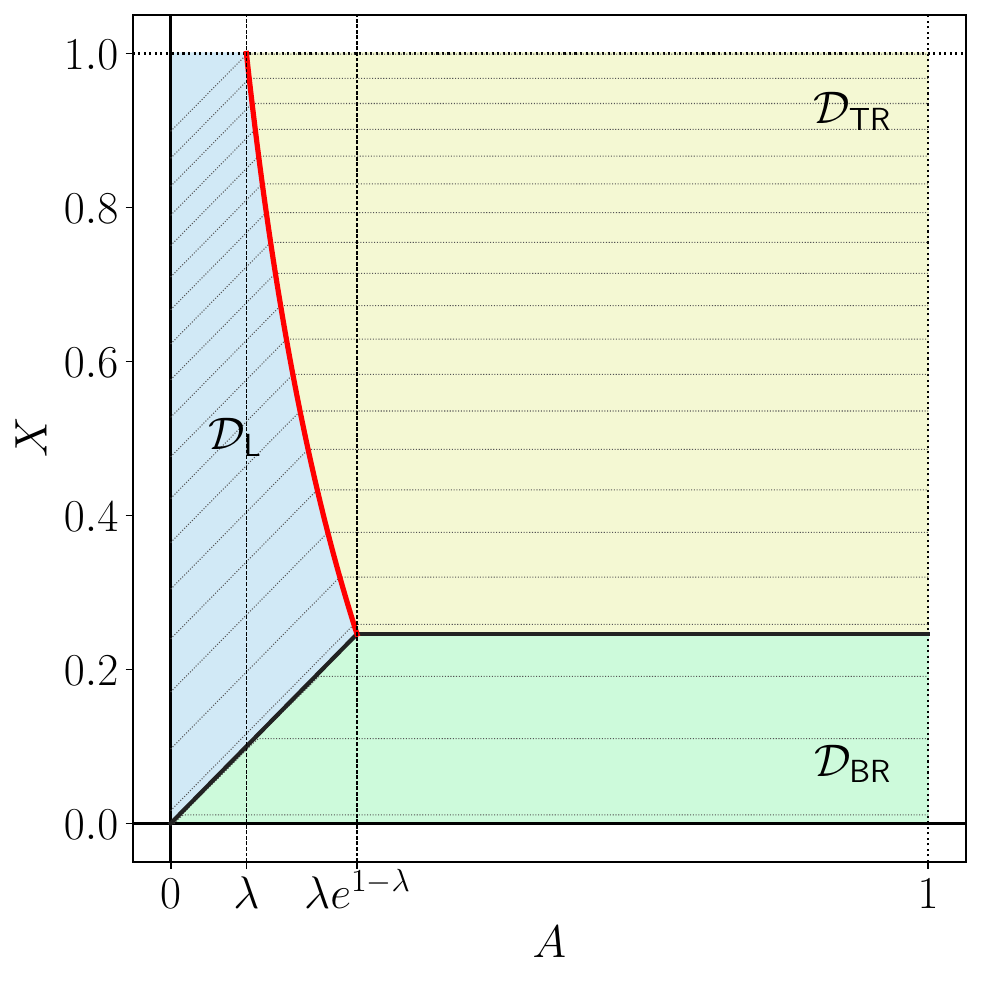}
        \caption{$\lambda = 0.1$}
    \end{subfigure}
    \begin{subfigure}{0.32\textwidth}
        \centering
        \includegraphics[width=\textwidth]{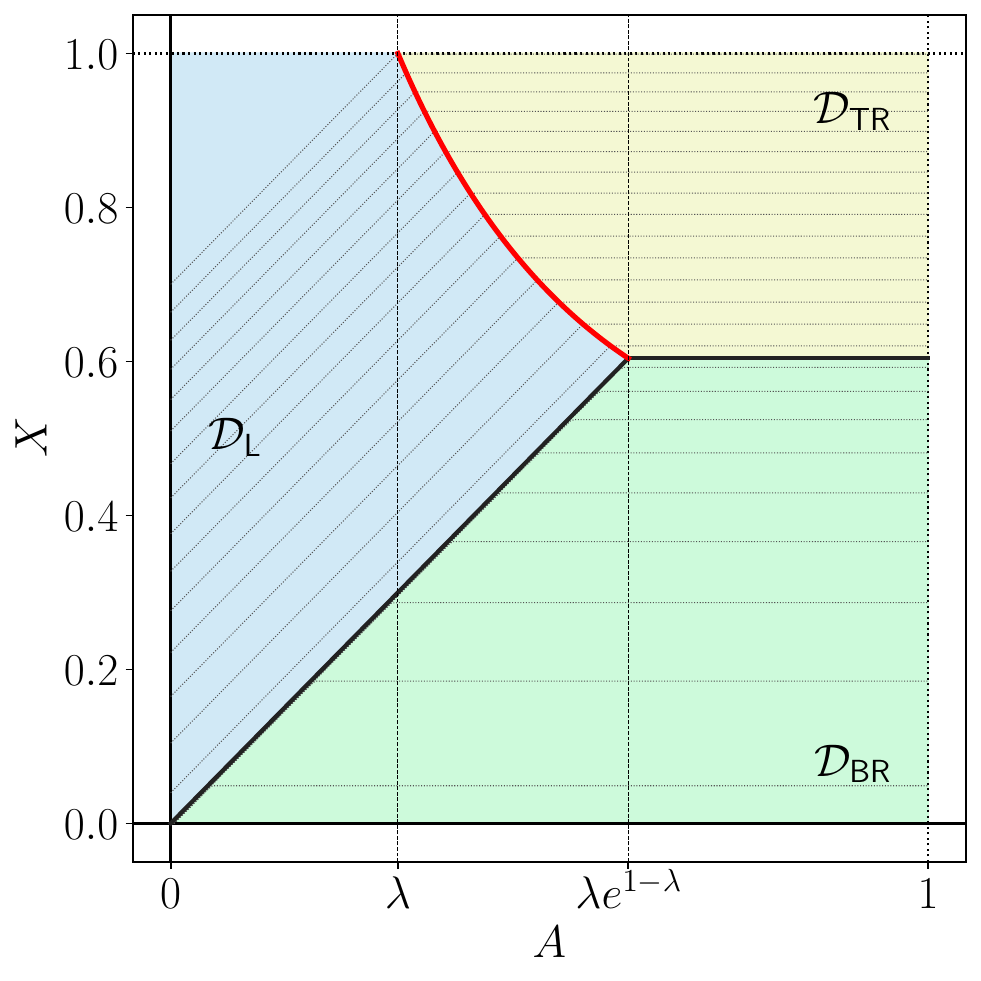}
        \caption{$\lambda = 0.3$}
    \end{subfigure}
    \begin{subfigure}{0.32\textwidth}
        \centering
        \includegraphics[width=\textwidth]{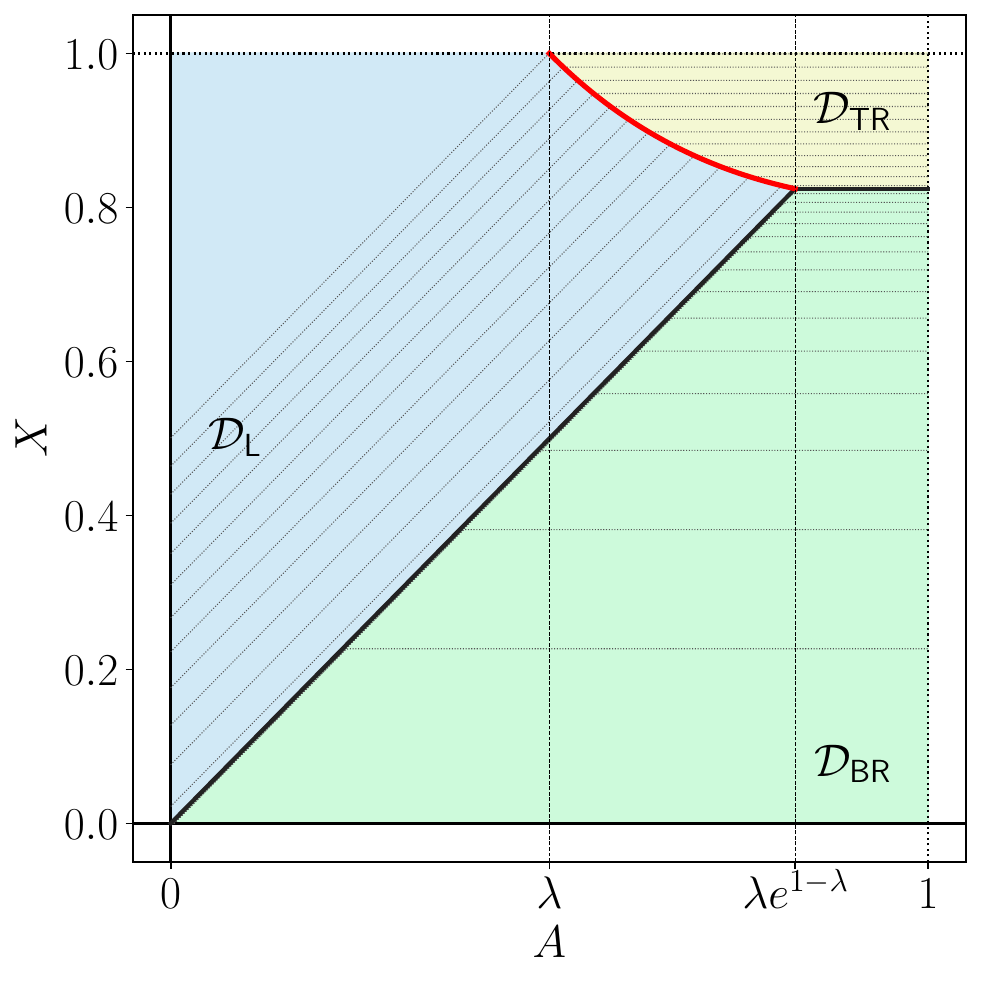}
        \caption{$\lambda = 0.5$}
    \end{subfigure}
    \caption{Contour plots of $f(A, X)$ for $\lambda \in \{0.1, 0.3, 0.5\}$. We partition $[0,1]^2$ into three regions: $\region{L}$ (Left, in blue), $\region{BR}$ (Bottom Right, in green), $\region{TR}$ (Top Right, in yellow). The red curve represents $X = A - \ln A + 1 - \lambda + \ln \lambda$ on $A \in[\lambda, \lambda e^{1-\lambda}]$ that separates $\region{L}$ and $\region{TR}$.}
    \label{fig:vw:fr:contour}
\end{figure}

Let us now focus on the consistency. Recall the contour plot of $f$ --- see \cref{fig:vw:fr:contour}.
We partition $[0, 1]^2$ into three regions as follows:
\begin{itemize}
    \item $\region{L} := \{ (A, X) \in [0, 1]^2 \mid A \leq X < A - \ln A + (1-\lambda) + \ln \lambda \}$; 
    \item $\region{BR} := \{ (A, X) \in [0, 1]^2 \mid X < A \text{ and } X < \lambda e^{1-\lambda }\}$;  and
    \item $\region{TR} := \{ (A, X) \in [0, 1]^2 \mid X \geq \lambda e^{1-\lambda} \text{ and } X \geq A - \ln A + (1-\lambda) + \ln \lambda \}$. 
\end{itemize}
Observe that $(0, 0) \in \region{L}$. 
These three regions partition $[0,1]^2$ into parts where $f$ has a simple closed-form.
\begin{lemma}
    The definition of $f$ in \cref{eq:vw:penaltyfrac} is equivalent to
    \[
        f(A, X) := \begin{cases}
            f_0(X-A), & \text{if } (A,X) \in \region{L}, \\
            f_1(X) = \frac{e^{\lambda-1} - \lambda}{1-X}
            & \text{if } (A,X) \in \region{BR}, \\
            f_1(X) = \frac{-\lambda}{W(-\lambda e^{1-\lambda - X})}& \text{if } (A,X) \in \region{TR}.
        \end{cases}
    \]
\end{lemma}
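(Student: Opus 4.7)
The plan is to verify the equivalence by case analysis on which of $\region{L}, \region{BR}, \region{TR}$ contains $(A,X)$, treating the equivalence and the partition claim together. Set $g(A) := A - \ln A + (1-\lambda) + \ln \lambda$, so that $\region{L}$ is characterized by $A \leq X < g(A)$ and $\region{TR}$ by $X \geq \lambda e^{1-\lambda}$ together with $X \geq g(A)$. From \Cref{eq:vw:penaltyfrac}, the only nontrivial step is to compare $f_0(X-A)$ and $f_1(X)$ when $A \leq X$, and this comparison must align exactly with the $\region{L}$ vs.\ $\region{TR}$ boundary.

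First I would establish that $\region{L}, \region{BR}, \region{TR}$ are pairwise disjoint and cover $[0,1]^2$. Disjointness is immediate from the defining inequalities. For coverage I would split on whether $A \leq X$ or $A > X$, and exploit two properties of $g$ that follow from $g'(A) = 1 - 1/A \leq 0$: $g$ is decreasing on $(0,1]$ with $g(\lambda) = 1$ and $g(\lambda e^{1-\lambda}) = \lambda e^{1-\lambda}$. These give, for instance, that if $A \leq X$ and $X \geq g(A)$ then $X \geq \lambda e^{1-\lambda}$ automatically, and if $A > X \geq \lambda e^{1-\lambda}$ then $g(A) \leq \lambda e^{1-\lambda} \leq X$, placing $(A,X) \in \region{TR}$.

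The analytical heart is the comparison of $f_0(X-A)$ with $f_1(X)$ for $A \leq X$. I would split on $X$. For $X < \lambda e^{1-\lambda}$, monotonicity of $f_1$ together with the identity $f_1(\lambda e^{1-\lambda}) = e^{\lambda-1}$ (a short Lambert-$W$ calculation) yields $f_1(X) \leq e^{\lambda-1} \leq e^{X-A+\lambda-1} = f_0(X-A)$, since $X \geq A$; moreover such points automatically satisfy $X < g(A)$ because $g(A) \geq \lambda e^{1-\lambda}$ on the relevant range of $A$, so they fall in $\region{L}$. For $X \geq \lambda e^{1-\lambda}$, I would show via \Cref{eq:pre:Wprop0} that $f_0(X-A) = f_1(X)$ is equivalent to $W(-\lambda e^{1-\lambda-X}) = -A$, which after clearing exponentials collapses to $X = g(A)$. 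To upgrade this to a full sign characterization, set $w(X) := W(-\lambda e^{1-\lambda-X})$; implicit differentiation of $w e^w = -\lambda e^{1-\lambda-X}$ gives $w'(X) = -w/(1+w)$, so that $\frac{d}{dX}\bigl(\ln f_0(X-A) - \ln f_1(X)\bigr) = 1 - \frac{1}{1+w} = \frac{w}{1+w} < 0$, since $w(X) \in (-1,0)$ on the admissible range. Hence $\ln f_0 - \ln f_1$ is strictly decreasing in $X$ and vanishes at $X = g(A)$, so $f_0(X-A) \geq f_1(X)$ iff $X \leq g(A)$, which is exactly the boundary separating $\region{L}$ from $\region{TR}$.

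The main obstacle I anticipate is the Lambert-$W$ bookkeeping: verifying $w(X) \in (-1, 0)$ on the admissible range so that the derivative computation is legitimate and $w/(1+w)$ is indeed negative. This requires checking that $-\lambda e^{1-\lambda-X}$ lies in the domain of the principal branch of $W$ for $X \in [\lambda e^{1-\lambda}, 1]$ and that its image stays strictly above $-1$ whenever $\lambda < 1$; the degenerate boundary $\lambda = 1$ can be handled separately, since $f$ then reduces to blindly following the advice. Once these sign checks are dispatched, the remaining work is routine case-chasing from the partition step and continuity matching at $X = \lambda e^{1-\lambda}$.
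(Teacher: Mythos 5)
Your proof is correct, and its skeleton matches the paper's: dispatch $A>X$ by definition, and for $A\le X$ reduce everything to showing $f_0(X-A)\ge f_1(X)$ iff $X\le A-\ln A+(1-\lambda)+\ln\lambda$, splitting on $X$ versus $\lambda e^{1-\lambda}$. Where you differ is in the mechanics of that last equivalence and in what you make explicit. The paper routes the comparison through the minimizer $Y^\star(X)$ from \Cref{lem:vw_c_int} and a chain of algebraic equivalences ($X-A\le Y^\star(X)\iff A\ge -w\iff A-\ln A\ge -w-\ln(-w)$), a chain whose inequality directions require care since $t\mapsto t-\ln t$ is \emph{decreasing} on $(0,1)$; your version instead observes directly that $\ln f_0(X-A)-\ln f_1(X)=-A-w(X)$, computes $w'=-w/(1+w)$, and concludes by strict monotonicity in $X$ with a zero exactly at $X=g(A)$. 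This is self-contained (no appeal to the consistency lemma) and makes the sign determination essentially immune to direction errors. You also prove explicitly that $\region{L},\region{BR},\region{TR}$ partition $[0,1]^2$ via the monotonicity of $g$ and the anchor values $g(\lambda)=1$, $g(\lambda e^{1-\lambda})=\lambda e^{1-\lambda}$, which the paper only gestures at. Two small items to tidy: (i) your derivative computation assumes $f_0(X-A)=e^{X-A+\lambda-1}$, so you should note separately that when $f_0$ saturates at $1$ (which forces $A<\lambda$ and hence $X<g(A)$, placing the point in $\region{L}$) the claimed formula holds trivially; (ii) the bound $w(X)\in(-1,0)$ for $\lambda<1$ that you flag is already established in the paper's properties lemma for $f_1$, so you can cite it rather than re-derive it.
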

\begin{proof}
To begin, note that the boundaries of the regions intersect at the single point $(\lambda e^{1-\lambda}, \lambda e^{1-\lambda})$.

If $X < A$, we have $f(A, X) = f_1(X)$ by the definition of $f$. 
Also, if $X < \lambda e^{1-\lambda}$, then $f_1(X) = \frac{e^{\lambda-1} - \lambda}{1-X}$, and if $X \geq \lambda e^{1-\lambda}$ then $f_1(X) = \frac{-\lambda}{W(-\lambda e^{1-\lambda - X})}$. Therefore, we conclude that
\[
f(A, X) = 
\begin{cases}
    f_1(X) = \frac{e^{\lambda - 1} - \lambda}{1 - X}, & \text{if } (A, X) \in \region{BR}, \\
    f_1(X) = \frac{-\lambda}{W(-\lambda e^{1 - \lambda - X})}, & \text{if } (A, X) \in \region{TR} \cap \{(A, X) : X < A\}.
\end{cases}
\]
    
    On the other hand, suppose $X \geq A$.
    To finish the proof, it suffices to show that 
    \[
f_0(X - A) \geq f_1(X)
~ \text{if and only if}~~
X \leq A - \ln A + (1 - \lambda) + \ln \lambda.
\]
First, suppose $X < \lambda e^{1-\lambda}$. Then $(A,X) \in \region{L}$, and the analysis in \cref{lem:vw_c_int} shows that $f_0(X-A) \geq f_1(X)$ in this case. On the other hand, suppose $X \geq \lambda e^{1-\lambda}$. Again, by the analysis in \cref{lem:vw_c_int}, we have
\begin{equation}
    \label{eq:rl}
    f_0(X - A) \geq f_1(X)
    \quad \Longleftrightarrow \quad
    X - A \leq Y^\star(X) = 1 - \lambda + \ln f_1(X).
\end{equation}
    Here, $f_1(X) = -\lambda / w$ where $w := W(-\lambda e^{1-\lambda - X})$. Since $we^w = -\lambda e^{1 - \lambda - X}$, we have
\[
\ln\left(-\frac{\lambda}{w}\right) = w - 1 + \lambda + X.
\]
    Substituting this back into \cref{eq:rl}, we see that \cref{eq:rl} is equivalent to $A \geq -w$. This is equivalent to
    $$A - \ln A \geq -w - \ln(-w) = -\ln \lambda + X + \lambda - 1,$$
    which is equivalent to the desired inequality
\[
X \leq A - \ln A + (1 - \lambda) + \ln \lambda.
\]
\end{proof}

Define a \emph{trajectory} $\pi$ to be a sequence of $k+1 \geq 2$ pairs
\[
\pi := ((A_0, X_0), (A_1, X_1), \cdots, (A_k, X_k))
\]
such that $0 = A_0 \leq \cdots \leq A_k \leq 1$ and $0 = X_0 \leq \cdots \leq X_k \leq 1$.
The \emph{cost} of $\pi$ is defined as
\[
\trajcost{\pi}
: = \sum_{i=1}^{k} \left[ x_i f(A_i, X_i) + a_i (1 - f(A_i, X_i)) \right],
\]
where $a_i := A_i - A_{i - 1} \geq 0$ and $x_i := X_i - X_{i - 1} \geq 0$ for every $i \in \{1, \ldots, k\}$.

By \cref{lem:vw_c}, in order to show that the consistency  remains the same as in the integral case, it suffices to prove the following lemma.
\begin{lemma} \label{lem:vwalg:frac:main}
    For any trajectory $\pi = ((A_0, X_0), \ldots, (A_k, X_k))$, $\trajcost{\pi} \geq c(\lambda) \cdot A_k$.
\end{lemma}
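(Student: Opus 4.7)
The plan is to prove \Cref{lem:vwalg:frac:main} by constructing a non-negative potential function $\phi : [0,1]^2 \to \R_{\geq 0}$ with $\phi(0, 0) = 0$ and establishing, for every step $i$ of the trajectory, the per-step inequality
\[
    x_i \cdot f(A_i, X_i) + a_i \cdot \bigl(1 - f(A_i, X_i)\bigr) \; \geq \; c(\lambda) \cdot a_i + \bigl[\phi(A_i, X_i) - \phi(A_{i-1}, X_{i-1})\bigr].
\]
Summing over $i = 1, \ldots, k$ then telescopes to $\trajcost{\pi} \geq c(\lambda) A_k + \phi(A_k, X_k) - \phi(0,0) \geq c(\lambda) A_k$, as desired. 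Since $1 - c(\lambda) = f_1(0) = e^{\lambda-1} - \lambda$, the infinitesimal version of the per-step inequality motivates requiring $\partial_X \phi \leq f$ and $\partial_A \phi \leq f_1(0) - f$ pointwise.

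I would define $\phi$ piecewise according to the three regions in \Cref{fig:vw:fr:contour}. In $\region{L}$ (where $f = f_0(X-A)$), I set $\phi(A, X) := f_0(X - A) - f_0(0) + f_1(0) \cdot A$; direct differentiation gives $\partial_X \phi = f$ and $\partial_A \phi = f_1(0) - f$ with equality, making the per-step inequality tight in the continuous limit. In $\region{BR}$ (where $f = f_1(X)$ is independent of $A$), I would extend $\phi$ by integrating $\partial_A \phi = f_1(0) - f_1(X)$ starting from the boundary $A = X$ (where the $\region{L}$ formula yields $\phi = f_1(0) X$), producing the closed form $\phi(A, X) = f_1(0) X + (f_1(0) - f_1(X))(A - X)$. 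In $\region{TR}$, an analogous extension starting from the boundary curve $X = A - \ln A + (1-\lambda) + \ln \lambda$ yields a similar closed form; continuity of $\phi$ across this $\region{L}$--$\region{TR}$ boundary is automatic from the defining identity $f_0(X - A) = f_1(X)$ there.

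The two remaining tasks are to verify $\phi \geq 0$ and the per-step inequality. Non-negativity in $\region{L}$ is immediate from $f_0(X-A) \geq f_0(0)$, while in $\region{BR}$ it reduces to the key inequality $(1 - X) f_1(X) \leq f_1(0)$, which holds with equality on $[0, \lambda e^{1-\lambda}]$ by direct substitution of $f_1(X) = (e^{\lambda-1} - \lambda)/(1-X)$ and can be verified on $[\lambda e^{1-\lambda}, 1]$ using the Lambert--$W$ substitution $w = W(-\lambda e^{1 - \lambda - X})$ in the spirit of the proof of \Cref{lem:vw_r_int}; the analysis for $\region{TR}$ is similar. For the per-step inequality, I would verify it case-by-case based on which region(s) contain the step's endpoints: for a step inside $\region{L}$ the inequality reduces to $e^{-t} \geq 1 - t$ with $t := (X_i - A_i) - (X_{i-1} - A_{i-1})$; for a step inside $\region{BR}$ it reduces, after simplification, to $(A_{i-1} - X_{i-1})(f_1(X_i) - f_1(X_{i-1})) \geq 0$, which holds by the monotonicity of $f_1$ and the fact that $A_{i-1} \geq X_{i-1}$ in $\region{BR}$; and cross-region steps are handled by direct calculation combined with $e^s - 1 \geq s$ or $(1-X) f_1(X) \leq f_1(0)$. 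The main obstacle I anticipate is the Lambert--$W$ manipulation needed to verify $(1-X) f_1(X) \leq f_1(0)$ on $[\lambda e^{1-\lambda}, 1]$, together with the case-by-case bookkeeping for all possible ways a step can cross between $\region{L}$, $\region{BR}$, and $\region{TR}$.
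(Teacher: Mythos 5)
Your strategy (a potential function with a telescoping per-step inequality) is genuinely different from the paper's proof, which instead defines a class of \emph{irreducible} trajectories, verifies the bound directly for those, and shows via a sequence of local surgery lemmas (removing intermediate points without increasing $\trajcost{\cdot}$) that every trajectory reduces to an irreducible one. A correct potential-function argument would arguably be cleaner than the paper's case analysis, and your computations in the uncapped part of $\region{L}$ and in $\region{BR}$ check out (in particular the identity $1 - c(\lambda) = f_1(0)$ and the reduction of the $\region{BR}$ step to $(A_{i-1}-X_{i-1})(f_1(X_i)-f_1(X_{i-1}))\geq 0$ are correct).

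However, there is a concrete gap: your potential fails in the sub-region of $\region{L}$ where $f_0$ saturates, i.e.\ where $X - A \geq 1-\lambda$ and hence $f(A,X) = f_0(X-A) = 1$. (This sub-region is nonempty inside $\region{L}$ whenever $A < \lambda$.) There the identity $f_0' = f_0$ that you rely on breaks down ($f_0' = 0$ while $f_0 = 1$), so with $\phi(A,X) = f_0(X-A) - f_0(0) + f_1(0)A$ you get $\partial_A \phi = f_1(0) > 0$, whereas the per-step inequality forces $\partial_A\phi \leq f_1(0) - f = f_1(0) - 1 < 0$. Concretely, for the valid trajectory $((0,0),\,(0,\,1-\lambda+\eps),\,(a,\,1-\lambda+\eps))$ with $0 < a < \eps$ small, the second step has cost $x_2 f + a_2(1-f) = 0$, but your right-hand side equals $c(\lambda)a + f_1(0)a = a > 0$. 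The fix is localized — on $\{X - A \geq 1-\lambda\}$ one should instead take $\phi(A,X) = (1-e^{\lambda-1}) + (X - 1 + \lambda) - c(\lambda)A$, which agrees with your formula on the boundary $X - A = 1-\lambda$, satisfies $\partial_X\phi = 1 = f$ and $\partial_A\phi = -c(\lambda) = f_1(0)-1$ with equality, and is nonnegative — but as written your per-step inequality is false on a reachable set of steps, so the telescoping argument does not go through. Separately, the $\region{TR}$ potential and the cross-region steps (a single step can jump between all three regions, which is exactly what the paper's surgery lemmas are designed to tame) are left unverified, and that is where the bulk of the remaining work lies.
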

To this end, we first identify a class of trajectories that minimizes the cost.
We say a trajectory $\pi = ((A_0, X_0), \ldots, (A_k, X_k))$ is \emph{irreducible} if one of the following is satisfied:
\begin{enumerate}
    \item \label{cond:vwalg:frac:irre1} $(A_k, X_k) \in \region{BR}$ and $k = 1$; or
    \item \label{cond:vwalg:frac:irre2} $(A_k, X_k) \in \region{L} \cup \region{TR}$, $A_0 = \cdots = A_{k-1} = 0$, $X_{k - 1} \leq 1-\lambda$, and $f(0, X_{k - 1}) = f(A_k, X_k)$.
\end{enumerate}
Note that an irreducible trajectory indeed satisfies \cref{lem:vwalg:frac:main}.
\begin{lemma} \label{lem:vw:fr:irr}
    For any irreducible trajectory $\pi = ((A_0, X_0), \ldots, (A_k, X_k))$, $\trajcost{\pi} \geq c(\lambda) \cdot A_k$.
\end{lemma}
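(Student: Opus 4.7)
The plan is to handle the two defining conditions of irreducibility separately, in each case reducing to a calculation that closely parallels the integral-advice analysis of \Cref{lem:vw_c_int}.

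For the first type ($k = 1$ with $(A_1, X_1) \in \region{BR}$), I would plug $a_1 = A_1$, $x_1 = X_1$, and $f(A_1, X_1) = f_1(X_1) = (e^{\lambda-1} - \lambda)/(1-X_1)$ directly into the cost and simplify $\trajcost{\pi} - c(\lambda) A_1$. The excess should factor as a product of $(e^{\lambda-1} - \lambda)$, $(1-A_1)$, and $X_1/(1-X_1)$, each nonnegative (using $e^{\lambda-1} \geq \lambda$ on $[0,1]$ and $A_1 \leq 1$), giving the desired bound directly.

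For the second type, the condition $A_0 = \cdots = A_{k-1} = 0$ ensures that all intermediate steps contribute only $x_i f_0(X_i)$ and that only the final step carries an $a_k(1 - f) = A_k(1 - f(A_k, X_k))$ term. Using monotonicity of $f_0$, I would replace the discrete sum by $\int_0^{X_{k-1}} f_0(z)\,dz$, reducing the analysis to a single ``last step''. Writing $A := A_k$, $X := X_k$, $Y := X_{k-1}$, I would then split on whether $(A, X) \in \region{L}$ or $(A, X) \in \region{TR}$. In $\region{L}$, the irreducibility identity $f(0, Y) = f(A, X)$ becomes $f_0(Y) = f_0(X - A)$, which, together with $Y \leq 1 - \lambda$, pins down $Y = X - A$; the cost bound then collapses to $\int_0^{X-A} f_0 + A \geq A \geq c(\lambda) A$ since $c(\lambda) \leq 1$. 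In $\region{TR}$, the identity becomes $f_0(Y) = f_1(X)$, so $Y$ must equal the inner minimizer $Y^\star(X)$ from \Cref{lem:vw_c_int}. The cost bound then equals $J(X, Y^\star(X)) - (1 - A)(1 - f_1(X)) = c(\lambda) - (1 - A)(1 - f_1(X))$, and it remains to verify $f_1(X) \geq 1 - c(\lambda) = e^{\lambda - 1} - \lambda$ on $\region{TR}$; this follows from $X \geq \lambda e^{1-\lambda}$ and monotonicity of $f_1$, since $f_1(\lambda e^{1-\lambda}) = (e^{\lambda-1} - \lambda)/(1 - \lambda e^{1-\lambda}) > e^{\lambda-1} - \lambda$.

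The main obstacle I anticipate is the $\region{TR}$ subcase of the second type, where the cost expression differs from the integral-case quantity $J$ only by an extra factor of $A$ multiplying $(1 - f_1(X))$; the argument ultimately boils down to a clean lower bound on $f_1$ over $\region{TR}$. Boundary cases (e.g., $Y = 1-\lambda$ where $f_0$ saturates, or $A = 1$) should degenerate gracefully and be handled by direct verification without introducing new ideas.
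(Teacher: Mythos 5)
Your proposal is correct and follows essentially the same route as the paper: split on the two irreducibility conditions, reduce the sum over intermediate steps to $\int_0^{X_{k-1}} f_0$, invoke the integral-case quantity $J(X,Y)$ from \Cref{lem:vw_c_int} at $Y = X_{k-1}$, and absorb the extra factor $A_k$ on $(1-f)$ via a lower bound on $f_1$ over $\region{TR}$ (your bound $f_1(X)\ge e^{\lambda-1}-\lambda = 1-c(\lambda)$ is exactly what is needed; the paper uses the slightly stronger $f_1(X)\ge e^{\lambda-1}$). Your explicit factorization of the excess in the $\region{BR}$ case as $(e^{\lambda-1}-\lambda)\cdot(1-A_1)\cdot X_1/(1-X_1)$ checks out and is just a more direct rendering of the paper's chain of inequalities.
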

\begin{proof}
    Suppose first that $\pi$ satisfies Condition \ref{cond:vwalg:frac:irre1}.
    Since $(A_1, X_1) \in \region{BR}$, we have $X_1 < \lambda e^{1-\lambda}$ and hence $f(A_1, X_1) = f_1(X_1) = \frac{e^{\lambda-1} - \lambda}{1 - X_1}  = \frac{1 - c(\lambda)}{1 - X_1} < f(0, 0)$.
    We thus have
    \begin{align*}
        \trajcost{\pi} 
        & = X_1 f_1(X_1) + A_1 (1 - f_1(X_1)) \\
        &=  X_1 f_1(X_1) + (1-f_1(X_1)) + (A_1-1) (1 - f_1(X_1)) \\
        & \geq c(\lambda) + (A_1 - 1) \cdot (1 - f_1(X_1)) \\
        & = c(\lambda) + (A_1 - 1) \cdot \left(1 - \frac{1 - c(\lambda)}{1 - X_1} \right) \\
        & \geq c(\lambda) + (A_1 - 1) \cdot c(\lambda) \\
        & = c(\lambda) \cdot A_1,
    \end{align*}
    where the first inequality follows from \cref{lem:vw_c_int} in the integral case (with $Y = 0$) and the second inequality from the fact that $A_k \leq 1$ and $X_1 \geq 0$.

    Let us now consider the  case that $\pi$ satisfies Condition \ref{cond:vwalg:frac:irre2}.  If $(A_k, X_k) \in \region{L}$, then $x_k = A_k$ since
    $
    f(0, X_{k-1}) = f(A_k, X_k)
    $
    and the contour lines in $\region{L}$ have slope 1. This implies
    \begin{align*}
    \trajcost{\pi}
             &= \sum_{i = 1}^{k-1} x_i f(0, X_i) + x_k f(A_k, X_k) + A_k (1 - f(A_k, X_k)) \\
             &\geq A_k f(A_k, X_k) + A_k (1 - f(A_k, X_k)) \\
             &= A_k \\
             & \geq c(\lambda) \cdot A_k.
    \end{align*}
    On the other hand, if $(A_k, X_k) \in \region{TR}$, then we have
    $
    f(A_k, X_k) = f(1, X_k) = f_1(X_k),
    $
    which implies
    \begin{align*}
        \trajcost{\pi}
        & = \sum_{i = 1}^{k-1} x_i f_0(X_i) + x_k f_1(X_k) + A_k (1 - f_1(X_k)) \\
        &\stackrel{(a)}{\geq} \int_0^{X_{k-1}} f_0(z) dz + x_k f_1(X_{k}) + A_k (1 - f_1(X_{k})) \\
        &= \int_0^{X_{k-1}} f_0(z) dz + x_k f_1(X_{k}) + (1 - f_1(X_{k})) + (A_k-1) (1 - f_1(X_{k})) \\
        &\stackrel{(b)}{\geq} c(\lambda) + (A_k - 1) \cdot (1 - f_1(X_{k})) \\
        &\stackrel{(c)}{\geq} c(\lambda) + (A_k - 1) \cdot (1 - e^{\lambda - 1}) \\
        &\stackrel{(d)}{\geq} c(\lambda) + (A_k - 1) \cdot c(\lambda) \\
        & = c(\lambda) \cdot A_k,
    \end{align*}
    where (a) comes from the fact that $f_0$ is increasing, (b) is from \cref{lem:vw_c_int} in the integral case (with $Y = X_{k-1}$ and $X = X_k$), (c) is because $X_k \geq \lambda e^{1-\lambda}$ and $f_1$ is an increasing function with $f_1(\lambda e^{1 - \lambda}) = e^{\lambda - 1}$, and (d) is because $c(\lambda) = 1 + \lambda - e^{\lambda - 1} \geq 1 - e^{\lambda - 1}$.
\end{proof}

Let us now show that the cost of a trajectory is minimized when it is irreducible. In particular, we will prove the following lemma.
\begin{lemma} \label{lem:vw:fr:reduce}
    For any trajectory $\pi$, there exists an irreducible trajectory $\pi'$ such that $\trajcost{\pi} \geq \trajcost{\pi'}$.
\end{lemma}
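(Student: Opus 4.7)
The plan is to exhibit a finite sequence of cost-non-increasing modifications that transform any trajectory $\pi$ into an irreducible trajectory. The cornerstone is the \emph{merge identity}: eliminating an intermediate point $(A_i, X_i)$ from the trajectory (so that the two steps $i-1 \to i$ and $i \to i+1$ are collapsed into a single step $i-1 \to i+1$) changes the cost by exactly $(x_i - a_i)\bigl[f(A_{i+1}, X_{i+1}) - f(A_i, X_i)\bigr]$, as can be checked by direct computation from the definition of $\trajcost{\cdot}$. This identity will be the main lever: a merge is cost-decreasing precisely when $(x_i - a_i)$ and the change in $f$ have opposite signs.

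I would then split into cases based on the region of the endpoint $(A_k, X_k)$. In Case~1, $(A_k, X_k) \in \region{BR}$, so $X_k < \lambda e^{1-\lambda}$, and every intermediate $(A_i, X_i)$ satisfies $X_i < \lambda e^{1-\lambda}$ and hence lies in $\region{L} \cup \region{BR}$. There $f$ is either $f_1(X)$ (independent of $A$) or $f_0(X - A)$ (contours of slope $1$). I would repeatedly apply the merge identity to eliminate intermediate points one by one, using the monotonicity of $f$ in $X$ together with the contour geometry of these two regions to verify non-positivity of each cost difference, until only two points remain and Condition~\ref{cond:vwalg:frac:irre1} holds.

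In Case~2, $(A_k, X_k) \in \region{L} \cup \region{TR}$, and I would proceed in three phases. Phase~1 pushes all advice to the last step so that $A_0 = \cdots = A_{k-1} = 0$: for each intermediate $i$ with $a_i > 0$, I would transfer the advice from step $i$ to step $k$, which decreases $A_j$ for $i \leq j \leq k-1$, and track the resulting cost change as a telescoping sum via the merge identity and the monotonicity of $f$ in $A$. Phase~2 ensures $X_{k-1} \leq 1-\lambda$: if $X_{k-1}$ exceeds $1-\lambda$, I insert $(0, 1-\lambda)$ and remove any preceding step that already lies in $\{X > 1-\lambda\}$ along the $A=0$ axis, using that $f_0 \equiv 1$ there so no mass is gained. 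Phase~3 adjusts $X_{k-1}$ to lie on the contour $f(0, X_{k-1}) = f(A_k, X_k)$: once advice is concentrated at the final step, the cost becomes a one-parameter function of $X_{k-1}$ that matches the inner minimization in the proof of \Cref{lem:vw_c_int}, whose minimum over $X_{k-1} \in [0, 1-\lambda]$ is attained exactly at the contour value.

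The main obstacle will be Phase~1 of Case~2, specifically when intermediate points lie in $\region{L}$: there $f$ depends nontrivially on both coordinates, and a shift of advice from step $i$ to step $k$ changes $f(A_j, X_j)$ for every $j \in \{i, \ldots, k-1\}$. I expect to resolve this by treating the shift as an infinitesimal flow and checking that the differential cost change is non-positive at each instant, exploiting that contours in $\region{L}$ have slope $1$ while those in $\region{TR}$ and $\region{BR}$ are vertical. These geometric facts, together with the domination $f_0 \geq f_1$ established earlier, should guarantee that the beneficial reduction of the advice-based terms $a_j(1 - f(A_j, X_j))$ always outweighs any increase in the flow-based terms $x_j f(A_j, X_j)$.
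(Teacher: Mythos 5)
Your merge identity and your Case~1 follow the paper's route (its \Cref{lem:remove-middle-point,lem:vw:fr:2to1,lem:vw:fr:no2btw1}), but Phase~1 of your Case~2 has a genuine gap: transferring advice from an intermediate step to the last step \emph{while keeping every $X_j$ fixed} can strictly increase the cost, and the infinitesimal version fails for the same reason. Concretely, take $\lambda=\tfrac12$ and $\pi=((0,0),(0.2,0.3),(0.4,0.5))$; both interior points lie in $\region{L}$ on the common contour $f=f_0(0.1)=e^{-0.4}$, and $\trajcost{\pi}=0.3e^{-0.4}+0.2(1-e^{-0.4})+0.2e^{-0.4}+0.2(1-e^{-0.4})\approx 0.467$. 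Moving the $0.2$ units of advice from step~$1$ to step~$2$ gives the trajectory $((0,0),(0,0.3),(0.4,0.5))$, raises $f$ at the first point to $f_0(0.3)=e^{-0.2}$, and yields cost $0.3e^{-0.2}+0.2e^{-0.4}+0.4(1-e^{-0.4})\approx 0.512$. In general the differential cost change of shifting $da$ of advice from step $i$ to step $k$ is
\[
(x_i-a_i)\,f_0'(X_i-A_i)\,da+\bigl[f(A_i,X_i)-f(A_k,X_k)\bigr]\,da ,
\]
and the first term is strictly positive whenever $(A_i,X_i)$ is in the interior of $\region{L}$ with $x_i>a_i$; your claim that the saved advice terms always outweigh the inflated flow terms is exactly what this term refutes (the unit of advice is worth only $1-f(A_k,X_k)\le 1-f(A_i,X_i)$ at its destination, and the induced increase of $f(A_j,X_j)$ costs $x_j$ per unit while saving only $a_j$).

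The repair, which is what the paper does, is to slide each intermediate point \emph{along its $f$-contour} onto the $A=0$ axis: replace $(A_i,X_i)$ by $(0,X_i')$ with $f(0,X_i')=f(A_i,X_i)$, so that in $\region{L}$ the $X$-coordinate drops by exactly $A_i$ (slope-$1$ contours). The cost change of this replacement is $(f(A_{i+1},X_{i+1})-f(A_i,X_i))\cdot(A_i-X_i+X_i')$, which vanishes in my example and is nonnegative in general --- but only after one has normalized the trajectory so that the $f$-values are non-decreasing along it, a step your proposal omits and which the paper secures beforehand via \Cref{lem:vw:fr:1to2,lem:vw:fr:12to3}. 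Once the contour projection is used, your separate Phases~2 and~3 become unnecessary, since the projected points automatically satisfy $X_{k-1}'\le 1-\lambda$ and land on the contour of $f(A_k,X_k)$.
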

To prove this lemma, we fix an arbitrary trajectory $\pi$ and apply a sequence of local modifications without ever increasing the cost of the trajectory.
The following is a key technical lemma for the local modification.
\begin{lemma} \label{lem:remove-middle-point}
    Fix a trajectory $\pi = ((A_0, X_0), \ldots, (A_k, X_k))$ with $k \geq 2$. For some $i \in \{1, \ldots, k-1\}$, let $\pi'$ be a new trajectory obtained by removing $(A_i, X_i)$ from $\pi$. Then, we have $\trajcost{\pi} \geq \trajcost{\pi'}$ if and only if one of the following holds:
    \begin{enumerate} [label = (\Roman*)]
        \item \label{cond:vw:rmp3} $a_i \leq x_i$ and $f(A_{i}, X_{i}) \geq f(A_{i+1}, X_{i+1})$;
        \item \label{cond:vw:rmp4} $a_i \geq x_i$ and $f(A_{i}, X_{i}) \leq f(A_{i+1}, X_{i+1})$,
    \end{enumerate}
    where $a_i = A_i - A_{i - 1}$ and $x_i = X_i - X_{i - 1}$.
\end{lemma}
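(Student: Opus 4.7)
The plan is to compare the costs of $\pi$ and $\pi'$ step by step and show that the inequality $\trajcost{\pi} \geq \trajcost{\pi'}$ reduces to a single scalar product being nonnegative, which will immediately yield the two-part characterization.

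First I will observe that the two trajectories agree on all steps except those involving $(A_i, X_i)$. Specifically, the contributions of the steps indexed $j \notin \{i, i+1\}$ to $\trajcost{\pi}$ and $\trajcost{\pi'}$ are identical, since the same pairs $(A_{j-1}, X_{j-1}) \to (A_j, X_j)$ appear in both trajectories. It therefore suffices to compare just the two steps of $\pi$ entering and leaving $(A_i, X_i)$ against the single step of $\pi'$ going from $(A_{i-1}, X_{i-1})$ directly to $(A_{i+1}, X_{i+1})$. In $\pi'$ this merged step contributes $(x_i + x_{i+1}) f(A_{i+1}, X_{i+1}) + (a_i + a_{i+1})(1 - f(A_{i+1}, X_{i+1}))$, since the trajectory cost uses the endpoint values $f(A_{i+1}, X_{i+1})$ and the aggregated increments $a_i + a_{i+1}$ and $x_i + x_{i+1}$.

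Subtracting, the $(i+1)$-indexed terms cancel exactly, and after a short rearrangement the difference becomes
\begin{equation*}
\trajcost{\pi} - \trajcost{\pi'} = (x_i - a_i) \bigl[ f(A_i, X_i) - f(A_{i+1}, X_{i+1}) \bigr].
\end{equation*}
Hence $\trajcost{\pi} \geq \trajcost{\pi'}$ if and only if the two factors $x_i - a_i$ and $f(A_i, X_i) - f(A_{i+1}, X_{i+1})$ have the same sign (or one of them is zero). The case where both are nonnegative is exactly condition~\ref{cond:vw:rmp3}, and the case where both are nonpositive is exactly condition~\ref{cond:vw:rmp4}, completing the proof.

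I do not expect any real obstacle here: the entire argument is an algebraic simplification enabled by the fact that $f$ is evaluated only at the new endpoints of each step in the trajectory cost, so removing an intermediate vertex only affects two consecutive terms. The only point requiring minor care is to make sure the sign convention matches the statement of the lemma, i.e., that writing $x_i - a_i$ rather than $a_i - x_i$ yields precisely the pairs stated in \ref{cond:vw:rmp3} and \ref{cond:vw:rmp4}.
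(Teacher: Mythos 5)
Your proposal is correct and follows essentially the same argument as the paper: both isolate the two affected terms, compute $\trajcost{\pi} - \trajcost{\pi'} = (x_i - a_i)\,(f(A_i, X_i) - f(A_{i+1}, X_{i+1}))$, and read off the sign characterization. No gaps.
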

\begin{proof}
Since $\pi'$ is obtained by removing $(A_i, X_i)$ from $\pi$, we have
\begin{align*}
    & \trajcost{\pi} - \trajcost{\pi'} \\
    & = \bigg[ (X_i - X_{i-1}) f(A_i, X_i) + (A_i - A_{i - 1}) (1 - f(A_i, X_i)) \\
    & \phantom{==} + (X_{i+1} - X_{i}) f(A_{i+1}, X_{i+1}) + (A_{i+1} - A_{i}) (1 - f(A_{i+1}, X_{i+1})) \bigg] \\
    & \phantom{=} - \bigg[ (X_{i+1} - X_{i-1}) f(A_{i+1}, X_{i+1}) + (A_{i+1} - A_{i - 1}) (1 - f(A_{i+1}, X_{i+1})) \bigg] \\
    & = (x_i - a_i) \cdot (f(A_i, X_i) - f(A_{i + 1}, X_{i+1}) ).
\end{align*}
Therefore, $\trajcost{\pi} - \trajcost{\pi'} \geq 0$ if and only if one of the conditions in the lemma holds.
\end{proof}

The following lemmas further identify pairs that can be removed from $\pi$ without increasing the cost.
\begin{lemma} \label{lem:vw:fr:2to1}
    Fix a trajectory $\pi = ((A_0, X_0), \ldots, (A_k, X_k))$ with $k \geq 2$. For some $i \in \{1, \ldots, k-1\}$, let $\pi'$ be a new trajectory obtained by removing $(A_i, X_i)$ from $\pi$. If $(A_{i-1}, X_{i - 1}) \in \region{L}$ and $(A_i, X_i) \in \region{BR}$ , we have $\trajcost{\pi} \geq \trajcost{\pi'}$.
\end{lemma}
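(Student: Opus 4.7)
The plan is to reduce this lemma to \Cref{lem:remove-middle-point} by verifying its condition~\ref{cond:vw:rmp4}, i.e., that both $a_i \geq x_i$ and $f(A_i, X_i) \leq f(A_{i+1}, X_{i+1})$ hold for the triple $(A_{i-1}, X_{i-1}), (A_i, X_i), (A_{i+1}, X_{i+1})$.

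First, I would establish $a_i \geq x_i$ directly from the region memberships. Since $(A_{i-1}, X_{i-1}) \in \region{L}$ gives $A_{i-1} \leq X_{i-1}$, and $(A_i, X_i) \in \region{BR}$ gives $X_i < A_i$, subtracting yields
\[
a_i - x_i \;=\; (A_i - X_i) - (A_{i-1} - X_{i-1}) \;>\; 0.
\]

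Next, I would establish $f(A_i, X_i) \leq f(A_{i+1}, X_{i+1})$ using the closed-form description of $f$ from the structural lemma preceding \Cref{lem:remove-middle-point}. Since $(A_i, X_i) \in \region{BR}$, we have $f(A_i, X_i) = f_1(X_i)$. I would then do a case analysis on the region containing $(A_{i+1}, X_{i+1})$. If $(A_{i+1}, X_{i+1})$ lies in $\region{BR}$ or $\region{TR}$, the closed-form gives $f(A_{i+1}, X_{i+1}) = f_1(X_{i+1}) \geq f_1(X_i)$, using the monotonicity of $f_1$ (Property~\ref{prop:vw:func1}) together with $X_{i+1} \geq X_i$. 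If instead $(A_{i+1}, X_{i+1}) \in \region{L}$, then $f(A_{i+1}, X_{i+1}) = f_0(X_{i+1} - A_{i+1}) \geq f_1(X_{i+1}) \geq f_1(X_i)$, where the first inequality is exactly the defining property of $\region{L}$ as the locus on which $f_0(X-A)$ dominates $f_1(X)$.

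With both conditions of \ref{cond:vw:rmp4} verified, \Cref{lem:remove-middle-point} immediately yields $\trajcost{\pi} \geq \trajcost{\pi'}$. I do not expect a substantial obstacle: the argument is essentially careful bookkeeping to ensure the correct closed form of $f$ is used in each region, and no new estimates beyond the region partition and the monotonicity of $f_1$ are required. The only subtle point is recognizing that $\region{L}$ is precisely the locus on which $f_0(X-A) \geq f_1(X)$, which lets the final case go through cleanly.
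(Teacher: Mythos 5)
Your proof is correct and follows essentially the same route as the paper: both verify Condition~(II) of \Cref{lem:remove-middle-point} by deducing $a_i \geq x_i$ from the sign of $A-X$ in the two regions and $f(A_i,X_i) = f_1(X_i) \leq f_1(X_{i+1}) \leq f(A_{i+1},X_{i+1})$ from the fact that $f$ dominates $f_1$ everywhere. The paper's version is merely terser; your case analysis on the region of $(A_{i+1}, X_{i+1})$ just spells out the inequality $f(A,X) \geq f_1(X)$ that the paper takes as immediate.
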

\begin{proof}
    Since $(A_i, X_i) \in \region{BR}$ and $(A_{i - 1}, X_{i - 1}) \in \region{L}$, we have $a_i > x_i$.
    Moreover, since $(A_i, X_i) \in \region{BR}$, we also have $f(A_i, X_i) \leq f(A_{i+1}, X_{i + 1})$.
    These together imply that $i$ satisfies Condition \ref{cond:vw:rmp4} of \cref{lem:remove-middle-point}.
\end{proof}

\begin{lemma} \label{lem:vw:fr:no1btw2}
    Fix a trajectory $\pi = ((A_0, X_0), \ldots, (A_k, X_k))$ with $k \geq 2$.
    Suppose there exists $s$ and $t$ such that $0 \leq s < t \leq k$, $(A_s, X_s) \in \region{L}$, $(A_t, X_t) \in \region{L} \cup \region{TR}$, and $(A_{s+1}, X_{s+1}), \ldots, (A_{t-1}, X_{t - 1}) \in \region{BR}$.
    Let $\pi'$ be the trajectory obtained by removing $(A_{s+1}, X_{s+1}), \ldots, (A_{t-1}, X_{t - 1})$ from $\pi$.
    We have $\trajcost{\pi} \geq \trajcost{\pi'}$.
\end{lemma}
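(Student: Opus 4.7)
The plan is to peel off the points $(A_{s+1}, X_{s+1}), \ldots, (A_{t-1}, X_{t-1}) \in \region{BR}$ one at a time by iteratively invoking \Cref{lem:vw:fr:2to1}. The case $t = s + 1$ is trivial since there is nothing to remove, so assume $t \geq s + 2$.

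Concretely, I would define a sequence of intermediate trajectories $\pi^{(0)}, \pi^{(1)}, \ldots, \pi^{(t-s-1)}$, with $\pi^{(0)} := \pi$ and $\pi^{(j)}$ obtained from $\pi^{(j-1)}$ by removing the pair corresponding to $(A_{s+j}, X_{s+j})$ in the original indexing. By construction, $\pi^{(t-s-1)}$ is exactly the trajectory $\pi'$ in the statement. At the start of the $j$-th step, $\pi^{(j-1)}$ has the explicit form
\[
((A_0, X_0), \ldots, (A_s, X_s), (A_{s+j}, X_{s+j}), (A_{s+j+1}, X_{s+j+1}), \ldots, (A_k, X_k)),
\]
so the pair to be removed lies at position $s+1$ within $\pi^{(j-1)}$, its predecessor there is still $(A_s, X_s) \in \region{L}$, and by hypothesis $(A_{s+j}, X_{s+j}) \in \region{BR}$. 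Thus the premises of \Cref{lem:vw:fr:2to1} are met and we obtain $\trajcost{\pi^{(j-1)}} \geq \trajcost{\pi^{(j)}}$.

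Chaining these inequalities across $j = 1, \ldots, t-s-1$ immediately yields $\trajcost{\pi} \geq \trajcost{\pi'}$. The only bookkeeping is to verify that the precondition ``length at least $3$'' of \Cref{lem:vw:fr:2to1} holds at every step: the length of $\pi^{(j-1)}$ is $k + 2 - j$, which for $j \leq t - s - 1$ is at least $k - t + s + 3 \geq 3$ since $t \leq k$ and $s \geq 0$. I do not anticipate any substantive obstacle here --- the argument is a clean induction that repackages the earlier two-to-one removal lemma, and the decreasing-then-increasing behavior of $f$ across $\region{L} \to \region{BR}$ was already handled in the proof of \Cref{lem:vw:fr:2to1}.
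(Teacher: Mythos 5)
Your proposal is correct and matches the paper's proof, which likewise removes the $\region{BR}$ points one at a time by repeated application of \Cref{lem:vw:fr:2to1} starting from $i = s+1$. Your additional bookkeeping (verifying the predecessor remains $(A_s, X_s) \in \region{L}$ at each step and that the removed pair is never the last one) is exactly the implicit content of the paper's ``repeated application'' remark.
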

\begin{proof}
    Note that \cref{lem:vw:fr:2to1} holds with $\pi$ and $i := s+1$.
    Repeated application of \cref{lem:vw:fr:2to1} leads to the proof of this lemma.
\end{proof}

\begin{lemma} \label{lem:vw:fr:1to2}
    Fix a trajectory $\pi = ((A_0, X_0), \ldots, (A_k, X_k))$ with $k \geq 2$. For some $i \in \{1, \ldots, k-1\}$, let $\pi'$ be a new trajectory obtained by removing $(A_i, X_i)$ from $\pi$. If $(A_i, X_i) \in \region{L}$, $a_i \leq x_i$, $(A_{i+1}, X_{i+1}) \in \region{BR} \cup \region{L}$, and $a_{i+1} \geq x_{i+1}$, we have $\trajcost{\pi} \geq \trajcost{\pi'}$.
\end{lemma}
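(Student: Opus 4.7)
The plan is to apply \Cref{lem:remove-middle-point}. Since we are given $a_i \leq x_i$, it suffices to verify condition \ref{cond:vw:rmp3} of that lemma, namely
\[
    f(A_i, X_i) \geq f(A_{i+1}, X_{i+1}).
\]
The key observation is that the other assumption, $a_{i+1} \geq x_{i+1}$, rewrites as
\[
    X_{i+1} - A_{i+1} \leq X_i - A_i,
\]
so the step from index $i$ to $i+1$ moves toward (or past) the diagonal $X = A$.

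Because $(A_i, X_i) \in \region{L}$, we have $f(A_i, X_i) = f_0(X_i - A_i)$ with $X_i - A_i \geq 0$; in particular, $f(A_i, X_i) \geq f_0(0) = e^{\lambda - 1}$. I would then split according to the region of $(A_{i+1}, X_{i+1})$:

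\textbf{Case 1: $(A_{i+1}, X_{i+1}) \in \region{L}$.} Here $f(A_{i+1}, X_{i+1}) = f_0(X_{i+1} - A_{i+1})$, and since $X_{i+1} - A_{i+1} \leq X_i - A_i$ and $f_0$ is increasing (Property~\ref{prop:vw:func1}), we immediately get $f(A_{i+1}, X_{i+1}) \leq f_0(X_i - A_i) = f(A_i, X_i)$.

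\textbf{Case 2: $(A_{i+1}, X_{i+1}) \in \region{BR}$.} Here $X_{i+1} < \lambda e^{1-\lambda}$ and $f(A_{i+1}, X_{i+1}) = f_1(X_{i+1})$. Since $f_1$ is increasing with $f_1(\lambda e^{1-\lambda}) = e^{\lambda - 1}$ (a small computation using either branch of $f_1$ in \Cref{eq:vw:penaltyint}), we obtain $f(A_{i+1}, X_{i+1}) < e^{\lambda - 1} \leq f(A_i, X_i)$.

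In either case, condition \ref{cond:vw:rmp3} of \Cref{lem:remove-middle-point} is satisfied, so $\trajcost{\pi} \geq \trajcost{\pi'}$. The only nontrivial step is the continuity check $f_1(\lambda e^{1-\lambda}) = e^{\lambda-1}$ in Case~2; this is a routine identity that also explains why the two branches in the definition of $f_1$ agree at the junction. Everything else follows from the geometric picture that $f_0(X - A)$ has level sets of slope $1$ in $\region{L}$, so a step whose slope in $(A, X)$-space is at most $1$ cannot increase $f_0(X - A)$.
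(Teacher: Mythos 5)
Your proposal is correct and follows the same route as the paper: both verify Condition~(I) of \Cref{lem:remove-middle-point} by showing $f(A_i,X_i) \geq f(A_{i+1},X_{i+1})$, splitting on whether $(A_{i+1},X_{i+1})$ lies in $\region{L}$ (using $a_{i+1}\geq x_{i+1}$ and the slope-$1$ level sets of $f_0(X-A)$) or in $\region{BR}$ (where $f_1(X_{i+1}) \leq e^{\lambda-1} \leq f_0(X_i - A_i)$). Your version just spells out the "trivial" $\region{BR}$ case, including the junction identity $f_1(\lambda e^{1-\lambda}) = e^{\lambda-1}$, which checks out.
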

\begin{proof}
    Note that we have $f(A_i, X_i) \geq f(A_{i + 1}, X_{i + 1})$. Indeed, if $(A_{i + 1}, X_{i + 1}) \in \region{BR}$, it trivially follows; otherwise if $(A_{i + 1}, X_{i + 1}) \in \region{L}$, it is implied by the condition that $a_{i+1} \geq x_{i+1}$.
    Therefore, Condition \ref{cond:vw:rmp3} of \cref{lem:remove-middle-point} is satisfied with this $i$ on $\pi$.
\end{proof}

\begin{lemma} \label{lem:vw:fr:no2btw1}
    Fix a trajectory $\pi = ((A_0, X_0), \ldots, (A_k, X_k))$ with $k \geq 2$.
    Suppose there exists $s$ and $t$ such that $0 \leq s < t \leq k$, $(A_s, X_s), (A_t, X_t) \in \region{BR}$, and $(A_{s+1}, X_{s+1}), \ldots, (A_{t-1}, X_{t - 1}) \in \region{L}$.
    Let $\pi'$ be the trajectory obtained by removing $(A_{s+1}, X_{s+1}), \ldots, (A_{t-1}, X_{t - 1})$ from $\pi$.
    We have $\trajcost{\pi} \geq \trajcost{\pi'}$.
\end{lemma}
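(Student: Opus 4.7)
The plan is to iterate \Cref{lem:vw:fr:1to2} to remove the intermediate $\region{L}$ points one at a time, in the spirit of how \Cref{lem:vw:fr:no1btw2} chains \Cref{lem:vw:fr:2to1}. The extra subtlety here is that \Cref{lem:vw:fr:1to2} requires not only region conditions on two consecutive points but also sign conditions on the associated increments (namely $a_i \leq x_i$ and $a_{i+1} \geq x_{i+1}$), so one must carefully choose which intermediate $\region{L}$ point to delete in each step.

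First, I would establish a \emph{sign-flip} observation at the two ends of the $\region{L}$-block. Since $(A_s, X_s) \in \region{BR}$ gives $X_s < A_s$ while $(A_{s+1}, X_{s+1}) \in \region{L}$ gives $A_{s+1} \leq X_{s+1}$, subtracting yields $x_{s+1} > a_{s+1}$. By the symmetric reasoning applied to $(A_{t-1}, X_{t-1}) \in \region{L}$ and $(A_t, X_t) \in \region{BR}$, we obtain $a_t > x_t$. Hence the sequence $a_i - x_i$ for $i = s+1, \ldots, t$ begins strictly negative and ends strictly positive, so by a discrete intermediate-value argument there exists an index $j \in \{s+1, \ldots, t-1\}$ with $a_j \leq x_j$ and $a_{j+1} \geq x_{j+1}$.

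Such a $j$ satisfies all four hypotheses of \Cref{lem:vw:fr:1to2}: $(A_j, X_j) \in \region{L}$ by the assumption on intermediate points; the two inequalities $a_j \leq x_j$ and $a_{j+1} \geq x_{j+1}$ are precisely what the flip index delivers; and $(A_{j+1}, X_{j+1}) \in \region{L} \cup \region{BR}$ since either $j+1 \leq t-1$ (so $(A_{j+1}, X_{j+1}) \in \region{L}$) or $j+1 = t$ (so $(A_{j+1}, X_{j+1}) \in \region{BR}$). Applying \Cref{lem:vw:fr:1to2} then produces a trajectory $\pi''$ with one fewer intermediate $\region{L}$ point and $\trajcost{\pi} \geq \trajcost{\pi''}$. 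This $\pi''$ still has $(A_s, X_s), (A_t, X_t) \in \region{BR}$ as its two designated $\region{BR}$ endpoints, with every point strictly between them lying in $\region{L}$, so the inductive hypothesis on the number of intermediate $\region{L}$ points (base case: zero intermediates, for which $\pi = \pi'$ trivially) closes the argument.

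The main obstacle I expect is justifying the existence of the flip index $j$ and verifying the region condition on $(A_{j+1}, X_{j+1})$ in the boundary case $j+1 = t$, where the ``flip'' happens to land exactly at the terminating $\region{BR}$ point; the argument above handles this by distinguishing the two subcases $j+1 \leq t-1$ and $j+1 = t$. Once this selection is in place, the inductive removal is routine and essentially parallels the proof of \Cref{lem:vw:fr:no1btw2}.
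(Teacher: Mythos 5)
Your proof is correct and follows essentially the same route as the paper's: derive $x_{s+1} > a_{s+1}$ and $a_t > x_t$ from the region conditions at the ends of the $\region{L}$-block, locate a flip index $j$ with $a_j \leq x_j$ and $a_{j+1} \geq x_{j+1}$, apply \Cref{lem:vw:fr:1to2}, and iterate. Your treatment is in fact slightly more explicit than the paper's about the discrete intermediate-value selection of $j$ and the boundary case $j+1 = t$.
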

\begin{proof}
    Since $(A_s, X_s) \in \region{BR}$ and $(A_{s+1}, X_{s+1}) \in \region{L}$, we have $a_{s+1} < x_{s+1}$.
    Similarly, we have $a_t > x_t$ due to the condition that $(A_{t-1}, X_{t-1}) \in \region{L}$ and $(A_t, X_t) \in \region{BR}$.
    Therefore, there must exist $i \in \{s+1, \ldots, t-1\}$ such that $a_i \leq x_i$ and $a_{i+1} \geq x_{i+1}$.
    Due to \cref{lem:vw:fr:1to2} with $\pi$ and this $i$, removing this $(A_i, X_i)$ from $\pi$ does not increase the cost of the trajectory.
    This lemma then follows by repeatedly applying this process until all of $(A_{s+1}, X_{s+1}), \ldots, (A_{t-1}, X_{t-1})$ are removed from $\pi$.
\end{proof}

\begin{lemma} \label{lem:vw:fr:12to3}
    Fix a trajectory $\pi = ((A_0, X_0), \ldots, (A_k, X_k))$ with $k \geq 2$. For some $i \in \{1, \ldots, k-1\}$, let $\pi'$ be a new trajectory obtained by removing $(A_i, X_i)$ from $\pi$. If $(A_{i-1}, X_{i-1}) \in \region{BR} \cup \region{L}$, $(A_i, X_i) \in \region{TR}$, and $a_i \geq x_i$, we have $\trajcost{\pi} \geq \trajcost{\pi'}$.
\end{lemma}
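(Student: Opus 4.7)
The plan is to reduce this to an application of \Cref{lem:remove-middle-point}, specifically Condition~\ref{cond:vw:rmp4}. Since the hypothesis already gives us $a_i \geq x_i$, the only thing left to verify is the inequality $f(A_i, X_i) \leq f(A_{i+1}, X_{i+1})$. The hypothesis on $(A_{i-1}, X_{i-1})$ is not directly used in the inequality proof itself; it is contextual (matching the cases expected in the reduction argument of \Cref{lem:vw:fr:reduce}).

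First, I would unpack $f(A_i, X_i)$. Since $(A_i, X_i) \in \region{TR}$, the piecewise characterization of $f$ gives $f(A_i, X_i) = f_1(X_i)$, and the defining inequalities of $\region{TR}$ guarantee $X_i \geq \lambda e^{1-\lambda}$. Next, I would localize $(A_{i+1}, X_{i+1})$: since $X_{i+1} \geq X_i \geq \lambda e^{1-\lambda}$, the point cannot lie in $\region{BR}$ (which requires $X < \lambda e^{1-\lambda}$), so it lies in $\region{L} \cup \region{TR}$.

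Then I would split into two cases. If $(A_{i+1}, X_{i+1}) \in \region{TR}$, then $f(A_{i+1}, X_{i+1}) = f_1(X_{i+1}) \geq f_1(X_i) = f(A_i, X_i)$ by monotonicity of $f_1$ (Property~\ref{prop:vw:func1}). If instead $(A_{i+1}, X_{i+1}) \in \region{L}$, then by the region characterization of $f$ we have $f(A_{i+1}, X_{i+1}) = f_0(X_{i+1} - A_{i+1})$, and the very condition for membership in $\region{L}$ (together with the $\max$ definition of $f$ on $A \leq X$) ensures $f_0(X_{i+1} - A_{i+1}) \geq f_1(X_{i+1})$; chaining with $f_1(X_{i+1}) \geq f_1(X_i)$ again by monotonicity yields the required inequality.

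Once $f(A_i, X_i) \leq f(A_{i+1}, X_{i+1})$ is established, Condition~\ref{cond:vw:rmp4} of \Cref{lem:remove-middle-point} applies directly and gives $\trajcost{\pi} \geq \trajcost{\pi'}$. The main conceptual point—really the only non-immediate step—is handling the $\region{L}$ subcase, where one must remember that in $\region{L}$ the penalty is governed by $f_0(X-A)$ which dominates $f_1(X)$; this is why the region decomposition lemma is needed before attempting this modification argument. No delicate computation is involved; the argument is purely structural.
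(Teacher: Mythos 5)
Your proposal is correct and follows essentially the same route as the paper: establish $f(A_i, X_i) \leq f(A_{i+1}, X_{i+1})$ and invoke Condition~(II) of the removal lemma together with $a_i \geq x_i$. In fact your case analysis is slightly more careful than the paper's one-line argument, which simply asserts $(A_{i+1}, X_{i+1}) \in \region{TR}$ (true, because the boundary curve $A \mapsto A - \ln A + 1 - \lambda + \ln\lambda$ is decreasing in $A$), whereas you also cover the hypothetical $\region{L}$ subcase via $f_0(X-A) \geq f_1(X)$, so no gap remains either way.
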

\begin{proof}
    Since $(A_{i+1}, X_{i+1}) \in \region{TR}$, we have $f(A_i, X_i) \leq f(A_{i+1}, X_{i+1})$, immediately completing the proof of this lemma due to Condition \ref{cond:vw:rmp4} of \cref{lem:remove-middle-point}.
\end{proof}

Equipped with the above lemmas, we can now prove \cref{lem:vw:fr:reduce}.
\begin{proof}[Proof of \cref{lem:vw:fr:reduce}]
    We break the proof into two cases depending on the region where the last pair of $\pi$ is contained.

\begin{figure}
    \centering
    \begin{subfigure}{0.32\textwidth}
        \centering
        \includegraphics[width=\textwidth]{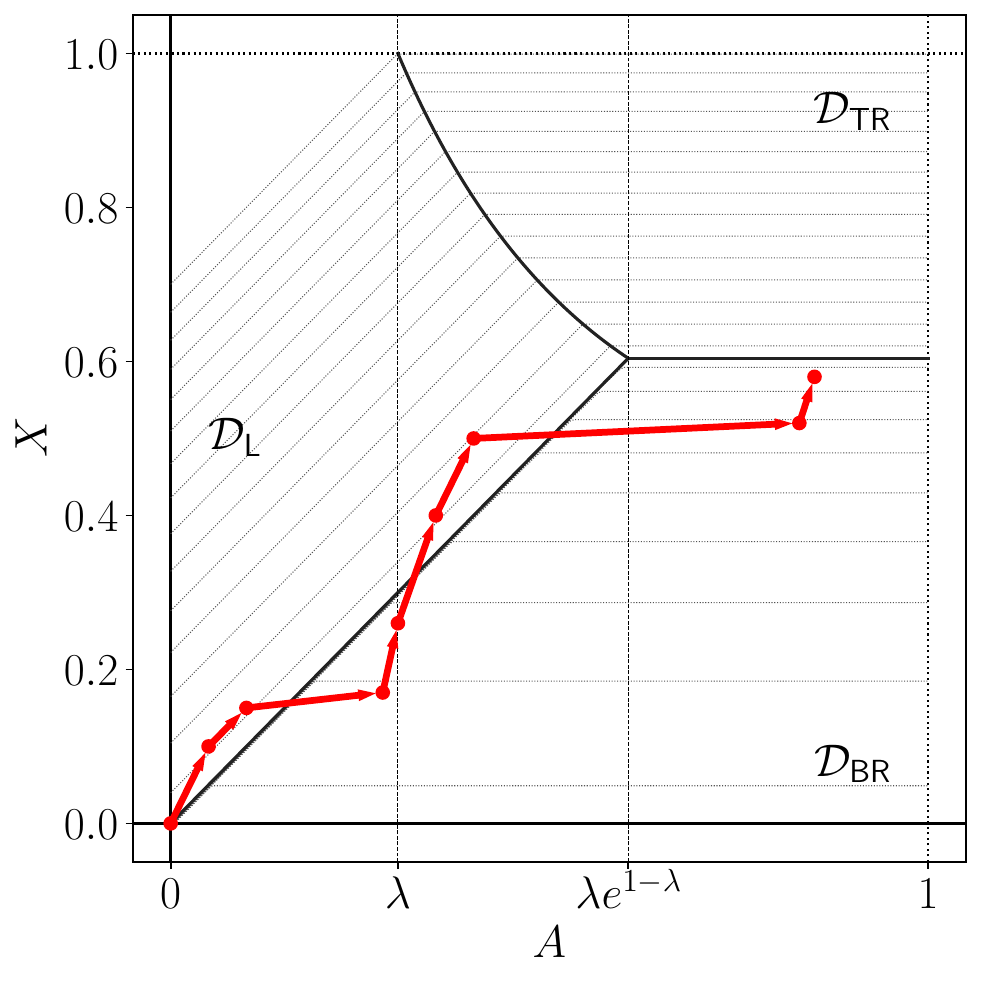}
        \caption{Initial trajectory}
    \end{subfigure}
    \begin{subfigure}{0.32\textwidth}
        \centering
        \includegraphics[width=\textwidth]{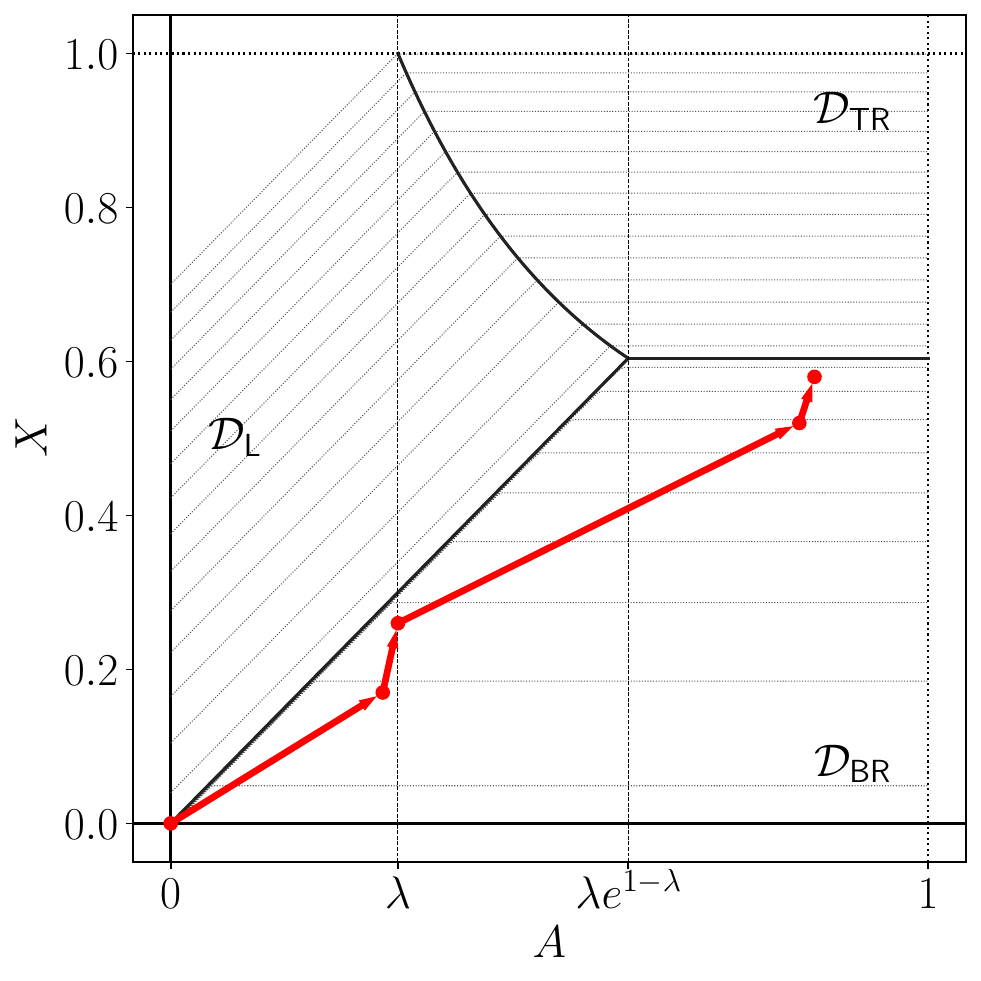}
        \caption{Due to Lemma \cref{lem:vw:fr:no2btw1}}
    \end{subfigure}
    \\
    \vspace{10pt}
    \begin{subfigure}{0.32\textwidth}
        \centering
        \includegraphics[width=\textwidth]{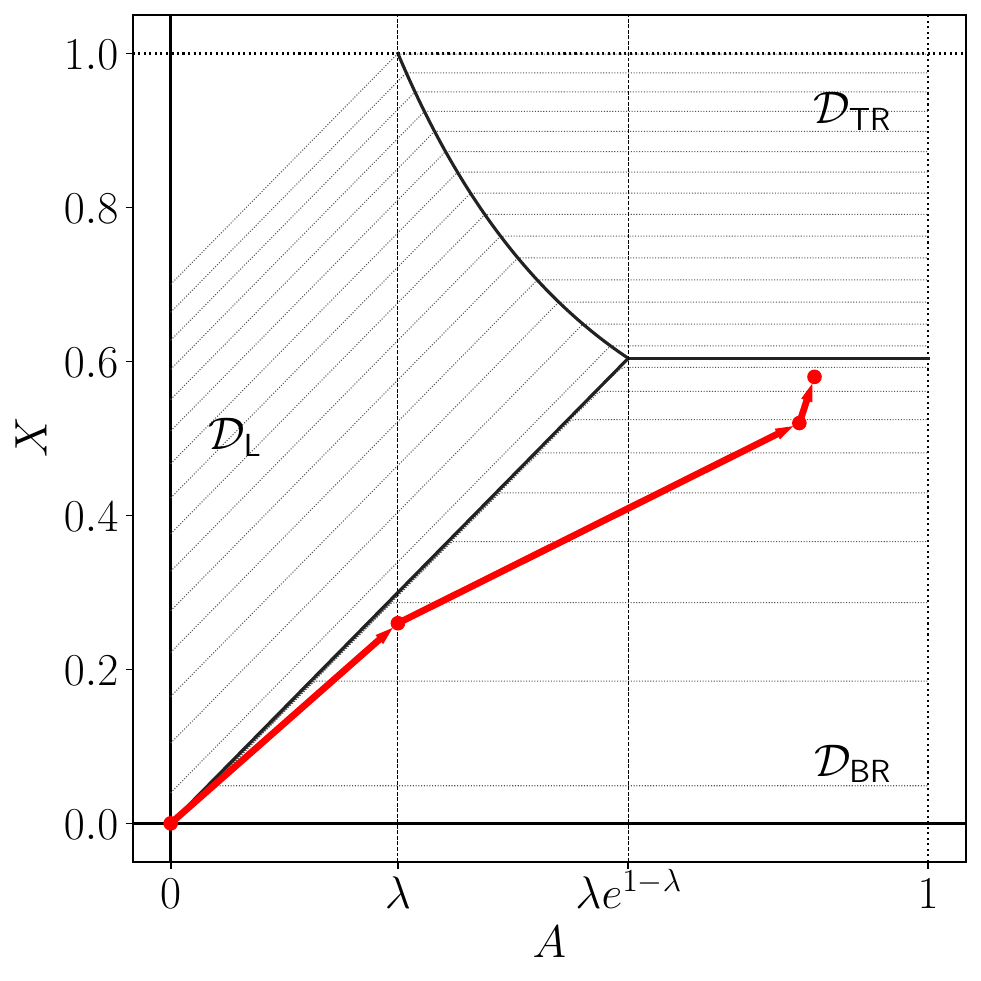}
        \caption{Due to Lemma \cref{lem:vw:fr:2to1}}
    \end{subfigure}
    \begin{subfigure}{0.32\textwidth}
        \centering
        \includegraphics[width=\textwidth]{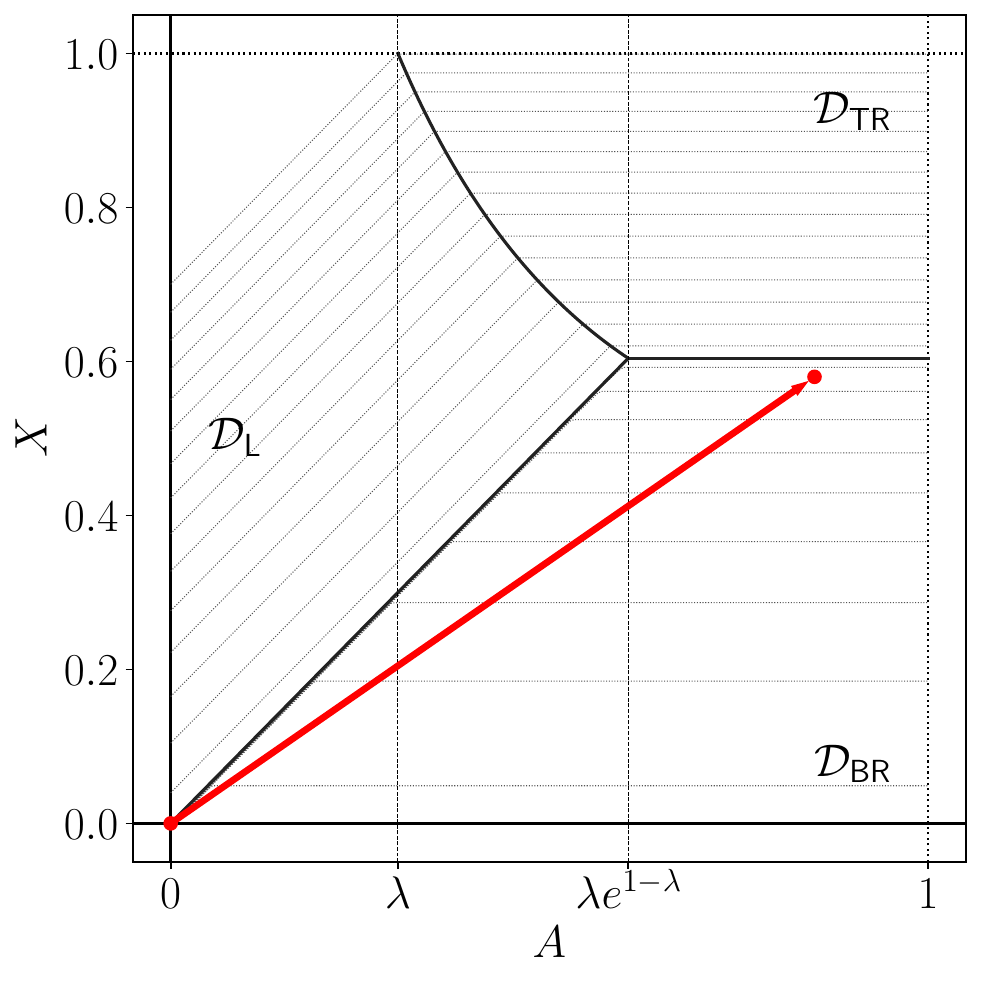}
        \caption{Irreducible trajectory}
    \end{subfigure}
    \caption{Illustration of the proof of \cref{lem:vw:fr:reduce} in the case where $(A_k, X_k) \in \region{BR}$.}
    \label{fig:vw:fr:BR}
\end{figure}

\begin{figure}
    \centering
    \begin{subfigure}{0.32\textwidth}
        \centering
        \includegraphics[width=\textwidth]{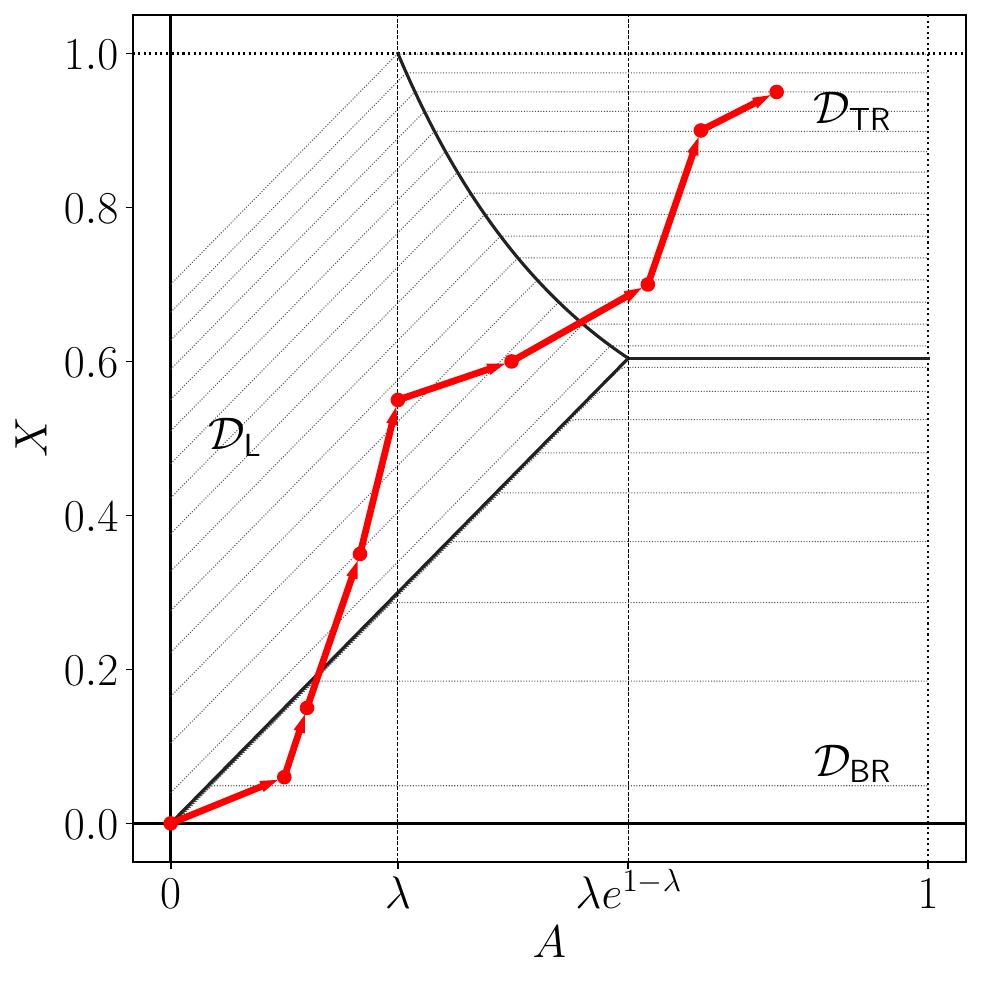}
        \caption{Initial trajectory}
    \end{subfigure}
    \begin{subfigure}{0.32\textwidth}
        \centering
        \includegraphics[width=\textwidth]{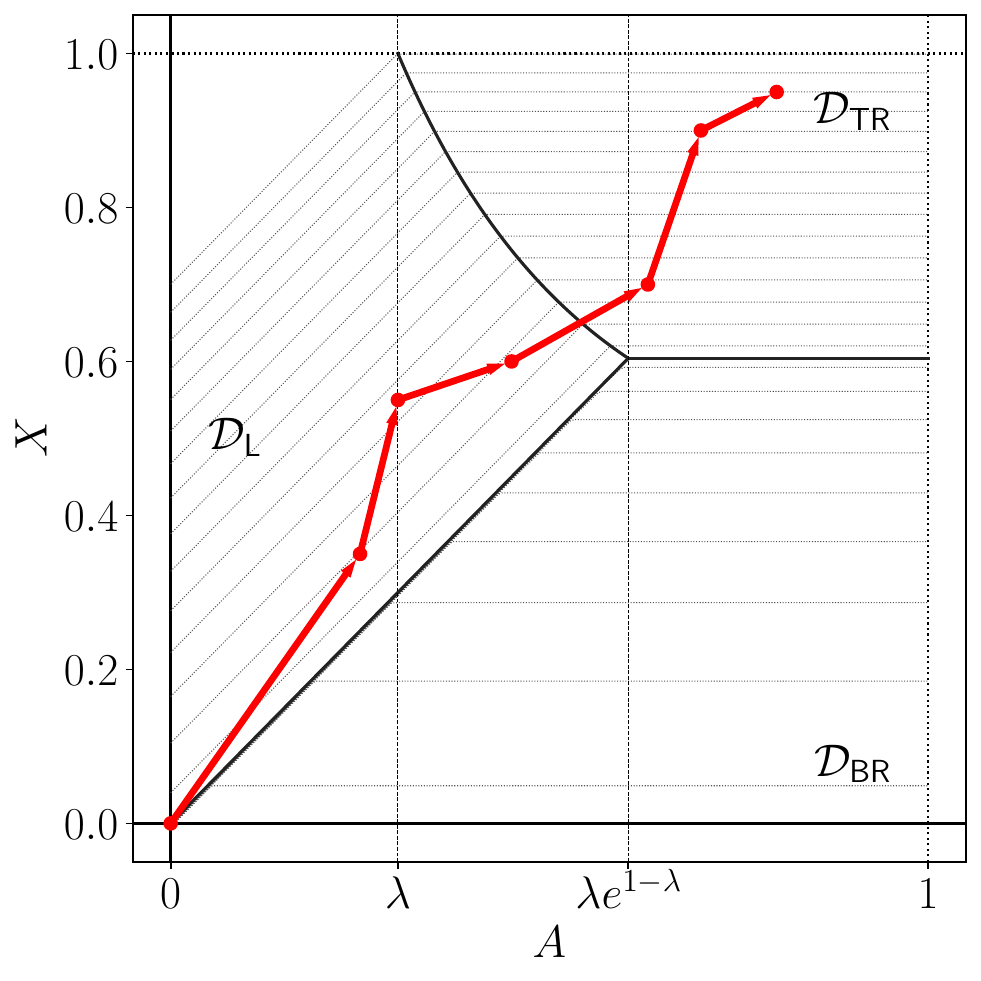}
        \caption{Due to Lemma \cref{lem:vw:fr:no1btw2}}
    \end{subfigure}
    \begin{subfigure}{0.32\textwidth}
        \centering
        \includegraphics[width=\textwidth]{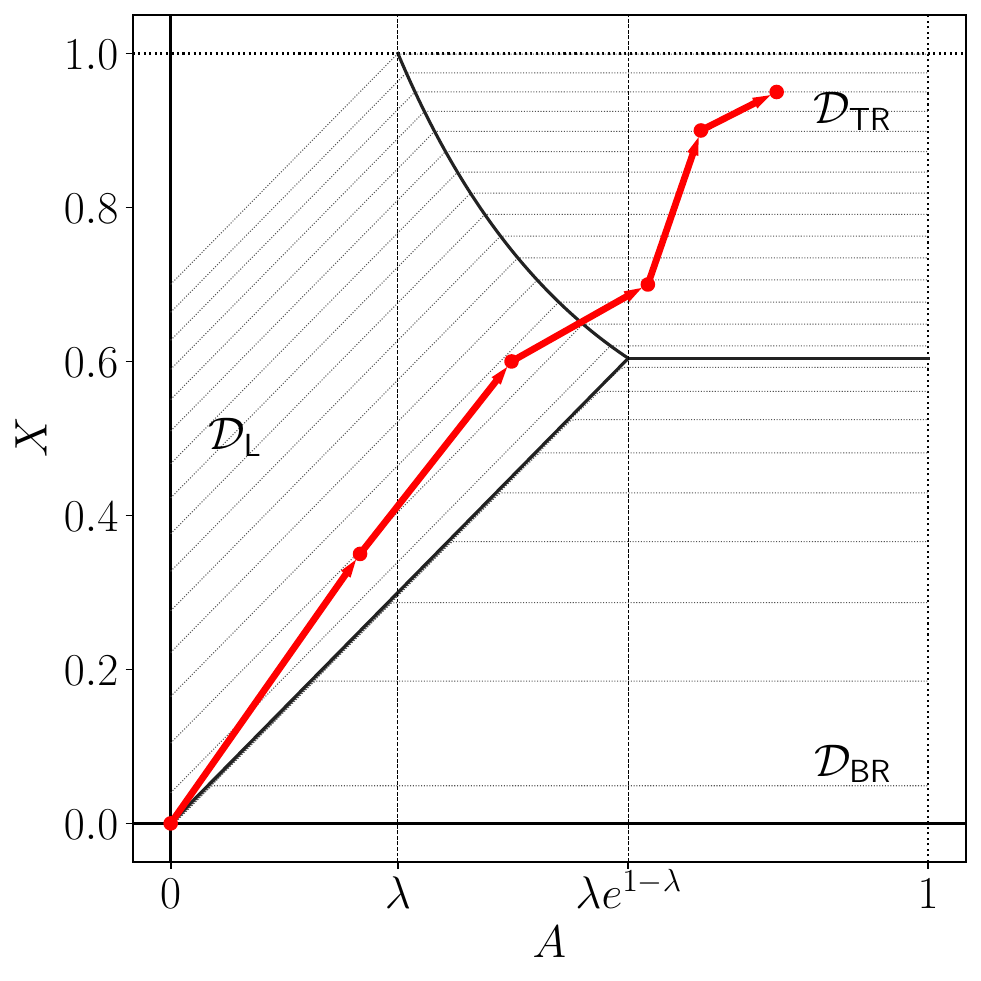}
        \caption{Due to Lemma \cref{lem:vw:fr:1to2}}
    \end{subfigure}
    \\
    \vspace{10pt}
    \begin{subfigure}{0.32\textwidth}
        \centering
        \includegraphics[width=\textwidth]{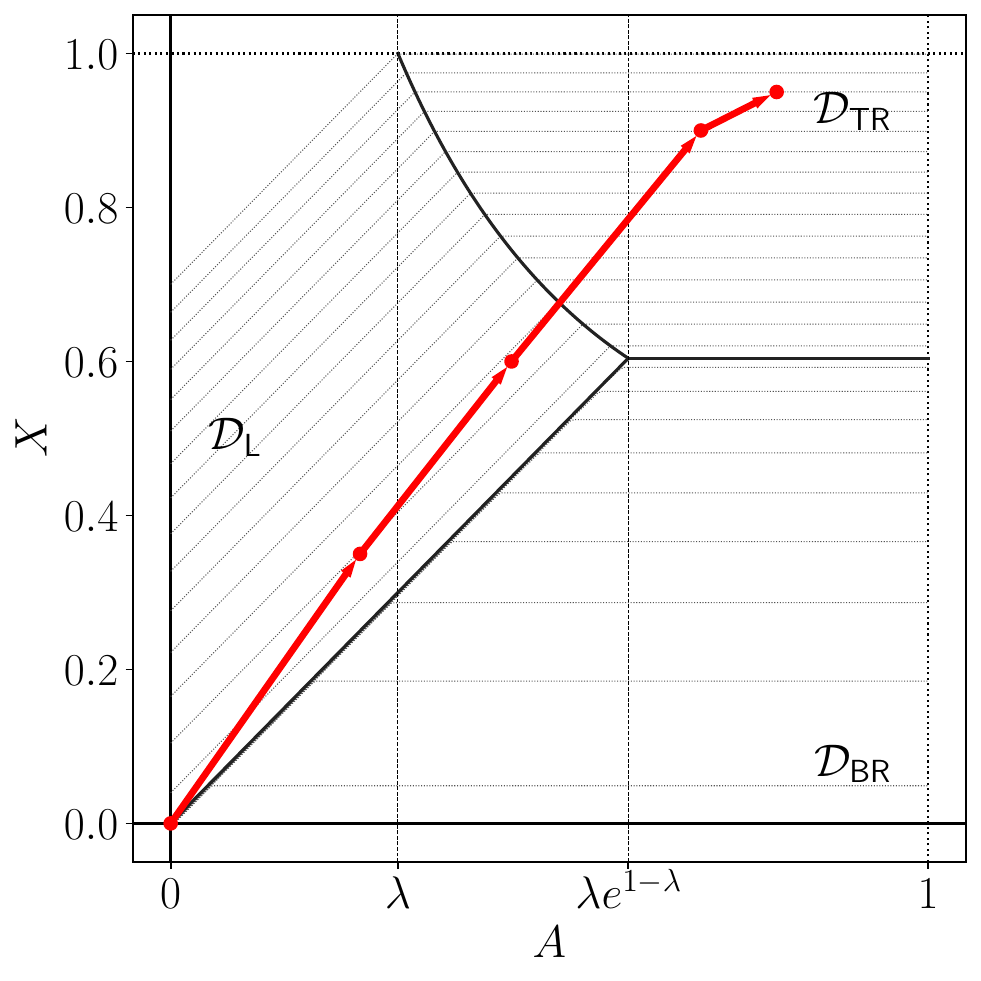}
        \caption{Due to Lemma \cref{lem:vw:fr:12to3}}
    \end{subfigure}
    \begin{subfigure}{0.32\textwidth}
        \centering
        \includegraphics[width=\textwidth]{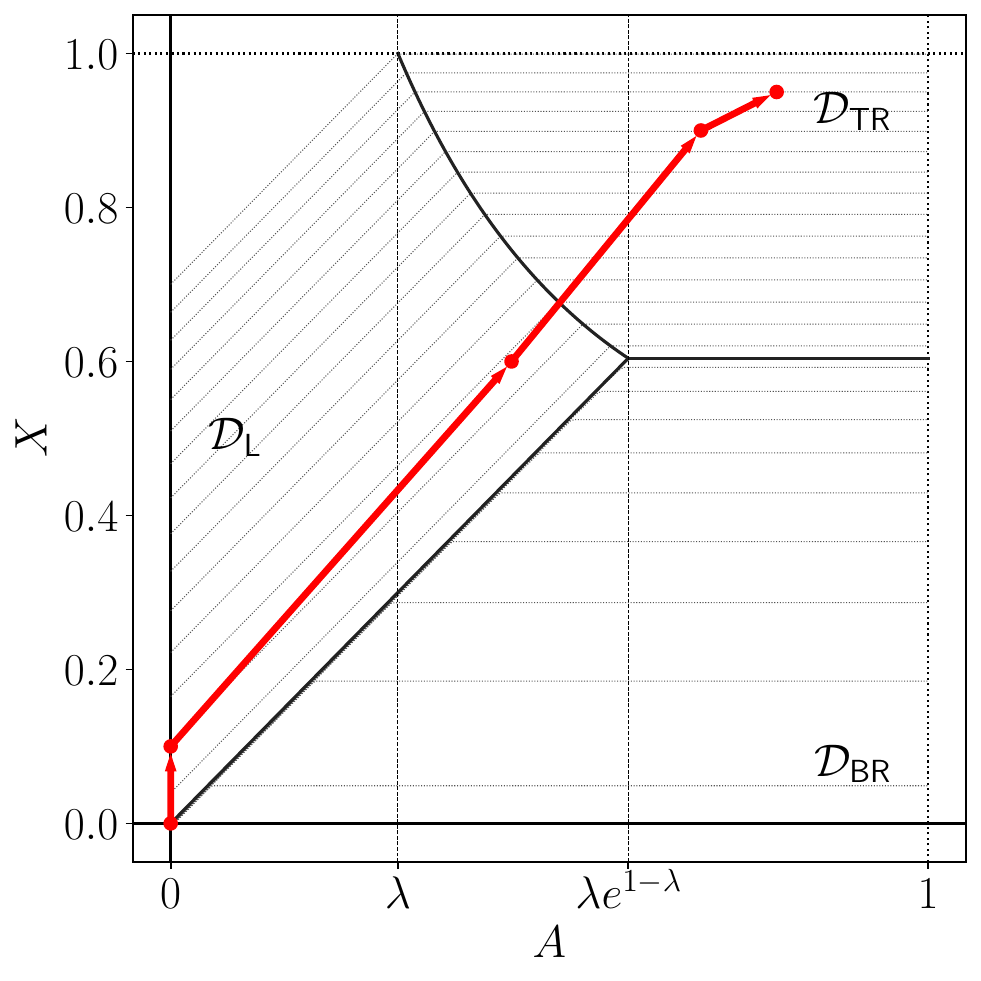}
        \caption{Intermediate trajectory $\pi_1$}
    \end{subfigure}
    \begin{subfigure}{0.32\textwidth}
        \centering
        \includegraphics[width=\textwidth]{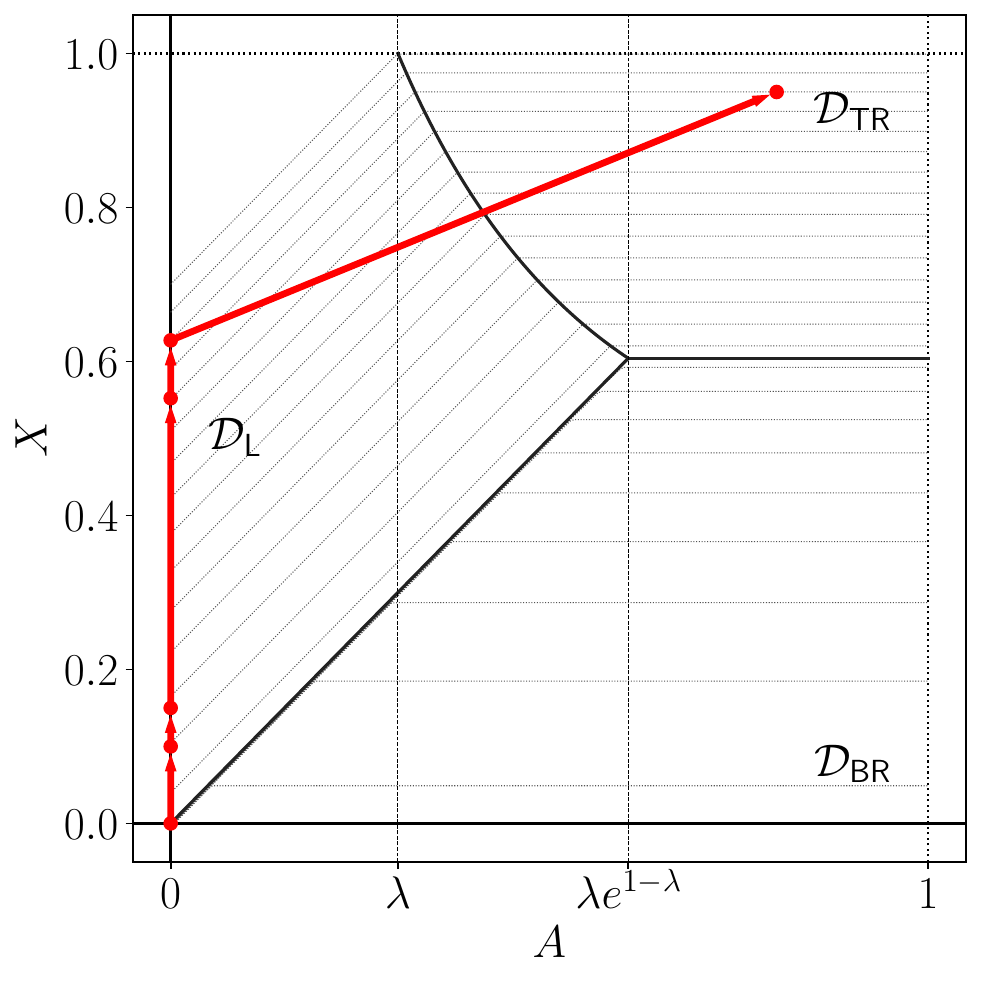}
        \caption{Irreducible trajectory $\pi_k$}
    \end{subfigure}
    \caption{Illustration of the proof of \cref{lem:vw:fr:reduce} in the case where $(A_k, X_k) \in \region{L} \cup \region{TR}$.}
    \label{fig:vw:fr:TR}
\end{figure}   
   
    \textbf{Case 1. $(A_k, X_k) \in \region{BR}$.}
    See \cref{fig:vw:fr:BR} for an illustration of the proof of this case.
    Due to \cref{lem:vw:fr:no2btw1}, we can assume that every pair in $\pi$ other than $(A_0, X_0) = (0, 0)$ is contained in $\region{BR}$.
    Note however that \cref{lem:vw:fr:2to1} is satisfied with $i := 1$ on $\pi$.
    We can therefore remove $(A_1, X_1)$ from $\pi$ without increasing the cost of the trajectory.
    By repeatedly applying this process, we end up with a trajectory $\pi' := ((0, 0), (A_1, X_1))$ with $(A_1, X_1) \in \region{BR}$.
    Recall that this $\pi'$ is irreducible, completing the proof of \cref{lem:vw:fr:reduce} for this case.
   
    \textbf{Case 2. $(A_k, X_k) \in \region{L} \cup \region{TR}$.}
    See \cref{fig:vw:fr:TR} for an illustration of the proof of this case.
    Note that, due to \cref{lem:vw:fr:no1btw2}, we can assume that every pair in $\pi$ is contained in $\region{L} \cup \region{TR}$.
    
    Let us now argue that we can further assume without loss of generality that the $f$ values of every pair in $\pi$ does not decrease, i.e.,
    \begin{equation} \label{eq:vw:fr:fmonotone}
        f(A_0, X_0) \leq \cdots \leq f(A_k, X_k).    
    \end{equation}
    Let $i^*$ be the last index of a pair contained in $\region{L}$, i.e., $(A_0, X_0), \ldots, (A_{i^*}, X_{i^*}) \in \region{L}$ and $(A_{i^*+1}, X_{i^*+1}), \ldots, (A_k, X_k) \in \region{TR}$.
    We may have either $i^* = 0$ (i.e., no pairs other than $(0, 0)$ are in $\region{L}$) or $i^* = k$ (i.e., no pairs are in $\region{TR}$).
    If $i^* > 0$, by \cref{lem:vw:fr:1to2} and the fact that $a_1 \leq x_1$, we can assume $a_i \leq x_i$ for every $i = 1, \ldots, i^*$, implying that $f(A_0, X_0) \leq \cdots \leq f(A_{i^*}, X_{i^*})$.
    Furthermore, if $i^* < k$, by \cref{lem:vw:fr:12to3}, we can also assume $a_{i^*+1} < x_{i^*+1}$, yielding that $f(A_{i^*}, X_{i^*}) < f(A_{i^*+1}, X_{i^*+1})$.
    For the remaining indices, we can easily see that $f(A_{i^*+1}, X_{i^*+1}) \leq \cdots \leq f(A_k, X_k)$ since $(A_{i^*+1}, X_{i^*+1}), \ldots, (A_k, X_k) \in \region{TR}$.

    For $i = 1, \ldots, k$, let $X'_i$ be the value such that $f(0, X'_i) = f(A_i, X_i)$; if $f(A_i, X_i) = 1$, we set $X'_i := \lambda$.
    Due to \cref{eq:vw:fr:fmonotone} and the fact that $f(0, \cdot)$ is strictly increasing in $[0, \lambda]$, we have $0 \leq X'_1 \leq \cdots \leq X'_k \leq \lambda$.
    Moreover, by the definition of $f$, we also have
    \begin{equation} \label{eq:vw:fr:yaxis}
        A_i \geq X_i - X'_i \text{ \; if } f(A_i, X_i) < 1.
    \end{equation}
    
    For $i = 1, \ldots, k-1$, let $\pi_i$ be the trajectory where $(A_1, X_1), \ldots, (A_i, X_i)$ in $\pi$ are replaced by $(0, X'_1), \ldots, (0, X'_i)$, respectively, i.e.,
    \[
        \pi_i := ((0, 0), (0, X'_1), \ldots, (0, X'_i),(A_{i+1}, X_{i+1}), \ldots, (A_k, X_k)).
    \]
    For consistency, let $\pi_0 := \pi$.
    We claim that, for every $i = 1, \ldots, k-1$, we have $\trajcost{\pi_{i-1}} \geq \trajcost{\pi_i}$.
    Indeed, since $\pi_{i-1}$ and $\pi_i$ differ only at index $i$, we have
    \begin{align*}
        & \trajcost{\pi_{i-1}} - \trajcost{\pi_i} \\
        & = \bigg[ (X_i - X'_{i-1}) f(A_i, X_i) + A_i (1 - f(A_i, X_i)) \\
        & \phantom{==} + (X_{i+1} - X_i) f(A_{i+1}, X_{i + 1}) + (A_{i+1} - A_i) (1 - f(A_{i+1}, X_{i + 1}))\bigg] \\
        & \phantom{=} - \bigg[ (X'_i - X'_{i-1}) f(0, X'_i) \\
        & \phantom{==} + (X_{i+1} - X'_i) f(A_{i+1}, X_{i + 1}) + A_{i+1} (1 - f(A_{i+1}, X_{i + 1})) \bigg] \\
        & = (f(A_{i+1}, X_{i+1}) - f(A_i, X_i)) \cdot (A_i - X_i + X'_i) \\
        & \geq 0,
    \end{align*}
    where the second equality comes from that $f(0, X'_i) = f(A_i, X_i)$ and the inequality from \cref{eq:vw:fr:fmonotone,eq:vw:fr:yaxis}.
    In particular, if $A_i < X_i - X'_i$, we have $f(A_i, X_i) = 1$ due to \cref{eq:vw:fr:yaxis}, and hence, $f(A_{i+1}, X_{i+1}) = 1$ due to \cref{eq:vw:fr:fmonotone}.

    Finally, let $\pi_k$ be the trajectory where $(0, X'_k)$ is inserted between $(0, X'_{k-1})$ and $(A_k, X_k)$ in $\pi_{k-1}$, i.e.,
    \[
        \pi_k := ((0, 0), (0, X'_1), \ldots, (0, X'_{k-1}), (0, X'_k), (A_k, X_k)).
    \]
    Note that $\trajcost{\pi_k} = \trajcost{\pi_{k-1}}$ since we have
    \begin{align*}
        & \trajcost{\pi_{k-1}} - \trajcost{\pi_k} \\
        & = \bigg[ (X_k - X'_{k-1}) f(A_k, X_k) + A_k (1 - f(A_k, X_k)) \bigg] \\
        & \phantom{=} - \bigg[ (X'_k - X'_{k-1}) f(0, X'_i) + (X_k - X'_k) f(A_k, X_k) + A_k (1 - f(A_k, X_k)) \bigg] \\
        & = 0
    \end{align*}
    due to the fact that $f(0, X'_k) = f(A_k, X_k)$.
    Observe that $\pi_k$ is irreducible, completing the proof of \cref{lem:vw:fr:reduce}.
\end{proof}

The proof of \cref{lem:vwalg:frac:main} then immediately follows from \cref{lem:vw:fr:irr,lem:vw:fr:reduce}.

\subsection{Extension to AdWords} \label{sec:vwalg:adw}
In this subsection, we present that \LearnAugBal{} naturally extends to AdWords under the small bids assumption, showing \cref{thm:adw:main} restated below.

\adwordsmain*

We first adapt \LAB{} into an algorithm for \emph{fractional} AdWords achieving the same robustness and consistency.
We then argue that integral AdWords under the small bids assumption can be reduced to fractional AdWords with small loss.

Recall that, in AdWords, offline $U$ and online $V$ are corresponding to \emph{advertisers} and \emph{impressions}, respectively.
Each advertiser $u \in U$ has a budget of $B_u$.
Whenever an impression $v \in V$ is revealed, the algorithm also learns bids $\{b_{u, v}\}_{u \in U}$ from the advertisers; the advice is the advertiser to which this impression should be assigned.
The algorithm then needs to assign each impression to an advertiser, making a revenue of its bid unless the advertiser has used up its budget.
The objective is to maximize the total revenue.
Recall from \cref{sec:prelim} the LP relaxation for AdWords and its dual LP:

\begin{minipage}{\textwidth}
    \centering
    \begin{minipage}[t]{0.49 \textwidth}
        \begin{align*}
            \max\; & \textstyle \sum_{u \in U} \sum_{v \in V} b_{u, v} x_{u,v} \\
            \text{s.t.}\; &\textstyle  \frac{1}{B_u} \sum_{v \in V} b_{u,v} x_{u, v} \leq 1, & \forall u \in U, \\
            &\textstyle  \sum_{u \in U} x_{u, v} \leq 1, & \forall v \in V, \\
            & x_{u, v} \geq 0, & \forall u \in U, v \in V; \\
        \end{align*}
    \end{minipage}
    \begin{minipage}[t]{0.49 \textwidth}
        \begin{align*}
            \min \; &\textstyle  \sum_{u \in U} \alpha_u + \sum_{v \in V} \beta_v \\
            \text{s.t.} \; &\textstyle  \frac{b_{u, v}}{B_u} \alpha_u + \beta_v \geq b_{u, v}, &  \forall u \in U, v \in V, \\
            & \alpha_u \geq 0, & \forall u \in U, \\
            & \beta_v \geq 0, & \forall v \in V.
        \end{align*}
    \end{minipage}
\end{minipage}

The fractional version of AdWords is corresponding to constructing online a feasible solution to the above primal LP relaxation.
We also admit a fractional advice; we again denote by $a \in \R^{U \times V}$ the advice feasible to the primal LP.

\paragraph{Fractional algorithm.}
When an impression $v \in V$ arrives together with the advice $\{a_{u, v}\}_{u \in U}$,  we denote by $A_u := \frac{1}{B_u} \sum_{t \preceq v} a_{u, t}$ for each advertiser $u \in U$ the fraction of $u$'s budget spent by the advice up to and including $v$.
Let $f$ be the same function defined in \cref{eq:vw:penaltyfrac} for vertex-weighted matching.
The algorithm then continuously assigns an infinitesimal unit of impression $v$ to an advertiser $u \in U$ with the highest value of $b_{u, v} (1 - f(A_u, X_u))$, where $X_u$ denotes the fraction of $u$'s budget spent by the algorithm right before this infinitesimal unit is assigned, until either impression $v$ is integrally assigned or the budgets of all advertisers are used up.

\paragraph{Primal-dual analysis.}
Recall from \cref{sec:prelim} the key lemma of the primal-dual analysis for AdWords:

\begin{lemma} [cf. \cref{lem:pre:adw:pda}] \label{lem:vw:adw:pd}
    Let $x \in \R^{U \times V}$ be the output of a fractional algorithm for AdWords.
    For some $\rho \in [0, 1]$, if there exists $(\alpha, \beta) \in \R^U \times \R^V$ satisfying
    \begin{itemize}
        \item (reverse weak duality) $ \sum_{u \in U} \sum_{v \in V} b_{u, v} x_{u, v} \geq \sum_{u \in U} \alpha_u + \sum_{v \in V} \beta_v$ and
        \item (approximate dual feasibility) $ \frac{b_{u, v}}{B_u} \alpha_u + \beta_v \geq \rho \cdot b_{u, v} \;$ for every $u \in U$ and $v \in V$,
    \end{itemize}
    we have $ \ALG \geq \rho \cdot \OPT$.
\end{lemma}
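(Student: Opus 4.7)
The plan is to mirror exactly the primal-dual argument used in \Cref{lem:pre:pda}, adapting it to the AdWords LP. The two hypotheses are engineered so that the first gives a lower bound on $\ALG$ in terms of the dual objective $\sum_u \alpha_u + \sum_v \beta_v$, and the second ensures a rescaled version of $(\alpha,\beta)$ is dual-feasible; weak LP duality then closes the loop.

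First, I would verify that $(\alpha/\rho, \beta/\rho)$ is feasible for the AdWords dual LP. Nonnegativity is immediate from $\rho \in [0,1]$ and $\alpha, \beta \geq 0$. For the edge constraints, the approximate dual feasibility hypothesis $\frac{b_{u,v}}{B_u}\alpha_u + \beta_v \geq \rho \cdot b_{u,v}$ gives, after dividing both sides by $\rho$ (noting we can assume $\rho > 0$, since the case $\rho = 0$ is trivial), exactly $\frac{b_{u,v}}{B_u} \cdot \frac{\alpha_u}{\rho} + \frac{\beta_v}{\rho} \geq b_{u,v}$, which is the required dual constraint.

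Second, invoke weak LP duality: since the primal LP for AdWords is a maximization problem, any feasible dual solution upper-bounds the primal optimum $\OPT$. Applied to $(\alpha/\rho, \beta/\rho)$, this yields
\[
    \sum_{u \in U} \tfrac{\alpha_u}{\rho} + \sum_{v \in V} \tfrac{\beta_v}{\rho} \;\geq\; \OPT.
\]
Rearranging gives $\sum_u \alpha_u + \sum_v \beta_v \geq \rho \cdot \OPT$.

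Finally, chain this with the reverse weak duality hypothesis:
\[
    \sum_{u \in U}\sum_{v \in V} b_{u,v} x_{u,v} \;\geq\; \sum_{u \in U} \alpha_u + \sum_{v \in V} \beta_v \;\geq\; \rho \cdot \OPT,
\]
which is exactly the desired conclusion $\ALG \geq \rho \cdot \OPT$. There is no real obstacle here — the lemma is a direct transcription of \Cref{lem:pre:pda} to the AdWords LP pair, and the only mild subtlety is keeping track of the $b_{u,v}/B_u$ coefficient in the dual constraint when rescaling, which is handled cleanly by the fact that this coefficient is the same on both sides of the approximate feasibility inequality.
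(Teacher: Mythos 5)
Your proof is correct and is essentially the paper's own argument: the paper omits the proof of this lemma precisely because it is the verbatim adaptation of \Cref{lem:pre:pda}, whose proof rescales $(\alpha,\beta)$ by $\rho$, checks dual feasibility, and invokes weak LP duality exactly as you do. Your handling of the $\rho=0$ case and the $b_{u,v}/B_u$ coefficient is a fine (if unnecessary-to-state) bit of extra care.
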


The construction of dual variables is almost identical to the vertex-weighted setting.
We first initialize $(\alpha, \beta) \gets (\mathbf{0}, \mathbf{0})$.
At the iteration when impression $v \in V$ is revealed, for each advertiser $u \in U$, let $x_{u, v}$ denote the fraction of $v$ assigned to $u$.
Let us also denote by $\Xup{v}_u := \frac{1}{B_u} \sum_{t \preceq v} b_{u, t} x_{u, t}$ and $\Aup{v}_u := \frac{1}{B_u} \sum_{t \preceq v} b_{u, t} a_{u, t}$ the fractions of $u$'s budget spent by the algorithm and the advice, respectively, at the end of this iteration.
We set
\begin{itemize}
    \item $\alpha_u \gets \alpha_u + x_{u, v} \cdot b_{u,v} f(\Aup{v}_u,\Xup{v}_u)$ for every $u \in U$, and
    \item $\beta_v \gets \max_{u \in U} \left\{ b_{uv} (1-f(\Aup{v}_u, \Xup{v}_u)) \right\}$.  
\end{itemize}
By the definition of the algorithm, observe that $b_{u, v}(1-f(\Aup{v}_u, \Xup{v}_u))$ is constant for all $u \in U$ with $x_{u, v} > 0$ and that $\beta_v$ is equal to this value.
Also, note that if $\sum_{u \in U} x_{u, v} < 1$, then $\beta_v = 0$ due to the fact that $\sum_{u \in U} x_{u, v} < 1$ implies all the advertisers have used up their budgets.

We now show that \cref{lem:pd_equal_vertex_weighted,lem:vw_r,lem:vw_c} can be easily adapted to fractional AdWords as follows.

\begin{lemma} [cf. \cref{lem:pd_equal_vertex_weighted}] \label{lem:vw:adw:rwd}
    The revenue of the algorithm is equal to the objective value of $(\alpha, \beta)$ in the dual LP. 
\end{lemma}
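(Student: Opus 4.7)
The plan is a direct adaptation of the proof of \Cref{lem:pd_equal_vertex_weighted} to the AdWords setting. I will track the algorithm's revenue $\ALG$ and the dual objective value $\DUAL = \sum_u \alpha_u + \sum_v \beta_v$ in parallel over the course of the execution, both starting at zero, and argue that $\Delta \ALG = \Delta \DUAL$ in every iteration. Summing over all iterations then yields the claim.

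The analogue of \Cref{obs:vw:betav} is already noted in the excerpt and plays the same role here: by the greedy selection rule, the quantity $b_{u,v}(1 - f(\Aup{v}_u, \Xup{v}_u))$ is constant (and equal to $\beta_v$) over all advertisers $u$ with $x_{u,v} > 0$, and moreover $\beta_v = 0$ whenever $\sum_u x_{u,v} < 1$, since in that case every advertiser's budget is exhausted, so $\Xup{v}_u = 1$ and hence $f(\Aup{v}_u, 1) = 1$. These two facts together imply both $\beta_v = \beta_v \cdot \sum_u x_{u,v}$ and $\beta_v \cdot \sum_u x_{u,v} = \sum_u x_{u,v} \cdot b_{u,v}(1 - f(\Aup{v}_u, \Xup{v}_u))$.

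Now fix an iteration in which impression $v$ arrives, and abbreviate $A_u := \Aup{v}_u$ and $X_u := \Xup{v}_u$. The primal revenue gained is $\Delta \ALG = \sum_{u \in U} b_{u,v} x_{u,v}$, and the dual update rule gives $\Delta \DUAL = \sum_{u \in U} x_{u,v} b_{u,v} f(A_u, X_u) + \beta_v$. Plugging in the identity for $\beta_v$ from the previous paragraph yields
\[
\Delta \DUAL = \sum_{u \in U} x_{u,v} b_{u,v} f(A_u, X_u) + \sum_{u \in U} x_{u,v} b_{u,v}\bigl(1 - f(A_u, X_u)\bigr) = \sum_{u \in U} b_{u,v} x_{u,v} = \Delta \ALG,
\]
which is exactly the vertex-weighted computation with $w_u$ replaced by $b_{u,v}$. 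There is no substantive obstacle to the proof; it is a routine translation of \Cref{lem:pd_equal_vertex_weighted}, and the only point requiring verification is that the structural observation (either $v$ is fully fractionally assigned, or $\beta_v = 0$) carries over, which follows immediately from the termination condition of the fractional algorithm and the boundary behavior $f(\cdot, 1) = 1$.
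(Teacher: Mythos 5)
Your proof is correct and follows essentially the same route as the paper's: both track $\Delta\ALG = \Delta\DUAL$ per iteration, using exactly the two structural facts you cite ($\beta_v = 0$ when $\sum_u x_{u,v} < 1$ because all budgets are exhausted and $f(\cdot,1)=1$, and $\beta_v = b_{u,v}(1 - f(\Aup{v}_u,\Xup{v}_u))$ for every $u$ with $x_{u,v} > 0$), which the paper records immediately before the lemma. No gaps.
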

\begin{proof}
    Let $\ALG$ and $\DUAL$ denote the revenue of the algorithm and the objective value of $(\alpha, \beta)$ at any iteration.
    As we have $\ALG = \DUAL = 0$ at the start, it suffices to show that $\Delta \ALG = \Delta \DUAL$ at each iteration.
    When impression $v$ is revealed, we have
    $
        \Delta \ALG = \sum_{u \in U} b_{u,v} x_{u, v}.
    $
    Let us now calculate $\Delta \DUAL$.
    For clarity, let $A_u := \Aup{v}_u$ and $X_u := \Xup{v}_u$.
    We then have
    \begin{align*}
        \Delta \DUAL
        & = \sum_{u \in U} \Delta \alpha_u + \beta_v \\
        & \stackrel{(a)}{=} \sum_{u \in U} x_{u, v} \cdot b_{u,v} f(A_u, X_u) + \sum_{u \in U} x_{u, v} \cdot \beta_v \\
        & = \sum_{u \in U} x_{u, v} \cdot \left(b_{u,v} f(A_u, X_u) + \beta_v\right)\\
        & \stackrel{(b)}{=} \sum_{u \in U} x_{u, v} \cdot \left(b_{u,v} f(A_u, X_u) + b_{u,v}(1 - f(A_u, X_u))\right) \\
        & = \sum_{u \in U} b_{u,v} x_{u, v} \\
        & = \Delta \ALG,
    \end{align*}
    where (a) follows from that $\beta_v = 0$ if $\sum_{u \in U}x_{u, v} < 1$, and (b) from that $\beta_v = b_{u,v} (1 - f(A_u, X_u))$ if $x_{u, v} >  0$.
\end{proof}

For the counterparts of \cref{lem:vw_r,lem:vw_c}, we define another notation: for every $u \in U$ and $v \in V$, let $\Delta \Xup{v}_u := \frac{b_{u, v} x_{u, v}}{B_u}$ and $\Delta \Aup{v}_u := \frac{b_{u, v} a_{u, v}}{B_u}$ denote the increase of fractions of $u$'s budget spent by the algorithm and the advice, respectively, at iteration when $v$ is revealed.

\begin{lemma}[cf. \cref{lem:vw_r}] \label{lem:vw:adw:r}
    The algorithm is $r$-robust for any $r$ satisfying that, for any $u \in U$ and $v \in V$, 
    \begin{equation*}
        r \leq  \int_0^{\Xup{v}_u} f(\Aup{v}_u,z)dz + (1 - f(\Aup{v}_u, \Xup{v}_u)).
    \end{equation*}   
\end{lemma}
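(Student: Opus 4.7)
The plan is to carry out a primal-dual analysis that mirrors the proof of \Cref{lem:vw_r} in the vertex-weighted case, adapted to the AdWords LP. By \Cref{lem:vw:adw:pd} combined with reverse weak duality (already given by \Cref{lem:vw:adw:rwd}), it suffices to establish the approximate dual feasibility condition: for every $u \in U$ and $v \in V$,
\begin{equation*}
    \frac{b_{u,v}}{B_u} \alpha_u + \beta_v \;\geq\; r \cdot b_{u,v}.
\end{equation*}

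The first step is to lower-bound each term in this inequality using the dual construction. Since $\alpha_u$ is nondecreasing throughout the execution and accumulates a contribution of $x_{u,t} \cdot b_{u,t} f(\Aup{t}_u, \Xup{t}_u)$ in every iteration $t \preceq v$, we have $\alpha_u \geq \sum_{t \preceq v} x_{u,t} \, b_{u,t} \, f(\Aup{t}_u, \Xup{t}_u)$. Meanwhile, $\beta_v$ is defined as a maximum over $u' \in U$, so $\beta_v \geq b_{u,v}(1 - f(\Aup{v}_u, \Xup{v}_u))$. Plugging these in and factoring out $b_{u,v}$, the left-hand side is bounded below by
\begin{equation*}
    b_{u,v} \left[ \sum_{t \preceq v} \frac{b_{u,t} \, x_{u,t}}{B_u} \, f(\Aup{t}_u, \Xup{t}_u) \;+\; (1 - f(\Aup{v}_u, \Xup{v}_u)) \right].
\end{equation*}
Recognizing that $\frac{b_{u,t} x_{u,t}}{B_u} = \Delta \Xup{t}_u$ rewrites the sum as $\sum_{t \preceq v} \Delta \Xup{t}_u \cdot f(\Aup{t}_u, \Xup{t}_u)$.

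The next step applies the monotonicity of $f$, exactly as in \Cref{lem:vw_r}. Because $f(A, X)$ is decreasing in $A$ and $\Aup{t}_u \leq \Aup{v}_u$ for every $t \preceq v$, we can replace $\Aup{t}_u$ by $\Aup{v}_u$ inside the sum without increasing it. Then, since $f(\Aup{v}_u, \cdot)$ is increasing in its second argument and the $\Xup{t}_u$'s trace out the interval $[0, \Xup{v}_u]$, the sum $\sum_{t \preceq v} \Delta \Xup{t}_u \cdot f(\Aup{v}_u, \Xup{t}_u)$ is a right-endpoint Riemann-type sum that lower-bounds $\int_0^{\Xup{v}_u} f(\Aup{v}_u, z)\,dz$. (In the continuous fractional algorithm, the increments are infinitesimal and this is exact; in a discretized view, it follows because $f(\Aup{v}_u, z) \leq f(\Aup{v}_u, \Xup{t}_u)$ for $z$ in the segment ending at $\Xup{t}_u$.)

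Combining these steps yields
\begin{equation*}
    \frac{b_{u,v}}{B_u} \alpha_u + \beta_v \;\geq\; b_{u,v} \left[ \int_0^{\Xup{v}_u} f(\Aup{v}_u, z)\,dz \;+\; (1 - f(\Aup{v}_u, \Xup{v}_u)) \right] \;\geq\; r \cdot b_{u,v},
\end{equation*}
where the last inequality is the hypothesis of the lemma. Invoking \Cref{lem:vw:adw:pd} then gives $\ALG \geq r \cdot \OPT$, proving $r$-robustness. I do not anticipate a serious obstacle here: the argument is structurally identical to the vertex-weighted case, and the only bookkeeping change is the replacement of $w_u x_{u,t}$ by $\frac{b_{u,t} x_{u,t}}{B_u}$ together with the factor $\frac{b_{u,v}}{B_u}$ appearing in the AdWords dual constraint.
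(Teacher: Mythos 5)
Your proposal is correct and follows essentially the same route as the paper's proof: reduce to approximate dual feasibility via \Cref{lem:vw:adw:pd} and \Cref{lem:vw:adw:rwd}, lower-bound $\alpha_u$ by its accumulated increments and $\beta_v$ by the term for $u$, then apply monotonicity of $f$ in $A$ and in $X$ to pass from the sum to the integral. The bookkeeping with $\Delta \Xup{t}_u = b_{u,t}x_{u,t}/B_u$ matches the paper exactly.
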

\begin{proof}
    Due to \cref{lem:vw:adw:pd,lem:vw:adw:rwd}, it suffices to prove a bound of the form
    $$
        \frac{b_{u,v}}{B_u}\alpha_u + \beta_v \geq r \cdot b_{u,v}  \text{for all $u \in U$ and $v \in V$}.
    $$
    Note that, for any $u \in U$ and $v \in V$, we have
    \begin{align*}
        \frac{b_{u,v}}{B_u}\alpha_u + \beta_v
        & \stackrel{(a)}{\geq} \frac{b_{u,v}}{B_u}\sum_{t \preceq v}b_{u,t} x_{u, t} \, f(\Aup{t}_u, \Xup{t}_u) + \max_{u' \in U} \left\{  b_{u',v}(1 - f(\Aup{v}_{u'}, \Xup{v}_{u'})) \right\} \\
        &\geq \frac{b_{u,v}}{B_u}\sum_{t \preceq v} b_{u,t} x_{u, t} \, f(\Aup{t}_u, \Xup{t}_u) +  b_{u,v} (1 - f(\Aup{v}_u, \Xup{v}_u)) \\
        &\stackrel{(b)}{\geq} b_{u,v}\sum_{t \preceq v} \Delta \Xup{t}_u f(\Aup{v}_u, \Xup{t}_u) +  b_{u,v}(1 - f(\Aup{v}_u, \Xup{v}_u))  \\
        &\stackrel{(c)}{\geq} b_{u,v} \cdot \left[ \int_0^{\Xup{v}_u} f(\Aup{v}_u, z) dz + (1 - f(\Aup{v}_u, \Xup{v}_u)) \right],
    \end{align*}
    where (a) is because $\alpha_u$ does not decrease throughout the execution, (b) is because $f(A,X)$ is decreasing in $A$, and (c) is because  $f(A,X)$ is increasing in $X$. 
\end{proof}

\begin{lemma} [cf. \cref{lem:vw_c}] \label{lem:vw:adw:c}
    The algorithm is $c$-consistent, for any value of $c$ satisfying that, for every $u \in U$,
    \begin{equation*}
        \sum_{t \in N(u)} \left[ \Delta \Xup{t}_u \cdot f(\Aup{t}_u, \Xup{t}_u) + \Delta \Aup{t}_u \cdot (1-f(\Aup{t}_u, \Xup{t}_u)) \right]
        \geq c\cdot A_u,
    \end{equation*}
    where $A_u := \sum_{t \in V} a_{u, t}$ denotes the fraction of the budget $u$ eventually spends by the advice.
\end{lemma}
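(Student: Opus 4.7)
The plan is to mirror the primal-dual consistency argument of \Cref{lem:vw_c} with the obvious adaptations for AdWords. The target inequality is $\ALG \geq c \cdot \ADVICE$, where $\ADVICE = \sum_{u\in U}\sum_{v\in V} b_{u,v} a_{u,v}$. By \Cref{lem:vw:adw:rwd}, $\ALG$ equals the dual objective $\sum_{u\in U}\alpha_u + \sum_{v\in V}\beta_v$, so I will work toward bounding this dual objective from below.

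First, I would exploit feasibility of the advice in the primal LP for AdWords, which gives $\sum_{u\in U} a_{u,v}\leq 1$ for every $v$. Multiplying $\beta_v \geq 0$ by this inequality and summing yields
\[
\sum_{v\in V}\beta_v \;\geq\; \sum_{v\in V}\beta_v\sum_{u\in U} a_{u,v} \;=\; \sum_{u\in U}\sum_{t\in N(u)} a_{u,t}\beta_t,
\]
so it suffices to prove, per advertiser $u$,
\[
\alpha_u + \sum_{t\in N(u)} a_{u,t}\beta_t \;\geq\; c \cdot \sum_{t\in N(u)} b_{u,t}a_{u,t}.
\]

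Next, I would unpack the dual construction. By definition, $\alpha_u = \sum_{t\in N(u)} x_{u,t}\,b_{u,t}\, f(\Aup{t}_u,\Xup{t}_u)$, and $\beta_t = \max_{u'\in U} b_{u',t}(1-f(\Aup{t}_{u'},\Xup{t}_{u'})) \geq b_{u,t}(1-f(\Aup{t}_u,\Xup{t}_u))$ for every $u$. Substituting these bounds gives
\[
\alpha_u + \sum_{t\in N(u)} a_{u,t}\beta_t \;\geq\; \sum_{t\in N(u)}\bigl[b_{u,t}x_{u,t}\, f(\Aup{t}_u,\Xup{t}_u) + b_{u,t}a_{u,t}(1-f(\Aup{t}_u,\Xup{t}_u))\bigr].
\]
Dividing both sides by $B_u$ and recognizing $\Delta \Xup{t}_u = b_{u,t}x_{u,t}/B_u$ and $\Delta \Aup{t}_u = b_{u,t}a_{u,t}/B_u$, the inequality reduces exactly to the hypothesized bound $\sum_t[\Delta \Xup{t}_u f(\Aup{t}_u,\Xup{t}_u) + \Delta \Aup{t}_u(1-f(\Aup{t}_u,\Xup{t}_u))] \geq c \cdot A_u$, where $A_u = \tfrac{1}{B_u}\sum_{t\in N(u)} b_{u,t}a_{u,t}$ is the final fraction of $u$'s budget used by the advice (matching the convention in $\Aup{v}_u$ from the algorithm description).

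I do not expect a real obstacle here: the argument is mechanical once one correctly tracks the $B_u$ normalization. The only point requiring a little care is making sure that the final normalization yields $A_u$ as the budget-fraction quantity consistent with $\Aup{t}_u$ (so that $\sum_t \Delta \Aup{t}_u = A_u$); this is exactly why we divide the raw primal-dual inequality by $B_u$ at the end.
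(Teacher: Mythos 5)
Your proof is correct and follows essentially the same route as the paper's: reverse weak duality, the advice's primal feasibility to absorb the $\beta_v$'s, and then the per-advertiser lower bound from the dual construction, normalized by $B_u$. Your remark that $A_u$ should be read as $\tfrac{1}{B_u}\sum_{t} b_{u,t}a_{u,t}$ (consistent with $\Aup{v}_u$) rather than the literal $\sum_t a_{u,t}$ in the lemma statement is exactly the right reading and matches what the paper's own proof uses.
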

\begin{proof}
    Our goal here is to prove that
    $
        \ALG \geq c \cdot \ADVICE, 
    $
    where $\ALG$ is the revenue earned by the algorithm, and $\ADVICE$ is the revenue of the advice.
    On the one hand, we have  
    \[
        \ADVICE = \sum_{u \in U} B_u A_u.
    \]
    On the other hand, due to \cref{lem:vw:adw:rwd}, we have
    \begin{align*}
        \ALG 
        & = \sum_{u \in U} \alpha_u + \sum_{t \in V} \beta_t \\
        & \geq \sum_{u \in U}\alpha_u + \sum_{t \in V} \left( \beta_t \cdot \sum_{u \in U} a_{u, t} \right) \\
        & = \sum_{u \in U}\alpha_u + \sum_{u \in U} \sum_{t \in V} a_{u, t}  \beta_t\\
        & = \sum_{u \in U} \left(\alpha_u + \sum_{t \in V} a_{u, t} \beta_t \right),
    \end{align*}
    where the inequality is due to the feasibility of the advice $a$.
    Therefore, to show $\ALG \geq c \cdot \ADVICE$, it suffices to show
    \begin{equation}\label{eq:consistency-adwords}
        \alpha_u + \sum_{t \in V} a_{u, t} \beta_t \geq c \cdot B_u A_u  \text{for all $u \in U$.}
    \end{equation}
    By construction, observe that the left-hand side of \cref{eq:consistency-adwords} is bounded by
    \begin{align*}
        \alpha_u + \sum_{t \in V} a_{u, t} \beta_t 
        &\geq \sum_{t \in V} x_{u, t} \cdot b_{u,t} f(\Aup{t}_u,\Xup{t}_u) + \sum_{t \in V}  a_{u, t} \cdot b_{u,t} (1-f(\Aup{t}_u, \Xup{t}_u)) \\
        &=   \sum_{t \in V} \left[ b_{u,t}x_{u, t} \cdot f(\Aup{t}_u, \Xup{t}_u) + b_{u,t}a_{u, t} \cdot (1-f(\Aup{t}_u, \Xup{t}_u)) \right].
    \end{align*}
    Therefore, \cref{eq:consistency-adwords} holds for any value of $c$ satisfying that, for all $u \in U$,
    \[
        \sum_{t \in V} \left[ \Delta \Xup{t}_u \cdot f(\Aup{t}_u, \Xup{t}_u) + \Delta \Aup{t}_u \cdot (1-f(\Aup{t}_u, \Xup{t}_u)) \right] \geq c \cdot A_u
    \]
\end{proof}

We can thus show the robustness and consistency of this fractional algorithm for AdWords with fractional advice.
\begin{theorem}
    For any tradeoff parameter $\lambda \in [0, 1]$, this algorithm is an $r(\lambda)$-robust and $c(\lambda)$-consistent algorithm for fractional AdWords with a fractional advice, where $r(\lambda)$ and $c(\lambda)$ are defined in \cref{thm:vertex_weighted}.
\end{theorem}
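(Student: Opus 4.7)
The plan is to show that once the AdWords-adapted primal-dual lemmas are in place, the theorem reduces completely to the bounds already proved for the vertex-weighted fractional setting. No new inequality needs to be established.

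For robustness, I would invoke Lemma~\ref{lem:vw:adw:r}, which states that the algorithm is $r$-robust as long as, for every advertiser $u \in U$ and impression $v \in V$,
\[
    r \leq \int_0^{\Xup{v}_u} f(\Aup{v}_u, z) \, dz + (1 - f(\Aup{v}_u, \Xup{v}_u)).
\]
Since $(\Aup{v}_u, \Xup{v}_u) \in [0,1]^2$, this is exactly the inequality proved in Lemma~\ref{lem:vw:frac:r}, whose right-hand side is at least $r(\lambda)$. Robustness thus follows with no additional work.

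For consistency, I would invoke Lemma~\ref{lem:vw:adw:c}, which reduces the task to bounding
\[
    \sum_{t \in N(u)} \left[ \Delta \Xup{t}_u \cdot f(\Aup{t}_u, \Xup{t}_u) + \Delta \Aup{t}_u \cdot (1 - f(\Aup{t}_u, \Xup{t}_u)) \right] \geq c(\lambda) \cdot A_u
\]
for every $u \in U$. The key observation is that, ordering the online arrivals by time, the sequence of pairs $(\Aup{t}_u, \Xup{t}_u)$ is component-wise non-decreasing and contained in $[0,1]^2$, and hence constitutes a valid trajectory in the sense of Section~\ref{sec:vwalg:frac}, with increments $a_i = \Delta \Aup{t}_u$ and $x_i = \Delta \Xup{t}_u$. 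The left-hand side is literally the trajectory cost $\trajcost{\pi}$, so applying Lemma~\ref{lem:vwalg:frac:main} immediately gives the bound $c(\lambda) \cdot A_u$.

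The only conceptual step that requires a moment of thought is verifying that the reinterpretation of $\Aup{t}_u, \Xup{t}_u$ as budget-spent fractions (rather than matching masses) does not break anything: the feasibility of the primal LP guarantees $\Aup{t}_u, \Xup{t}_u \in [0,1]$, and both sequences are non-decreasing by construction, so the trajectory analysis applies verbatim. Consequently, I do not anticipate any real obstacle — the hard combinatorial work was already carried out in the trajectory reduction of Section~\ref{sec:vwalg:frac}, and AdWords inherits the guarantees cleanly through the primal-dual bookkeeping of Lemmas~\ref{lem:vw:adw:rwd}, \ref{lem:vw:adw:r}, and \ref{lem:vw:adw:c}.
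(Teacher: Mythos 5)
Your proposal is correct and matches the paper's proof, which simply cites \Cref{lem:vw:adw:pd,lem:vw:adw:rwd,lem:vw:adw:r,lem:vw:adw:c,lem:vw:frac:r,lem:vwalg:frac:main} as immediate; your observation that the budget-fraction sequence $(\Aup{t}_u, \Xup{t}_u)$ forms a valid trajectory whose cost is exactly the left-hand side of the consistency condition is precisely the (elided) step that makes the reduction go through.
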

\begin{proof}
    Immediate from \cref{lem:vw:adw:pd,lem:vw:adw:rwd,lem:vw:adw:r,lem:vw:adw:c,lem:vw:frac:r,lem:vwalg:frac:main}.
\end{proof}

\paragraph{Reducing integral AdWords to fractional AdWords.}
It remains to see that integral AdWords under the small bids assumption can be reduced to fractional AdWords with small loss.
We remark that the reduction is inspired by \cite{feng2024batching}.

Recall that, under the small bids assumption, there exists a sufficiently small $\varepsilon > 0$ such that $b_{u, v} \leq \varepsilon B_u$ for all $u \in U$ and $v \in V$.
We will use the following concentration bound in the analysis of the reduction.
\begin{proposition}[Bernstein’s inequality for bounded independent variables]
\label{prop:bernstein}
Let $Z_1,\dots,Z_n$ be independent random variables satisfying that, for some constant $c > 0$,
\[
    \mathbb{E}[Z_i]=0
    \quad\text{and}\quad
    |Z_i|\le c
    \;\;\text{almost surely} \quad
\]
for all $i = 1, \ldots, n$.
Define the aggregate variance
$
    \sigma^{2} := \sum_{i=1}^{n}\operatorname{Var}(Z_i) = \sum_{i = 1}^n \E[Z_i^2].
$
Then, for every $t>0$,
\[
    \Pr\!\Bigl[\sum_{i=1}^{n} Z_i \;\ge t\Bigr]
    \;\le\;
    \exp\!\Bigl(
        -\,\frac{t^{2}}{2\sigma^{2}+\tfrac{2}{3} c\,t}
    \Bigr).
\]
\end{proposition}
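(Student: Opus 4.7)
The plan is to invoke the classical Chernoff-style moment generating function argument, which is the standard route to Bernstein-type inequalities. Since $Z_1, \ldots, Z_n$ are independent, for any $\theta > 0$ Markov's inequality applied to $e^{\theta \sum_i Z_i}$ yields
\[
    \Pr\!\Bigl[\sum_{i=1}^{n} Z_i \ge t\Bigr] \;\le\; e^{-\theta t} \prod_{i=1}^{n} \E\!\bigl[e^{\theta Z_i}\bigr].
\]
The main technical step is bounding each MGF using the zero-mean and boundedness assumptions. Expanding $e^{\theta Z_i}$ as a power series, noting that the linear term vanishes under $\E[Z_i]=0$, and using $|Z_i|^k \le c^{k-2} Z_i^2$ for $k \ge 2$, one obtains
\[
    \E\!\bigl[e^{\theta Z_i}\bigr] \;\le\; 1 + \frac{\E[Z_i^2]}{c^2}\bigl(e^{\theta c}-1-\theta c\bigr) \;\le\; \exp\!\Bigl(\frac{\E[Z_i^2]}{c^2}\bigl(e^{\theta c}-1-\theta c\bigr)\Bigr),
\]
where the last step uses $1+x \le e^x$. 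Multiplying across $i$ and substituting back, I would obtain
\[
    \Pr\!\Bigl[\sum_{i=1}^{n} Z_i \ge t\Bigr] \;\le\; \exp\!\Bigl(-\theta t + \frac{\sigma^{2}}{c^{2}}\bigl(e^{\theta c}-1-\theta c\bigr)\Bigr).
\]

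Next, I would minimize the right-hand side over $\theta > 0$. Differentiating and setting to zero gives the optimal choice $\theta^{\star} = \frac{1}{c}\log\!\bigl(1 + ct/\sigma^{2}\bigr)$. Plugging this back and simplifying yields the Cram\'er-style form
\[
    \Pr\!\Bigl[\sum_{i=1}^{n} Z_i \ge t\Bigr] \;\le\; \exp\!\Bigl(-\frac{\sigma^{2}}{c^{2}}\, h\!\Bigl(\frac{c t}{\sigma^{2}}\Bigr)\Bigr), \qquad \text{where } h(u) := (1+u)\log(1+u) - u.
\]

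The final step, and the only mildly nontrivial one, is converting this sharp but unwieldy expression into the advertised form. For this I would invoke the elementary inequality $h(u) \ge \frac{u^{2}}{2 + \tfrac{2}{3}u}$, valid for all $u \ge 0$, which can be verified by comparing derivatives (both sides vanish at $u=0$, and one checks the derivative inequality reduces to a tractable monotonicity). Applying this with $u = ct/\sigma^{2}$ and simplifying the algebra gives
\[
    \frac{\sigma^{2}}{c^{2}}\, h\!\Bigl(\frac{ct}{\sigma^{2}}\Bigr) \;\ge\; \frac{t^{2}}{2\sigma^{2} + \tfrac{2}{3} c t},
\]
which delivers the stated Bernstein bound. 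The result being classical, I would most likely present this proof by sketching only the MGF bound and the optimization, then citing a standard reference (e.g., the concentration-inequalities monographs) for the final $h(u)$ inequality, since it is the one step that is purely calculus and contributes no new insight to the paper's main argument.
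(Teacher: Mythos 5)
Your proposal is correct. Note, however, that the paper itself gives no proof of this proposition: it is stated as a classical fact (Bernstein's inequality) and simply invoked in the reduction from integral to fractional AdWords, so there is no in-paper argument to compare against. Your derivation is the standard one — the Chernoff/MGF bound $\E[e^{\theta Z_i}] \le \exp\bigl(\tfrac{\E[Z_i^2]}{c^2}(e^{\theta c}-1-\theta c)\bigr)$ via the series expansion and $|Z_i|^k \le c^{k-2}Z_i^2$, optimization at $\theta^\star = \tfrac{1}{c}\log(1+ct/\sigma^2)$ yielding Bennett's inequality, and then the elementary bound $h(u) \ge \tfrac{u^2}{2+\frac{2}{3}u}$ — and all the individual steps check out (in particular the algebra converting $\tfrac{\sigma^2}{c^2}h(ct/\sigma^2)$ into $\tfrac{t^2}{2\sigma^2+\frac{2}{3}ct}$ is right). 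Your instinct to cite a standard reference rather than reproduce the calculus is exactly what the authors did, only more tersely.
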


We now prove that the reduction is possible with small loss.
\begin{theorem}
    For sufficiently small $\varepsilon > 0$, given any algorithm for fractional AdWords, we can construct a randomized algorithm for integral AdWords satisfying 
    \[
        \E[\ALGI] \geq \left(1 - 3\sqrt{\varepsilon \ln (1/\varepsilon)} \right)\ALGF,
    \]
    where $\ALGF$ and $\ALGI$ denote the revenue earned by the given fractional algorithm and the constructed randomized algorithm, respectively.
\end{theorem}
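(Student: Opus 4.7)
The plan is to construct the integral algorithm as an online randomized rounding of the fractional algorithm, with a slight scaling-down that absorbs the concentration error via Bernstein's inequality.

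Concretely, I would run the given fractional algorithm in parallel on the same input instance. When impression $v \in V$ arrives with bids $\{b_{u,v}\}_{u \in U}$, the fractional algorithm produces a fractional allocation $\{x_{u,v}\}_{u \in U}$ with $\sum_{u} x_{u,v} \leq 1$. The integral algorithm then independently samples one advertiser $u$ with probability $(1-\delta) x_{u,v}$ (assigning to the auxiliary dummy advertiser that bids $0$ on every impression with the remaining probability), where $\delta := \sqrt{\varepsilon \ln(1/\varepsilon)}$. Impression $v$ is assigned to the sampled advertiser and earns $b_{u,v}$ provided $u$ still has enough remaining budget; otherwise it earns nothing.

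For the analysis, fix an advertiser $u \in U$ and let $F_u := \sum_v b_{u,v} x_{u,v}$ be its fractional revenue, so $F_u \leq B_u$ and $\ALGF = \sum_u F_u$. Let $Y_{u,v} \in \{0,1\}$ be the indicator that the rounding assigns $v$ to $u$; by construction these are independent across $v$ with $\E[Y_{u,v}] = (1-\delta) x_{u,v}$. Set $S_u := \sum_v b_{u,v} Y_{u,v}$, so $\E[S_u] = (1-\delta) F_u$. Under the small bids assumption, the centered summands $Z_v := b_{u,v} Y_{u,v} - \E[b_{u,v} Y_{u,v}]$ satisfy $|Z_v| \leq \varepsilon B_u$ and $\sigma^2 := \sum_v \Var(Z_v) \leq \varepsilon B_u F_u$. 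Applying \Cref{prop:bernstein} with $t := B_u - (1-\delta) F_u$ yields a bound on $\Pr[S_u > B_u]$ that, with our choice of $\delta$, I would show implies $B_u \cdot \Pr[S_u > B_u] \leq 2\sqrt{\varepsilon \ln(1/\varepsilon)} \cdot F_u$. Since the integral revenue from $u$ is at least $S_u - B_u \cdot \I[S_u > B_u]$, taking expectations and summing over $u$ gives
\[
\E[\ALGI] \;\geq\; \sum_{u \in U} \bigl((1-\delta) F_u - B_u \Pr[S_u > B_u]\bigr) \;\geq\; \bigl(1 - 3\sqrt{\varepsilon \ln(1/\varepsilon)}\bigr) \ALGF.
\]

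The main obstacle is controlling the overflow loss $B_u \cdot \Pr[S_u > B_u]$ uniformly across advertisers with potentially very different values of $F_u$. In the regime $F_u \approx B_u$, the deviation $t = B_u - (1-\delta) F_u$ is small (of order $\delta B_u$) and the quadratic term in Bernstein's denominator dominates, giving the familiar tail $\exp(-\Omega(\delta^2/\varepsilon))$, which is exactly $\varepsilon^{\Omega(1)}$ with our choice of $\delta$. In the regime $F_u \ll B_u$, the ratio $B_u/F_u$ can blow up, so the tail bound must decay fast enough to absorb this ratio; here one exploits that $t$ itself is of order $B_u$, so the linear term $\tfrac{2}{3}\varepsilon B_u t$ in the denominator of Bernstein dominates and the tail becomes $\exp(-\Omega(t/(\varepsilon B_u))) = \exp(-\Omega(1/\varepsilon))$, which easily beats $B_u/F_u$. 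Splitting the per-advertiser analysis into these two regimes and tracking constants to land on the exact factor $3\sqrt{\varepsilon \ln(1/\varepsilon)}$ is the main technical content of the proof.
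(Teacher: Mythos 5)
Your construction is the same as the paper's (independent randomized rounding of the fractional allocation, scaled down by roughly $1-\sqrt{\varepsilon\ln(1/\varepsilon)}$, analyzed via Bernstein), but your accounting of the loss has a genuine gap in two places. First, the pointwise bound ``the integral revenue from $u$ is at least $S_u - B_u\cdot\I[S_u>B_u]$'' is false: the revenue is always at most $B_u$, while $S_u-B_u$ can be arbitrarily large (e.g.\ if $3/\varepsilon$ impressions with bid $\varepsilon B_u$ each happen to be sampled, then $S_u\approx 3B_u$ but the revenue is about $B_u$). Consequently your displayed inequality $\E[\ALGI]\geq\sum_u\bigl((1-\delta)F_u-B_u\Pr[S_u>B_u]\bigr)$ does not follow; the correct expansion leaves an extra term $\E[(S_u-B_u)_+]$ that you would have to control by integrating the tail, not by a single tail probability. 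Second, even granting the decomposition, the target bound $B_u\Pr[S_u>B_u]\leq 2\sqrt{\varepsilon\ln(1/\varepsilon)}\,F_u$ cannot hold uniformly over advertisers: Bernstein (or any bound using only $|Z_v|\le\varepsilon B_u$ and $\sigma^2\le\varepsilon B_uF_u$ with deviation $t\approx B_u$) yields at best a tail of order $\exp(-\Theta(1/\varepsilon))$, so for an advertiser with $F_u/B_u$ smaller than that (say $F_u=B_ue^{-1/\varepsilon^2}$) the claimed inequality fails. Your two-regime plan therefore does not close as described.

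The paper avoids both issues by never comparing a tail probability against $F_u$. It lower-bounds the probability that each \emph{individual} impression is successfully charged: by independence of the sample at $v$ from the past, $\Pr[v\text{ charged to }u]=\gamma x_{u,v}\cdot\Pr[\text{$u$'s budget suffices at $v$'s arrival}]\geq\gamma x_{u,v}\cdot\Pr[\text{total sampled spend}\le(1-\varepsilon)B_u]$, so the loss multiplies $F_u$ rather than being subtracted from it. One then only needs the overflow probability to be at most $\sqrt{\varepsilon}$ uniformly in $u$, which is a single application of Bernstein to the normalized variables $Y_v=b_{u,v}X_{u,v}/B_u$ with deviation $t=1-\gamma-\varepsilon$. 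If you want to keep your aggregate decomposition, you must either switch to this per-impression argument or genuinely bound $\E[(S_u-B_u)_+]$ and handle the small-$F_u$ advertisers separately.
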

\begin{proof}
    We construct the integral algorithm as follows.
    When an impression $v \in V$ arrives, let $\{x_{u, v}\}_{u \in U}$ be the (fractional) assignment of the given fractional algorithm at this moment.
    The integral algorithm then simply samples an advertiser $u \in U$ independently with probability $\gamma x_{u,v}$ for some $\gamma \in (0, 1 - \varepsilon)$ to be chosen later, and assigns $v$ to $u$ only if $u$ has enough budget remaining.
    Otherwise, if the algorithm samples no advertisers or the sampled $u$ has insufficient budget, the algorithm does nothing and receives the next impression.
    
    Observe that we have
    \begin{align}
        \E[\ALGI]
        &= \sum_{u \in U} \sum_{v \in V} b_{u,v} \Pr\left[\,\text{$v$ is assigned to $u$}\,\right] \nonumber \\
        &= \sum_{u \in U} \sum_{v \in V} b_{u,v} \cdot \big[ \gamma x_{u,v} \cdot \Pr\left[\,\text{$u$ has enough budget to pay $b_{u, v}$ at $v$'s arrival} \,\right] \big] \nonumber \\
        &\geq \sum_{u \in U} \sum_{v \in V} b_{u,v} x_{u,v} \cdot \gamma \Pr\left[\,\text{$u$ has enough budget to pay $b_{u, v}$ at the end}\,\right]  \nonumber \\
        &\geq {\ALGF} \cdot \gamma \, \min_{u \in U} \Pr\left[\,\text{$u$ has enough budget to pay $\max_{v \in V} b_{u, v}$ at the end}\,\right]. \label{eq:intfrac}
    \end{align}

    We now claim that, for any advertiser $u \in U$,
    \begin{equation*}
        \Pr\left[\,\text{$u$ has enough budget to pay $\max_{v \in V} b_{u, v}$ at the end}\,\right]
        \geq 1 - \exp\left( -\frac{(1 - \gamma - \varepsilon)^2}{2\varepsilon}\right).
    \end{equation*}
    To this end, let us fix $u \in U$ and define
    \[
        X_{u, v} := \I \{ \text{ $u$ is sampled at $v$'s arrival} \}
    \]
    for every impression $v \in V$.
    Notice that $X_{u, v}$ is Bernoulli with mean $\gamma x_{u, v}$, and $ \{X_{u, v} \}_{v \in V}$ are mutually independent.
    Moreover, since the fractional algorithm outputs a feasible fractional assignment $x \in \R^{U \times V}$, we have
    \[
        \E \left[ \sum_{v \in V} b_{u, v} X_{u, v} \right]
        = \gamma \sum_{v \in V} b_{u, v} x_{u, v}
        \leq \gamma B_u.
    \]
    Let us further define, for every $v \in V$,
    \[
        Y_{v} := \frac{b_{u,v}X_{u,v}}{B_u} \in [0, \varepsilon] 
        \;\;\text{and}\;\;
        Z_v := Y_v - \E[Y_v].
    \]
    Observe that $\E[\sum_{v \in V} Y_v] \leq \gamma$, $\E[Z_v] = 0$, $|Z_v| \leq \varepsilon$, and $\Var(Z_v) = \Var(Y_v) = \frac{b_{u, v}^2}{B_u^2} \Var(X_{u, v})$.
    Moreover, note that $\{ Z_v \}_{v \in V}$ are also mutually independent.
    We can therefore derive that
    \begin{equation} \label{eq:vw:adw:varsumz}
    \Var \left( \sum_{v \in V} Z_v \right) 
    = \sum_{v \in V} \, \frac{b_{u,v}^2}{B_u^2} \Var(X_{u,v})
    \stackrel{(a)}{\leq} \sum_{v \in V} \, \frac{b_{u,v}^2}{B_u^2} \E[X_{u,v}]
    \stackrel{(b)}{\leq} \varepsilon \gamma \sum_{v \in V} \, \frac{b_{u,v} x_{u,v}}{B_u}
    \leq \varepsilon \gamma,
    \end{equation}
    where (a) follows from the fact that $X_{u,v}$ is Bernoulli and (b) from the small bids assumption.
    We can then show the claim because
    \begin{align*}
        \Pr\left[\,\text{$u$ has enough budget to pay $\max_{v \in V} b_{u, v}$ at the end}\,\right] 
        &\stackrel{(a)}{\geq} \Pr\left[ \sum_{v \in V} Y_v \leq  1 - \varepsilon \right]\\
        &\stackrel{(b)}{\geq} \Pr\left[ \sum_{v \in V} Z_v \leq 1 - \gamma - \varepsilon \right] \\
        &\stackrel{(c)}{\geq} 1 - \exp\left( -\frac{(1 - \gamma - \varepsilon)^2}{2 \varepsilon \gamma + \frac{2}{3} \varepsilon(1 - \gamma - \varepsilon)}\right) \\
        &\geq 1 - \exp\left( -\frac{(1 - \gamma - \varepsilon)^2}{2\varepsilon}\right),
    \end{align*}
    where (a) follows from the small bids assumption, (b) from the fact that $\E[\sum_{v \in V} Y_v] \leq \gamma$, and (c) from \cref{prop:bernstein} together with \cref{eq:vw:adw:varsumz} and that $1 - \gamma - \varepsilon > 0$.
    
    Substituting this lower bound into \cref{eq:intfrac}, we have
    \[
        \E[\ALGI] \geq \ALGF \cdot \gamma \left( 1 - \exp\left( -\frac{(1 - \gamma - \varepsilon)^2}{2\varepsilon}\right)\right).
    \]
    Hence, for sufficiently small $\varepsilon$, choosing $\gamma := 1 - \varepsilon - \sqrt{\varepsilon \ln (1 / \varepsilon)} \in (0, 1 - \varepsilon)$ gives
    \begin{align*}
    \frac{\E[\ALGI]}{\ALGF} &\geq  \left(1 - \varepsilon - \sqrt{\varepsilon \ln(1 / \varepsilon)} \right)\left(1 - \exp\left(-\frac{1}{2}\ln\frac1\varepsilon\right) \right)\\
    &= \left(1 - \varepsilon - \sqrt{\varepsilon \ln (1/\varepsilon)} \right)\left(1 - \sqrt{\varepsilon}\right)\\
    &\geq 1 - \varepsilon - \sqrt{\varepsilon} - \sqrt{\varepsilon \ln (1/\varepsilon)} \\
    &\geq 1 - 3\sqrt{\varepsilon \ln (1/\varepsilon)}.
    \end{align*} 
\end{proof}

\section{Unweighted matching with integral advice}
\label{sec:uwalg}

In this section, we introduce and analyze a new algorithm tailored to the unweighted setting with integral advice, which we call \PushAndWater{} (\PAW{}).
To motivate why we need a new algorithm, it is worth noting that our theoretical guarantees of \LAB{} are dominated by \Coinflip{} in the unweighted setting; see \cref{fig:intro:fig}.
This suggests that our previous analysis is not tight for unweighted instances.
However, since that analysis was independent of the vertex weights, we find it challenging to improve it for the unweighted setting, even when we are given integral advice.
As such, we propose \PAW{} for the setting of unweighted matching with integral advice.
In the following, we assume the advice is integral and represent it as a function $A: V \to U \cup \{\bot\}$, where $A(v)$ is the advised match for $v \in V$, and $A(v) = \bot$ indicates that $v$ is advised to remain unmatched.
Detailed pseudocode is given in \cref{sec:appendix-pseudocodes} and a full analysis is provided in the supplementary material.

\paragraph{Algorithm description.}
As before, we describe \PAW{} as a continuous-time process.
Define the level of an offline vertex $u \in U$ as the total amount of water it has received so far.
Upon arrival of online $v \in V$, with neighborhood $N(v)$ and advice $A(v)$, the algorithm proceeds in two phases:
\begin{enumerate}[label=\textbf{Phase \arabic*}]
\item \textbf{(Push)}: Push flow into $A(v)$ until its level reaches $\lambda$.
\item \textbf{(Waterfill)}: Distribute any remaining flow from $v$ across $N(v)$ via the standard waterfilling.
\end{enumerate}

\subsection{Primal-dual analysis}

We now analyze the performance of \PAW{}, showing \Cref{thm:uw:main} restated below.

\uwalgmain*

As in the analysis of \LAB{}, we use a primal-dual framework to characterize the robustness and consistency of \PAW{}.
To begin, we adapt \Cref{lem:pre:pda} to the unweighted setting:

\begin{lemma}[cf. \Cref{lem:pre:pda}] \label{lem:alg:1}
Let $x \in \R^E_+$ be the algorithm's output. For some $\rho \in [0, 1]$, if there exists $(\alpha, \beta) \in \R^U_+ \times \R^V_+$ satisfying 
\begin{enumerate}
    \item (reverse weak duality) $\sum_{e \in E} x_e \geq \sum_{u \in U} \alpha_u + \sum_{v \in V} \beta_v$, and
    \item (approximate dual feasibility) for every $(u,v) \in E$, $\alpha_u + \beta_v \geq \rho$,
\end{enumerate}
we have $\ALG \geq \rho \cdot \OPT$.
\end{lemma}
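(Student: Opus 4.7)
The plan is to mirror exactly the proof of \Cref{lem:pre:pda}, specialized to the unweighted setting where $w_u = 1$ for all $u \in U$. The unweighted primal LP has objective $\sum_{e \in E} x_e$ subject to the matching constraints, and its dual minimizes $\sum_{u \in U} \alpha_u + \sum_{v \in V} \beta_v$ subject to $\alpha_u + \beta_v \geq 1$ for every $(u,v) \in E$ and nonnegativity. So this lemma is essentially the $w \equiv 1$ instance of \Cref{lem:pre:pda}, and the argument carries over verbatim.

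Concretely, I would argue as follows. First, observe that the rescaled pair $(\alpha/\rho, \beta/\rho) \in \R^U_+ \times \R^V_+$ is dual feasible: nonnegativity is inherited, and approximate dual feasibility yields $\alpha_u/\rho + \beta_v/\rho \geq 1$ for every $(u, v) \in E$. Then chain the two hypotheses with LP weak duality:
\[
    \ALG \;=\; \sum_{e \in E} x_e \;\geq\; \sum_{u \in U} \alpha_u + \sum_{v \in V} \beta_v \;=\; \rho \cdot \Bigl[\sum_{u \in U} \tfrac{\alpha_u}{\rho} + \sum_{v \in V} \tfrac{\beta_v}{\rho}\Bigr] \;\geq\; \rho \cdot \OPT,
\]
where the first equality is because $x$ is the algorithm's (fractional) matching, the first inequality is reverse weak duality, and the last inequality is weak duality applied to the feasible dual solution $(\alpha/\rho, \beta/\rho)$ against the primal optimum $\OPT$.

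There is no real obstacle here: the statement is a direct unweighted restatement of \Cref{lem:pre:pda}, and no new ideas are needed. The only subtlety worth a sentence is justifying that $\OPT$ (the optimal integral matching value) is upper bounded by the LP optimum, which holds because every integral matching is a feasible primal solution and bipartite matching LPs are integral — so the LP weak duality bound indeed bounds $\OPT$ from above.
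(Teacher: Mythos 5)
Your proof is correct and is exactly the paper's argument: the paper omits a proof of this lemma precisely because it is the $w \equiv 1$ specialization of \Cref{lem:pre:pda}, whose proof rescales $(\alpha,\beta)$ by $\rho$ to get a feasible dual solution and then chains reverse weak duality with LP weak duality, just as you do. Your closing remark about integrality is unnecessary here since $\OPT$ is the optimal \emph{fractional} matching, i.e., the LP optimum itself, but it does no harm.
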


The dual variable construction differs from the vertex-weighted case and relies on a continuous and non-decreasing function $g : [0,1] \to [0,1]$ such that $g(1) = 1$.
We call such a function a \emph{splitting function}.

The dual variables $(\alpha, \beta)$ are initialized to zero, and are updated as follows.
When an online vertex $v$ sends an infinitesimal amount $dz$ of flow to a neighbor $u \in N(v)$ whose current level is $d_u$, split this $dz$ into $g(d_u) \;dz$ and $(1 - g(d_u)) \;dz$.
Then, we increase $\alpha_u$ by $g(d_u)\,dz$ and $\beta_v$ by $(1 - g(d_u))\,dz$.

Since $g(d_u) \in [0,1]$, both $\alpha$ and $\beta$ remain nonnegative.
Moreover, by construction, the reverse weak duality in \Cref{lem:alg:1} holds with equality:
every infinitesimal unit of flow is split \emph{exactly} into two values contributing to $\alpha_u$ and $\beta_v$, respectively.

\begin{lemma} \label{lem:uw:rwd}
    For any splitting function $g$, the constructed dual variables satisfy the reverse weak duality of \Cref{lem:alg:1} with equality.
\end{lemma}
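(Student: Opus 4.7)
The plan is to prove the identity by direct bookkeeping over the infinitesimal flow updates. Since both sides start at $0$, I only need to track, for each infinitesimal amount $dz$ that the algorithm sends from some online $v$ to some offline $u$ at current level $d_u$, how much the primal objective $\sum_e x_e$ increases versus how much the dual sum $\sum_u \alpha_u + \sum_v \beta_v$ increases.

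First I would note that on the primal side, pushing $dz$ units of flow from $v$ to $u$ increases $x_{u,v}$ by $dz$ and hence increases $\sum_e x_e$ by exactly $dz$. On the dual side, by the construction, this same infinitesimal step increases $\alpha_u$ by $g(d_u)\,dz$ and $\beta_v$ by $(1-g(d_u))\,dz$, for a combined increment of
\[
g(d_u)\,dz + (1-g(d_u))\,dz \;=\; dz.
\]
Thus each infinitesimal update preserves the invariant $\sum_e x_e = \sum_u \alpha_u + \sum_v \beta_v$. Integrating this invariant over all flow pushed during both the Push phase and the Waterfill phase, across all iterations of \PAW{}, yields the equality claimed by the lemma.

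There is essentially no obstacle here; the argument is a one-line accounting observation that the splitting function decomposes each infinitesimal primal increment into two nonnegative pieces whose sum is the increment itself. The only subtlety worth remarking on is that it is important that the construction applies uniformly to every infinitesimal $dz$, regardless of whether that flow is being pushed during Phase~1 (the forced push toward the advised vertex $A(v)$) or Phase~2 (the waterfilling step); the accounting argument does not distinguish between these two phases, so the equality holds in both regimes. This same invariance across phases will also be what allows subsequent lemmas to reduce the robustness and consistency analyses to verifying approximate dual feasibility for a suitably chosen $g$.
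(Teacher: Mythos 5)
Your proposal is correct and matches the paper's reasoning exactly: the paper states this lemma as an immediate consequence of the construction, observing that each infinitesimal unit of flow is split exactly into the two pieces $g(d_u)\,dz$ and $(1-g(d_u))\,dz$ contributing to $\alpha_u$ and $\beta_v$. Your more explicit bookkeeping of the invariant across both phases is the same argument, just written out.
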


Therefore, to analyze the robustness and consistency of the algorithm, it suffices to identify suitable splitting functions $g$ that ensure approximate dual feasibility, with the goal of maximizing the parameter $\rho$ in \Cref{lem:alg:1} for robustness and consistency, respectively.

We remark that the choice of splitting function $g$ does not affect the behavior of \PAW{}, in contrast to the vertex-weighted setting where the choice of penalty function $f$ directly influences \LAB{}'s execution.

\subsection{Robustness analysis}
In this subsection, we prove that \PAW{} is $r(\lambda)$-robust where $r(\lambda) := 1 - \left( 1 - \lambda + \frac{\lambda^2}{2} \right) \cdot e^{\lambda - 1}$.
To this end, we will prove the following lemma; observe that this lemma immediately implies the robustness result due to \Cref{lem:alg:1,lem:uw:rwd}.

\begin{lemma} \label{lem:uw:rob}
    There exists a splitting function $g$ such that the resulting dual solution $(\alpha, \beta)$ satisfies $\alpha_u + \beta_v \geq r(\lambda)$ for every $(u,v) \in E$.
\end{lemma}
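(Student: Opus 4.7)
The plan is to invoke the primal-dual framework from \Cref{lem:alg:1}: since \Cref{lem:uw:rwd} ensures reverse weak duality for \emph{any} splitting function $g$, it suffices to exhibit a single $g$ whose induced duals satisfy $\alpha_u+\beta_v\geq r(\lambda)$ for every $(u,v)\in E$. A natural guess, motivated by the classical Waterfill analysis, is to keep $g(z)=e^{z-1}$ on $[\lambda,1]$ (so that the post-$\lambda$ regime behaves exactly like \Waterfill{}) and to choose $g$ on $[0,\lambda]$ so as to absorb the extra loss introduced by the Push phase. Concretely, I would try the linear piece $g(z)=e^{\lambda-1}(1-\lambda+z)$ on $[0,\lambda]$, which is continuous at $\lambda$, non-decreasing, and satisfies $g(1)=1$.

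For a fixed edge $(u,v)\in E$, let $\ell$ denote the level of $u$ immediately after $v$'s iteration; by monotonicity $\alpha_u\geq\int_0^\ell g(z)\,dz$. The main work is to lower bound $\beta_v$ by summing the contributions of the two phases: Phase 1 (if $A(v)\neq\bot$ and $d_{A(v)}^{(v-)}<\lambda$) contributes exactly $\int_{d_{A(v)}^{(v-)}}^{\lambda}(1-g(z))\,dz$, and Phase 2 waterfilling contributes at least $(1-g(\ell_v))(1-y_v)$, where $\ell_v$ is the Phase 2 water level and $y_v$ is the Phase 1 flow. I would then split into cases depending on (i) whether $v$ saturates all of its neighbors (in which case $\ell=1$ and $\alpha_u=\int_0^1 g\,dz$ should equal $r(\lambda)$ by construction), (ii) whether $u=A(v)$ or not, and (iii) whether $\ell\geq\lambda$ or $\ell<\lambda$. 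The cases where $u=A(v)$ or there is no Phase 1 reduce to analyzing $\int_0^\ell g+1-g(\ell)$, which with the chosen $g$ should equal $r(\lambda)$ exactly for $\ell\geq\lambda$ and be $\geq r(\lambda)$ on $[0,\lambda]$ by checking the sign of $g-g'$.

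The hard case, which drives the design of $g$, is $u\neq A(v)$ with $A(v)$ triggering a Phase 1 push while $u$'s level $\ell$ stays small. Here the Phase 1 flow is \emph{wasted} from $(u,v)$'s perspective, and the worst-case scenario is when $v$ has many non-$A(v)$ neighbors all at level $0$, so that Phase 2 distributes its budget $1-y_v$ thinly and $\ell_v\to 0$ while $\alpha_u\to 0$. Parameterizing the Phase 1 starting level by $z_0=d_{A(v)}^{(v-)}$, computing $\partial/\partial z_0$ of the resulting lower bound, and noting monotonicity in $z_0$ shows that the worst subcase is $z_0=0$ (so $y_v=\lambda$), and in the limit one gets $\alpha_u+\beta_v\to 1-G(\lambda)-(1-\lambda)g(0)$ with $G(\lambda)=\int_0^\lambda g$. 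With the candidate $g$ above, $G(\lambda)=(\lambda-\lambda^2/2)e^{\lambda-1}$ and $g(0)=(1-\lambda)e^{\lambda-1}$, so this limit simplifies to $1-(1-\lambda+\lambda^2/2)e^{\lambda-1}=r(\lambda)$, matching exactly.

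The main obstacle I expect is the case analysis bookkeeping --- specifically, carefully identifying the tight scenario (the ``thin Phase 2'' limit) and verifying that the chosen $g$ simultaneously tightens (a) the saturation bound $\int_0^1 g=r(\lambda)$, (b) the bound $\int_0^\ell g+1-g(\ell)\geq r(\lambda)$ for every $\ell\in[0,1]$ (which forces the linear slope to equal $e^{\lambda-1}$), and (c) the Phase 1 waste bound. Once the correct $g$ is pinned down from these three tightness conditions, the remaining work is algebraic: verify $g(0)\geq 0$ (it equals $(1-\lambda)e^{\lambda-1}$), continuity at $z=\lambda$, and that the intermediate subcases (e.g., $u=A(v)$ with $\ell=\lambda$ and many other neighbors at $0$) give bounds of the form $1-(1-\lambda)^2 e^{\lambda-1}$ that are easily shown to dominate $r(\lambda)$ using $\lambda^2/2\geq 0$. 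Concluding via \Cref{lem:alg:1} then yields the stated robustness.
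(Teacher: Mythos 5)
Your proposal is correct and follows essentially the same route as the paper: the splitting function you propose is exactly the paper's $g_r$, the decomposition of $\beta_v$ into Phase~1 and Phase~2 contributions mirrors \Cref{lem:uw:betav}, and your identification of the tight case (minimum neighbor level $\ell=0$ with a full Push phase) and the resulting value $1-G(\lambda)-(1-\lambda)g(0)=r(\lambda)$ matches the paper's computation in \Cref{lem:uw:h}.
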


We begin by giving a lower bound on $\beta_v$ for every $v \in V$.

\begin{lemma} \label{lem:uw:betav}
Fix an online vertex $v \in V$, and let $\ell$ be the minimum level of a neighbor of $v$ at the end of the iteration when $v$ arrives.
For any splitting function $g$, we have
\begin{equation} \label{eq:alg:bet}
    \beta_v \geq \begin{cases}
        \displaystyle \int_\ell^\lambda (1 - g(z)) \;dz + (1 - \lambda + \ell) (1 - g(\ell)), & \text{ if } \ell \in [0, \lambda), \\
        1 - g(\ell), & \text{ if } \ell \in [\lambda, 1].
    \end{cases}
\end{equation}
\end{lemma}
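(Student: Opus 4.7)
My plan is to unfold $\beta_v$ directly from the dual-variable construction as a sum of integrals of $(1-g)$ over the ranges of levels along which each neighbor of $v$ is raised during $v$'s iteration. Writing $d_u$ and $d''_u$ for the level of $u \in N(v)$ immediately before and after $v$ arrives, the update rule gives
\[
\beta_v \;=\; \sum_{u \in N(v)} \int_{d_u}^{d''_u} (1 - g(z))\,dz.
\]
I would then analyze the two cases separately, relying only on the fact that $g$ is non-decreasing.

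For the case $\ell \geq \lambda$, I expect a short argument: every push of $v$'s flow lands at some level $z \in [d_u, \ell]$, because Phase~1 raises $A(v)$ only up to $\lambda \leq \ell$, and Phase~2's water level equals $\ell$ whenever Phase~2 raises any vertex at all. Monotonicity then gives $\int_{d_u}^{d''_u}(1-g(z))\,dz \geq (d''_u - d_u)(1-g(\ell))$, and summing over $u$ together with $\sum_u (d''_u - d_u) = 1$ (full matching) yields $\beta_v \geq 1-g(\ell)$. If $v$ is not fully matched then $\ell = 1$ and the bound is vacuous since $g(1)=1$.

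For the case $\ell < \lambda$, $v$ must be fully matched and Phase~2's water level must equal $\ell$. I would split into Subcase~A ($d_{A(v)} < \lambda$, so Phase~1 actually raises $A(v)$ to $\lambda$) and Subcase~B ($d_{A(v)} \geq \lambda$, or $A(v) = \bot$). In Subcase~A, Phase~1 contributes exactly $\int_{d_{A(v)}}^{\lambda}(1-g(z))\,dz$ and Phase~2 contributes at least $(1-g(\ell))(1-\lambda+d_{A(v)})$ by monotonicity applied to each Phase~2 integral. Matching this to the target bound requires a further split on whether $d_{A(v)} \leq \ell$ or $d_{A(v)} > \ell$, using monotonicity of $g$ below $\ell$ in the first case (split the Phase~1 integral at $\ell$) and above $\ell$ in the second (rewrite $\int_{d_{A(v)}}^{\lambda} = \int_\ell^{\lambda} - \int_\ell^{d_{A(v)}}$). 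In Subcase~B, Phase~1 contributes nothing but Phase~2 uses the entire unit of flow to raise neighbors $u \neq A(v)$ with $d_u < \ell$ up to level $\ell$, giving $\beta_v \geq 1-g(\ell)$; since $\int_\ell^\lambda (1-g(z))\,dz \leq (\lambda-\ell)(1-g(\ell))$, the claimed bound is dominated by $1-g(\ell)$ and we are done.

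The main technical obstacle will be the Subcase~A reconciliation when $d_{A(v)} > \ell$: the Phase~1 integral's lower endpoint sits strictly above $\ell$, so shifting it down to the target's endpoint $\ell$ inflates the integral, and this inflation must be exactly offset by the corresponding shrinkage of the Phase~2 coefficient $(1-\lambda+d_{A(v)}) \to (1-\lambda+\ell)$. The balance works out because $(1-g(z)) \leq (1-g(\ell))$ on $[\ell,\lambda]$ yields $\int_\ell^{d_{A(v)}}(1-g(z))\,dz \leq (d_{A(v)}-\ell)(1-g(\ell))$, which is exactly the cancellation needed. The symmetric monotonicity bound below $\ell$ handles the other side of the split, so the two applications of monotonicity (one on each side of $\ell$) are what make the proof go through.
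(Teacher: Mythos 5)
Your proposal is correct and follows essentially the same route as the paper: both decompose $\beta_v$ into the Phase~1 contribution $\int_{\min(d_{A(v)},\lambda)}^{\lambda}(1-g(z))\,dz$ plus a Phase~2 contribution lower-bounded by $(1-\lambda+\min(d_{A(v)},\lambda))(1-g(\ell))$, and then reconcile this with the stated bound using monotonicity of $g$ on either side of $\ell$ (the paper packages your two subcases $d_{A(v)}\lessgtr\ell$ into the single observation that $\int_{\min(d,\lambda)}^{\ell}(g(\ell)-g(z))\,dz\ge 0$). The case organization differs cosmetically but the argument is the same.
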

\begin{proof}
We break the proof into three cases.

\textbf{Case 1. $\ell < 1$ and $A(v) \in N(v)$.} Notice that $v$ is saturated in this case, i.e., $\sum_{u \in N(v)} x_{u, v} = 1$.
Let $d$ be the level of $A(v)$ at the beginning of $v$'s iteration.
Recall that, in Phase 1, the algorithm pushes $\tau := (\lambda - d)_+ = \lambda - \min(d, \lambda)$ units along $(A(v), v)$; in Phase 2, the algorithm distributes the remaining $1 - \tau$ units to its neighborhood $N(v)$ in the waterfilling manner. We thus deduce that
\begin{equation} \label{eq:alg:c1}
    \beta_v \geq \int_{\min(d, \lambda)}^\lambda (1 - g(z)) \;dz + (1 - \lambda + \min(d, \lambda)) (1 - g(\ell)).
\end{equation}
If $\ell < \lambda$, the right-hand side is further bounded by
\begin{align}
\text{(RHS of \cref{eq:alg:c1})}
& = \int_\ell^\lambda (1 - g(z)) \;dz + (1 - \lambda + \ell) (1 - g(\ell)) \nonumber \\
& \phantom{---} + \int_{\min(d, \lambda)}^\ell (1 - g(z)) \;dz - (\ell - \min(d, \lambda)) (1 - g(\ell))  \nonumber \\
& = \int_\ell^\lambda (1 - g(z)) \;dz + (1 - \lambda + \ell) (1 - g(\ell)) + \int_{\min(d, \lambda)}^\ell (g(\ell) - g(z)) \;dz \nonumber\\
& \geq \int_\ell^\lambda (1 - g(z)) \;dz + (1 - \lambda + \ell) (1 - g(\ell)), \label{eq:alg:lll}
\end{align}
where the inequality is satisfied due to the fact that, no matter whether $\ell \geq \min \{ d, \lambda \}$ or $\ell < \min \{ d, \lambda \}$, we have $\int_{\min(d, \lambda)}^\ell (g(\ell) - g(z)) dz\geq 0$ since $g$ is non-decreasing.

On the other hand, if $\ell \geq \lambda$, we have
\begin{align}
\text{(RHS of \cref{eq:alg:c1})}
& = 1 - g(\ell) + \int_{\min(d, \lambda)}^\lambda (1 - g(z)) \;dz - (\lambda - \min(d, \lambda)) (1 - g(\ell)) \nonumber\\
& = 1 - g(\ell) + \int_{\min(d, \lambda)}^\lambda (g(\ell) - g(z)) \;dz \nonumber\\
& \geq 1 - g(\ell), \label{eq:alg:lgl}
\end{align}
where the inequality is again due to that $g$ is non-decreasing.
This completes the proof for this case.

\textbf{Case 2. $\ell < 1$ and $A(v) = \bot$.}
Observe that the algorithm does nothing in Phase 1  while it distributes 1 unit to its neighbor $N(v)$. We thus have
\[
\beta_v \geq 1 - g(\ell),
\]
which is equivalent to \Cref{eq:alg:c1} with $d = \lambda$. Hence, \Cref{eq:alg:lll,eq:alg:lgl} also follow for this case.

\textbf{Case 3. $\ell = 1$.} 
In this case, we have the following trivial bound that
\begin{equation*}
    \beta_v \geq 0 = 1 - g(\ell),
\end{equation*}
where the equality holds since $g(1) = 1$.
\end{proof}

Let us define a splitting function $g_r$ for robustness as follows:
\[
g_r(z) := \begin{cases}
e^{\lambda - 1} (z + 1 - \lambda), & \forall z \in [0, \lambda), \\
e^{z - 1}, & \forall z \in [\lambda, 1].
\end{cases}
\]

Observe that $g_r$ is indeed a splitting function, i.e., $g_r$ is continuous and non-decreasing on $[0, 1]$ with $g(0) \geq 0$ and $g(1) = 1$.
Notice also that $g_r$ is differentiable.
Following is a technical lemma that will be used in the proof of \Cref{lem:uw:rob}.

\begin{lemma} \label{lem:uw:h}
    Let $h : [0, 1] \to [0, 1]$ be a function defined as
     \[
        h(\ell) := \begin{cases}
            \displaystyle \int_0^\ell g_r (z) dz + \int_\ell^\lambda (1 - g_r(z))dz + (1 - \lambda + \ell) (1 - g_r(\ell)), &\text{if $\ell \in [0,\lambda)$}, \\
            \displaystyle \int_0^\ell g_r(z) dz + ( 1 - g_r(\ell)), &\text{if $\ell \in [\lambda, 1]$.}
        \end{cases}
    \]
    Then, $h$ is a constant function of value $r(\lambda) = 1 - \left( 1 - \lambda + \frac{\lambda^2}{2} \right) \cdot e^{\lambda - 1}$.
\end{lemma}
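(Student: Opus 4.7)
The plan is to show that $h$ has zero derivative on each of the two open subintervals $(0,\lambda)$ and $(\lambda,1)$, verify that the two pieces agree at $\ell = \lambda$, and finally evaluate $h$ at a single convenient point to identify the constant as $r(\lambda)$. Since $g_r$ is differentiable on each piece and $h$ is built from $g_r$ via integration and algebraic operations, differentiating under the integral is straightforward.

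For $\ell \in (0, \lambda)$, using the fundamental theorem of calculus to differentiate the two integral terms and the product rule on $(1 - \lambda + \ell)(1 - g_r(\ell))$, I expect everything to collapse to
\[
    h'(\ell) = g_r(\ell) - (1 - \lambda + \ell)\,g_r'(\ell).
\]
Substituting $g_r(\ell) = e^{\lambda - 1}(\ell + 1 - \lambda)$ and $g_r'(\ell) = e^{\lambda - 1}$ gives $h'(\ell) = 0$ immediately. For $\ell \in (\lambda, 1)$, the same kind of differentiation yields $h'(\ell) = g_r(\ell) - g_r'(\ell)$, which vanishes because $g_r(\ell) = g_r'(\ell) = e^{\ell - 1}$ on this piece. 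So $h$ is constant on each open subinterval.

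Next I would check continuity at $\ell = \lambda$: since $g_r(\lambda^-) = e^{\lambda-1}\cdot 1 = e^{\lambda - 1} = g_r(\lambda)$ and the extra term $(1 - \lambda + \ell)(1 - g_r(\ell))$ at $\ell = \lambda$ reduces to $1 - g_r(\lambda)$, the left-hand and right-hand expressions for $h(\lambda)$ coincide. Combined with the two pieces being individually constant, this proves $h$ is a single constant on $[0,1]$.

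Finally, I would evaluate at a convenient point to compute the constant. Taking $\ell = 1$ is the easiest: $h(1) = \int_0^1 g_r(z)\,dz + (1 - g_r(1)) = \int_0^1 g_r(z)\,dz$. Splitting the integral at $\lambda$ gives $\int_0^\lambda e^{\lambda-1}(z + 1 - \lambda)\,dz = e^{\lambda-1}(\lambda - \lambda^2/2)$ and $\int_\lambda^1 e^{z-1}\,dz = 1 - e^{\lambda-1}$, which sum to $1 - (1 - \lambda + \lambda^2/2)\,e^{\lambda-1} = r(\lambda)$. The main obstacle is really just keeping the bookkeeping clean in the first-piece derivative (three terms contribute via the product rule and the two Leibniz rules, and the cancellation only works because $g_r$ was designed to make $g_r(\ell)/g_r'(\ell) = \ell + 1 - \lambda$ on $[0,\lambda)$); once that cancellation is verified, the rest is routine integration.
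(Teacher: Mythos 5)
Your proposal is correct and follows essentially the same route as the paper: differentiate $h$ piecewise, observe that $g_r$ was constructed exactly so that $h'$ vanishes on each piece, and evaluate at a single point (you use $\ell=1$, the paper uses $\ell=\lambda$, yielding the same value). Your explicit check of continuity at $\ell=\lambda$ is a small point the paper leaves implicit, but it does not change the argument.
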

\begin{proof}
    Note that the derivative of $h$ is
    \[
        h'(\ell) = \begin{cases}
             g_r (\ell) - g'_r (\ell)(1-\lambda+\ell), &\text{if $\ell \in (0,\lambda)$}, \\
             g_r (\ell) -g'_r (\ell), &\text{if $\ell \in (\lambda, 1)$.}
        \end{cases}
    \]
    By the definition of $g_r$, we have that $h'(\ell) = 0$ for all $\ell \in [0,1]$.
    (Indeed, $g_r$ was defined to make this true.)
    Therefore, $h$ is a constant function. The value of this constant is equal to 
    \[
        h(\lambda)
        = \int_0^\lambda g_r(z)dz + (1 - g_r(\lambda))
        = 1 -\left(1-\lambda + \frac{\lambda^2}{2}\right)e^{\lambda-1} = r(\lambda).
    \]
\end{proof}

We are now ready to prove \Cref{lem:uw:rob}.
\begin{proof} [Proof of \Cref{lem:uw:rob}]
    Fix an edge $(u, v) \in E$, and let $\ell$ denote the minimum level of a neighbor of $v$ at the end of the iteration of $v$.
    For any splitting function $g$, since $\alpha_u$ never decreases throughout the execution due to the definition of $g$, we have
    \[
        \alpha_u \geq \int_0^\ell g(z) \;dz.
    \]
    Together with \Cref{lem:uw:betav}, we can derive
    \[
        \alpha_u + \beta_v \geq \begin{cases}
            \displaystyle \int_0^\ell g (z) dz + \int_\ell^\lambda (1 - g(z))dz + (1 - \lambda + \ell) (1 - g(\ell)), &\text{if $\ell \in [0,\lambda)$}, \\
            \displaystyle \int_0^\ell g(z) dz + ( 1 - g(\ell)), &\text{if $\ell \in [\lambda, 1]$.}
        \end{cases}
    \]
    The proof of this lemma then immediately follows from \Cref{lem:uw:h} by choosing the splitting function $g := g_r$.
\end{proof}

\subsection{Consistency analysis}
We now show that \PAW{} is $c(\lambda)$-consistent where $c(\lambda) := 1 - (1 - \lambda) \cdot e^{\lambda - 1}$.
As in the previous robustness analysis, we will provide a good splitting function $g$ that satisfies the approximate dual feasibility with $c(\lambda)$.
However, contrary to the previous analysis, for the consistency, it suffices to have a \emph{relaxed} notion of the approximate dual feasibility, formally stated as follows:

\begin{lemma} [cf. \Cref{lem:alg:1}] \label{lem:uw:relpd}
    Let $x \in \R^E_+$ be the algorithm's output and $A \subseteq E$ be an integral advice. For some $\rho \in [0, 1]$, if there exists $(\alpha, \beta) \in \R^U_+ \times \R^V_+$ satisfying
    \begin{itemize}
        \item (reverse weak duality) $ \sum_{e \in E} x_e \geq \sum_{u \in U} \alpha_u + \sum_{v \in V} \beta_v,$ and
        \item (relaxed approximate dual feasibility) for every $(u, v) \in A$, $\alpha_u + \beta_v \geq \rho$,
    \end{itemize}
    we have $\ALG \geq \rho \cdot \ADVICE$.
\end{lemma}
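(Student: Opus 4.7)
The plan is to mirror the standard primal--dual argument but restricted to the advice edges. Since the setting is unweighted with integral advice $A \subseteq E$, we have $\ALG = \sum_{e \in E} x_e$ and $\ADVICE = |A|$, where $A$ is a feasible matching (each $u \in U$ and each $v \in V$ appears in at most one edge of $A$).

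Starting from the reverse weak duality hypothesis, I would write
\[
    \ALG \;=\; \sum_{e \in E} x_e \;\geq\; \sum_{u \in U} \alpha_u + \sum_{v \in V} \beta_v.
\]
Because $A$ is a matching, each $\alpha_u$ is counted at most once in $\sum_{(u,v) \in A} \alpha_u$ and similarly for each $\beta_v$. Combining this with the nonnegativity of $\alpha$ and $\beta$, I obtain
\[
    \sum_{u \in U} \alpha_u + \sum_{v \in V} \beta_v \;\geq\; \sum_{(u,v) \in A} \bigl( \alpha_u + \beta_v \bigr).
\]

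Finally, applying the relaxed approximate dual feasibility to each of the $|A|$ advice edges gives
\[
    \sum_{(u,v) \in A} \bigl( \alpha_u + \beta_v \bigr) \;\geq\; \rho \cdot |A| \;=\; \rho \cdot \ADVICE,
\]
and chaining the three inequalities yields the claim. There is no genuine obstacle here: the only subtlety is the matching property of $A$, which ensures that restricting the dual sum to advice-incident vertices does not double-count, and this is precisely what makes the \emph{relaxed} feasibility sufficient. The proof is essentially identical in spirit to \Cref{lem:alg:1}, with the global dual-feasibility requirement replaced by one that only needs to hold on the edges of the matching $A$.
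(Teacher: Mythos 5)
Your proof is correct and is essentially identical to the paper's: both chain reverse weak duality, the restriction of the dual sum to the matching $A$ (using nonnegativity of $\alpha,\beta$ and the fact that no vertex is double-counted), and the relaxed dual feasibility applied edge-by-edge to get $\rho\cdot|A| = \rho\cdot\ADVICE$. No gaps.
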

\begin{proof}
    Observe that
    \begin{equation*}
        \ALG
        \geq \sum_{u \in U} \alpha_u + \sum_{v \in V} \beta_v
        \geq \sum_{(u, v) \in A} (\alpha_u + \beta_v)
        \geq \rho \cdot \ADVICE,
    \end{equation*}
    where the second inequality is due to the fact that $A$ is a matching.
\end{proof}

It thus suffices to prove the following lemma.
\begin{lemma} \label{lem:uw:con1}
    There exists a splitting function $g$ such that resulting dual solution $(\alpha, \beta)$ satisfies $\alpha_u + \beta_v \geq c(\lambda)$ for any $(u, v) \in A$.
\end{lemma}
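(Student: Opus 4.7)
Since the splitting function $g$ only enters the analysis (not the algorithm), I can choose a \emph{different} splitting function for the consistency bound than the one ($g_r$) used for robustness. My plan is to take
\[
g_c(z) := \begin{cases} e^{\lambda-1} & \text{if } z \in [0,\lambda], \\ e^{z-1} & \text{if } z \in [\lambda, 1]. \end{cases}
\]
This is continuous at $z=\lambda$, non-decreasing, lies in $[0,1]$ with $g_c(1)=1$, so it is a valid splitting function. Let me denote this verification as Step~1.

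\textbf{Step 2 (extra strength from Phase 1).} Fix an advised edge $(u,v)\in A$, so $A(v)=u$. The key structural input is that Phase~1 pushes flow into $u$ until its level is at least $\lambda$. Hence at the end of $v$'s iteration, $u$'s level is at least $\lambda$. Since $\alpha_u$ is monotone non-decreasing and the level of $u$ never goes down,
\[
\alpha_u \;\ge\; \int_0^{\max(\ell,\lambda)} g_c(z)\,dz,
\]
where $\ell$ is the minimum level of a neighbor of $v$ at the end of $v$'s iteration (as in \Cref{lem:uw:betav}). This is strictly stronger than what the robustness analysis could use for a general edge, and this extra bit is exactly what brings the bound from $r(\lambda)$ up to $c(\lambda)$.

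\textbf{Step 3 (two cases).} I then invoke \Cref{lem:uw:betav} and split on $\ell$. If $\ell\ge\lambda$, then $g_c(\ell)=e^{\ell-1}$ and
\[
\alpha_u+\beta_v \;\ge\; \int_0^\ell g_c(z)\,dz + (1-e^{\ell-1}) = \lambda e^{\lambda-1} + (e^{\ell-1}-e^{\lambda-1}) + 1 - e^{\ell-1} = 1-(1-\lambda)e^{\lambda-1} = c(\lambda).
\]
If $\ell<\lambda$, then $g_c\equiv e^{\lambda-1}$ on $[0,\lambda]$, and a direct computation gives
\[
\int_\ell^\lambda(1-g_c(z))\,dz + (1-\lambda+\ell)(1-g_c(\ell)) = \bigl[(\lambda-\ell)+(1-\lambda+\ell)\bigr](1-e^{\lambda-1}) = 1-e^{\lambda-1}.
\]
Combining with $\alpha_u\ge\int_0^\lambda g_c(z)\,dz=\lambda e^{\lambda-1}$ yields $\alpha_u+\beta_v\ge \lambda e^{\lambda-1}+1-e^{\lambda-1}=c(\lambda)$ again. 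In particular, both cases collapse to the same expression, which is a sanity check that $g_c$ is the right choice.

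\textbf{Main obstacle.} The only non-routine point is recognizing that $g_c$ must be taken \emph{constant} on $[0,\lambda]$ rather than following the $e^{z-1}$ curve there (as $g_r$ nearly does). The intuition is that, on an advised edge, Phase~1 already guarantees $u$ is filled up to level $\lambda$, so the analysis doesn't need $g$ to be small near $0$ in order to "charge" $\alpha_u$ quickly; it is better to set $g$ as large as possible on $[0,\lambda]$ (subject to monotonicity and $g(\lambda)=e^{\lambda-1}$), which makes the constant $e^{\lambda-1}$ the unique optimizer. Since the robustness and consistency analyses use different splitting functions, the two bounds of \Cref{thm:uw:main} simply hold simultaneously from the same execution of \PAW{}, with no need to reconcile the choices.
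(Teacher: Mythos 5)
Your proposal is correct and follows essentially the same route as the paper: the same splitting function $g_c$ (constant $e^{\lambda-1}$ on $[0,\lambda]$, then $e^{z-1}$), the same key observation that Phase~1 forces $u$'s level up to $\max(\ell,\lambda)$ so that $\alpha_u \geq \int_0^{\max(\ell,\lambda)} g_c(z)\,dz$, and the same closing computation. The only cosmetic difference is that you lower-bound $\beta_v$ by invoking the two cases of \Cref{lem:uw:betav} explicitly, whereas the paper packages the identical calculation into \Cref{lem:uw:gc} using $\beta_v \geq 1-g_c(\ell)$ (the two coincide precisely because $g_c$ is constant on $[0,\lambda]$).
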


We define a splitting function $g_c$ as follows:
\[
g_c(z) := \begin{cases}
    e^{\lambda - 1}, & \forall z \in [0, \lambda), \\
    e^{z - 1}, & \forall z \in [\lambda, 1].
\end{cases}
\]
It is easy to observe that $g_c$ satisfies the conditions of a splitting function.
We also remark that $g_c$ is differentiable on $(0,\lambda) \cup (\lambda, 1)$.
We need the following technical lemma in the proof of \Cref{lem:uw:con1}.

\begin{lemma} \label{lem:uw:gc}
    For any $\ell \in [0, 1]$, we have
    \[
    \int_0^{\max(\ell, \lambda)} g_c(z) dz + (1 - g_c(\ell)) \geq c(\lambda).
    \]
\end{lemma}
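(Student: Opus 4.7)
The plan is to verify this by direct computation via a simple case split on whether $\ell < \lambda$ or $\ell \geq \lambda$, using the piecewise definition
\[
g_c(z) = \begin{cases} e^{\lambda - 1}, & z \in [0, \lambda), \\ e^{z - 1}, & z \in [\lambda, 1]. \end{cases}
\]
In both cases, I expect the left-hand side to evaluate \emph{exactly} to $c(\lambda) = 1 - (1-\lambda)e^{\lambda - 1}$, so the claimed inequality will actually hold with equality. This is consistent with the spirit of the robustness analysis, where $g_r$ was engineered to make an analogous quantity constant (cf. \Cref{lem:uw:h}); here $g_c$ plays the same role for consistency.

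First I would handle the case $\ell \in [0, \lambda)$. Then $\max(\ell, \lambda) = \lambda$ and $g_c(\ell) = e^{\lambda - 1}$, so the left-hand side becomes
\[
\int_0^\lambda e^{\lambda - 1} \, dz + \left(1 - e^{\lambda - 1}\right) = \lambda e^{\lambda - 1} + 1 - e^{\lambda - 1} = 1 - (1 - \lambda) e^{\lambda - 1} = c(\lambda).
\]
Next I would handle the case $\ell \in [\lambda, 1]$. Here $\max(\ell, \lambda) = \ell$ and $g_c(\ell) = e^{\ell - 1}$, so splitting the integral at $\lambda$ gives
\[
\int_0^\lambda e^{\lambda - 1} \, dz + \int_\lambda^\ell e^{z-1} \, dz + \left(1 - e^{\ell - 1}\right) = \lambda e^{\lambda - 1} + \left(e^{\ell - 1} - e^{\lambda - 1}\right) + 1 - e^{\ell - 1},
\]
which again simplifies to $1 - (1 - \lambda) e^{\lambda - 1} = c(\lambda)$.

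There is no real obstacle here: both branches are a couple of lines of antidifferentiation, and the fact that the two cases agree (and equal $c(\lambda)$) is a reflection of how $g_c$ was chosen in the first place. The only thing to be mindful of is being consistent with the definition at the boundary $\ell = \lambda$, where $g_c$ is continuous and both computations trivially agree. Combined with \Cref{lem:uw:relpd}, \Cref{lem:uw:rwd}, and the bound $\alpha_u \geq \int_0^{\max(\ell,\lambda)} g_c(z)\,dz$ (which uses that an edge $(u,v) \in A$ forces the level of $u$ to be pushed to at least $\lambda$ by Phase~1 of \PAW{}), this lemma will yield \Cref{lem:uw:con1} and hence the consistency bound $c(\lambda)$ claimed in \Cref{thm:uw:main}.
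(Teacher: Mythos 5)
Your proof is correct and matches the paper's argument in substance: both are elementary calculus verifications that the left-hand side equals $c(\lambda)$ for all $\ell$. The only cosmetic difference is that the paper handles $\ell \in [\lambda,1]$ by noting the derivative of $h(\ell) := \int_0^{\ell} g_c(z)\,dz + (1 - g_c(\ell))$ vanishes (so $h$ is constant, evaluated at $\ell = 1$), whereas you antidifferentiate explicitly — the two computations are equivalent.
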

\begin{proof}
    Since $g_c$ is non-decreasing, for any $\ell \in [0, \lambda)$, the left-hand side is bounded from below by
    \[
        \int_0^{\max(\ell, \lambda)} g_c(z) dz + (1 - g_c(\ell))
        \geq \int_0^{\lambda} g_c(z) dz + (1 - g_c(\lambda)).
    \]
    Therefore, the infimum of the left-hand side is attained at $\ell \in [\lambda, 1]$.
    Let us denote $h(\ell) :=  \int_0^{\ell} g_c(z) dz + (1 - g_c(\ell))$, so that the left-hand side is equal to $h(\ell)$ on $[\lambda, 1]$. Note that by our choice of $g_c$, for any $\ell \in (\lambda, 1)$ we have
    \[
        h'(\ell) 
        = g(\ell) - g'(\ell)
        = e^{\ell-1} - e^{\ell-1} 
        = 0.
    \]
    Therefore, $h$ is a constant function on $[\lambda, 1]$. The value of this constant is equal to
    \[
        h(1)
        = \int_0^1g(z)dz + (1-g(1))
        = 1 - \left(1 - \lambda\right) \cdot e^{\lambda - 1}
        = c(\lambda).
    \]
\end{proof}

\begin{proof} [Proof of \Cref{lem:uw:con1}]
    Fix an edge $(u, v) \in A$, and let $\ell$ be the minimum level of a neighbor of $v$ at the end of the iteration of $v$.
    For any splitting function $g$, we have
    \[
        \beta_v \geq 1 - g(\ell)
    \]
    due to the monotonicity of $g$.
    On the other hand, since $u$ is matched by the advice $A$, the level of $u$ must be at least $\max (\ell, \lambda)$ due to Phase 1 of \PAW{} at this iteration, implying that
    \[
    \alpha_u \geq \int_0^{\max (\ell, \lambda)} g(z) dz.
    \]
    Therefore, choosing the splitting function $g := g_c$ immediately proves this lemma due to \Cref{lem:uw:gc}.
\end{proof}

\section{Upper bound on robustness-consistency tradeoff}
\label{sec:ub}

In this section, we present an upper bound result for the unweighted setting with integral advice.
This upper bound also applies to vertex-weighted matching and AdWords with fractional advice.
In \cref{sec:ub:desc}, we define two adversaries, $\robadv$ and $\conadv$, which target robustness and consistency, respectively, against any fractional matching algorithm $\tgtalg$.
Then, in \cref{sec:ub:lp}, we formulate a factor-revealing linear program (LP) that provides an upper bound on the best possible consistency value $c$ achievable against $\conadv$, subject to maintaining a robustness guarantee $r$ against $\robadv$.
The LP is constructed under a few assumptions about an algorithm's execution, which we later show in \cref{sec:ub:prop} to be without loss of generality.

\subsection{Description of adversaries} \label{sec:ub:desc}

We define two adversaries, $\robadv$ and $\conadv$, which target robustness and consistency, respectively, against any fractional matching algorithm $\tgtalg$.
For a given integer $n \in \mathbb{Z}_+$, both adversaries construct a bipartite instance with a set $U$ of $2n$ offline vertices and a set $V$ of $2n$ online vertices. See \Cref{fig:ub:desc} for an illustration of the upper bound instance.

The two adversaries behave identically during the first $n$ iterations, as follows:
In the first iteration ($t = 1$), they present the first online vertex $v_1$ to $\tgtalg$, with $v_1$ connected to all offline vertices in $U$. The advice $A(v_1)$ is chosen arbitrarily.
For each subsequent iteration $t = 2, \ldots, n$, the adversary presents online vertex $v_t$, which is adjacent to the neighbors $N(v_{t-1})$ of the previous vertex $v_{t-1}$, excluding two vertices: the previous advice $A(v_{t-1})$ and the offline vertex that has been filled the least so far by $\tgtalg$.

Starting from iteration $t = n+1$, the behaviors of the two adversaries diverge.
The robustness adversary $\robadv$ continues on the vertices advised to be matched so far as in the classical setting of online fractional bipartite matching without advice: each online vertex is adjacent to the same neighbors as the preceding one, except for the offline vertex that has been filled the least so far by $\tgtalg$.
In contrast, the consistency adversary $\conadv$ simply presents a matching to the offline vertices that were advised to be unmatched in the first $n$ iterations, allowing the algorithm to fully saturate them. 

The pseudocodes for $\robadv$ and $\conadv$ are given in \Cref{alg:ub:rob,alg:ub:con}, respectively.
Note that in both $\robadv$ and $\conadv$, the size of the maximum matching in hindsight is $2n$.
For simplicity of presentation, we also allow the adversaries to reorder the indices of offline vertices based on the behavior of $\tgtalg$ over time.

\def\offwidth{0.6}
\def\onwidth{0.6}
\def\ongap{3}

\begin{figure}
    \centering
    \begin{tikzpicture}[
        offstyle/.style={
            draw= black,
            minimum width= \offwidth cm,
            minimum height= \offwidth cm
        },
        onstyle/.style={
            circle,
            draw= black,
            minimum width= \onwidth cm,
            minimum height= \onwidth cm
        },
        advicestyle/.style={
            draw= red,
            thick,
            ->
        }
    ]

    \foreach \i in {1, 2, ..., 6}
        \draw (0, -\i) node[offstyle] (u\i) {$u_{\i}$};

    \foreach \j in {1, 2, ..., 3}
        \draw (-\ongap, -\j) node[onstyle] (v\j) {$v_{\j}$};
    \draw [thick, decorate, decoration = {brace, mirror}] (-\ongap-0.5, -0.75) --  (-\ongap-0.5, -3.25);
    \draw[anchor=east] (-\ongap-0.7, -2) node (cdesc) {Common Phase};
    
    \foreach \i in {1, 2, ..., 6}
        \draw (v1) -- (u\i);
    \foreach \i in {2, 3, ..., 5}
        \draw (v2) -- (u\i);
    \foreach \i in {3, 4}
        \draw (v3) -- (u\i);
    \draw[advicestyle] (v1) -- (u1);
    \draw[advicestyle] (v2) -- (u2);
    \draw[advicestyle] (v3) -- (u3);
    
    \foreach \j in {4, 5, 6}
        \draw (\ongap, -\j+3) node[onstyle] (vr\j) {$v_{\j}$};
    \draw [thick, decorate, decoration = {brace}] (\ongap+0.5, -0.75) --  (\ongap+0.5, -3.25);
    \draw[anchor=west] (\ongap+0.7, -2) node (cdesc) {Robustness Phase};
        
    \foreach \i in {1, 2, 3}
        \draw (vr4) -- (u\i);
    \foreach \i in {2, 3}
        \draw (vr5) -- (u\i);
    \foreach \i in {3}
        \draw (vr6) -- (u\i);

    \foreach \j in {4, 5, 6}
        \draw (\ongap, -\j) node[onstyle] (vc\j) {$v_{\j}$};
    \draw [thick, decorate, decoration = {brace}] (\ongap+0.5, -3.75) --  (\ongap+0.5, -6.25);
    \draw[anchor=west] (\ongap+0.7, -5) node (cdesc) {Consistency Phase};
    
    \foreach \i in {4}
        \draw (vc4) -- (u\i);
    \foreach \i in {5}
        \draw (vc5) -- (u\i);
    \foreach \i in {6}
        \draw (vc6) -- (u\i);
    \draw[advicestyle] (vc4) -- (u4);
    \draw[advicestyle] (vc5) -- (u5);
    \draw[advicestyle] (vc6) -- (u6);    
    
    \end{tikzpicture}
    \caption{An illustration of the hardness construction. The instance begins with the common phase, which is the same in both adversaries. After the common phase, the instance can proceed in one of two ways, designed to be hard for robustness or consistency respectively.
The robustness phase consists of an upper-triangular graph on the offline vertices $\{u_1, u_2, \ldots, u_n\}$, while the consistency phase forms a perfect matching to the offline vertices $\{u_{n+1}, u_{n+2}, \ldots, u_{2n}\}$.}
    \label{fig:ub:desc}
\end{figure}
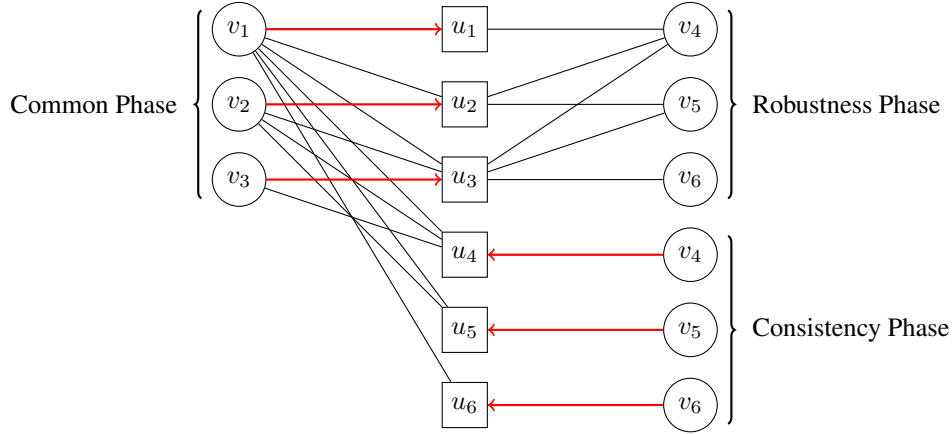

\begin{algorithm}
\caption{Adversary $\robadv$ for robustness}
\label{alg:ub:rob}
    \KwIn{A fractional matching algorithm $\tgtalg$ and $n \in \Z_{\geq 1}$}
    
    Feed $U := \{u_1, \ldots, u_{2n} \}$ to $\tgtalg$ \;
    $\lev{0}_u \gets 0$ for every $u \in U$ \hfill \CommentSty{// $\lev{t}_u$ traces the level of $u$ at the end of iteration~$t$} \;
    \CommentSty{// \textbf{Common phase}}\;
    \For{$t \gets 1, \ldots, n$}{
        Feed $v_t$, $N(v_t) := \{u_t, u_{t+1}, \ldots, u_{2n-t+1}\}$, and $A(v_t) := u_t$ to $\tgtalg$ \;
        Let $\{ x_{u, v_t} \}_{u \in N(v_t)}$ be the output of $\tgtalg$ \;
        \For{$u \in U$} {
            \eIf{$u \in N(v_t)$}{
                $\lev{t}_u \gets \lev{t-1}_u + x_{u, v_t}$ \;
            }{
                $\lev{t}_u \gets \lev{t-1}_u$ \;
            }
        }
        Reorder $\{ t+1, \ldots, 2n-t+1 \}$ so that $\lev{t}_{u_{t+1}} \geq \cdots \geq \lev{t}_{u_{2n - t + 1}}$ \label{line:ub:robro1} \;
    }
    
    \CommentSty{// \textbf{Robustness phase}}\;
    Reorder $\{1, \ldots, n\}$ so that $\lev{n}_{u_1} \leq \cdots \lev{n}_{u_n}$ \label{line:ub:robro2} \;
    \For{$t \gets n+1, \ldots, 2n$}{
        Feed $v_t$, $N(v_t) := \{u_{t-n}, \ldots, u_{n}\}$, and $A(v_t) := \bot$ to $\tgtalg$ \;
        Let $\{ x_{u, v_t} \}_{u \in N(v_t)}$ be the output of $\tgtalg$ \;
        \For{$u \in U$} {
            \eIf{$u \in N(v_t)$}{
                $\lev{t}_u \gets \lev{t-1}_u + x_{u, v_t}$ \;
            }{
                $\lev{t}_u \gets \lev{t-1}_u$ \;
            }
        }
        Reorder $\{ t-n, \ldots, n \}$ so that $\lev{t}_{u_{t-n}} \leq \cdots \leq \lev{t}_{u_n}$ \label{line:ub:robro3}  \;
    }
\end{algorithm}

\begin{algorithm}
\caption{Adversary $\conadv$ for consistency}
\label{alg:ub:con}
    \KwIn{A fractional matching algorithm $\tgtalg$ and $n \in \Z_{\geq 1}$}
    
    Feed $U := \{u_1, \ldots, u_{2n} \}$ to $\tgtalg$ \;
    $\lev{0}_u \gets 0$ for every $u \in U$ \hfill \CommentSty{// $\lev{t}_u$ traces the level of $u$ at the end of iteration~$t$} \;
    \CommentSty{// \textbf{Common phase}}\;
    \For{$t \gets 1, \ldots, n$}{
        Feed $v_t$, $N(v_t) := \{u_t, u_{t+1}, \ldots, u_{2n-t+1}\}$, and $A(v_t) := u_t$ to $\tgtalg$ \;
        Let $\{ x_{u, v_t} \}_{u \in N(v_t)}$ be the output of $\tgtalg$ \;
        \For{$u \in U$} {
            \eIf{$u \in N(v_t)$}{
                $\lev{t}_u \gets \lev{t-1}_u + x_{u, v_t}$ \;
            }{
                $\lev{t}_u \gets \lev{t-1}_u$ \;
            }
        }
        Reorder $\{ t+1, \ldots, 2n-t+1 \}$ so that $\lev{t}_{u_{t+1}} \geq \cdots \geq \lev{t}_{u_{2n - t + 1}}$ \label{line:ub:conro1}  \;
    }
    
    \CommentSty{// \textbf{Consistency phase}}\;
    \For{$t \gets n+1, \ldots, 2n$}{
        Feed $v_t$, $N(v_t) := \{u_t\}$, and $A(v_t) := u_t$ to $\tgtalg$ \;
        Let $\{ x_{u, v_t} \}_{u \in N(v_t)}$ be the output of $\tgtalg$ \;
        \For{$u \in U$} {
            \eIf{$u \in N(v_t)$}{
                $\lev{t}_u \gets \lev{t-1}_u + x_{u, v_t}$ \;
            }{
                $\lev{t}_u \gets \lev{t-1}_u$ \;
            }
        }
    }
\end{algorithm}

\subsection{Factor-Revealing LP} \label{sec:ub:lp}

We now formulate a factor-revealing LP that upper bounds the consistency ratio $c$ of any algorithm $\tgtalg$ against $\conadv$ while ensuring the algorithm is $r$-robust against $\robadv$, for any $r \in [\nicefrac{1}{2}, 1-\nicefrac{1}{e}]$.
To this end, we assume that $\tgtalg$ satisfies the following conditions:

\begin{enumerate}
    \item (Greediness) $\tgtalg$ saturates each online vertex unless its neighbors are all saturated.
    \item (Monotonicity) $\tgtalg$ never makes $\robadv$ and $\conadv$ reorder the indices of $U$ (e.g., Lines~\ref{line:ub:robro1}, \ref{line:ub:robro2}, and~\ref{line:ub:robro3} in \Cref{alg:ub:rob} and Line~\ref{line:ub:conro1} in \Cref{alg:ub:con}).
    \item (Uniformity) In the common phase, $\tgtalg$ pushes the same amount to the neighbors except the advised offline vertex at each iteration. In other words, for any $t \in \{ 1, \ldots, n \}$, the value $x_{u, v_t}$ is the same for all $u \in N(v_t)$ with $u \neq A(v_t)$.
\end{enumerate}

In \cref{sec:ub:prop}, we argue that $\tgtalg$ satisfies these conditions without loss of generality. 

\begin{lemma} \label{lem:ub:pareto}
    For any algorithm $\tgtalg$, there exists an algorithm $\modalg$ satisfying the above conditions (greediness, monotonicity, and uniformity), such that under both $\robadv$ and $\conadv$, the value of the matching returned by $\modalg$ is at least that of $\tgtalg$, 
\end{lemma}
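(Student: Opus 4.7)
The plan is to construct $\modalg$ by applying three successive modifications to $\tgtalg$ --- enforcing greediness, then monotonicity, then uniformity --- where each modification preserves or increases the matching value under both $\robadv$ and $\conadv$. The central difficulty is that both adversaries are adaptive: the neighborhood $N(v_t)$ depends on the sorted levels $\lev{t-1}_u$ produced by the algorithm so far, so modifying the algorithm's pushes changes the input it subsequently receives. I would address this via a coupling argument on the sorted sequence of offline levels.

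For greediness, I would identify the earliest iteration $t^\star$ at which $\tgtalg$ leaves $v_{t^\star}$ with slack while some neighbor $u$ is unsaturated, and modify the algorithm to push the residual flow into $u$ (capped at level $1$). The resulting sorted level vector weakly dominates the original componentwise. Since both adversaries use only the sorted levels to determine subsequent neighborhoods --- in the common and robustness phases they exclude the least-filled active vertex, and in the consistency phase they present a fixed matching to offline vertices that were advised to be unmatched --- I would argue inductively that the modified algorithm can mimic $\tgtalg$'s subsequent flow on the dominating state while gaining at least the extra flow pushed at $t^\star$. Iterating this modification yields a fully greedy algorithm whose matching value is no smaller than that of $\tgtalg$ under either adversary.

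Monotonicity is then essentially free: both adversaries are invariant under relabeling of offline vertices, since the reordering steps in \Cref{line:ub:robro1,line:ub:robro2,line:ub:robro3,line:ub:conro1} merely rename offline vertices without altering the underlying graph or the realized flow. Thus, relabeling offline vertices at the end of each iteration of a greedy algorithm produces a monotone algorithm with identical matching value.

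The main obstacle is uniformity. For each common-phase iteration $t \leq n$, I would replace the pushes $\{x_{u, v_t}\}$ to the non-advised neighbors $N(v_t) \setminus \{A(v_t)\}$ by their common average. This symmetrization preserves the total flow into $v_t$ but changes the minimum level among those neighbors, which in turn alters the adversaries' exclusion choices in subsequent iterations. The delicate step is showing that averaging does not hurt the algorithm's final value: intuitively, balancing the non-advised levels limits the adversary's ability to exploit a single low-level vertex for later exclusion. I would formalize this through a convexity-type exchange argument, showing that for any fixed tail behavior of $\robadv$ or $\conadv$, the final matching value is Schur-concave in the non-advised level profile produced within each common-phase iteration, so that averaging weakly increases it. Combining these three modifications produces the desired $\modalg$.
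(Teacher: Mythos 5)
Your treatment of greediness and of the within-phase reorderings matches the paper in spirit (the paper cites greediness as folklore and handles the reorderings at Lines~\ref{line:ub:robro1}, \ref{line:ub:robro3}, and~\ref{line:ub:conro1} by exactly the relabeling/permutation argument you sketch, cf.\ \Cref{lem:ub:monotone}). The gap is in the uniformity step, and it is a genuine one. Your plan is to average the pushes to $N(v_t)\setminus\{A(v_t)\}$ at \emph{each} common-phase iteration and justify this by a Schur-concavity claim. But the consistency objective against $\conadv$ equals $\sum_{t=1}^n d_t + n$, where $d_t$ is the final level of the $t$-th \emph{advised} vertex, and the adversary chooses $A(v_{t+1})=u_{t+1}$ to be the \emph{most-filled} surviving non-advised neighbor after iteration~$t$. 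Averaging the non-advised pushes at iteration~$t$ strictly lowers the level of that maximal vertex, so if the symmetrized algorithm then pushes the same amount $x_{u_{t+1},v_{t+1}}$ to it, every subsequent $d_{t+1},\dots,d_n$ decreases and the value against $\conadv$ strictly drops. In other words, the final value is \emph{not} Schur-concave in the non-advised level profile for the consistency adversary --- concentrating mass on the vertex that will become the next advised vertex helps consistency --- so the exchange argument you rely on fails for one of the two objectives you must dominate simultaneously.

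The paper avoids this by averaging only at the \emph{first} non-uniform iteration $\tstar$ and then, in every later iteration $t>\tstar$, explicitly re-routing water so that the advised vertex $u_t$ is restored to \emph{exactly} the level $\lev{t}_t$ it had in the original run (Condition~\ref{cond:ub:suffcond1p}), while pushing at least as much total water per iteration (Condition~\ref{cond:ub:suffcond2}); feasibility of this catch-up is established via the $\pat{t},\qat{t}$ invariants in the proof of \Cref{lem:ub:uniform}, and \Cref{lem:ub:suffcond} shows these two conditions suffice to dominate both adversaries. Your proposal contains no analogue of this compensation mechanism. A secondary omission: the reordering at Line~\ref{line:ub:robro2} (sorting $d_1,\dots,d_n$ between the phases) is not a mere relabeling either, since $u_t$ is pinned down as the vertex advised at iteration $t$ and the LP uses the constraint $d_t\le d_{t+1}$ in that order; the paper needs the separate swap argument of \Cref{lem:ub:montoneatn} (which itself presupposes uniformity) to enforce it.
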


We now write the factor-revealing LP assuming these conditions are satisfied. 
For each iteration $t = 1, \ldots, n$ in the common phase, let
\[
    x_t := x_{u_t, v_t} \geq 0
\]
denote the amount of water pushed by $\tgtalg$ to the suggested vertex $A(v_t) = u_t$, and let 
\[
    \xbar_t := x_{u_{t+1}, v_t} = \ldots = x_{u_{2n - t + 1}, v_t} \geq 0
\]
denote the amount pushed uniformly to the other neighbors $ N(v_t) \setminus \{u_t\}$.
Since $\tgtalg$ should output a fractional matching, we have
\[
    x_t + (2n-2t+1) \cdot \xbar_t \leq 1.
\]
Let $d_t := \lev{t}_{u_t}$ and $\dbar_t := \lev{t}_{u_{2n-t+1}}$ denote the levels of $u_t$ and $u_{2n-t+1}$, respectively, at the end of iteration~$t$.
Note that these levels remain unchanged until the end of the common phase by construction of our adversaries:
both adversaries will never reveal edges adjacent to $u_t$ and $u_{2n-t+1}$ from iterations $t+1$ to $n$.
Hence we have the constraints
\[
d_t = \sum_{i = 1}^{t-1} \xbar_i + x_t \leq 1 \text{ and } \dbar_t = \sum_{i = 1}^{t} \xbar_i \leq 1,
\]
from the capacity constraints of $u_t$ and $u_{2n-t+1}$ respectively.
Since $\tgtalg$ is monotone, we also have
\[
d_1 \leq d_2 \leq \ldots \leq d_n
\]
due to Line~\ref{line:ub:robro2} of $\robadv$.

Now consider the robustness adversary $\robadv$. For each $t \in \{1, \ldots, n\}$ and $i \in \{t, \ldots, n\}$, let $y_{i, t} := x_{u_i, v_{n+t}}$ be the amount pushed by $\tgtalg$ from $v_{n+t}$ to $u_i$.
Due to the degree constraint on $v_{n+t}$, we have
\[
\sum_{i = t}^n y_{i, t} \leq 1.
\]
Let $\roblev{t}_i := \lev{n+t}_{u_i}$ be the level of $u_i$ at the end of round~$n+t$ against $\robadv$.
Note that we have
\[
\roblev{t}_i = d_i + \sum_{s=1}^t y_{i, s} \leq 1,
\]
where the inequality is due to the degree constraint.
Since $\tgtalg$ is monotone, we have
\[
\roblev{t}_t \leq \roblev{t}_{t+1} \leq \ldots \leq \roblev{t}_{n}
\]
due to Line~\ref{line:ub:robro3} of $\robadv$.

Notice that, in order for $\tgtalg$ to have the robustness ratio of $r$, it should output a solution of size at least $2n \cdot r$ against $\robadv$, implying that we have
\[
\sum_{t = 1}^n (d_t + \dbar_t) + \sum_{t = 1}^n \sum_{i = t}^n y_{i, t} \geq 2 n r.
\]
Under this situation, the solution quality output by $\tgtalg$ against $\conadv$ bounds from above the consistency ratio $c$ of $\tgtalg$. We thus have
\[
\sum_{t = 1}^n d_t + n \geq 2 n c.
\]

Given a robustness $r$, our objective is to find the best-possible consistency $c$ while satisfying all of the above constraints:
\begin{align*}
    \text{maximize } & c \\
    \text{subject to } & x_t + (2n-2t+1) \cdot \xbar_t \leq 1, & \forall t \in \{1, \ldots, n\}, \\
    & \textstyle d_t = \sum_{i = 1}^{t-1} \xbar_i + x_t, &  \forall t \in \{1, \ldots, n\}, \\
    & \textstyle \dbar_t = \sum_{i = 1}^{t} \xbar_i, &  \forall t \in \{1, \ldots, n\}, \\
    & d_t \leq d_{t+1}, & \forall t \in \{1, \ldots, n-1\}, \\
    & \textstyle \sum_{i = t}^n y_{i, t} \leq 1, & \forall t \in \{1, \ldots, n\}, \\
    & \textstyle \roblev{t}_i = d_i + \sum_{s=1}^t y_{i, s}, &  \forall t \in \{1, \ldots, n\},\, \forall i \in \{t, \ldots, n\}, \\
    & \roblev{t}_i \leq \roblev{t}_{i+1}, & \forall t \in \{1, \ldots, n\}, \forall i \in \{t, \ldots, n-1\}, \\
    & \textstyle \sum_{t = 1}^n (d_t + \dbar_t) + \sum_{t = 1}^n \sum_{i = t}^n y_{i, t} \geq 2 n r, \\    
    & \textstyle \sum_{t = 1}^n d_t + n \geq 2 n c, \\
    & 0 \leq x_t, \xbar_t, d_t, \dbar_t \leq 1, & \forall t \in \{1, \ldots, n\}, \\
    & 0 \leq y_{i, t}, \roblev{t}_i \leq 1, & \forall t \in \{1, \ldots, n\},\, \forall i \in \{t, \ldots, n\}.
\end{align*}

We implemented this factor-revealing LP with PuLP\footnote{\url{https://github.com/coin-or/pulp}} and computationally solved this LP using Gurobi 12.0.2~\cite{gurobi} with $n =1000$ and a finite set of values for robustness $r \in [0.5, 1-\nicefrac{1}{e}]$.
See \Cref{tab:ub} and \Cref{fig:intro:fig} for the computational results.

\begin{table}[h]
\centering
\caption{Upper bound on the robustness-consistency tradeoff with $n=1000$.}
\label{tab:ub}
\begin{tabular}{c|ccccccccc}
$r$ & 0.500  & 0.525  & 0.550  & 0.575  & 0.600  & 0.625  & $1-\nicefrac{1}{e}$  \\ \hline
$c$ & 1.000 & 0.974 & 0.944  & 0.908 & 0.862 & 0.788 & 0.731 
\end{tabular}
\end{table}

\subsection{Proof of \texorpdfstring{\Cref{lem:ub:pareto}}{Lemma~\ref{lem:ub:pareto}}} \label{sec:ub:prop}

This subsection is devoted to the proof of \Cref{lem:ub:pareto}.
In what follows, for any algorithm $\tgtalg$, let us denote by $\tgtalg(\robadv)$ and $\tgtalg(\conadv)$ the size of the output of $\tgtalg$ against $\robadv$ and $\conadv$, respectively.

Recall the conditions that we want to prove without loss of generality for an algorithm $\tgtalg$ to be Pareto-efficient:

\begin{enumerate}
    \item (Greediness) $\tgtalg$ saturates each online vertex unless its neighbors are all saturated.
    \item (Monotonicity) $\tgtalg$ never makes $\robadv$ and $\conadv$ reorder the indices of $U$.
    \item (Uniformity) In the common phase, $\tgtalg$ pushes the same amount to the neighbors except the advised offline vertex at each iteration.
\end{enumerate}

For greediness, it is folklore that this condition is without loss of generality for online bipartite matching (see, e.g., \cite{karp1990optimal}).

Let us now consider monotonicity. 
Observe that there are three places (Lines~\ref{line:ub:robro1}, \ref{line:ub:robro2}, and~\ref{line:ub:robro3}) in \Cref{alg:ub:rob} and one place (Line~\ref{line:ub:conro1}) in \Cref{alg:ub:con} where the adversaries may reorder the indices of $U$.
Let us define the following.

\begin{itemize}
    \item For every $t \in \{ 1, \ldots, n-1 \}$, we say an algorithm is \emph{monotone at iteration $t$} if $\robadv$ (and $\conadv$) does not reorder the indices at Line~\ref{line:ub:robro1} (and Line~\ref{line:ub:conro1}, respectively) in iteration~$t$ of the common phase, i.e., for $\lev{t-1}_{u_{t+1}} \geq \cdots \geq \lev{t-1}_{u_{2n - t + 1}}$, we also have $\lev{t}_{u_{t+1}} \geq \cdots \geq \lev{t}_{u_{2n - t + 1}}$.
    Note that, in iteration~$n$, the adversary does not reorder the indices at that line since it has only one index $ n+1 $.
    \item We say an algorithm is \emph{monotone at iteration $n$} if $\robadv$ does not reorder the indices at Line~\ref{line:ub:robro2}.
    \item For every $t \in \{n+1, \ldots, 2n - 1\}$, we say an algorithm is \emph{monotone at iteration $t$} if $\robadv$ does not reorder the indices at Line~\ref{line:ub:robro3} in iteration~$t$ of the robustness phase, i.e., for  $\lev{t-1}_{u_{t-n}} \leq \cdots \leq \lev{t-1}_{u_n}$, we also have $\lev{t}_{u_{t-n}} \leq \cdots \leq \lev{t}_{u_n}$.
    Note that, in iteration~$2n$, the adversary has only $\{n\}$ at that line.
\end{itemize}

The following lemma implies that the monotonicity of $\tgtalg$ in the common phase can be assumed without loss of generality. 

\begin{lemma} \label{lem:ub:monotone}
    For any algorithm $\tgtalg$, there exists an algorithm $\modalg$ which is monotone at every iteration $t \in \{1, \ldots, n-1\}$ while satisfying $\modalg(\robadv) = \tgtalg(\robadv)$ and $\modalg(\conadv) = \tgtalg(\conadv)$.
\end{lemma}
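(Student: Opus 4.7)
The plan is to proceed by induction on $t$, building a sequence of algorithms $\modalg_0 := \tgtalg, \modalg_1, \ldots, \modalg_{n-1}$ in which each $\modalg_t$ is monotone at iterations $1, \ldots, t$ and satisfies $\modalg_t(\robadv) = \tgtalg(\robadv)$ and $\modalg_t(\conadv) = \tgtalg(\conadv)$; then $\modalg := \modalg_{n-1}$ will meet the conclusion. The guiding intuition is that in the common phase both adversaries are identical and depend on the offline vertices only through the multiset of levels together with each vertex's role (advised, unadvised-alive, or dropped), so permuting the labels of the unadvised vertices cannot affect performance provided that all later decisions are translated through the permutation.

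For the inductive step I would argue as follows. Suppose $\modalg_{t-1}$ is already monotone at iterations $1, \ldots, t-1$. If $\modalg_{t-1}$ is also monotone at iteration $t$, set $\modalg_t := \modalg_{t-1}$. Otherwise, let $\pi$ be the nontrivial permutation of $\{t+1, \ldots, 2n-t+1\}$ that the adversary applies at Line~\ref{line:ub:robro1} (equivalently Line~\ref{line:ub:conro1}) at the end of iteration $t$ in the run of $\modalg_{t-1}$. Define $\modalg_t$ to coincide with $\modalg_{t-1}$ through iteration $t-1$; at iteration $t$, have $\modalg_t$ allocate the same amount to the advised $u_t$ as $\modalg_{t-1}$ does, and to each $u_i$ with $i \in \{t+1, \ldots, 2n-t+1\}$ allocate exactly what $\modalg_{t-1}$ allocated to $u_{\pi^{-1}(i)}$; and at iterations $t+1, \ldots, 2n$, have $\modalg_t$ simulate $\modalg_{t-1}$ under a persistent relabeling consistent with $\pi$, i.e., whenever $\modalg_t$ receives a label $u_i$ it internally treats it as $\modalg_{t-1}$'s label $u_{\pi(i)}$, with the correspondence propagated through every subsequent adversary reordering.

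Next I would verify two things. Monotonicity at iteration $t$ holds by construction, since $\modalg_t$'s levels of the unadvised vertices at the end of iteration $t$ are $\modalg_{t-1}$'s levels permuted by exactly $\pi$, which puts them in the decreasing order the adversary requires, so no further reordering is triggered; monotonicity at earlier iterations is inherited from $\modalg_{t-1}$. For performance, the map from $\modalg_{t-1}$'s execution to $\modalg_t$'s execution is a global relabeling of offline vertices, so the per-iteration multiset of allocations and the final multiset of levels $\{\lev{2n}_u\}_{u \in U}$ are identical. Since $\modalg_t(\robadv)$ and $\modalg_t(\conadv)$ each equal $\sum_{u \in U} \lev{2n}_u$ for the respective execution, they agree with $\modalg_{t-1}(\robadv)$ and $\modalg_{t-1}(\conadv)$; chaining through the induction yields the desired equalities with $\tgtalg$.

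The main obstacle I expect is making the persistent relabeling at iterations $t+1, \ldots, 2n$ rigorous: the permutation $\pi$ introduced at iteration $t$ must be correctly composed with all later adversary reorderings, including the special reordering of $\{1, \ldots, n\}$ at Line~\ref{line:ub:robro2} preceding the robustness phase of $\robadv$ and with further permutations generated inside the inductive construction if $\modalg_{t-1}$ itself is non-monotone in later common-phase iterations. Carefully tracking this composition of permutations, and showing that each of $\modalg_t$'s external allocations corresponds to a well-defined $\modalg_{t-1}$ allocation on the same underlying vertex, is the technical heart of the proof; once in place, the performance equality follows by directly comparing the final levels of the two executions.
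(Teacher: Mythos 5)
Your high-level strategy (induct on the first non-monotone iteration $t$ and repair it with the sorting permutation) matches the paper's, but the specific repair you propose does not work. You permute the \emph{increments}: $\modalg_t$ gives $u_i$ the amount that $\modalg_{t-1}$ gave $u_{\pi^{-1}(i)}$. Since the two runs agree through iteration $t-1$, vertex $u_i$ enters iteration $t$ with its \emph{own} level $\lev{t-1}_{u_i}$, so its new level is $\lev{t-1}_{u_i} + x_{u_{\pi^{-1}(i)}, v_t}$, which is \emph{not} $\lev{t}_{u_{\pi^{-1}(i)}}$ unless the base levels happen to coincide. Consequently (a) your claim that the new levels are ``$\modalg_{t-1}$'s levels permuted by exactly $\pi$'' is false, so monotonicity at iteration $t$ is not established; (b) the multiset of levels changes (base levels $0.5, 0.3$ with increments $0, 0.4$ yield $\{0.7, 0.5\}$ originally but $\{0.9, 0.3\}$ after your swap), so there is no ``global relabeling'' identifying the two executions, and the persistent-simulation step in later iterations breaks down (the corresponding vertices have different remaining capacities); and (c) feasibility can outright fail: with base levels $0.9, 0.1$ and increments $0, 0.85$, the sort swaps the two vertices and your construction pushes $0.85$ onto the vertex already at level $0.9$, exceeding capacity $1$.

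The paper's proof instead permutes the \emph{target levels}: it sets $\xbar_{u_i, v_t} := \lev{t}_{u_{\sigma(i)}} - \lev{t-1}_{u_i}$, i.e., exactly the amount needed to raise $u_i$ to the $i$-th largest post-iteration level of the original run. Nonnegativity of these amounts is the one nontrivial point, and it follows from the induction hypothesis (the levels entering iteration $t$ are already sorted) together with the increments being nonnegative, which gives $\lev{t}_{u_{\sigma(i)}} \geq \lev{t-1}_{u_i}$. With this definition the total pushed at iteration $t$ is unchanged, each new level is at most $1$, the resulting configuration equals the original's configuration after the adversary's reordering, and $\modalg$ can simply copy $\tgtalg$ verbatim thereafter — no persistent relabeling bookkeeping is needed. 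You would need to replace your increment-permutation step with this level-matching step for the argument to go through.
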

\begin{proof}
    We inductively prove the lemma.
    Let $t$ be the first iteration where $\tgtalg$ reorders the indices at Line~\ref{line:ub:robro1} of $\robadv$ (and Line~\ref{line:ub:conro1} of $\conadv$).
    Let $\sigma : \{t+1, \ldots, 2n-t+1\} \to \{t+1, \ldots, 2n-t+1\}$ be the permutation such that $\lev{t}_{u_{\sigma(t+1)}} \geq \ldots \geq \lev{t}_{u_{\sigma(2n-t+1)}}$ right before the adversary reorders the indices. 
    Since we have $\lev{t-1}_{u_{t+1}} \geq \ldots \geq \lev{t-1}_{u_{2n-t+1}}$ due to the induction hypothesis from the previous iteration, we can derive that $\lev{t}_{u_{\sigma(i)}} \geq \lev{t-1}_{u_i}$ for every $i \in \{t+1, \ldots, 2n-t+1\}$.
    Let us now consider another algorithm $\modalg$, whose output is denoted by $\xbar$, such that, up to iteration~$t-1$, $\modalg$ behaves the same as $\tgtalg$, but in iteration~$t$, $\modalg$ pushes $\xbar_{u_i, v_t}$ towards $u_i \in N(v_t)$ for every $i \in \{t + 1, \ldots, 2n-t+1\}$, where
    \[
        \xbar_{u_i, v_t} := \begin{cases}
            \lev{t}_{u_t} - \lev{t-1}_{u_t}, & \text{ if } i=t, \\
            \lev{t}_{u_{\sigma(i)}} - \lev{t-1}_{u_i}, & \text{ otherwise.}
        \end{cases}
    \]
    Observe that the level $\modlev{t}_{u_i}$ of $u_i \in N(v_t)$ at this moment in the execution of $\modalg$ is
    \[
        \modlev{t}_{u_i} := \begin{cases}
            \lev{t}_{u_t}, & \text{ if } i=t, \\
            \lev{t}_{u_{\sigma(i)}}, & \text{ otherwise.}
        \end{cases}
    \]
    Therefore, $\modalg$ satisfies $\modlev{t}_{u_{t+1}} \geq \cdots \geq \modlev{t}_{u_{2n-t+1}}$, implying that the adversary would not reorder the indices of $N(v_{t}) \setminus \{u_{t}\}$ against $\modalg$.
    Moreover, $\tgtalg$ and $\modalg$ have the same configuration of levels of $N(v_t)$ at the end of iteration $t$, and hence, by letting $\modalg$ behave the same as $\tgtalg$ until termination, we can see that both $\modalg$ and $\tgtalg$ eventually return solutions of the same size against the adversary.
\end{proof}

By a similar argument, we can also assume without loss of generality the monotonicity of $\tgtalg$ against $\robadv$ in each iteration $t \in \{n+1, \ldots, 2n-1\}$ of the robustness phase.
We omit the proof of the next lemma.

\begin{lemma}
    For any algorithm $\tgtalg$, there exists an algorithm $\modalg$ which is monotone at every iteration~$t \in \{n+1, \ldots, 2n-1\}$ while satisfying $\modalg(\robadv) = \tgtalg(\robadv)$.
\end{lemma}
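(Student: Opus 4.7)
The plan is to mirror the proof of \Cref{lem:ub:monotone} nearly verbatim, but applied to the robustness phase. I would proceed by induction on iterations $t = n+1, \ldots, 2n-1$, maintaining the invariant that at the end of iteration $t-1$ the levels satisfy $\lev{t-1}_{u_{t-n}} \leq \cdots \leq \lev{t-1}_{u_n}$. The base case $t = n+1$ is exactly what Line~\ref{line:ub:robro2} of $\robadv$ enforces immediately after the common phase.

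For the inductive step, suppose $\tgtalg$ fails monotonicity at some iteration $t$, and let $\sigma : \{t-n, \ldots, n\} \to \{t-n, \ldots, n\}$ be the permutation such that, right before $\robadv$ would reorder at Line~\ref{line:ub:robro3}, one has $\lev{t}_{u_{\sigma(t-n)}} \leq \cdots \leq \lev{t}_{u_{\sigma(n)}}$. Define $\modalg$ to agree with $\tgtalg$ through iteration $t-1$, and at iteration $t$ to push
\[
    \modx_{u_i, v_t} := \lev{t}_{u_{\sigma(i)}} - \lev{t-1}_{u_i}
    \quad \text{for each } i \in \{t-n, \ldots, n\}.
\]
The resulting levels $\modlev{t}_{u_i} = \lev{t}_{u_{\sigma(i)}}$ are then non-decreasing in $i$, so $\robadv$ performs no reorder at iteration $t$ against $\modalg$.

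The main technical content is verifying $\modx_{u_i, v_t} \geq 0$, i.e., $\lev{t}_{u_{\sigma(i)}} \geq \lev{t-1}_{u_i}$ for every $i \in \{t-n, \ldots, n\}$. Setting $k := 2n - t + 1$ and writing the sorted level profiles as $L_1 \leq \cdots \leq L_k$ at time $t-1$ and $M_1 \leq \cdots \leq M_k$ at time $t$, it suffices to show $M_j \geq L_j$ for every $j$. This follows by a standard rank-comparison argument: among the $j$ vertices realizing the $j$ smallest levels at time $t$, at least one must have time-$(t-1)$ rank $\geq j$ by pigeonhole, and levels are non-decreasing across iterations, so its time-$t$ level is $\geq L_j$, which gives $M_j \geq L_j$. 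The remaining feasibility checks, namely $\sum_i \modx_{u_i, v_t} = \sum_i x_{u_i, v_t} \leq 1$ (since $\sigma$ is a permutation and thus preserves the sum of levels) and $\modlev{t}_{u_i} \leq 1$, are immediate from $\tgtalg$'s own feasibility.

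Finally, to conclude $\modalg(\robadv) = \tgtalg(\robadv)$, observe that at the end of iteration $t$ the sorted level profile of $\modalg$ coincides with that of $\tgtalg$-after-reorder. Since $\robadv$'s subsequent behavior depends only on this sorted profile, $\modalg$ can simply replay $\tgtalg$'s later decisions on the already-sorted indices, yielding identical total water pushed. Applying this local repair to each offending iteration in turn completes the induction. The only mild obstacle is keeping the inductive bookkeeping of re-indexings clean across iterations, but no new idea beyond \Cref{lem:ub:monotone} is required --- which is precisely why the authors omit the proof.
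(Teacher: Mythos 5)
Your proposal is correct and is precisely the ``similar argument'' the paper invokes when it omits this proof: you mirror the permutation-based local repair of \Cref{lem:ub:monotone}, adapted to the robustness phase where there is no advice vertex and the ordering is increasing. You even make explicit the rank-comparison (pigeonhole) step justifying $\lev{t}_{u_{\sigma(i)}} \geq \lev{t-1}_{u_i}$, which the paper's common-phase proof asserts without detail, and the remaining feasibility and replay arguments match the paper's.
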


What remains to show is generality of monotonicity at iteration~$n$ and uniformity.
We first prove that it is no loss of generality to assume the uniformity of the algorithm, and then show that, given the algorithm is uniform, we can further assume without loss of generality that the algorithm is monotone at iteration~$n$.

Due to \Cref{lem:ub:monotone}, let us from now consider any algorithm $\tgtalg$ that is greedy and monotone in the common phase.
Due to the monotonicity, let us slightly abuse the notation and write $\lev{t}_i := \lev{t}_{u_i}$ for any $t \in \{0, \ldots, n\}$ and $i \in \{1, \ldots, 2n\}$ for simplicity.

In what follows, we will prove the generality by inductively modifying any algorithm $\tgtalg$ into another algorithm $\modalg$.
To this end, let us first define some notation related to $\modalg$.
Let us denote by $\modx$ the solution of $\modalg$.
For each $t \in \{0, \ldots, n\}$ and $i \in \{1, \ldots, 2n\}$, let us denote by $\modlev{t}_i$ the level of $u_i \in L$ at the end of iteration $t$ in the execution of $\modalg$ against the adversaries in the common phase (i.e., the value of $\lev{t}_{u_i}$ in \Cref{alg:ub:rob,alg:ub:con}).

We newly define the notation only for the common phase due to the following lemma.
\begin{lemma} \label{lem:ub:suffcond}
    Given any greedy algorithm $\tgtalg$, suppose we construct another greedy algorithm $\modalg$ and define the execution of $\modalg$ in the common phase to satisfy the following conditions: 
    \begin{enumerate}[label = (\Alph*)]
        \item \label{cond:ub:suffcond1} the sum of levels of $\{u_1, \ldots, u_n \}$ is the same in both $\modalg$ and $\tgtalg$, i.e., $\sum_{i = 1}^n \modlev{n}_i = \sum_{i = 1}^n \lev{n}_i$; and
        \item \label{cond:ub:suffcond2} in each iteration $t \in \{1, \ldots, n\}$, $\modalg$ pushes as much amount of water as $\tgtalg$, i.e., $\sum_{u \in N(v_t)} \xbar_{u, v_t} \geq \sum_{u \in N(v_t)} x_{u, v_t}$.
    \end{enumerate}
    We can then complete the execution of $\modalg$ in the following iterations to satisfy $\modalg(\robadv) \geq \tgtalg(\robadv)$ and $\modalg(\conadv) = \tgtalg(\conadv)$.
\end{lemma}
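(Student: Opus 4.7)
The plan is to handle the two conclusions separately, in each case by defining a greedy completion of $\modalg$'s execution after the common phase.

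For the consistency equality, I would complete $\modalg$ in $\conadv$'s consistency phase by greedy saturation: on the arrival of $v_{n+t}$ with singleton neighborhood $\{u_{n+t}\}$ and matching advice, $\modalg$ pushes exactly $1 - \modlev{n}_{u_{n+t}}$. The common-phase contribution to $\modalg$'s output value equals $\sum_{i=1}^{2n}\modlev{n}_i$, so summing the two phases gives
\begin{equation*}
\modalg(\conadv) \;=\; \sum_{i=1}^{2n} \modlev{n}_i \;+\; \sum_{t=1}^n \bigl(1 - \modlev{n}_{u_{n+t}}\bigr) \;=\; n + \sum_{i=1}^n \modlev{n}_i,
\end{equation*}
and applying the identical calculation to $\tgtalg$ yields $\tgtalg(\conadv) = n + \sum_{i=1}^n \lev{n}_i$. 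Condition~(A) immediately gives $\modalg(\conadv) = \tgtalg(\conadv)$.

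For the robustness inequality I would complete $\modalg$ in $\robadv$'s robustness phase by Waterfilling, sending each infinitesimal unit to the neighbor of smallest current level. Writing $R(\cdot)$ for the total flow pushed in the robustness phase and decomposing each algorithm's total output into common-phase plus robustness-phase contributions, we obtain
\begin{equation*}
\modalg(\robadv) - \tgtalg(\robadv)
\;=\; \Bigl(\textstyle\sum_{i=1}^{2n}\modlev{n}_i - \sum_{i=1}^{2n}\lev{n}_i\Bigr) + \bigl(R(\modalg) - R(\tgtalg)\bigr).
\end{equation*}
Summing~(B) over $t$ gives $\sum_{i=1}^{2n}\modlev{n}_i \geq \sum_{i=1}^{2n}\lev{n}_i$, and combining with (A) shows that the entire excess $\Delta := \sum_{i=1}^{2n}\modlev{n}_i - \sum_{i=1}^{2n}\lev{n}_i$ is concentrated on $\{u_{n+1},\ldots,u_{2n}\}$, which are inert during the robustness phase. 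Thus it suffices to show $R(\modalg) + \Delta \geq R(\tgtalg)$.

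The main obstacle is the round-by-round comparison of $R(\modalg)$ against $R(\tgtalg)$ when the level distributions on $\{u_1,\ldots,u_n\}$ differ between the two executions, even though their sums match by~(A). To handle this I would use the adversary's sort-by-level reordering at the transition into the robustness phase: after Line~\ref{line:ub:robro2}, both algorithms face structurally identical upper-triangular neighborhoods keyed by ranks rather than identities, so $\modalg$'s Waterfilling strategy can be coupled to $\tgtalg$'s pushes on the correspondingly ranked vertices, with any local capacity conflict rerouted to unsaturated neighbors within the same neighborhood. The excess $\Delta$ accumulated on $\{u_{n+1},\ldots,u_{2n}\}$ provides the slack that absorbs any residual gap in the coupling, yielding the required inequality and completing the proof.
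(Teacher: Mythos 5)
Your consistency argument is exactly the paper's: greedy saturation of $\{u_{n+1},\ldots,u_{2n}\}$ in the consistency phase plus Condition~(A) gives $\modalg(\conadv)=n+\sum_{i=1}^n\modlev{n}_i=n+\sum_{i=1}^n\lev{n}_i=\tgtalg(\conadv)$. That half is correct.

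The robustness half has a genuine gap, and it sits precisely where you flag "the main obstacle." You commit $\modalg$ to \Waterfill{} in the robustness phase and must then show $R(\modalg)+\Delta\geq R(\tgtalg)$, where $R(\tgtalg)$ is the robustness-phase flow of an \emph{arbitrary} greedy algorithm starting from a level vector $(\lev{n}_1,\ldots,\lev{n}_n)$ that may differ entirely from $\modalg$'s starting vector (only the sums agree, by~(A)), and where the adversary adapts differently to each algorithm. Note that $\Delta$ can be $0$ (Condition~(B) is only an inequality, and in the paper's applications $\modalg$ typically pushes exactly as much as $\tgtalg$), so there is in general no slack to "absorb any residual gap": you would need the clean inequality $R(\modalg)\geq R(\tgtalg)$ comparing \Waterfill{} from one distribution against an arbitrary greedy algorithm from another, against two different adaptive neighborhood sequences. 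This is a nontrivial majorization-type claim, and the rank-based coupling "with any local capacity conflict rerouted" is a restatement of what must be proved, not a proof. The paper sidesteps the entire difficulty: it completes $\modalg$ by \emph{mimicking} $\tgtalg$'s robustness-phase pushes, so the final levels on $u_1,\ldots,u_n$ are literally $\lev{2n}_{u_1},\ldots,\lev{2n}_{u_n}$ in both executions, and the advantage comes solely from $\sum_{i=n+1}^{2n}\modlev{n}_i\geq\sum_{i=n+1}^{2n}\lev{n}_i$ (which follows from (A) and (B)), yielding $\modalg(\robadv)\geq\tgtalg(\robadv)$ in one line. (That mimicry is feasible because, in every application of the lemma, the common-phase levels of $u_1,\ldots,u_n$ agree pointwise or up to the adversary's sort.) To repair your write-up, either prove the \Waterfill{}-vs-greedy comparison in full, or switch to the mimicry completion.
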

\begin{proof}
    Due to Condition~\ref{cond:ub:suffcond2}, the total amount pushed by $\modalg$ throughout the common phase is no less than that by $\tgtalg$, implying that we have $\sum_{i = 1}^{2n} \modlev{n}_i \geq \sum_{i = 1}^{2n} \lev{n}_i$.
    From Condition~\ref{cond:ub:suffcond1}, we can also derive $\sum_{i = n+1}^{2n} \modlev{n}_i \geq \sum_{i = n+1}^{2n} \lev{n}_i$.
    Therefore, against $\robadv$, by letting $\modalg$ behave the same as $\tgtalg$ in the robustness phase, we can obtain
    \[
        \modalg(\robadv)
        = \sum_{i = 1}^n \lev{2n}_{u_i} + \sum_{i=n+1}^{2n} \modlev{n}_i
        \geq \sum_{i = 1}^n \lev{2n}_{u_i} + \sum_{i=n+1}^{2n} \lev{n}_i
        = \tgtalg(\robadv).
    \]
    On the other hand, since any greedy algorithm would saturate $\{u_{n+1}, \ldots, u_{2n}\}$ against $\conadv$, it is easy to define the execution of $\modalg$ in the consistency phase to satisfy
    \[
        \modalg(\conadv) = \sum_{i = 1}^n \modlev{n}_{i} + n
        = \sum_{i = 1}^n \lev{n}_{i} + n = \tgtalg(\conadv). 
    \]
\end{proof}

We are now ready to prove that we can assume without loss of generality that the algorithm is uniform.
\begin{lemma} \label{lem:ub:uniform}
    Given any algorithm $\tgtalg$ that is greedy and monotone in the common phase, there exists an algorithm $\modalg$ that is uniform as well while satisfying $\modalg(\robadv) \geq \tgtalg(\robadv)$ and $\modalg(\conadv) = \tgtalg(\conadv)$.
\end{lemma}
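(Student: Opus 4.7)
The plan is to invoke \Cref{lem:ub:suffcond} by constructing $\modalg$ so that, in each iteration $t$ of the common phase, it pushes $a_t + \alpha_t$ to the advised vertex $u_t$ and a uniform amount $\xbar_t$ to each of the $2n-2t+1$ non-advised neighbors $u_{t+1}, \ldots, u_{2n-t+1}$, where $a_t := x_{u_t, v_t}$ and $b_t := \sum_{i=t+1}^{2n-t+1} x_{u_i, v_t}$ are the corresponding quantities from $\tgtalg$. To make the per-iteration totals of $\modalg$ and $\tgtalg$ match (condition~(B) of \Cref{lem:ub:suffcond} with equality), I set $\alpha_t := b_t - (2n-2t+1)\xbar_t$, leaving $\xbar_1, \ldots, \xbar_n$ as the only remaining degrees of freedom; uniformity is then automatic. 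Moreover, $\modalg$ inherits greediness (same total push as $\tgtalg$) and monotonicity in the common phase, since uniform pushing increases all non-advised levels by the same amount, preserving the ordering inherited from $\tgtalg$.

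Substituting $\alpha_t$ into condition~(A) of \Cref{lem:ub:suffcond} and simplifying reduces it to the single linear equation $\sum_{t=1}^n (n-t+1)\xbar_t = \sum_{t=1}^n Q_t$, where $Q_t := \sum_{i=n+1}^{2n-t+1} x_{u_i, v_t}$ denotes $\tgtalg$'s total push to the $U_2$-side vertices in iteration $t$. Intuitively, this asks that $\modalg$ deliver the same total water to $U_2 = \{u_{n+1}, \ldots, u_{2n}\}$ as $\tgtalg$ does; since per-iteration totals also agree, the water delivered to $U_1$ automatically matches, yielding condition~(A). Under greedy saturation $a_t + b_t = 1$, the non-negativity $a_t + \alpha_t \geq 0$ becomes the box constraint $\xbar_t \in [0,\,1/(2n-2t+1)]$.

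The main obstacle is to verify feasibility of this linear system together with the per-vertex capacity constraints $\modlev{n}_{u_i} \leq 1$. For the linear equation alone, the crucial inequality is $\sum_{t=1}^n Q_t \leq \sum_{t=1}^n (n-t+1)/(2n-2t+1)$: the right-hand side is exactly the value of $\sum_t Q_t$ attained by the extremal greedy monotone algorithm that places zero water on each advised vertex and distributes uniformly among the non-advised at every iteration. I plan to show that no greedy monotone $\tgtalg$ can exceed this extremum by a water-exchange argument, namely that moving an infinitesimal unit from the advised vertex to non-advised neighbors in a monotonicity-preserving manner can only increase $\sum_t Q_t$, so the extremal algorithm is indeed a maximizer. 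For the per-vertex capacities, the plan is to exploit the slack in the linear system to select $\xbar_t$'s whose running partial sums are dominated by the corresponding partial sums of $\tgtalg$'s per-iteration averages on the $U_2$-side, which guarantees that $\modalg$ inherits capacity satisfaction from $\tgtalg$.
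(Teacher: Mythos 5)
Your setup is a genuinely different route from the paper's: the paper never solves a global feasibility problem, but instead removes non-uniformity one iteration at a time --- it makes $\modalg$ uniform at the first non-uniform iteration $\tstar$ and then, in iterations $\tstar+1,\ldots,n$, has $\modalg$ ``chase'' $\tgtalg$'s level profile vertex by vertex, tracking via indices $\pat{t},\qat{t}$ which offline vertices have caught up, which are behind, and which are ahead. That construction guarantees the stronger per-iteration condition $\modlev{t}_t=\lev{t}_t$ and sidesteps the constraints you are forced to juggle. Your one-shot reduction of condition~(A) to the single equation $\sum_t (n-t+1)\xbar_t=\sum_t Q_t$ is correct arithmetic, but the proposal has real gaps at exactly the point where you defer to ``plans.''

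First, the extremal inequality $\sum_t Q_t\le\sum_t \frac{n-t+1}{2n-2t+1}$ is asserted, not proved, and the per-iteration version of it is false: with prior levels $(0.5,0.1,0.1)$ and posterior levels $(0.5,0.4,0.4)$, the last two of three non-advised neighbors receive \emph{all} of the pushed water, exceeding the $2/3$ fraction, even though monotonicity holds throughout. So your water-exchange argument must be genuinely global across iterations, and nothing in the proposal indicates how the exchanges interact with monotonicity at later iterations. Second, the constraint set is larger than your box. The capacity of the advised vertex $u_t$ gives $\sum_{s<t}\xbar_s + a_t+\alpha_t\le 1$, which under saturation becomes $\sum_{s<t}\xbar_s\le(2n-2t+1)\xbar_t$ --- a coupling constraint forcing $\xbar_t$ to be bounded \emph{below} in terms of earlier choices; your feasibility argument (interpolating between the corners of the box) does not account for it, and the ``partial-sum domination'' plan for the $U_2$ capacities is not developed. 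Third, fixing $\modalg$'s per-iteration total to \emph{equal} $\tgtalg$'s conflicts with greediness once $\tgtalg$ stops saturating online vertices: after the iteration $\tsat+1$ at which $\tgtalg$'s neighbors all become saturated, $\tgtalg$ pushes nothing, but $\modalg$'s (different) level profile may leave unsaturated neighbors, so $\modalg$ pushing the same zero amount is not greedy; the paper handles this regime explicitly, and Condition~\ref{cond:ub:suffcond2} is deliberately an inequality rather than an equality for this reason. Until these three points are resolved, the argument does not go through.
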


\begin{proof}
    For $t \in \{1, \ldots, n\}$, we say that an algorithm is \emph{uniform at iteration $t$} if the algorithm pushes water uniformly towards $N(v_t) \setminus \{ u_t \}$.
    Recall that the amount pushed towards $A(v_t) = u_t$ does not affect the uniformity of the algorithm at iteration $t$.
    
    We prove the lemma by inductively showing that, for some $\tstar \in \{1, \ldots, n\}$, if $\tgtalg$ is monotone up to iteration $\tstar-1$, we can construct $\modalg$ that is monotone up to iteration $\tstar$ while the execution of $\modalg$ in the common phase satisfies the conditions of \Cref{lem:ub:suffcond}.
    In particular, instead of Condition~\ref{cond:ub:suffcond1}, we will consider a stronger condition:
    \begin{enumerate} [label = (\Alph*$'$)]
        \item \label{cond:ub:suffcond1p} in each iteration~$t \in \{1, \ldots, n\}$, the level of $u_t$ is equal in both $\modalg$ and $\tgtalg$, i.e., $\modlev{t}_t = \lev{t}_t$.
    \end{enumerate}
    We assume without loss of generality that $\tgtalg$ is \emph{non-uniform} at iteration~$\tstar$ for the first time; otherwise, $\modalg = \tgtalg$ would immediately satisfy the conditions.
    Observe that, under this assumption, $\tgtalg$ saturates $v_{\tstar}$ in this iteration.
    
    Let us now describe the execution of $\modalg$ in the common phase.
    For iterations~$t \in \{1, \ldots, \tstar-1\}$, $\modalg$ behaves the same as $\tgtalg$.
    In iteration~$\tstar$, however, since we now want $\modalg$ to be uniform at this iteration while guaranteeing the conditions of \Cref{lem:ub:suffcond}, $\modalg$ pushes the same amount $x_{u_{\tstar}, v_{\tstar}}$ as $\tgtalg$ towards its advice $A(v_{\tstar}) = u_{\tstar}$ while it distributes the remaining $1 - x_{u_{\tstar}, v_{\tstar}}$ units uniformly to the other neighbors $N(v_{\tstar}) \setminus \{u_{\tstar}\}$, i.e., we have, for any $i \in \{\tstar, \ldots, 2n - \tstar + 1\}$,
    \[
        \xbar_{u_i, v_{\tstar}} := \begin{cases}
            x_{u_{\tstar}, v_{\tstar}}, & \text{ if } i = \tstar, \\
            \frac{1 - x_{u_{\tstar}, v_{\tstar}}}{2n - 2 \tstar + 1}, & \text{ otherwise.}
        \end{cases}
    \]
    In the subsequent iterations~$t \in \{\tstar + 1, \ldots, n\}$, $\modalg$ iterates $i \in \{t, \ldots, 2n-t+1\}$ in this order and pushes water through $(u_i, v_t)$ until the level of $u_i$ reaches $\lev{t}_i$ or $v_t$ gets saturated.
    We will later argue that, for any neighbor $u_i \in N(v_t)$ that is filled a positive amount by $\modalg$ in this iteration, it is guaranteed to have $\lev{t}_i \geq \modlev{t-1}_i$, implying that $\modalg$ is well-defined.

    It is trivial to see that $\modalg$ is still greedy and monotone in the common phase.
    Note also that it is now uniform up to iteration~$\tstar$.    
    It remains to prove that the execution of $\modalg$ in the common phase is indeed well-defined while satisfying Conditions~\ref{cond:ub:suffcond1p} and~\ref{cond:ub:suffcond2} as well.
    It is true up to iteration~$\tstar - 1$ since the executions of $\tgtalg$ and $\modalg$ are identical.
    Since every offline vertex in $\{u_1, \ldots, u_{\tstar - 1}\} \cup \{ u_{2n - \tstar + 2}, \ldots, u_{2n} \}$ is not adjacent with any online vertex arriving after iteration $\tstar - 1$ in the common phase, we have
    \begin{equation} \label{eq:ub:prevlev}
        \modlev{\tstar - 1}_i = \cdots = \modlev{n}_i = \lev{\tstar - 1}_i = \cdots = \lev{n}_i
    \end{equation}
    for any $i \in \{1, \ldots, \tstar - 1\} \cup \{2n - \tstar + 2, \ldots, 2n\}$.

    For the remaining iterations in the common phase, recall that $\tgtalg$ saturates $v_{\tstar}$.
    Suppose $\tgtalg$ does not saturate $v_n$ in the last iteration $n$ of the phase.
    In this case, since $\tgtalg$ is greedy while $N(v_{t+1}) \subsetneq N(v_t)$ for every $t \in \{1, \ldots, n\}$ in the common phase, there must exist $\tsat \in \{\tstar, \ldots, n-1\}$ such that $\tgtalg$ saturates $v_t$ in every iteration $t \leq \tsat$ whereas all offline vertices in $N(v_{\tsat +1})$ become saturated at iteration $\tsat + 1$; for any following iterations~$t \geq \tsat + 2$, $\tgtalg$ pushes nothing, i.e., $\sum_{u \in N(v_t)} x_{u, v_t} = 0$.
    For the other case when $\tgtalg$ saturates $v_n$ in iteration $n$, let us say $\tsat := n$.

    Consider iteration~$\tstar$.
    It is easy to see that the execution of $\modalg$ in this iteration is well-defined and satisfies Conditions~\ref{cond:ub:suffcond1p} and~\ref{cond:ub:suffcond2}.
    Let $\modavg = \modlev{\tstar}_{\tstar+1} = \cdots = \modlev{\tstar}_{2n - \tstar + 1}$ denote the uniform level of the neighbors $N(v_{\tstar}) \setminus \{u_{\tstar}\}$ other than the advice $A(v_{\tstar}) = u_{\tstar}$ in $\modalg$.
    Since $\lev{\tstar}_{\tstar + 1} > \modavg$ and $\lev{\tstar}_{2n - \tstar + 1} < \modavg$, there always exists $\pat{\tstar}$ and $\qat{\tstar}$ such that
    \begin{itemize}
        \item $\tstar \leq \pat{\tstar} < \qat{\tstar} \leq 2n-\tstar$;
        \item for every $i \in \{\tstar, \ldots, \pat{\tstar}\}$, the levels of $u_i$ in $\modalg$ and $\tgtalg$ are the same, i.e., $\modlev{\tstar}_i = \lev{\tstar}_i$;
        \item for every $i \in \{\pat{\tstar} + 1, \ldots, \qat{\tstar}\}$, the level of $u_i$ in $\modalg$ is strictly less than that in $\tgtalg$, i.e., $ \modlev{\tstar}_i = \modavg < \lev{\tstar}_i$; and
        \item for every $i \in \{\qat{\tstar} + 1, \ldots, 2n - \tstar + 1\}$, the level of $u_i$ in $\modalg$ is no less than that in $\tgtalg$, i.e.,  $\modlev{\tstar}_i = \modavg \geq \lev{\tstar}_i$.
    \end{itemize}
    In fact, for each subsequent iteration~$t \in \{\tstar + 1, \ldots, \tsat \}$ where $\tgtalg$ saturates $v_t$, we inductively show that there exists $\pat{t}$ and $\qat{t}$ that satisfy the following properties:
    \begin{enumerate}[label=(\roman*)]
        \item \label{inv:ub:prop:u1} $t \leq \pat{t} < \qat{t} \leq 2n-\tstar$;
        \item \label{inv:ub:prop:u2} for every $i \in \{\tstar, \ldots, \pat{t}\}$, the levels of $u_i$ in $\modalg$ and $\tgtalg$ are identical, i.e., $\modlev{t}_i = \lev{t}_i$;
        \item \label{inv:ub:prop:u3} for every $i \in \{\pat{t} + 1, \ldots, \qat{t}\}$, the level of $u_i$ in $\modalg$ is strictly less than that in $\tgtalg$, i.e., $\modlev{t}_i < \lev{t}_i$; and
        \item \label{inv:ub:prop:u4} for every $i \in \{\qat{t} + 1, \ldots, 2n - \tstar + 1\}$, the level of $u_i$ in $\modalg$ is at least that in $\tgtalg$, i.e., $\modlev{t}_i \geq \lev{t}_i$.
    \end{enumerate}
    Note that Properties~\ref{inv:ub:prop:u1} and~\ref{inv:ub:prop:u2} immediately imply Condition~\ref{cond:ub:suffcond1p}.
    We will see the other propositions --- $\modalg$ is well-defined and satisfies Condition~\ref{cond:ub:suffcond2} --- while we prove the existence of $\pat{t}$ and $\qat{t}$.
    
    We break the analysis into the three cases as follows.

    \textbf{Case 1. $\pat{t-1} \geq 2n-t+1 $.}
    Observe that we have $\modlev{t-1}_i = \lev{t-1}_i $ for every $i \in \{t, \ldots, 2n-t+1\} $ due to Property~\ref{inv:ub:prop:u2} at the beginning of iteration~$t$.
    We can therefore see that $\modalg$ effectively behaves the same as $\tgtalg$ in this iteration, implying that $\modalg$ is well-defined in this iteration.
    Note also that $\modalg$ saturates $v_t$ as $\tgtalg$ does in this iteration, implying that Condition~\ref{cond:ub:suffcond2} is satisfied.
    Moreover, since $v_t$ is adjacent only with $N(v_t) = \{u_t, \ldots, u_{2n-t+1}\}$, for every $i \in \{1, \ldots, t-1\} \cup \{2n-t+2, \ldots, 2n\}$, we have $\modlev{t}_i = \modlev{t-1}_i$ and $\lev{t}_i = \lev{t-1}_i$, implying that $\pat{t} := \pat{t-1}$ and $\qat{t} := \qat{t-1}$ would satisfy the properties at the end of iteration~$t$.

    \textbf{Case 2. $\pat{t-1} < 2n-t+1 \leq \qat{t-1} $.}
    Notice that, in this case, we have
    \begin{align*}
        \sum_{i=t}^{2n-t+1} \left( \lev{t}_i - \modlev{t-1}_i \right)
        & = \sum_{i=t}^{2n-t+1} \left( \lev{t}_i - \lev{t-1}_i \right) + \sum_{i=t}^{2n-t+1} \left( \lev{t-1}_i - \modlev{t-1}_i \right) \\
        & = 1 + \sum_{i=\pat{t-1}+1}^{2n-t+1} \left( \lev{t-1}_i - \modlev{t-1}_i \right) \\
        & > 1,
    \end{align*}
    where the second equality follows from the fact that $\tgtalg$ saturates $v_t$.
    Moreover, since $\lev{t}_t - \modlev{t-1}_t \leq 1$, we can deduce that there exists $p \in \{t, \ldots, 2n-t\}$ such that, for every $i \in \{t, \ldots, 2n-t+1\}$,
    \[
        \xbar_{u_i, v_t} = \begin{cases}
            \lev{t}_i - \modlev{t-1}_i, & \text{if } i \leq p, \\
            z, & \text{if } i = p+1, \\
            0, & \text{if } i \geq p+2,
        \end{cases}
        \text{ and hence }
        \modlev{t}_i = \begin{cases}
            \lev{t}_i, & \text{if } i \leq p, \\
            \modlev{t-1}_i + z, & \text{if } i = p+1, \\
            \modlev{t-1}_i, & \text{if } i \geq p+2,
        \end{cases}
    \]
    where $z := 1 - \sum_{i=t}^p \left( \lev{t}_i - \modlev{t-1}_i \right)$ denotes the remaining amount of $v_t$ to be saturated in this iteration.
    We can thus see that $\modalg$ is well-defined and satisfies Condition~\ref{cond:ub:suffcond2} in this iteration.
    Observe also that $\pat{t} := p$ and $\qat{t} := \qat{t-1}$ would satisfy the properties at the end of this iteration.    

    \textbf{Case 3. $\pat{t-1} < \qat{t-1} < 2n-t+1 $.}    
    Let $q$ denote the index such that $\lev{t}_q > \modlev{t-1}_q $ and $\lev{t}_{q+1} \leq \modlev{t-1}_{q+1} $.
    Observe that $q$ is unique and $q \in \{\qat{t-1}, \ldots, 2n-t+1\}$ due to the properties at iteration~$t-1$ and the monotonicity of $\tgtalg$.
    We first show that $\modalg$ would iterate up to $q$ to push water in this iteration, i.e.,
    $
        \sum_{i=t}^{q} \left( \lev{t}_i - \modlev{t-1}_i \right) > 1,
    $
    implying that $\modalg$ is well-defined and satisfies Condition~\ref{cond:ub:suffcond2}.
    Note that it suffices to show that 
    \begin{equation} \label{eq:ub:prop:pu}
        \sum_{i = t}^{q} \left( \lev{t-1}_i - \modlev{t-1}_i \right)
        > \sum_{i =q+1}^{2n-t+1} \left( \lev{t}_i - \lev{t-1}_i \right)
    \end{equation}
    since, if the above inequality is indeed true, we can immediately derive
    \begin{align*}
        \sum_{i=t}^{q} \left( \lev{t}_i - \modlev{t-1}_i \right)
        & = \sum_{i=t}^{q} \left( \lev{t}_i - \lev{t-1}_i \right) + \sum_{i=t}^{q} \left( \lev{t-1}_i - \modlev{t-1}_i \right) \\
        & > \sum_{i = t}^{2n-t+1} \left( \lev{t}_i - \lev{t-1}_i \right) \\
        & = 1.
    \end{align*}
    Since both $\modalg$ and $\tgtalg$ have so far saturated $\{v_1, \ldots, v_{t-1}\}$, we have $\sum_{i = 1}^{2n} \lev{t-1}_i = \sum_{i=1}^{2n} \modlev{t-1}_i$, implying that
    \begin{align*}
        0
        & = \sum_{i = 1}^{2n} \left( \lev{t-1}_i - \modlev{t-1}_i \right) \\
        & = \sum_{i = t}^{2n - \tstar + 1} \left( \lev{t-1}_i - \modlev{t-1}_i \right) \\
        & = \sum_{i = t}^{\qat{t-1}} \left( \lev{t-1}_i - \modlev{t-1}_i \right) - \sum_{i = \qat{t-1} + 1}^{2n - \tstar + 1} \left(\modlev{t-1}_i - \lev{t-1}_i \right),
    \end{align*}
    where the second equality follows from \Cref{eq:ub:prevlev} and the fact that $\pat{t-1} \geq t - 1$ due to Property~\ref{inv:ub:prop:u1}.
    We can thus obtain
    \begin{align*}
        \sum_{i=t}^{q} \left( \lev{t-1}_i - \modlev{t-1}_i \right)
        & = \sum_{i=t}^{\qat{t-1}} \left( \lev{t-1}_i - \modlev{t-1}_i \right) - \sum_{i=\qat{t-1}+1}^{q} \left( \modlev{t-1}_i - \lev{t-1}_i \right) \\
        & = \sum_{i=q+1}^{2n - \tstar + 1} \left( \modlev{t-1}_i - \lev{t-1}_i \right) \\
        & = \sum_{i={q}+1}^{2n-t+1} \left( \modlev{t-1}_i - \lev{t-1}_i \right) + \sum_{i=2n-t+2}^{2n - \tstar + 1} \left( \modlev{t-1}_i - \lev{t-1}_i \right) \\
        & \geq \sum_{i={q}+1}^{2n-t+1} \left( \modlev{t-1}_i - \lev{t-1}_i \right) \\
        & > \sum_{i = {q}+1}^{2n-t+1} \left( \lev{t}_i - \lev{t-1}_i \right),
    \end{align*}
    where the first inequality is derived from the condition that $\qat{t-1} < 2n - t + 1$ and the last from the definition of $q$, showing \Cref{eq:ub:prop:pu}.
    Together with the fact that $\lev{t}_t - \modlev{t-1}_t \leq 1$, we can also infer that there exists $p \in \{t, \ldots, q-1 \}$ in the execution of $\modalg$ such that, for every $i \in \{t, \ldots, 2n-t+1\}$,
    \[
        \xbar_{u_i, v_t} = \begin{cases}
            \lev{t}_i - \modlev{t-1}_i, & \text{if } i \leq p, \\
            z, & \text{if } i = p+1, \\
            0, & \text{if } i \geq p+2,
        \end{cases}
        \text{ and hence }
        \modlev{t}_i = \begin{cases}
            \lev{t}_i, & \text{if } i \leq p, \\
            \modlev{t-1}_i + z, & \text{if } i = p+1, \\
            \modlev{t-1}_i, & \text{if } i \geq p+2,
        \end{cases}
    \]
    where we again denote by $z := 1 - \sum_{i=t}^p \left( \lev{t}_i - \modlev{t-1}_i \right)$ the remaining amount of $v_t$ to be saturated in this iteration.
    Observe that $\pat{t} := p$ and $\qat{t} := q$ would satisfy the properties at the end of this iteration.

    We have shown that, for every iteration $t \in \{1, \ldots, \tsat\}$ where $\tgtalg$ saturates $v_t$, $\modalg$ also saturates $v_t$ while $\modlev{\tsat}_t = \lev{\tsat}_t$. Therefore, if $\tsat = n$, this immediately completes the proof of \Cref{lem:ub:uniform}.
    
    Suppose, on the other hand, that $\tsat + 1 \leq n$. In iteration~$\tsat + 1$, $\tgtalg$ saturates $N(v_{\tsat + 1})$, i.e., $\lev{\tsat + 1}_i = 1$ for every $i \in \{t, \ldots, 2n - \tsat \}$.
    We claim that
    \[
        \sum_{i=\tsat + 1}^{2n - \tsat} \left( 1-\modlev{\tsat}_i \right)
        \geq \sum_{i= \tsat + 1}^{2n- \tsat} \left( 1-\lev{\tsat}_i \right),
    \]
    implying that $\modalg$ is well-defined and satisfies Condition~\ref{cond:ub:suffcond2} in this iteration.
    Indeed, if $\qat{\tsat} \geq 2n-\tsat+1$, it is easy to see that the claim is satisfied due to Properties~\ref{inv:ub:prop:u2} and~\ref{inv:ub:prop:u3} of iteration~$\tsat$.
    Otherwise, if $\qat{\tsat} < 2n-\tsat+1$, we have
    \[
        \sum_{i=\tsat + 1}^{\qat{\tsat}} \left( \lev{\tsat}_i - \modlev{\tsat}_i \right)
        = \sum_{i=\qat{\tsat} + 1}^{2n - \tstar + 1} \left( \modlev{\tsat}_i - \lev{\tsat}_i \right)
        \geq \sum_{i = \qat{\tsat}+1}^{2n-\tsat} \left( \modlev{\tsat}_i - \lev{\tsat}_i \right),
    \]
    where the equality follows from the fact that both $\tgtalg$ and $\modalg$ have saturated all the online vertices that have arrived until this iteration (i.e., $\sum_{i = 1}^{2n} \lev{\tsat}_i = \sum_{i = 1}^{2n} \modlev{\tsat}_i$) while the inequality is due to Property~\ref{inv:ub:prop:u4}.
    We can then prove the claim since
    \begin{align*}
        \sum_{i = \tsat + 1}^{2n-\tsat} \left( 1 - \modlev{\tsat}_i \right) 
        & = \sum_{i = \tsat + 1}^{2n-\tsat} \left( 1 - \lev{\tsat}_i \right) + \sum_{i = \tsat + 1}^{2n - \tsat} \left( \lev{\tsat}_i - \modlev{\tsat}_i \right) \\
        & = \sum_{i = \tsat + 1}^{2n-\tsat} \left( 1 - \lev{\tsat}_i \right) + \sum_{i = \tsat + 1}^{\qat{\tsat}} \left( \lev{\tsat}_i - \modlev{\tsat}_i \right) - \sum_{i = \qat{\tsat} + 1}^{2n-\tsat} \left( \modlev{\tsat}_i - \lev{\tsat}_i \right) \\
        & \geq \sum_{i = \tsat+1}^{2n-\tsat} (1 - \lev{\tsat}_i).
    \end{align*}
    Note that Condition~\ref{cond:ub:suffcond1p} is satisfied in this iteration due to the execution of $\modalg$.
    
    Finally, for any iteration $t \geq \tsat+2$, it is easy to see that $\modalg$ is well-defined and satisfies Conditions~\ref{cond:ub:suffcond1p} and~\ref{cond:ub:suffcond2}.
\end{proof}

Let us lastly prove that the monotonicity at iteration~$n$ is also without loss of generality, i.e., at the beginning of Line~\ref{line:ub:robro2} in \Cref{alg:ub:rob}, $\tgtalg$ would satisfy $\lev{n}_1 \leq \lev{n}_2 \leq \ldots \leq \lev{n}_n$.

\begin{lemma} \label{lem:ub:montoneatn}
    Given any algorithm $\tgtalg$ that is greedy and uniform in the common phase, there exists an algorithm $\modalg$ that is monotone at iteration~$n$ as well while satisfying $\modalg(\robadv) \geq \tgtalg(\robadv)$ and $\modalg(\conadv) = \tgtalg(\conadv)$.
\end{lemma}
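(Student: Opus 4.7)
The plan is to design the common phase of a greedy, uniform $\modalg$ so that its common-phase levels on $u_1, \ldots, u_n$ form the sorted permutation of those of $\tgtalg$, and then invoke \Cref{lem:ub:suffcond} to conclude. Let $d_t := \lev{n}_{u_t}$ denote $\tgtalg$'s common-phase levels and let $\pi$ be the permutation with $d_{\pi(1)} \leq \cdots \leq d_{\pi(n)}$. Choosing the targets $d'_t := d_{\pi(t)}$ for $\modalg$ automatically yields monotonicity at iteration~$n$ and Condition~\ref{cond:ub:suffcond1} of \Cref{lem:ub:suffcond}, since $(d'_t)$ is a permutation of $(d_t)$ and hence $\sum_t d'_t = \sum_t d_t$.

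The technical content is to exhibit, for each iteration $t$ in the common phase, uniform parameters $(x'_t, \bar{x}'_t)$ satisfying (i)~$d'_t = \bar{d}'_{t-1} + x'_t$, (ii)~the feasibility constraints $x'_t, \bar{x}'_t \geq 0$, $x'_t + (2n - 2t + 1)\bar{x}'_t \leq 1$, and $\bar{d}'_t \leq 1$, and (iii)~the water-usage lower bound $w'_t := x'_t + (2n - 2t + 1)\bar{x}'_t \geq w_t$ required by Condition~\ref{cond:ub:suffcond2}. The natural first attempt would be $x'_t := d_{\pi(t)} - \bar{d}'_{t-1}$ together with $\bar{x}'_t := \max\{0, (w_t - x'_t)/(2n - 2t + 1)\}$ so that Condition~\ref{cond:ub:suffcond2} holds with equality whenever it binds. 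Nonnegativity of $x'_t$ would follow from a short counting argument: since $d_j \geq \bar{d}_{j-1} \geq \bar{d}_{t-1}$ for every $j \geq t$, fewer than $t$ of the values $d_1, \ldots, d_n$ can be strictly less than $\bar{d}_{t-1}$, so the $t$-th smallest $d_{\pi(t)}$ is at least $\bar{d}_{t-1}$; the same inductive bookkeeping would carry over to $\bar{d}'_{t-1}$.

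The main obstacle will be the interplay between the saturation constraint $x'_t + (2n - 2t + 1)\bar{x}'_t \leq 1$ and the capacity bound $\bar{d}'_t \leq 1$ when $d_{\pi(t)}$ is large compared to $d_t$ or when $\bar{d}'_{t-1}$ has drifted above $\bar{d}_{t-1}$: the naive choice can fail precisely in the case $d_{\pi(t)} > d_t$, since there the constraint reduces to $d_{\pi(t)} \leq d_t + (1 - w_t)$, which may be violated when $\tgtalg$ nearly saturates $v_t$. To surmount this I would proceed by an inductive bubble-sort in the spirit of \Cref{lem:ub:uniform}: locate the smallest index $t$ with $d_t > d_{t+1}$, and build an intermediate uniform algorithm that resolves only this one inversion by locally redistributing water between iterations $t$ and $t+1$ while keeping the per-iteration total water (and hence the levels of $u_{n+1}, \ldots, u_{2n}$) intact; the local feasibility check can then be carried out via the same three-case analysis (analogous to the cases used in the proof of \Cref{lem:ub:uniform}), exploiting the strict inequality $d_t > d_{t+1}$ to dominate the resulting constraint slackness. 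Iterating this step finitely many times produces a uniform $\modalg$ with sorted common-phase levels satisfying Conditions~\ref{cond:ub:suffcond1} and~\ref{cond:ub:suffcond2}, and a final application of \Cref{lem:ub:suffcond} then delivers the desired inequalities $\modalg(\robadv) \geq \tgtalg(\robadv)$ and $\modalg(\conadv) = \tgtalg(\conadv)$.
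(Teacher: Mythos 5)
Your final plan coincides with the paper's proof: the paper also abandons the ``sort everything at once'' idea and instead fixes a single adjacent inversion $\lev{n}_{\tstar} > \lev{n}_{\tstar+1}$, swaps only those two target levels, keeps every iteration saturated so that Condition~\ref{cond:ub:suffcond2} holds, observes that a permutation of targets preserves Condition~\ref{cond:ub:suffcond1}, and concludes by \Cref{lem:ub:suffcond} and induction on the number of inversions. So the approach is essentially the same.

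Two points of imprecision are worth flagging. First, the swap is \emph{not} local to iterations $t$ and $t+1$: lowering the push into $u_{\tstar}$ at iteration $\tstar$ raises the common level of all remaining neighbors, and to still hit the unchanged targets $\lev{j}_j$ for $j\geq \tstar+2$ the modified algorithm must push slightly different amounts in every later common-phase iteration up to the last saturated round; the paper tracks this propagation explicitly (showing the non-advised neighbors end up strictly \emph{below} $\tgtalg$'s levels from iteration $\tstar+1$ onward, which is what makes the capacity constraints and Condition~\ref{cond:ub:suffcond2} go through). Relatedly, the individual levels of $u_{n+1},\dots,u_{2n}$ are \emph{not} kept intact — only their sum is controlled via Conditions~\ref{cond:ub:suffcond1} and~\ref{cond:ub:suffcond2}, which is all \Cref{lem:ub:suffcond} needs. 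Second, the feasibility verification you defer to ``the same three-case analysis'' is the actual content of the lemma; the paper's argument here is a direct computation (bounding $\modlev{\tstar}_i$ by $\lev{\tstar}_{\tstar}\leq 1$ at the swap iteration and establishing $\modlev{t}_i<\lev{t}_i$ inductively afterwards) rather than a reuse of the case analysis from \Cref{lem:ub:uniform}, and it would need to be written out for the proof to be complete.
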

\begin{proof}
    We prove this lemma by inductively showing that, if there exists $\tstar \in \{1, \ldots, n-1\}$ such that $\lev{n}_{\tstar} = \lev{\tstar}_{\tstar} > \lev{\tstar + 1}_{\tstar + 1} = \lev{n}_{\tstar + 1}$ in the execution of $\tgtalg$, we can construct another algorithm $\modalg$ that satisfies the conditions of \Cref{lem:ub:suffcond} with, for every $t \in \{1, \ldots, n\}$,
    \begin{equation} \label{eq:ub:prop:monn}
        \modlev{n}_t = \modlev{t}_t = \begin{cases}
            \lev{\tstar+1}_{\tstar+1} = \lev{n}_{\tstar + 1}, & \text{if } t = \tstar, \\
            \lev{\tstar}_{\tstar}  = \lev{n}_{\tstar}, & \text{if } t = \tstar+1, \\
            \lev{t}_t  = \lev{n}_{t}, & \text{otherwise.}
        \end{cases}
    \end{equation}
    Notice that Condition~\ref{cond:ub:suffcond1} immediately follows from \Cref{eq:ub:prop:monn}.
    
    Let us now describe the execution of $\modalg$ in the common phase.
    In fact, as we want $\modalg$ to be greedy and uniform, the execution of $\modalg$ is determined by \Cref{eq:ub:prop:monn} (as long as it is feasible).
    Precisely speaking, for each iteration $t \in \{1, \ldots, n\}$, $\modalg$ first pushes water through $(u_t, v_t)$ until the level $\modlev{t}_t$ of $u_t$ satisfies \Cref{eq:ub:prop:monn}.
    The remaining amount is then distributed uniformly towards $N(v_t) \setminus \{u_t\}$.

    We need to prove that the execution of $\modalg$ is feasible and that Condition~\ref{cond:ub:suffcond2} is met by $\modalg$.
    To this end, let $\tsat$ be the last round at which $\tgtalg$ fully saturates $v_t$ in the common phase.
    Since $\tgtalg$ is greedy while we have $\lev{\tstar+1}_{\tstar+1} < \lev{\tstar}_{\tstar} \leq 1$, we can observe that $\tgtalg$ saturates $v_{\tstar + 1}$ in iteration~$\tstar + 1$, i.e., $\tsat \geq \tstar + 1$.
    Note also that, up to iteration~$\tstar - 1$, $\modalg$ would behave the same as $\tgtalg$, implying that the execution of $\modalg$ up to this iteration is feasible, and hence, Condition~\ref{cond:ub:suffcond2} is also satisfied.

    Let $z := \lev{\tstar}_{\tstar} - \lev{\tstar+1}_{\tstar+1} > 0 $.
    In iteration~$\tstar$, $\modalg$ pushes water towards $u_{\tstar}$ only until its level $\modlev{\tstar}_{\tstar}$ becomes $\lev{\tstar+1}_{\tstar+1}$.
    Therefore, $\modalg$ distributes $\frac{z}{2n - 2\tstar + 1}$ more units than $\tgtalg$ towards each neighbor $u \in N(v_{\tstar}) \setminus \{u_{\tstar}\}$, implying that, for any $i \in \{\tstar + 1, \ldots, 2n - \tstar + 1\}$,
    \begin{align}
        \modlev{\tstar}_i 
        & = \lev{\tstar}_i + \frac{z}{2n - 2\tstar +1} \label{eq:ub:mltsi} \\
        & \leq \lev{\tstar + 1}_{\tstar + 1} + \frac{z}{2n - 2\tstar +1} \nonumber \\
        & < \lev{\tstar + 1}_{\tstar + 1} + z \nonumber \\
        & = \lev{\tstar}_{\tstar} \nonumber \\
        & \leq 1, \nonumber
    \end{align}
    where the first inequality follows from the fact that $\lev{\tstar}_i = \lev{\tstar}_{\tstar+1} \leq \lev{\tstar+1}_{\tstar+1}$ due to the uniformity of $\tgtalg$, and the second from that $\tstar + 1 \leq n$.
    This shows that the execution of $\modalg$ in this iteration is indeed feasible and also satisfies Condition~\ref{cond:ub:suffcond2}.
    
    For each subsequent iteration $t \in \{\tstar + 1, \ldots, \tsat\}$ where $\tgtalg$ saturates $v_t$, we claim that the level of $N(v_t) \setminus \{u_t\}$ in $\modalg$ is less than that in $\tgtalg$, i.e., for every $i \in \{t+1, \ldots, 2n-t+1\}$,
    \begin{equation} \label{eq:ub:modlevalwaysless}
        \modlev{t}_i < \lev{t}_i.
    \end{equation}
    Note that this immediately implies that the execution of $\modalg$ is feasible and Condition~\ref{cond:ub:suffcond2} is also satisfied in this iteration.
    Let us inductively prove the claim.
    Indeed, in iteration~$\tstar+1$, note that the amount of water pushed towards $u_{\tstar + 1}$ is
    \begin{align*}
        \modlev{\tstar + 1}_{\tstar + 1} - \modlev{\tstar}_{\tstar + 1}
        & = \lev{\tstar}_{\tstar} - \modlev{\tstar}_{\tstar + 1} \\
        & = \lev{\tstar + 1}_{\tstar + 1} + z - \lev{\tstar}_{\tstar + 1} - \frac{z}{2n - 2\tstar + 1} \\
        & = \lev{\tstar + 1}_{\tstar + 1} - \lev{\tstar}_{\tstar + 1} + \frac{2n - 2\tstar}{2n - 2\tstar + 1} \cdot z,
    \end{align*}
    where the second equality comes from the definition of $z$ and \Cref{eq:ub:mltsi}, meaning that $\modalg$ pushes $\frac{2n - 2\tstar}{2n - 2\tstar + 1} \cdot z$ more units towards $u_{\tstar + 1}$ than $\tgtalg$.
    We can therefore deduce that every neighbor $N(v_{\tstar + 1}) \setminus \{u_{\tstar + 1}\}$ other than the advice $u_{\tstar + 1}$ would gain $\frac{2n - 2\tstar}{(2n - 2\tstar - 1)(2n - 2\tstar + 1)} \cdot z$ less units of water in $\modalg$ than in $\tgtalg$ at this iteration while it has gained $\frac{z}{2n - 2\tstar + 1}$ more units in the previous iteration, i.e., for every $i \in \{\tstar+2, \ldots, 2n-\tstar\}$,
    \[
        \modlev{\tstar+1}_i = \lev{\tstar+1}_i + \frac{z}{2n - 2\tstar +1} - \frac{2n-2\tstar}{(2n-2\tstar-1)(2n-2\tstar+1)} \cdot z < \lev{\tstar+1}_i
    \]
    as claimed in \Cref{eq:ub:modlevalwaysless}.

    For the remaining iterations $t \in \{\tstar +2, \ldots, \tsat\}$, let $\varepsilon := \lev{t-1}_{t} - \modlev{t-1}_{t}$.
    Observe that $\modalg$ pushes $\varepsilon$ more units towards $u_t$ than $\tgtalg$ to have $\modlev{t}_t = \lev{t}_t$.
    This implies that $\modalg$ pushes$\frac{2n-2t+2}{2n-2t+1} \cdot \varepsilon$ less units towards each neighbor in $N(v_t) \setminus \{u_t\}$, yielding that
    \[
        \modlev{t}_i = \lev{t}_i - \frac{2n-2t+2}{2n-2t+1} \cdot \varepsilon < \lev{t}_i
    \]
    for every $i \in \{t + 1, \ldots, 2n - t + 1\}$.
    This completes the proof of \Cref{eq:ub:modlevalwaysless}.

    Note that, if $\tsat = n$, the proof of \Cref{lem:ub:montoneatn} immediately follows from the claim.
    On the other hand, if $\tsat < n$, it is easy to observe that the execution of $\modalg$ is feasible and that Condition~\ref{cond:ub:suffcond2} is also satisfied since every neighbor $N(v_{\tsat + 1})$ of $v_{\tsat + 1}$ becomes saturated in $\tgtalg$, and therefore, $\tgtalg$ pushes no water from iteration~$\tsat + 2$ (if any).
\end{proof}

\section{Experiments}
\label{sec:exp}

In this section, we present experimental results for empirical evaluation of our algorithms.
We experimented on synthetic random graphs defined in \cref{sec:exp-synthetic} and real-world graphs defined in \cref{sec:exp-real}.
For weighted instances, each offline vertex is given a uniform random weight between $0$ and $1000$.
\cref{sec:exp-advgen} presents the way of generating advice given a noise parameter, followed by description of benchmarked algorithms in \cref{sec:exp-algs}.
Each plot is generated by letting each algorithm solve $10$ instances for $10$ different noise parameter values.
That is, a plot for weighted instances with $6$ algorithms involved solving $600$ instances while a plot for unweighted instances with $10$ algorithms involved solving $1000$ instances.
We defer the full plots of our experimental results to \cref{sec:appendix-experiments}.
All experiments were performed on a personal laptop (Apple Macbook 2024, M4 chip, 16GB memory).

\subsection{Synthetic random graphs}
\label{sec:exp-synthetic}

\paragraph{Erd\H{o}s-R\'enyi (ER) graphs.}
Given a number of nodes $n \in \{100, 200, 300\}$ and edge probability $p \in \{0.1, 0.2, 0.5\}$, an ER graph is generated with $n$ offline nodes and $n$ online nodes and each edge in the complete bipartite graph exists independently with probability $p$.

\paragraph{Upper Triangular (UT) graphs.}
Given a number of nodes $n \in \{100, 200, 300\}$, a UT graph is generated with $n$ offline nodes and $n$ online nodes where the $i$-th online node is connected to the last $n-i+1$ offline nodes.

\subsection{Real-world graphs}
\label{sec:exp-real}

To evaluate our algorithmic performance on real-world graph structures, we considered 6 publicly available graphs from the Network Data Repository \cite{rossi2015network} and pre-processed them in a similar manner to \cite{borodin2020experimental} to obtain random bipartite graphs: first, shuffle all $n$ node indices in the real-world graph, take the first $\lfloor n/2 \rfloor$ as the offline vertices and the next $\lfloor n/2 \rfloor$ as online vertices and only keep the bipartite crossing edges.
Each random shuffle of the real-world graph induces a random bipartite graph instance which we then experiment on.
Note that such a pre-processing step is necessary because these real-world graphs are not bipartite to begin with.

\subsection{Advice generation}
\label{sec:exp-advgen}

For each graph $\cG$ with $n$ vertices and a given noise parameter $\gamma \in [0,1]$, we generate a noisy prediction $\wh{\cG}_\gamma$ of $\cG$ as follows: each online vertex $v$ retains a random $(1 - \gamma)$ fraction of its true neighbors and gains a random $\gamma$ fraction of its non-neighbors.
Thus, when $\gamma = 0$, the prediction is exact ($\wh{\cG}_0 = \cG$), and when $\gamma = 1$, it corresponds to the complement graph ($\wh{\cG}_1 = \overline{\cG}$).

To generate the advice for the $t$-th arriving online vertex (for $t \in [n]$), we solve a linear program that maximizes the (weighted) matching objective.
This is done subject to two components: the actual decisions made for the first $t-1$ arrivals in the true graph $\cG$, and a noisy prediction of the future arrivals from time $t+1$ to $n$, based on $\wh{\cG}_\gamma$.
Importantly, the current arrival at time $t$ is not included in the noisy future but is instead the decision variable of interest.
In more detail, the advice at time $t$ is generated by perturbing the true future subgraph (i.e., the part of $\cG$ involving vertices $t+1$ to $n$) to create a noisy forecast.
We then solve for the optimal decision at time $t$ that maximizes the matching value, given the past decisions up to $t-1$ (in $\cG$) and the predicted future (in $\wh{\cG}_\gamma$).
Since we use the true graph up to and including time $t$, this process ensures that the advice at each time step is always feasible and based on a valid optimization problem over a fully specified $n$-vertex instance.

\subsection{Benchmarked algorithms}
\label{sec:exp-algs}
The two baselines are \textsc{Greedy} and \textsc{Balance}.
The former greedily matches the online vertex with its highest weighted available offline neighbor while the latter fractionally matches based on the penalty function $g(z) = e^{z-1}$.
In the unweighted setting, \textsc{Balance} is equivalent to the classic \textsc{Waterfilling} algorithm.
Note that both \textsc{Greedy} and \textsc{Balance} are independent of any predictions so they would achieve constant performance for any noise parameter $\gamma \in [0,1]$.
We also implemented and benchmarked our \textsc{LearningAugmentedBalance} (\textsc{LAB}; \cref{alg:lab}) and \textsc{PushAndWaterfill} (\textsc{PAW}; \cref{alg:paw}) algorithms, where each takes as inputs $\lambda_{\textsc{LAB}}$ and $\lambda_{\textsc{PAW}}$ respectively.
Note that the guarantees for \textsc{PAW} only hold for unweighted instances.

Recall from \cref{thm:vertex_weighted,thm:uw:main} that \textsc{LAB} and \textsc{PAW} have different consistency values with respect to their parameters: the consistency of \textsc{LAB} is $1 + \lambda_{\textsc{LAB}} - \exp(\lambda_{\textsc{LAB}}-1)$ while the consistency of \textsc{PAW} is $1 - (1-\lambda_{\textsc{PAW}}) \exp(\lambda_{\textsc{PAW}}-1)$.
To compare between them at the same consistency value, we set $\lambda_{\textsc{PAW}} = 1 + W(\lambda_\textsc{LAB} - \exp(\lambda_\textsc{LAB}-1))$.
Since \LAB{} with $\lambda_{\LAB} = 0$ and \PAW{} with $\lambda_{\PAW} = 0$ are already equivalent with \Balance{} of consistency $1-\nicefrac{1}{e}$,
we consider consistency ratios of $\{0.7, 0.8. 0.9, 1.0\}$ when running \textsc{LAB} and \textsc{PAW}.
This translates to the following parameters:
\begin{itemize}
    \item For consistency $0.7$, $\lambda_{\textsc{LAB}} \approx 0.111113$ and $\lambda_{\textsc{PAW}} \approx 0.510598$.
    \item For consistency $0.8$, $\lambda_{\textsc{LAB}} \approx 0.293239$ and $\lambda_{\textsc{PAW}} \approx 0.740829$.
    \item For consistency $0.9$, $\lambda_{\textsc{LAB}} \approx 0.516817$ and $\lambda_{\textsc{PAW}} \approx 0.888167$.
    \item For consistency $1.0$, $\lambda_{\textsc{LAB}} = \lambda_{\textsc{PAW}} = 1$.
\end{itemize}

\begin{figure}[htb]
\centering
\includegraphics[width=\textwidth]{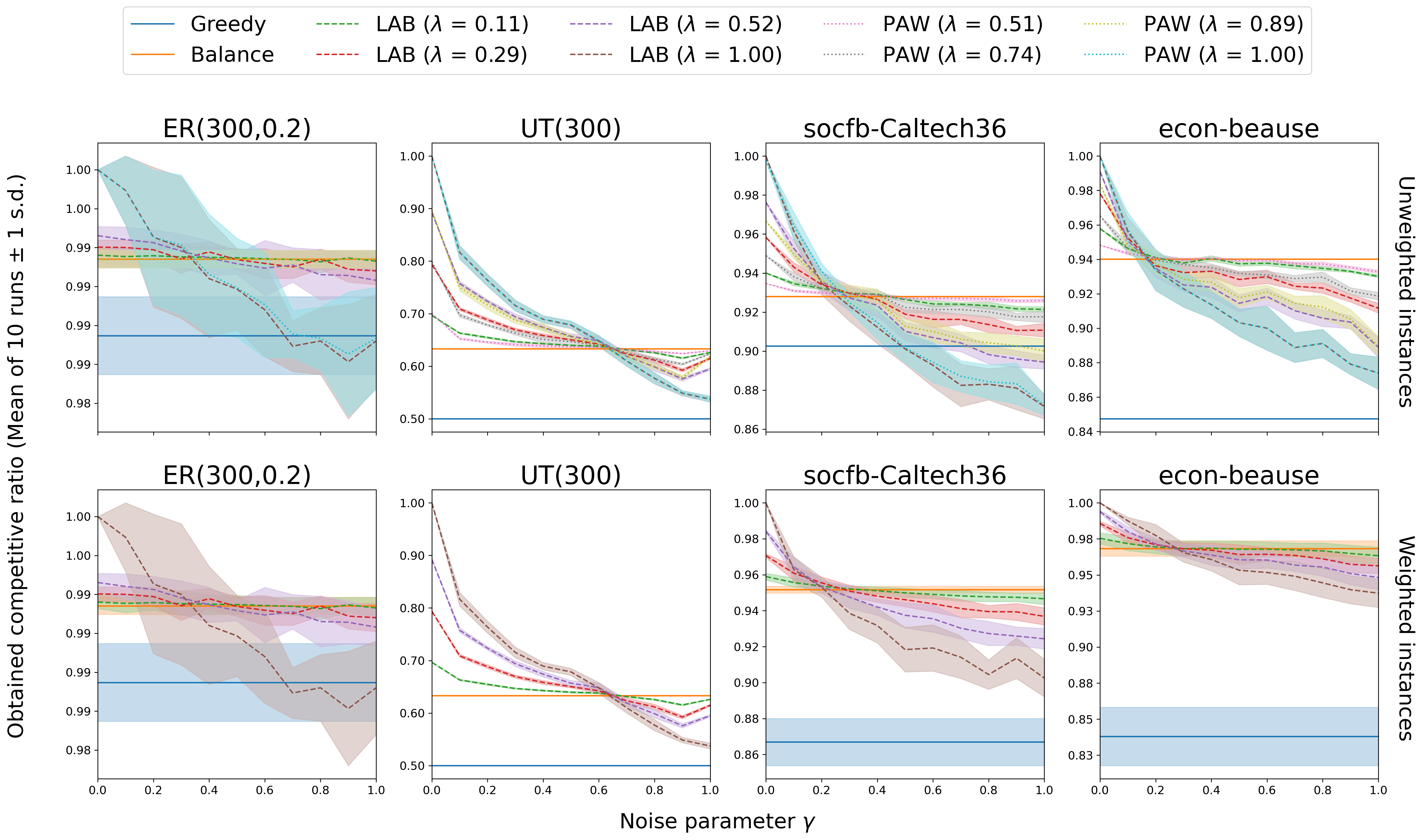}
\caption{Subset of empirical results: ER($300,0.2$), UT($300$), and 2 real-world graphs (socfb-Caltech36, econ-beause). See the \cref{sec:appendix-experiments} for our full set of experiments.}
\label{fig:short-exp}
\end{figure}

\subsection{Qualitative takeaways}
\label{sec:exp-qualitative-takeaways}

\cref{fig:short-exp} illustrates a subset of our empirical results.
As predicted by our analysis, the competitive ratio attained by both \textsc{LAB} and \textsc{PAW} degrades as the noise parameter $\gamma$ increases.
In particular, when $\gamma = 0$ (i.e., perfect advice), both \textsc{LAB} and \textsc{PAW} achieve a competitive ratio of 1 when $\lambda_{\textsc{LAB}} = \lambda_{\textsc{PAW}} = 1$.
As $\gamma$ grows large, the advice becomes increasingly uninformative, and it is unsurprising that the advice-free algorithm \textsc{Balance} eventually outperforms both learning-augmented algorithms, with the crossing point depending on the underlying graph instance.

Interestingly, across all our experiments --- including those in the appendix --- we consistently observe a phenomenon where there appears to exist a critical noise level $\gamma^*$ such that the competitive ratios of all runs of \textsc{LAB} and \textsc{PAW} (across different $\lambda$ values) converge and coincide with that of \textsc{Balance}.
This suggests that at $\gamma^*$, the advice becomes effectively uncorrelated with the input, causing the behavior of \textsc{LAB} and \textsc{PAW} to resemble that of \textsc{Balance} regardless of the weighting parameter $\lambda$.
While we do not currently have a theoretical explanation for this convergence, it is a compelling empirical observation that may point to deeper structure in the robustness-consistency tradeoff and warrants further investigation in future work.

\section{Conclusion and Open Problems}
\label{sec:conclusion}

We studied the robustness-consistency tradeoffs of learning-augmented algorithms for online bipartite fractional matching.
We proposed and analyzed two algorithms, \textsc{LAB} and \textsc{PAW}, and established an improved hardness result.

In our current work, \textsc{PAW} relies on integral advice while \textsc{LAB} can accommodate fractional advice.
While it is a natural question to ask if there can be a unified algorithm and analysis, our current analytical framework is unable to do so.
The analysis of \textsc{LAB} is agnostic to the weights, making it unclear how to demonstrate an improved tradeoff in the unweighted case.
Meanwhile, the analysis of \textsc{PAW} crucially relies on the integrality of the advice, and we were unable to obtain a comparable bound in the fractional case.
We do not rule out the possibility of a unified analysis, and we view this as a compelling direction for future work.
We do not rule out the possibility of a unified analysis and view this as an intriguing direction for future work.

Besides unifying the two variants, there are serveral other natural open and interesting research directions.
Firstly, it would be interesting to develop a theoretical explanation for the crossing point phenomenon observed in our experiments; see the discussion in \cref{sec:exp-qualitative-takeaways}.
Another direction would be to close the gap between our algorithmic results and the impossibility bounds.
Progress on this front could come from an analytic proof of the impossibility result, as well as a tight analysis of \textsc{LAB} in the unweighted setting.
Finally, it would be interesting to extend our framework to broader variants of online matching, including Display Ads, the generalized assignment problem \cite{feldman2009onlinead, spaeh2023online}, and the multi-stage setting \cite{feng2024batching}.

\begin{ack}
This research/project is supported by the National Research Foundation, Singapore under its AI Singapore Programme (AISG Award No: AISG-PhD/2021-08-013).
This work was partly supported by Institute of Information \& communications Technology Planning \& Evaluation (IITP) grant funded by the Korea government (MSIT) (No. RS-2021-II212068, Artificial Intelligence Innovation Hub).
This work was partly supported by an IITP grant funded by the Korean Government (MSIT) (No. RS-2020-II201361, Artificial Intelligence Graduate School Program (Yonsei University)).
Supported by NCN grant number 2020/39/B/ST6/01641.
\end{ack}

\bibliographystyle{alpha}
\bibliography{refs}

\newpage
\appendix
\section{Pseudocodes of our algorithms}
\label{sec:appendix-pseudocodes}

\begin{algorithm}
\caption{Learning-Augmented Balance Algorithm (\textsc{LAB})}
\label{alg:lab}
\KwIn{Offline vertices $U$, tradeoff parameter $\lambda \in [0,1]$}
\KwData{Online vertices $V$, edges $E$, and fractional advice $a \in \mathbb{R}^{E}$}
\KwOut{Fractional matching $x \in \mathbb{R}^{E}$}

\ForEach{$u \in U$}{
    $X_u \gets 0$ \tcp*{Amount allocated by algorithm} 
    $A_u \gets 0$ \tcp*{Amount allocated by advice}
}

\ForEach{arrival of $v \in V$ with neighbors $N(v)$ and advice $\{a_{u,v}\}_{u \in N(v)}$}{
    \ForEach{$u \in N(v)$}{
        $A_u \gets A_u + a_{u,v}$ \tcp*{Accumulate advice} \;
    }

    Find the smallest $\ell \geq 0$ such that
    $\sum_{u \in N(v)} x_{u,v} \leq 1$,
     where
    $x_{u,v} := \min \{ z \in [0, 1 - X_u] \mid w_u \cdot (1 - f(A_u, X_u + z)) \leq \ell \}$
     \tcp*{e.g. via binary search}\;

    \ForEach{$u \in N(v)$}{
        $X_u \gets X_u + x_{u,v}$ \tcp*{Accumulate actual fractional matching} \;
    }
}
\Return{$x$}
\end{algorithm}

\begin{algorithm}
\caption{Push-and-Waterfill Algorithm (\textsc{PAW})}
\label{alg:paw}
\KwIn{Offline vertices $U$, trade-off parameter $\lambda \in [0,1]$}
\KwData{Online vertices $V$, edges $E$, and integral advice $A: V \to U \cup \{\bot\}$}
\KwOut{Fractional matching $x \in \mathbb{R}^{E}$}

\ForEach{$u \in U$}{
    $d_u \gets 0$ \tcp*{Level of $u$}
}

\ForEach{arrival of $v \in V$ with neighbors $N(v)$ and advice $A(v)$}{
    \textbf{(Phase 1)}: Push to advised neighbor $A(v)$, up to $\tau = \max\{0, \lambda - d_{A(v)}\}$ amount\;
    
    \eIf{$A(v) \in N(v)$}{
        $\tau \gets \max\{0, \lambda - d_{A(v)}\}$ \;
        $x_{A(v), v} \gets \tau$ \;
        $d_{A(v)} \gets d_{A(v)} + \tau$ \;
    }{
        $\tau \gets 0$ \;
    }

    \textbf{(Phase 2)}: Waterfill the remaining $1 - \tau$\;
    
    Find the largest $\ell$ such that $\sum_{u \in N(v)} \max\{0, \ell - d_u\} \leq 1 - \tau$ \;
    
    $\ell \gets \min\{\ell, 1\}$\;
    
    \ForEach{$u \in N(v)$}{
        $x_{u,v} \gets x_{u,v} + \max\{0, \ell - d_u\}$ \;
        $d_u \gets d_u + \max\{0, \ell - d_u\}$ \;
    }
}
\Return{$x$}
\end{algorithm}

\newpage
\section{Plots of our experimental results}
\label{sec:appendix-experiments}

\begin{figure}[htb]
    \centering
    \begin{subfigure}{0.32\textwidth}
        \centering
        \includegraphics[width=\textwidth]{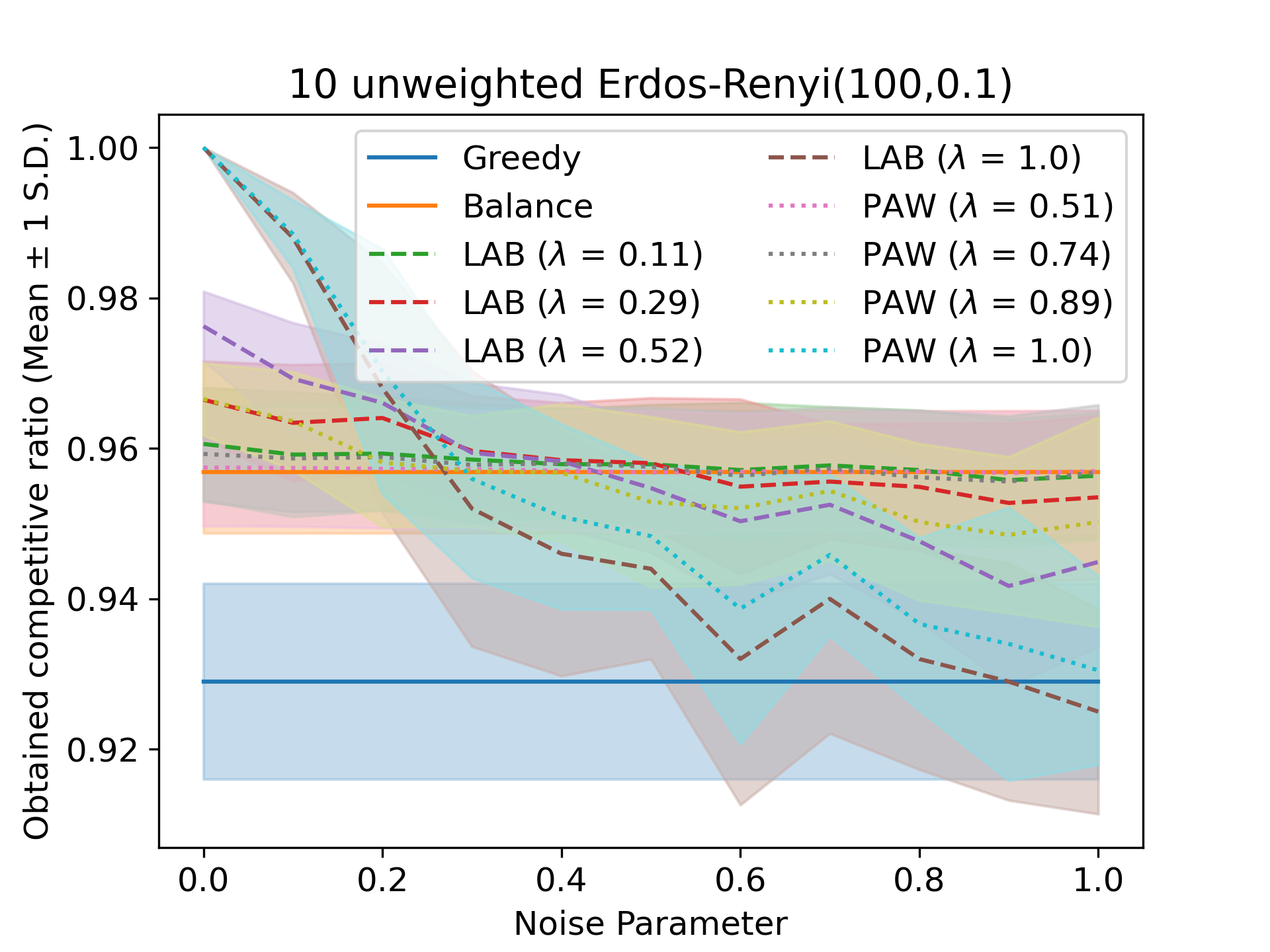}
    \end{subfigure}
    \begin{subfigure}{0.32\textwidth}
        \centering
        \includegraphics[width=\textwidth]{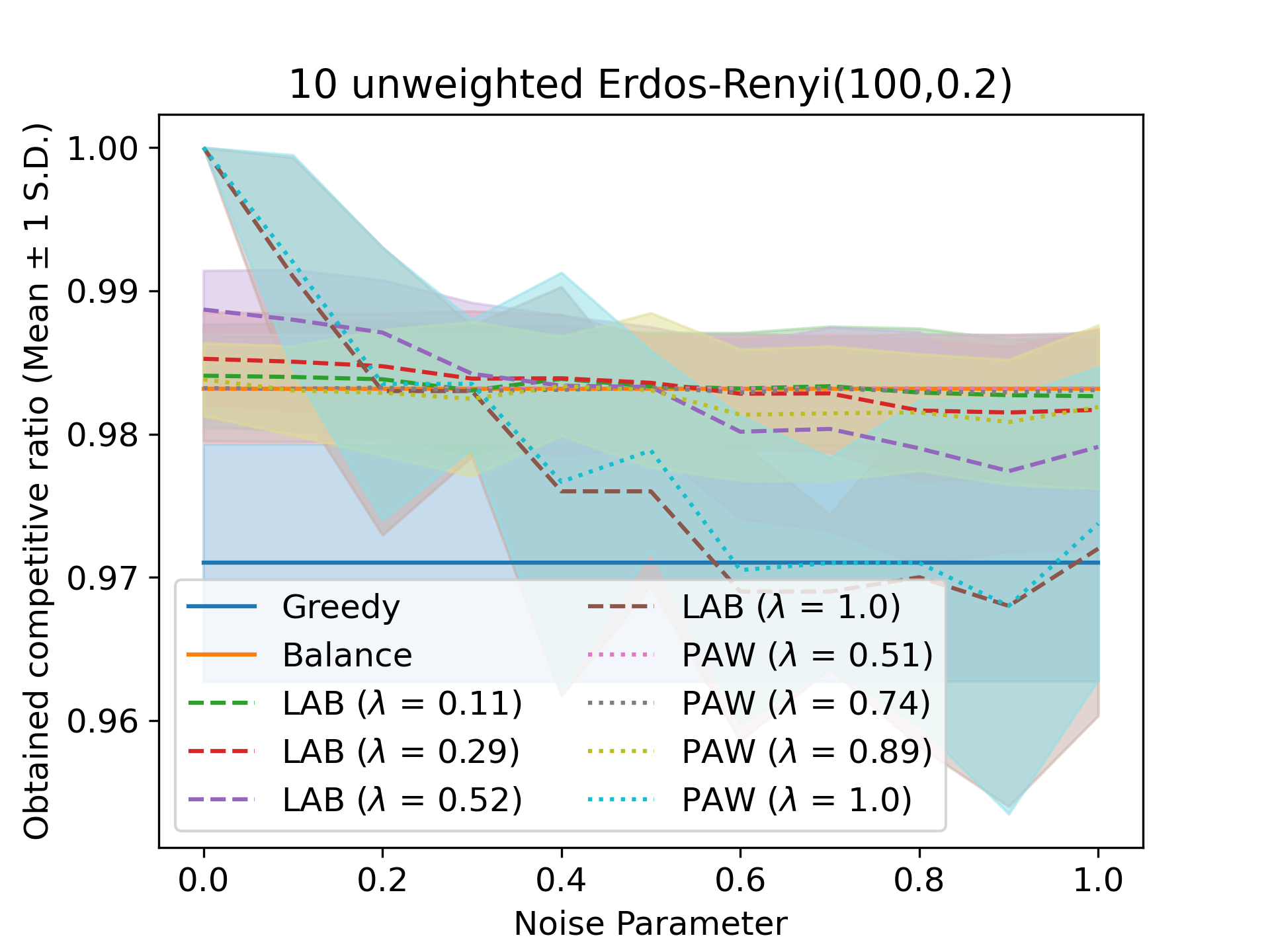}
    \end{subfigure}
    \begin{subfigure}{0.32\textwidth}
        \centering
        \includegraphics[width=\textwidth]{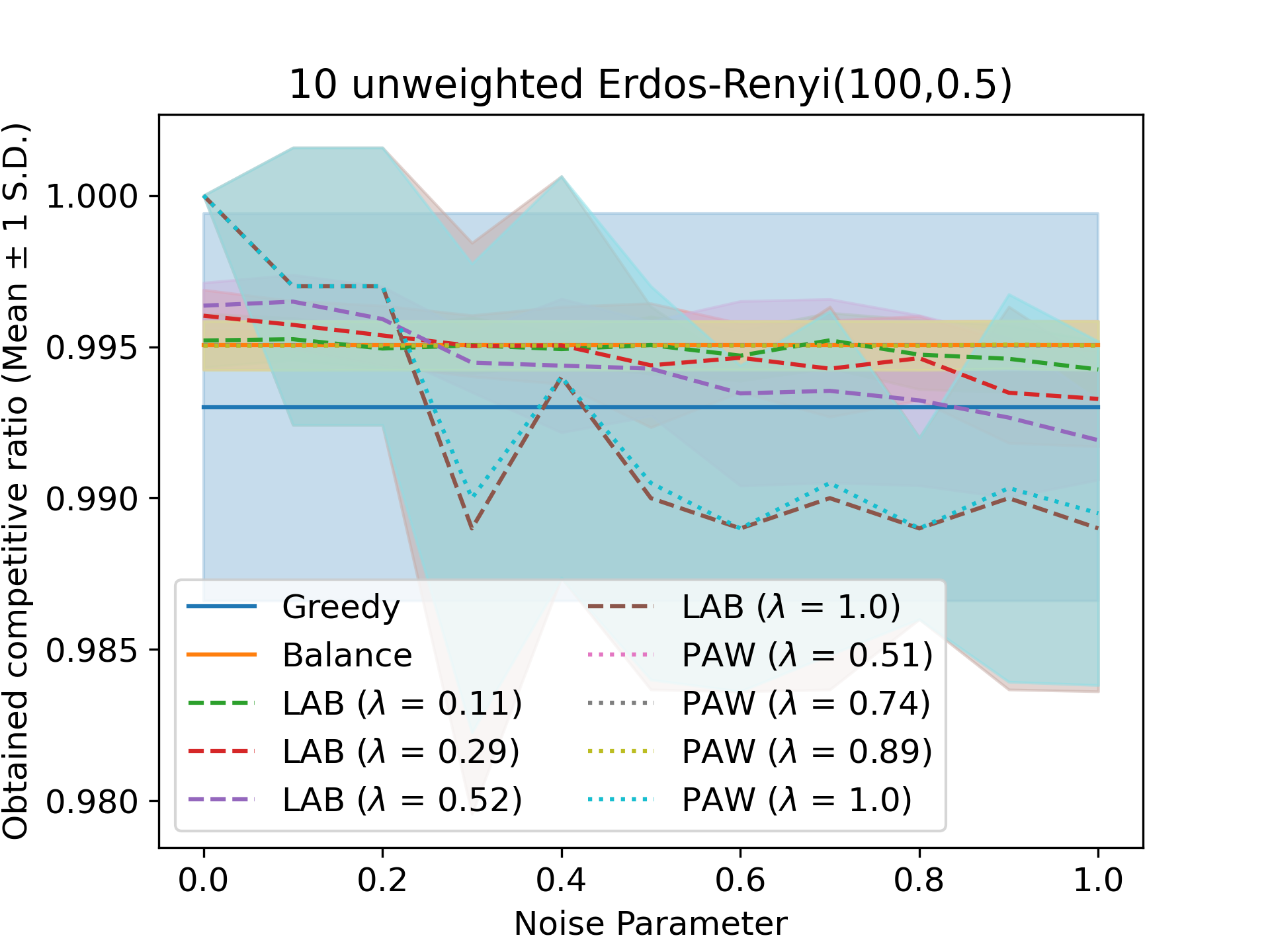}
    \end{subfigure}
    \\
    \vspace{10pt}
    \begin{subfigure}{0.32\textwidth}
        \centering
        \includegraphics[width=\textwidth]{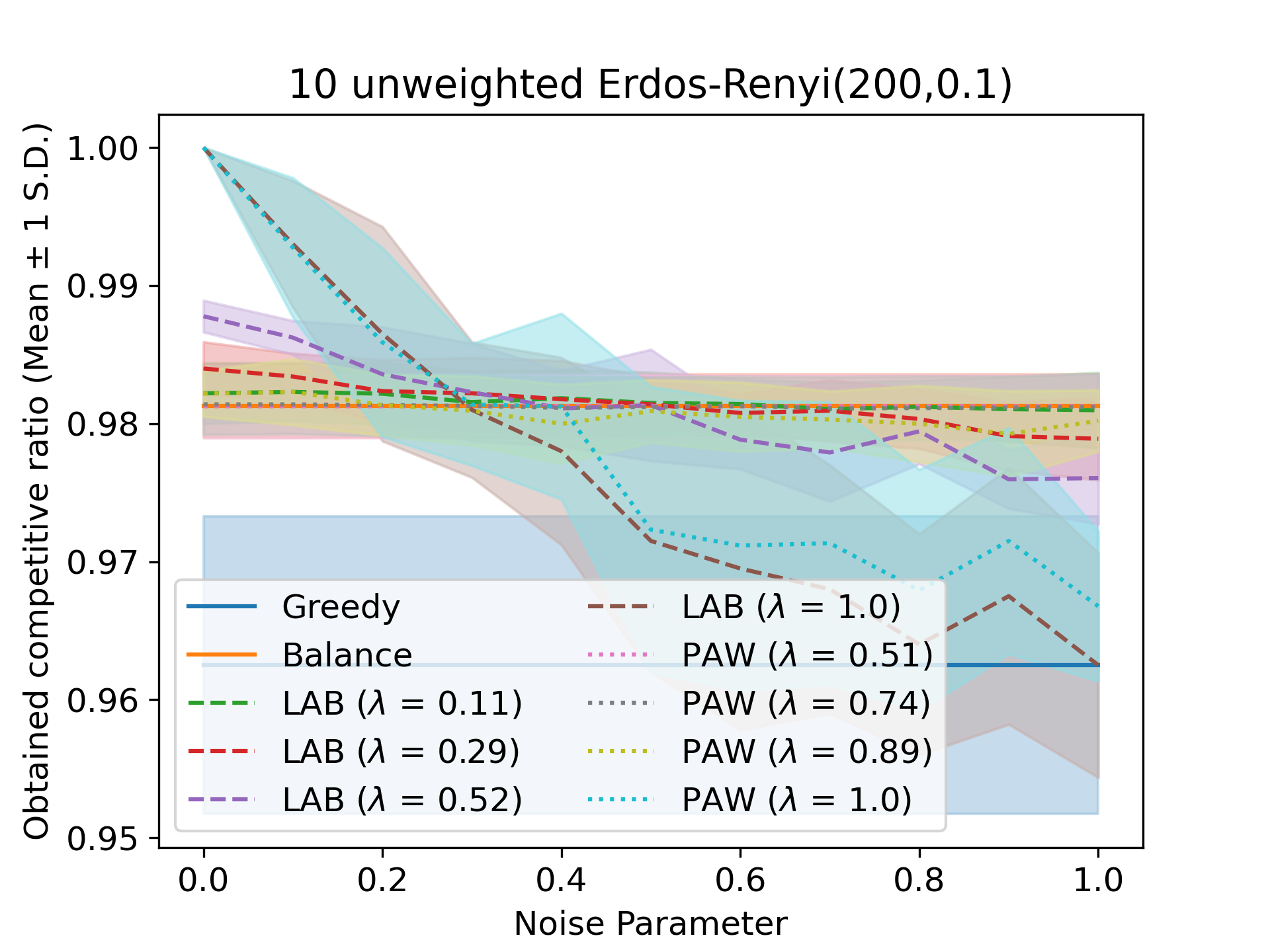}
    \end{subfigure}
    \begin{subfigure}{0.32\textwidth}
        \centering
        \includegraphics[width=\textwidth]{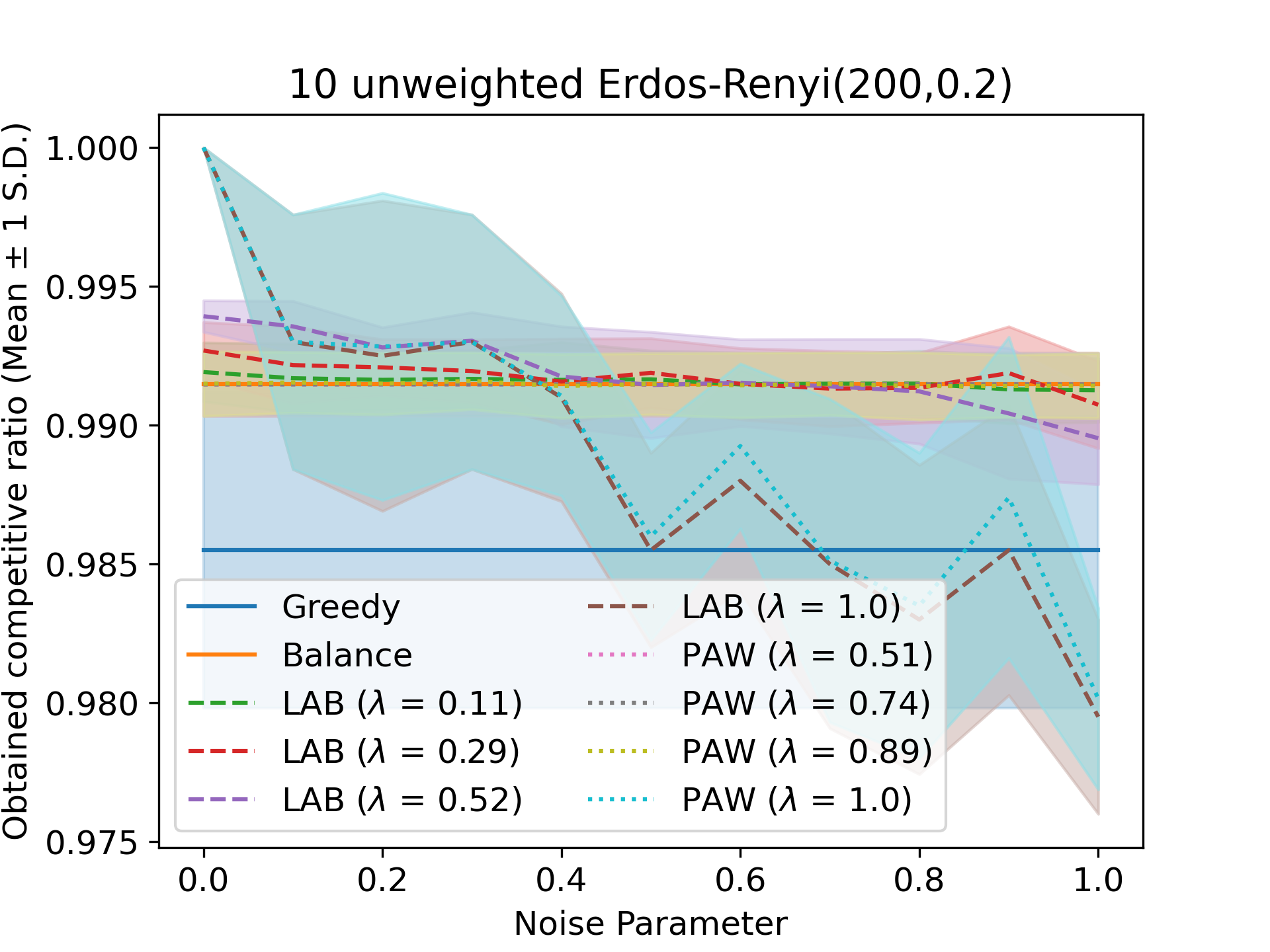}
    \end{subfigure}
    \begin{subfigure}{0.32\textwidth}
        \centering
        \includegraphics[width=\textwidth]{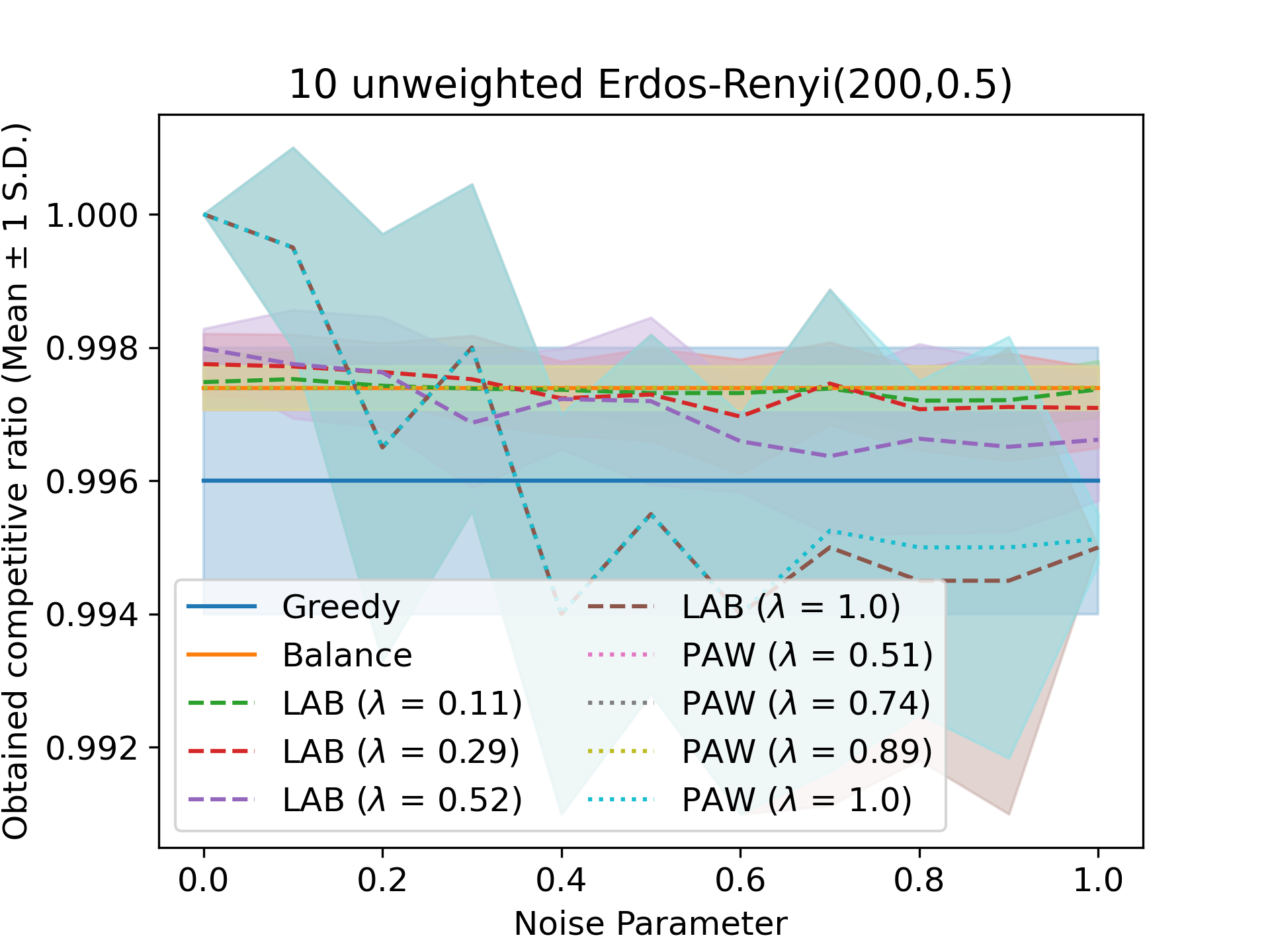}
    \end{subfigure}
    \\
    \vspace{10pt}
    \begin{subfigure}{0.32\textwidth}
        \centering
        \includegraphics[width=\textwidth]{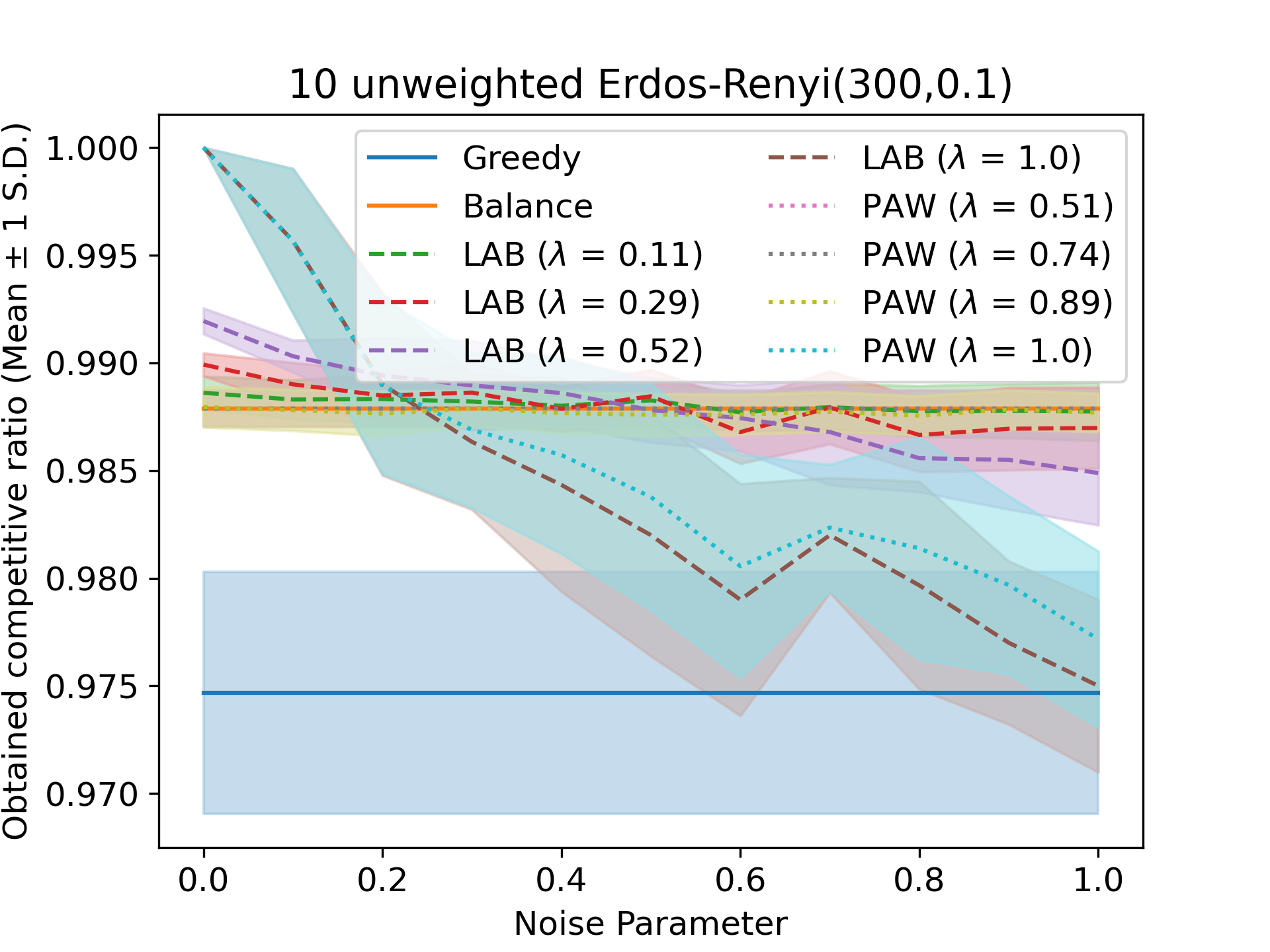}
    \end{subfigure}
    \begin{subfigure}{0.32\textwidth}
        \centering
        \includegraphics[width=\textwidth]{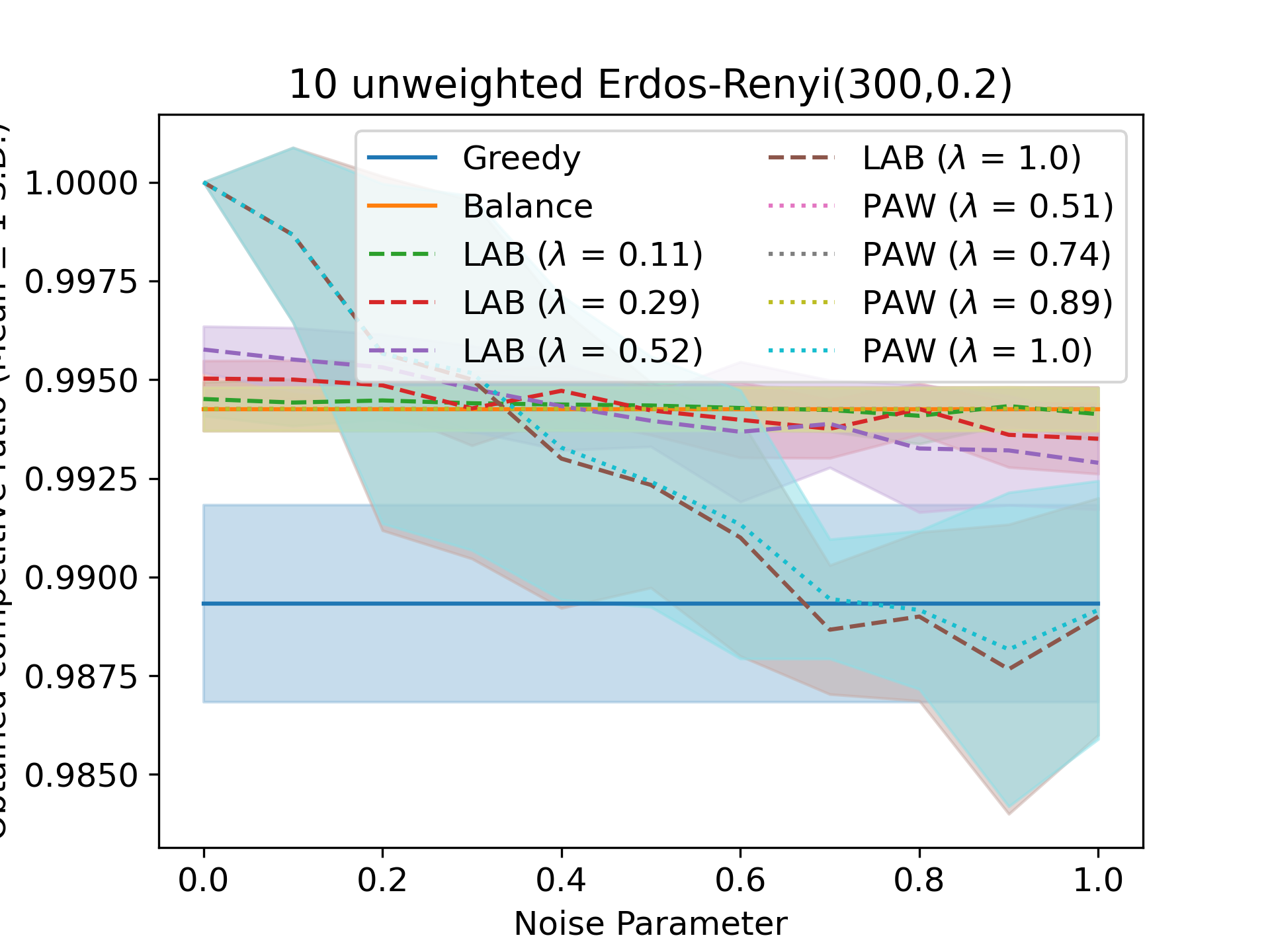}
    \end{subfigure}
    \begin{subfigure}{0.32\textwidth}
        \centering
        \includegraphics[width=\textwidth]{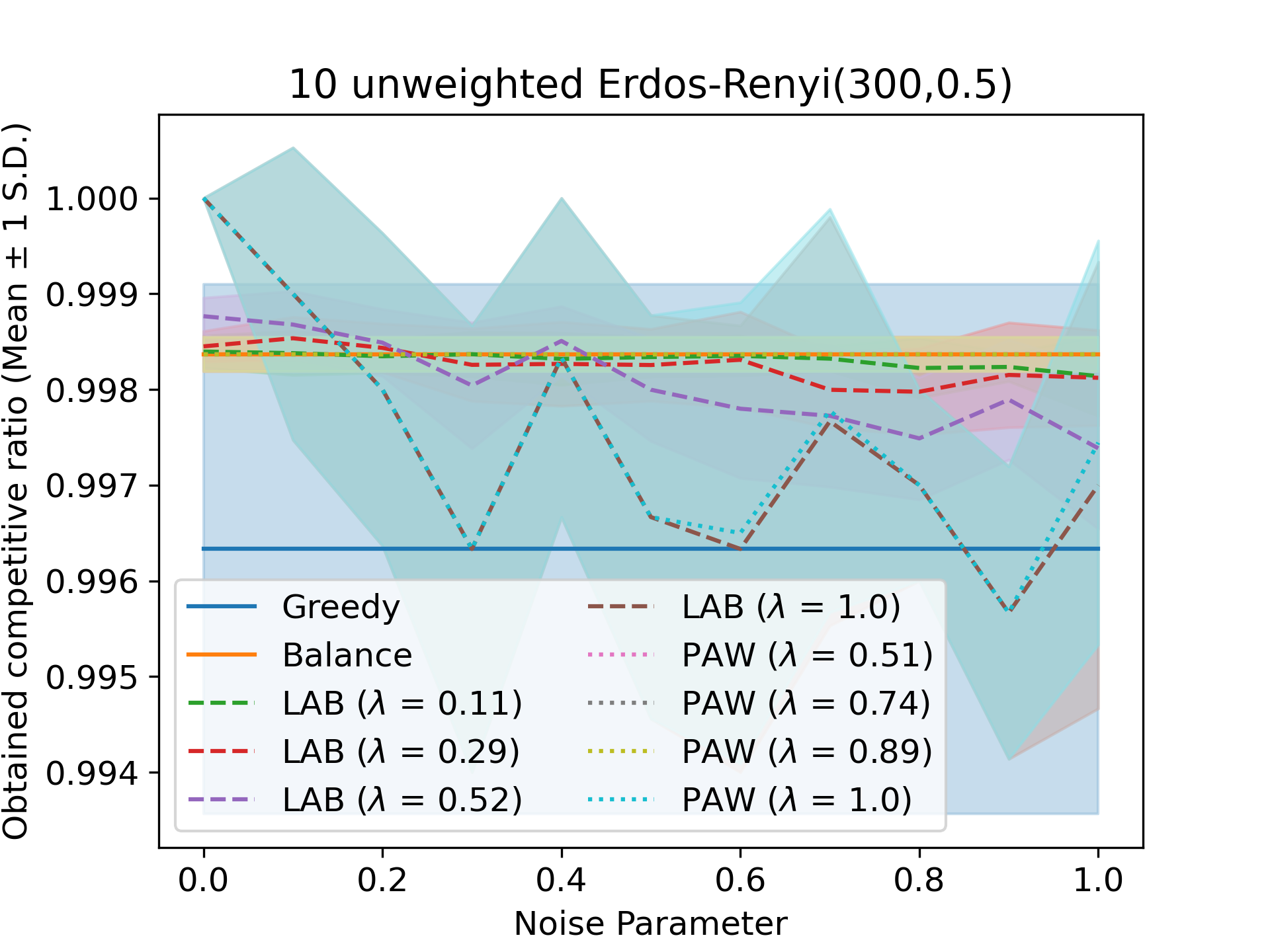}
    \end{subfigure}
    \caption{Empirical results for unweighted Erd\H{o}s-R\'enyi graph instances with $n \in \{100, 200, 300\}$ and $p \in \{0.1, 0.2, 0.5\}$}
    \label{fig:exp-ER-unweighted}
\end{figure}

\begin{figure}[htb]
    \centering
    \begin{subfigure}{0.32\textwidth}
        \centering
        \includegraphics[width=\textwidth]{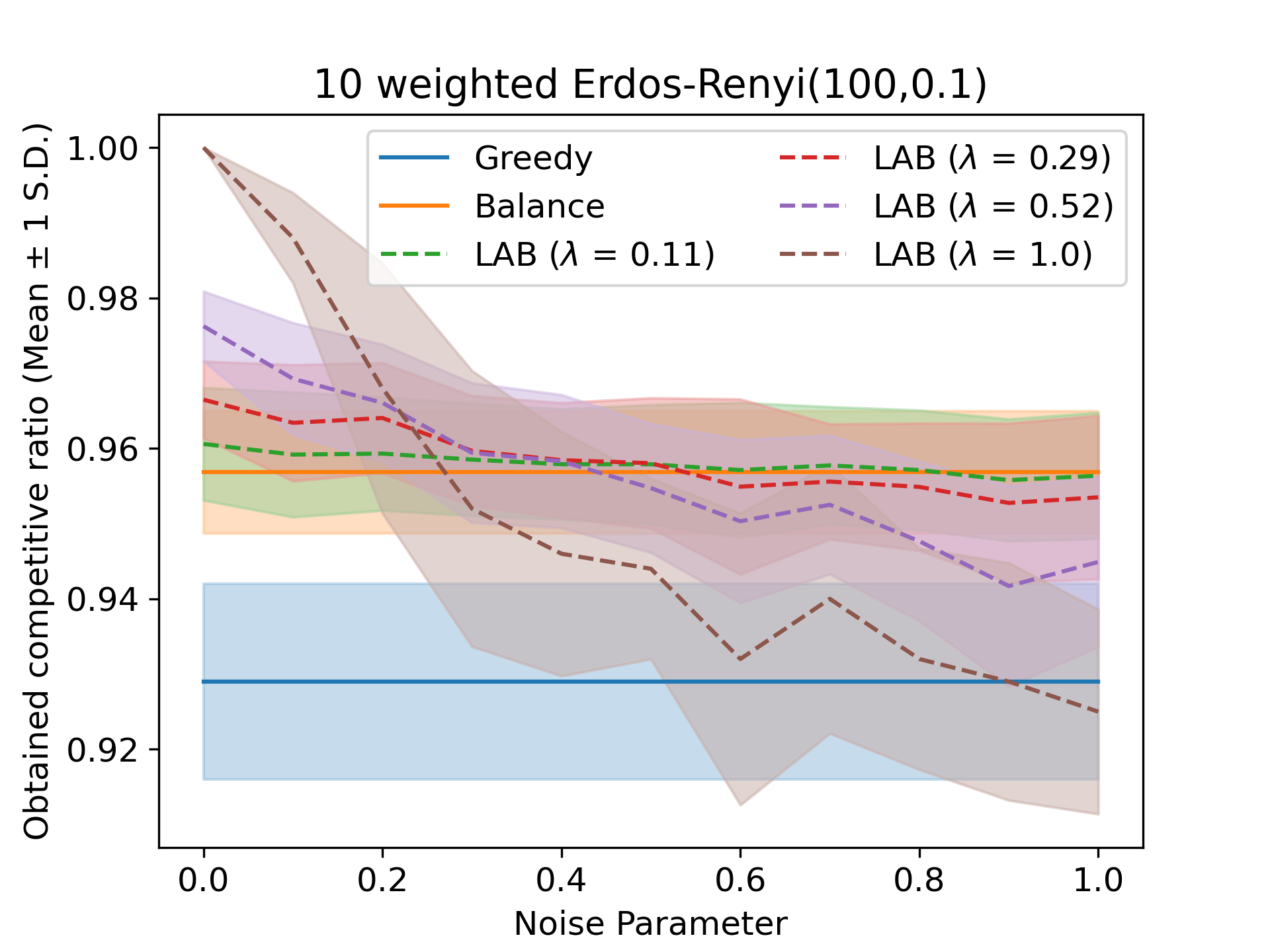}
    \end{subfigure}
    \begin{subfigure}{0.32\textwidth}
        \centering
        \includegraphics[width=\textwidth]{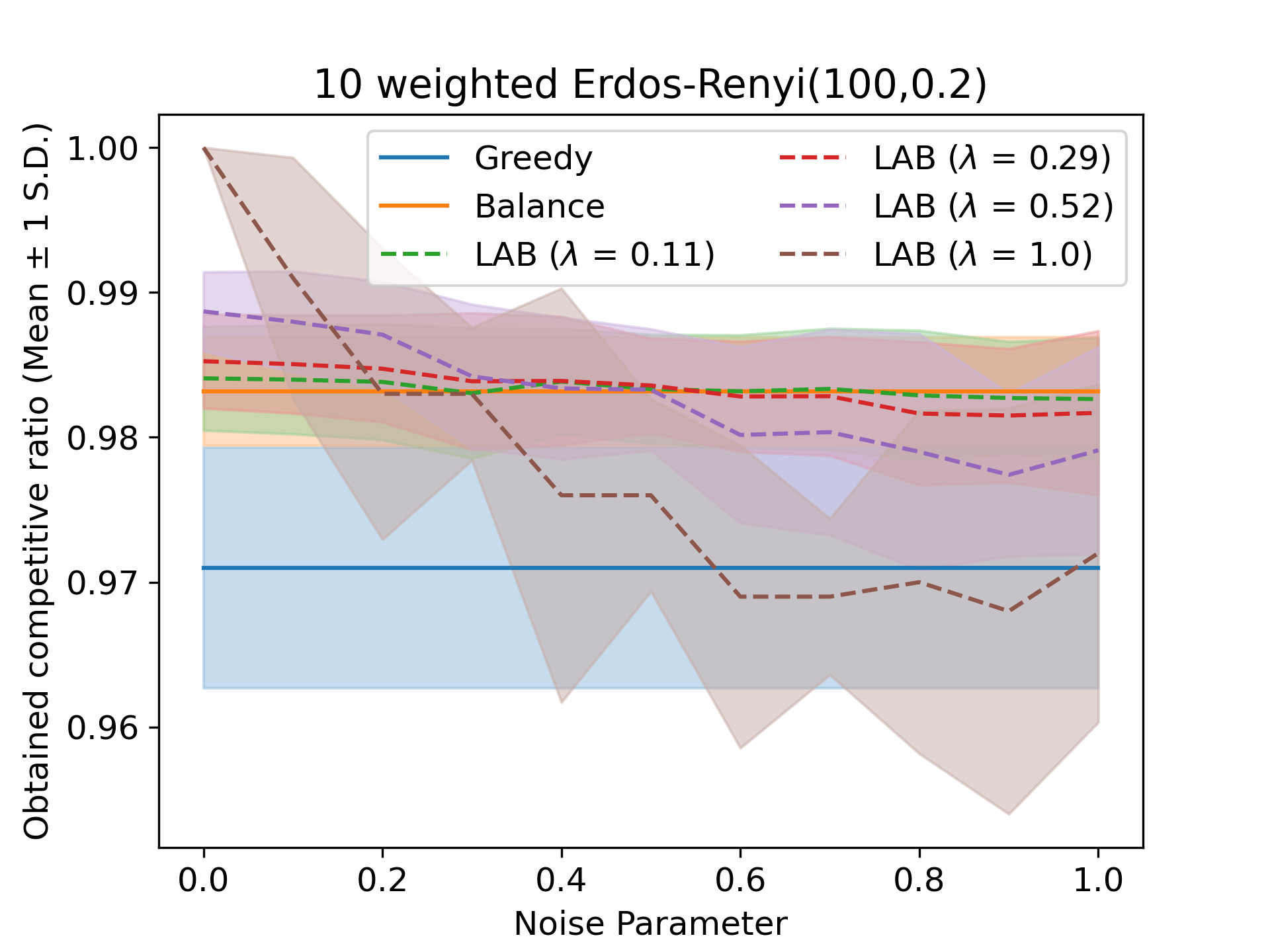}
    \end{subfigure}
    \begin{subfigure}{0.32\textwidth}
        \centering
        \includegraphics[width=\textwidth]{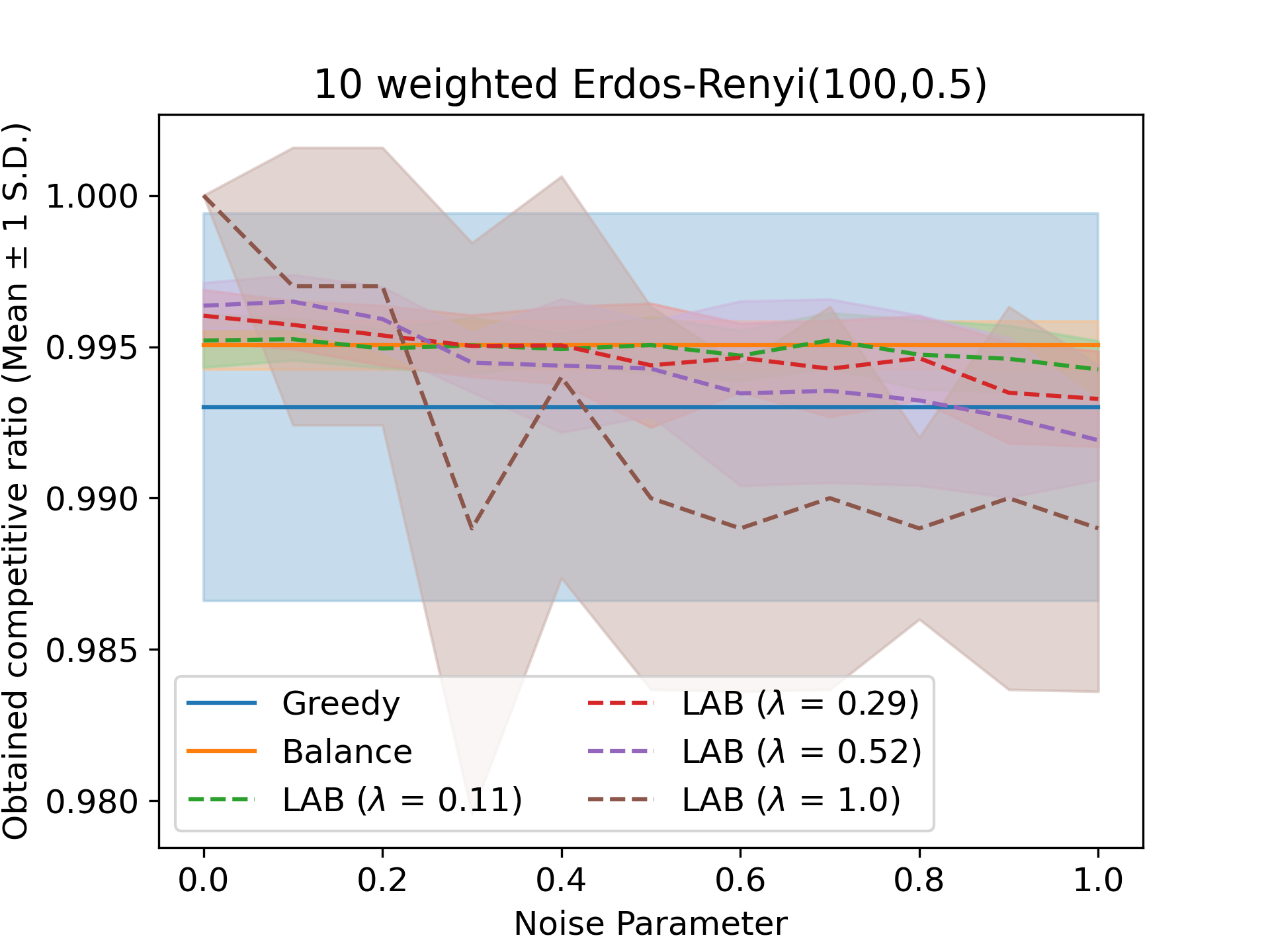}
    \end{subfigure}
    \\
    \vspace{10pt}
    \begin{subfigure}{0.32\textwidth}
        \centering
        \includegraphics[width=\textwidth]{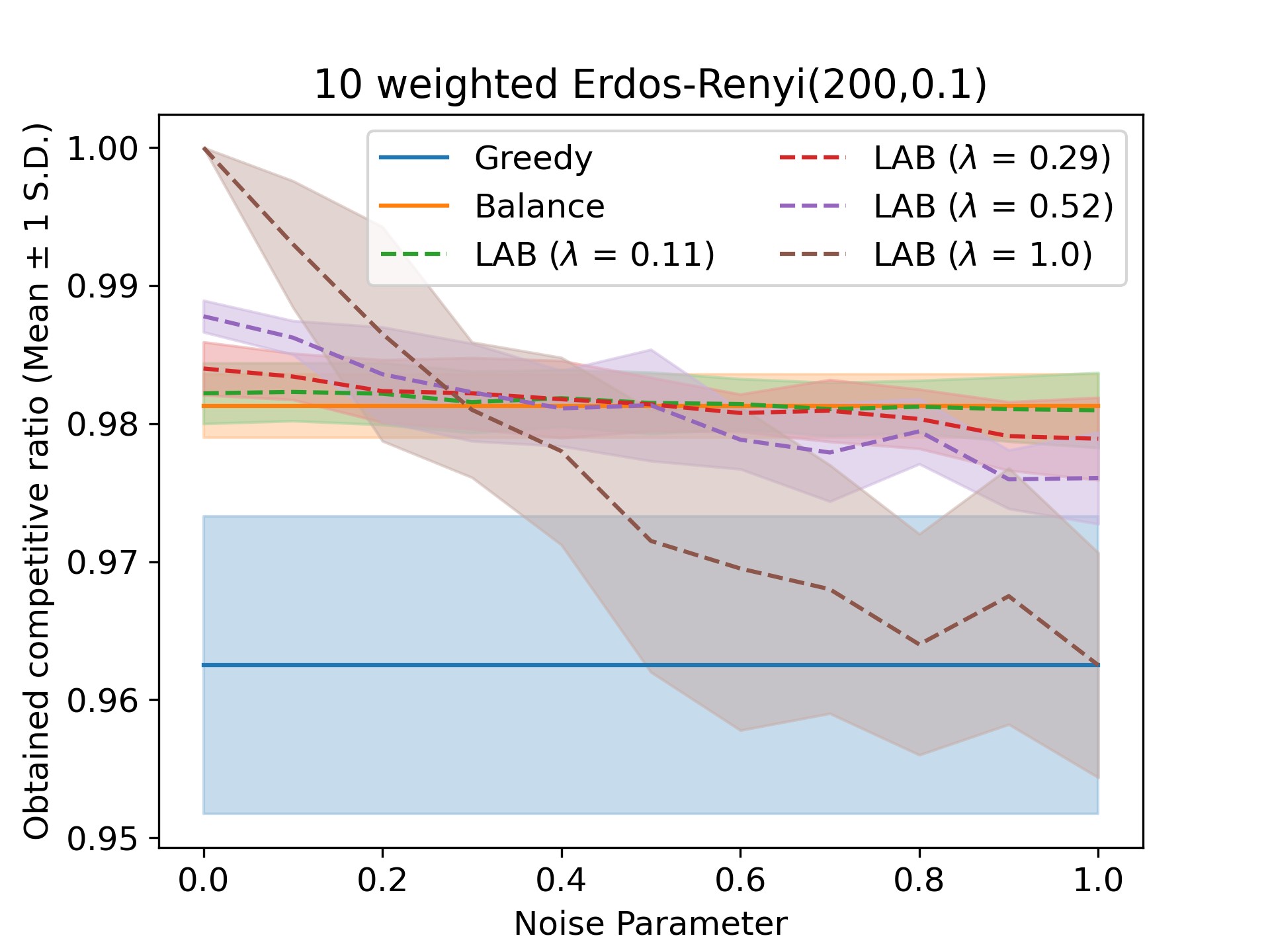}
    \end{subfigure}
    \begin{subfigure}{0.32\textwidth}
        \centering
        \includegraphics[width=\textwidth]{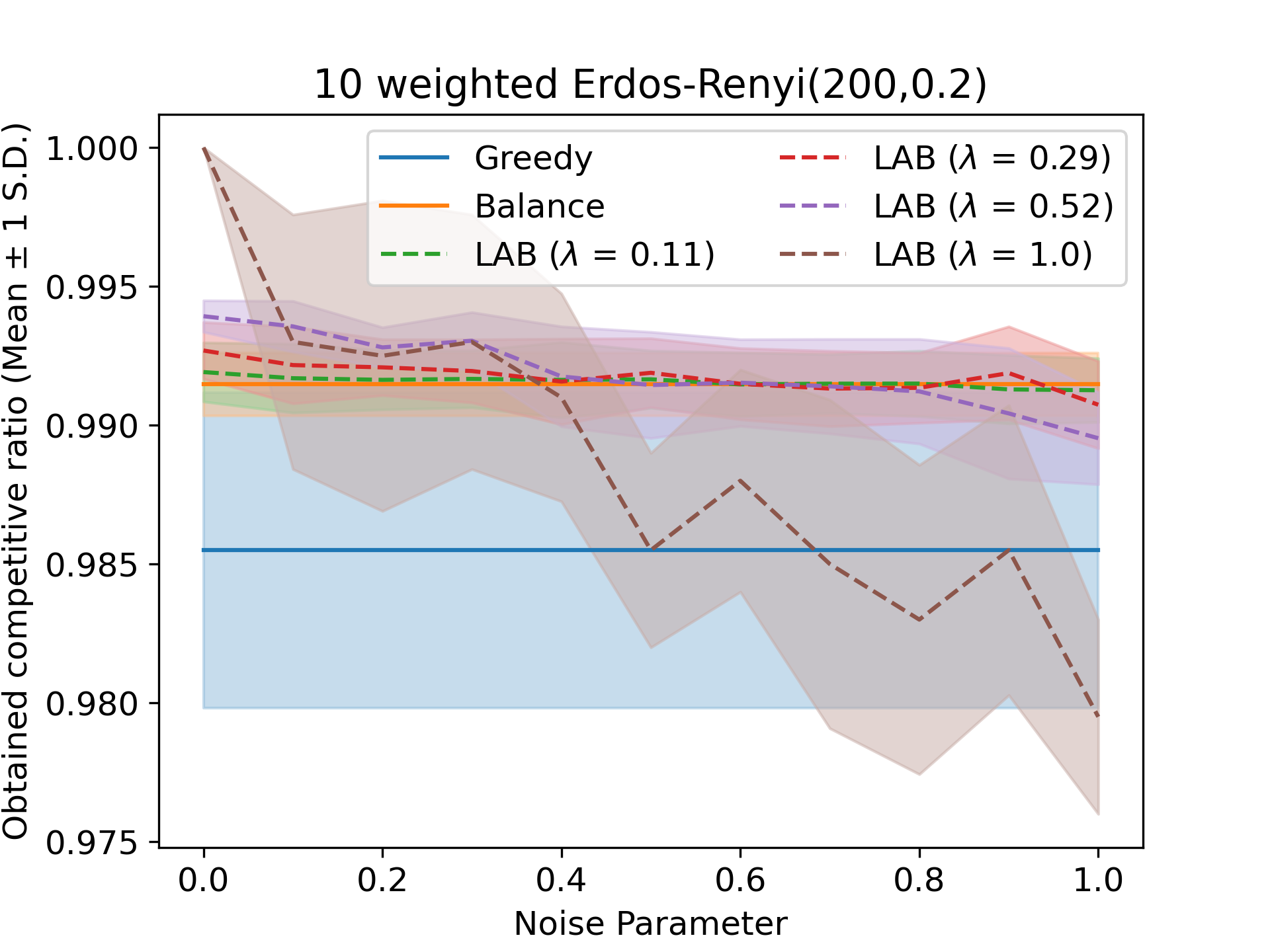}
    \end{subfigure}
    \begin{subfigure}{0.32\textwidth}
        \centering
        \includegraphics[width=\textwidth]{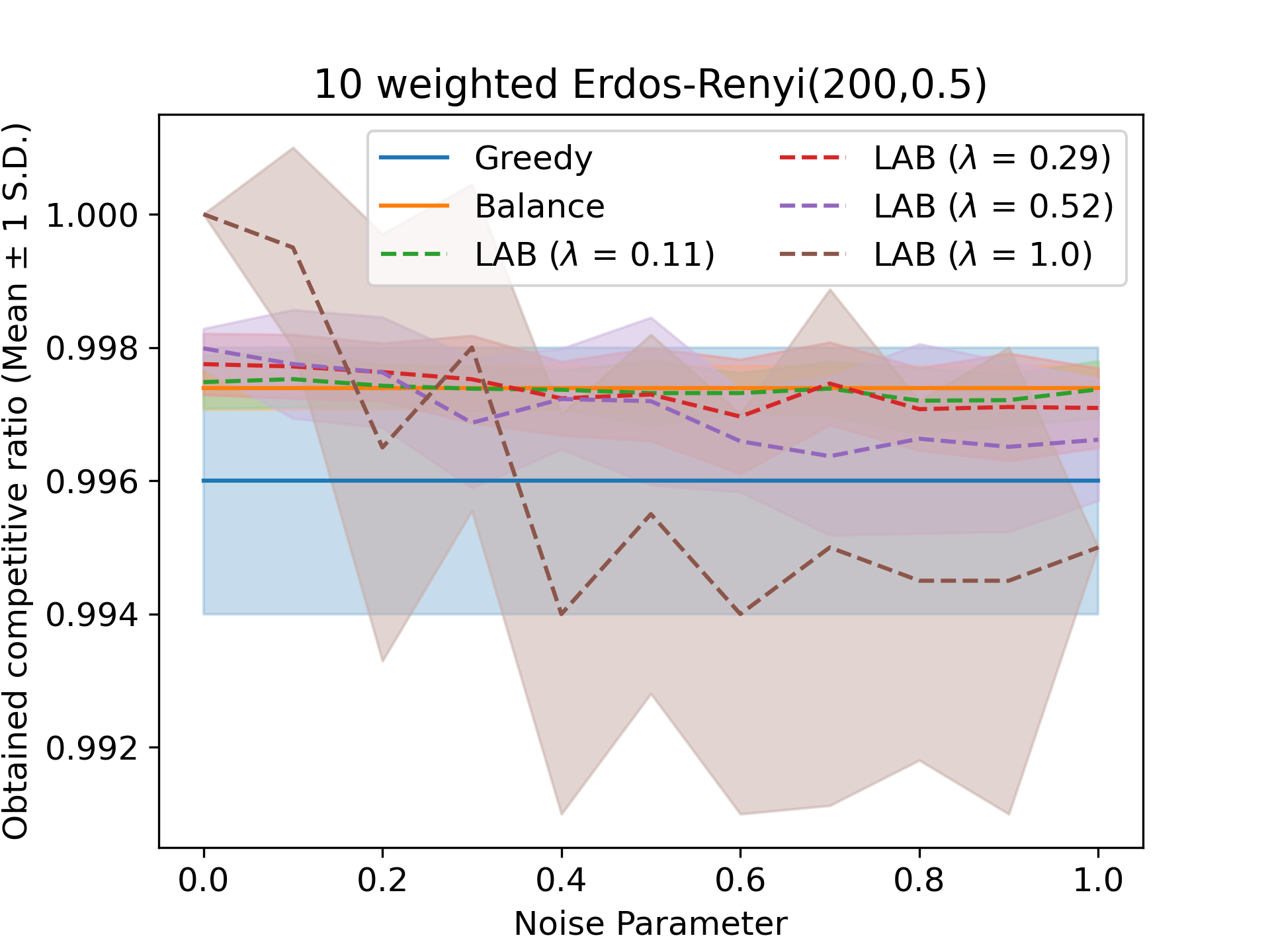}
    \end{subfigure}
    \\
    \vspace{10pt}
    \begin{subfigure}{0.32\textwidth}
        \centering
        \includegraphics[width=\textwidth]{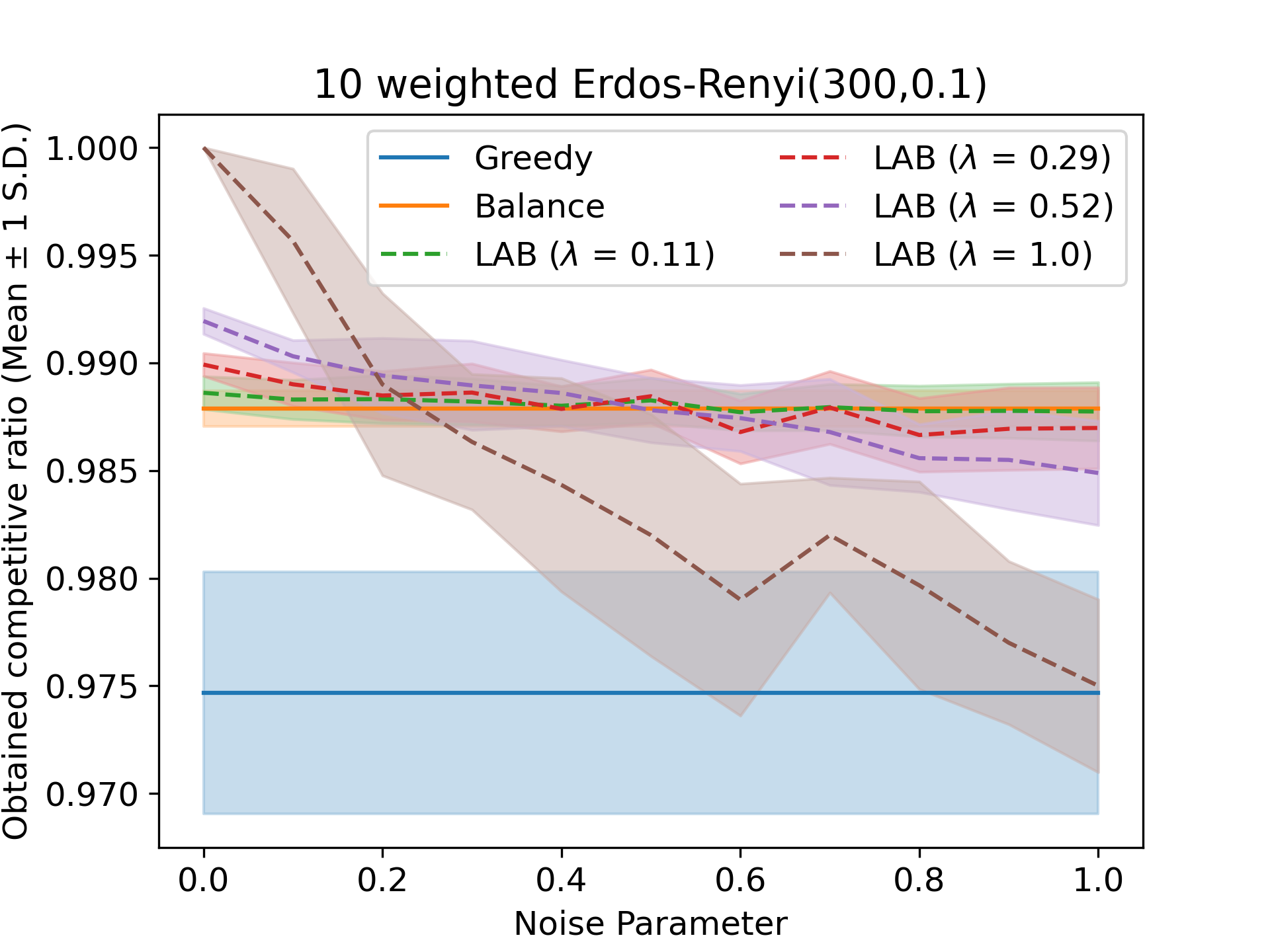}
    \end{subfigure}
    \begin{subfigure}{0.32\textwidth}
        \centering
        \includegraphics[width=\textwidth]{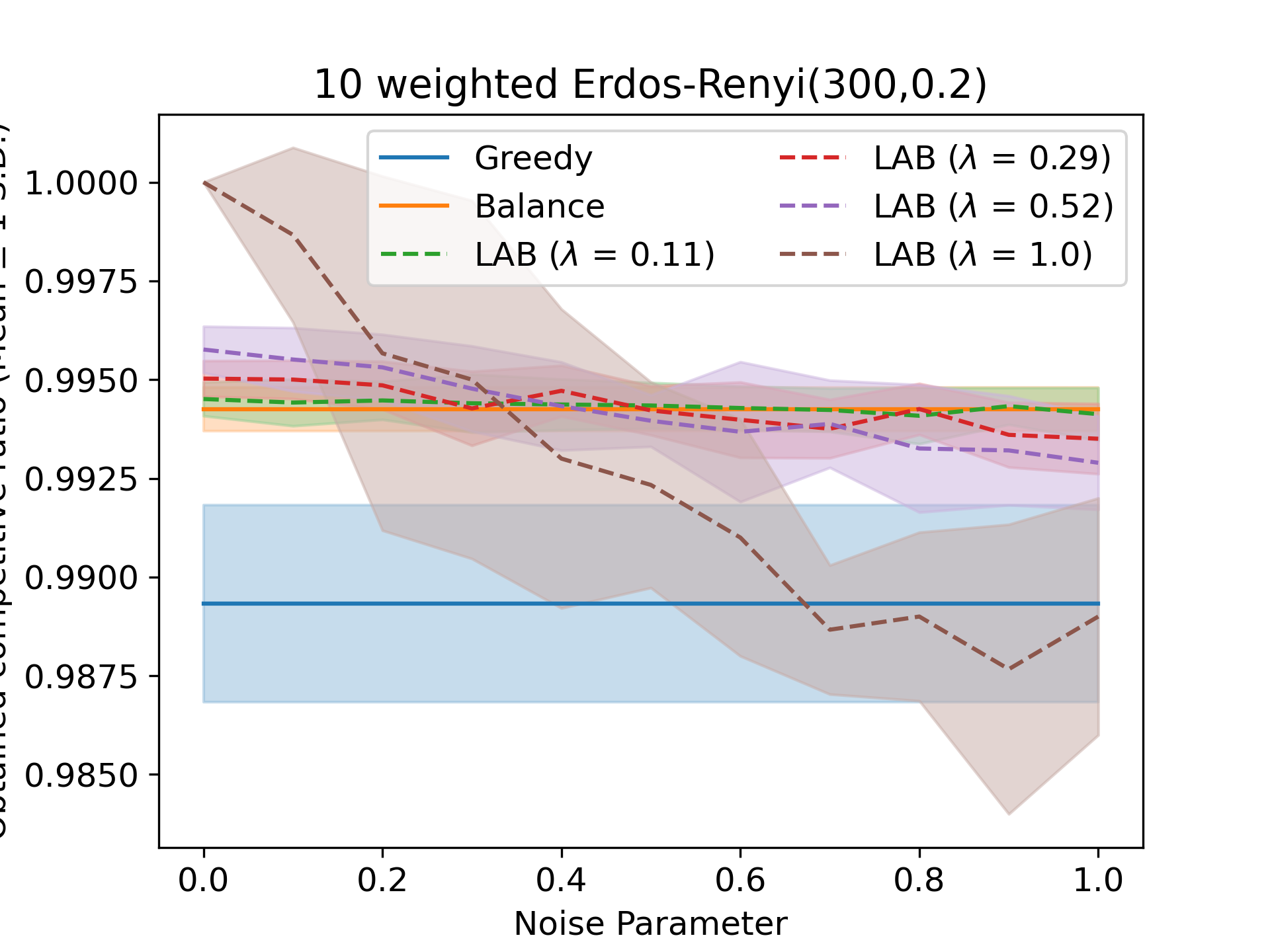}
    \end{subfigure}
    \begin{subfigure}{0.32\textwidth}
        \centering
        \includegraphics[width=\textwidth]{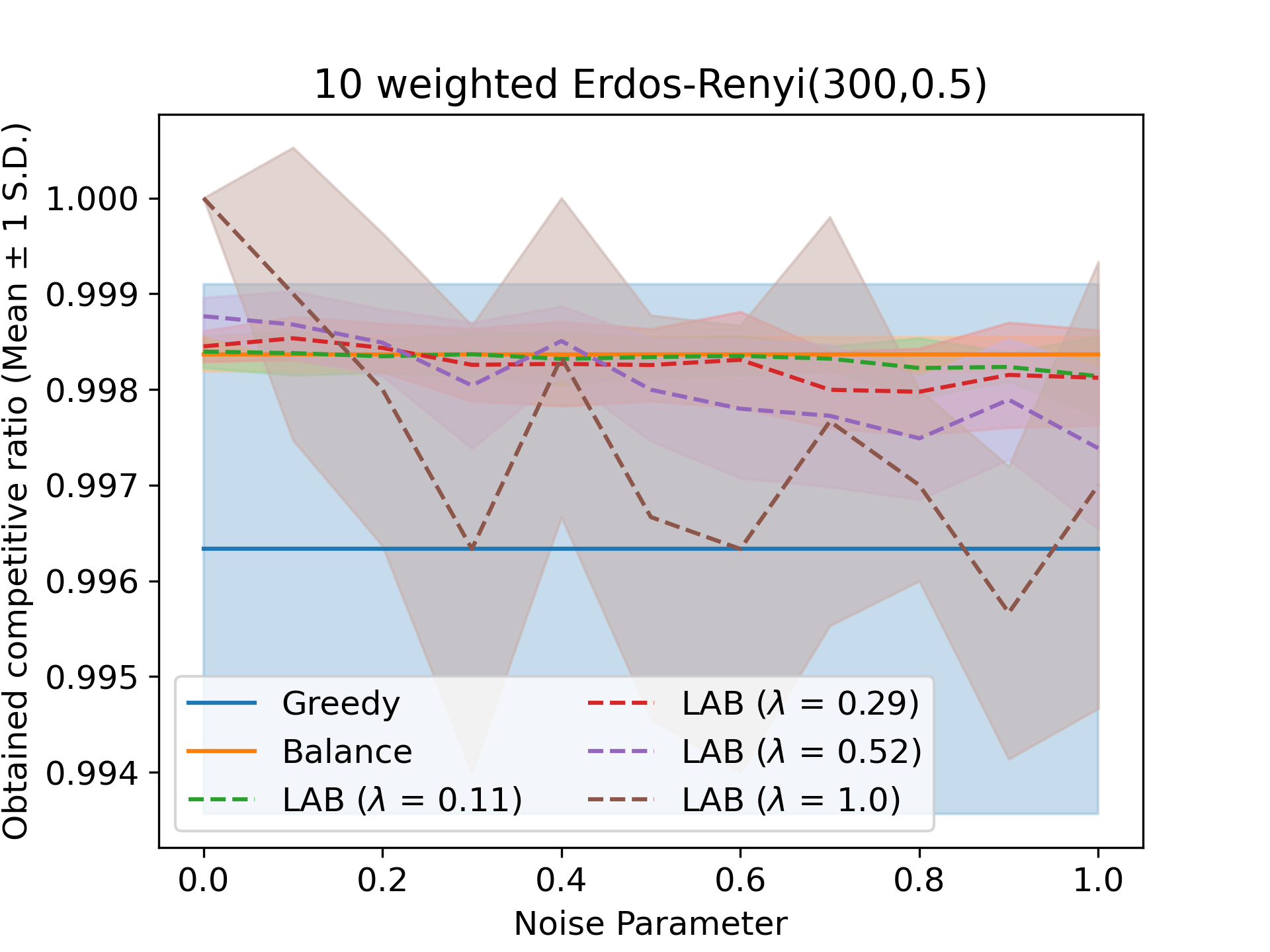}
    \end{subfigure}
    \caption{Empirical results for unweighted Erd\H{o}s-R\'enyi graph instances with $n \in \{100, 200, 300\}$ and $p \in \{0.1, 0.2, 0.5\}$}
    \label{fig:exp-ER-weighted}
\end{figure}

\begin{figure}[htb]
    \centering
    \begin{subfigure}{0.32\textwidth}
        \centering
        \includegraphics[width=\textwidth]{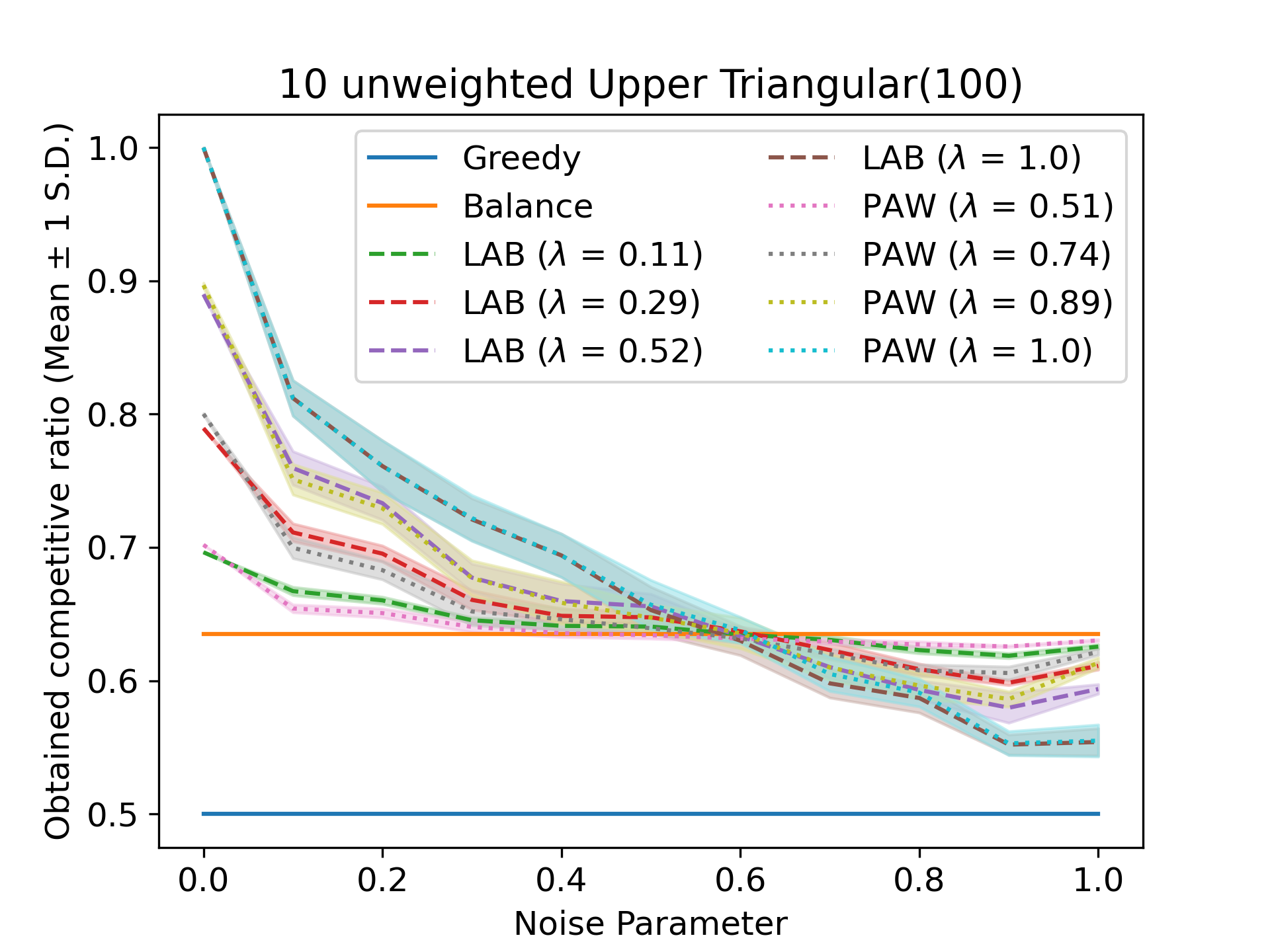}
    \end{subfigure}
    \begin{subfigure}{0.32\textwidth}
        \centering
        \includegraphics[width=\textwidth]{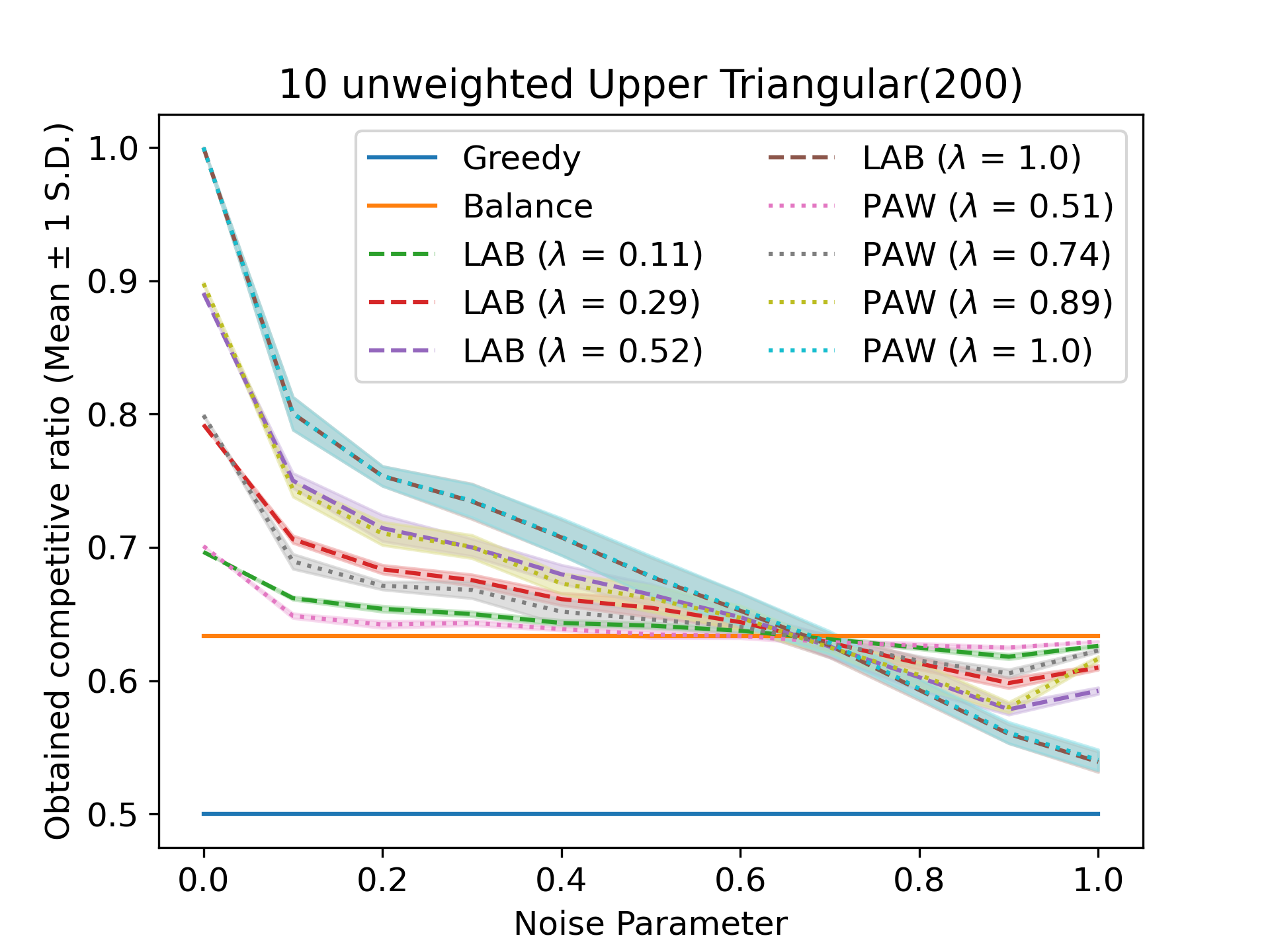}
    \end{subfigure}
    \begin{subfigure}{0.32\textwidth}
        \centering
        \includegraphics[width=\textwidth]{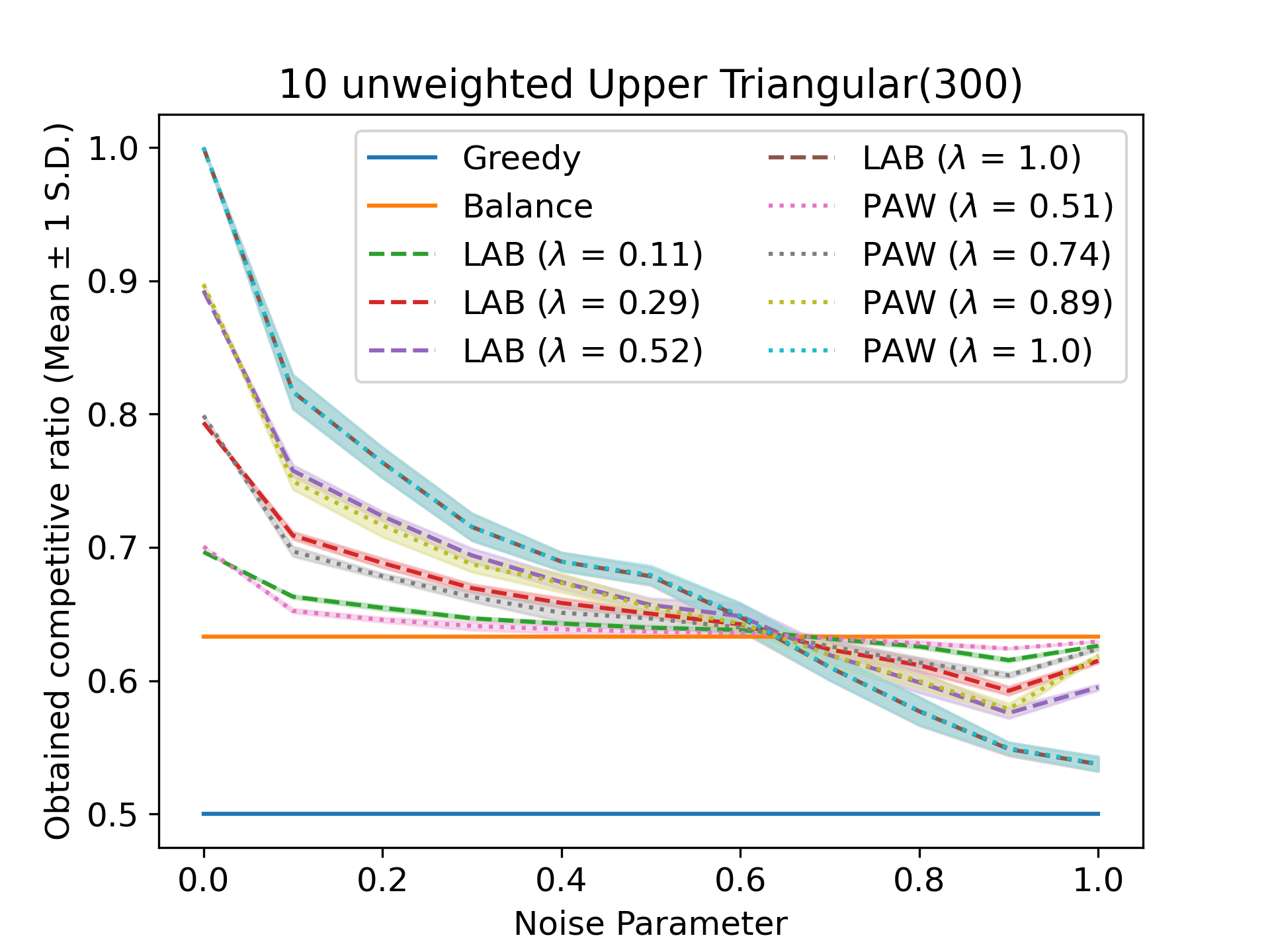}
    \end{subfigure}
    \caption{Empirical results for unweighted Upper Triangular graph instances with $n \in \{100, 200, 300\}$.}
    \label{fig:exp-UT-unweighted}
\end{figure}

\begin{figure}[htb]
    \centering
    \begin{subfigure}{0.32\textwidth}
        \centering
        \includegraphics[width=\textwidth]{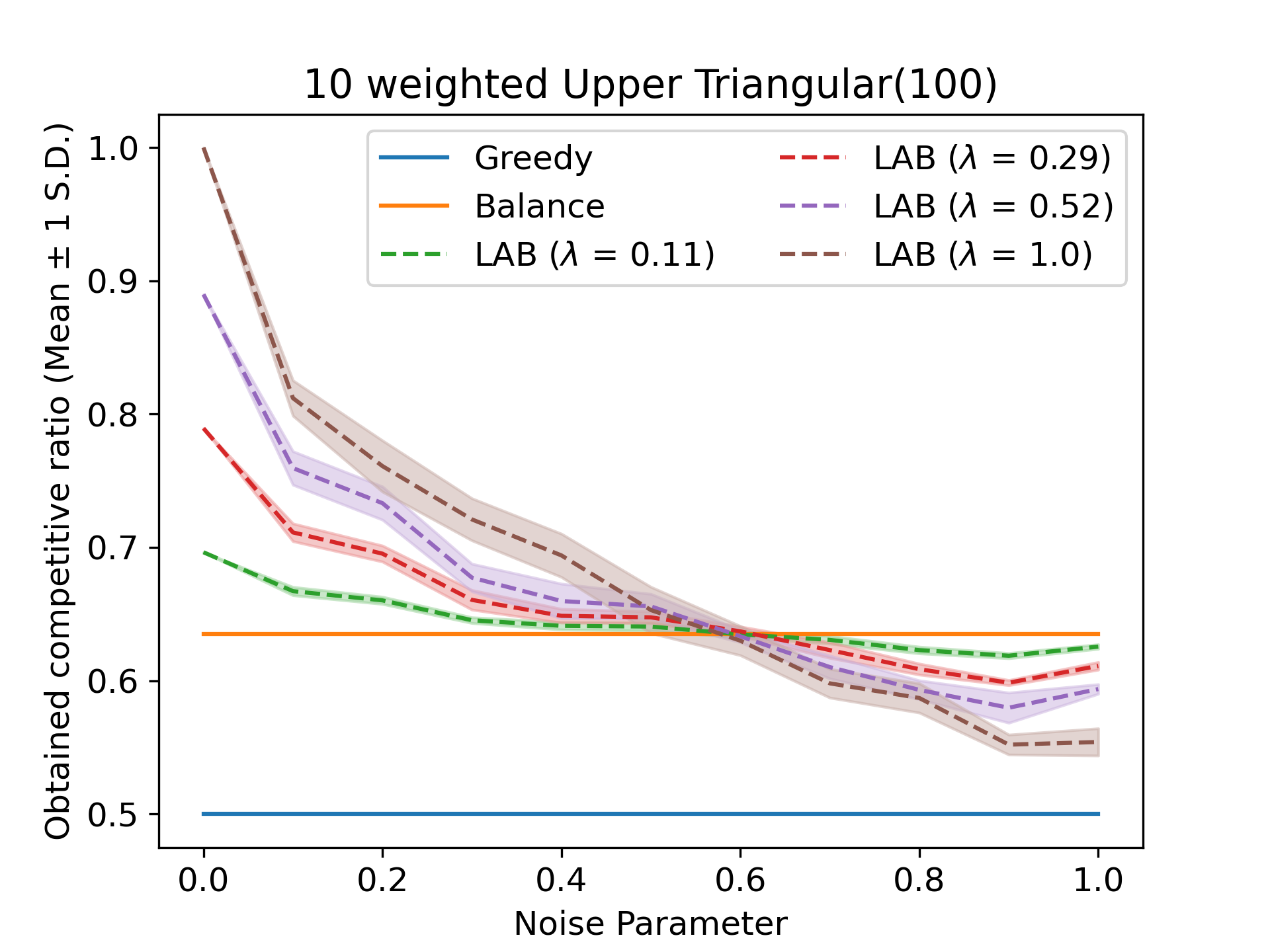}
    \end{subfigure}
    \begin{subfigure}{0.32\textwidth}
        \centering
        \includegraphics[width=\textwidth]{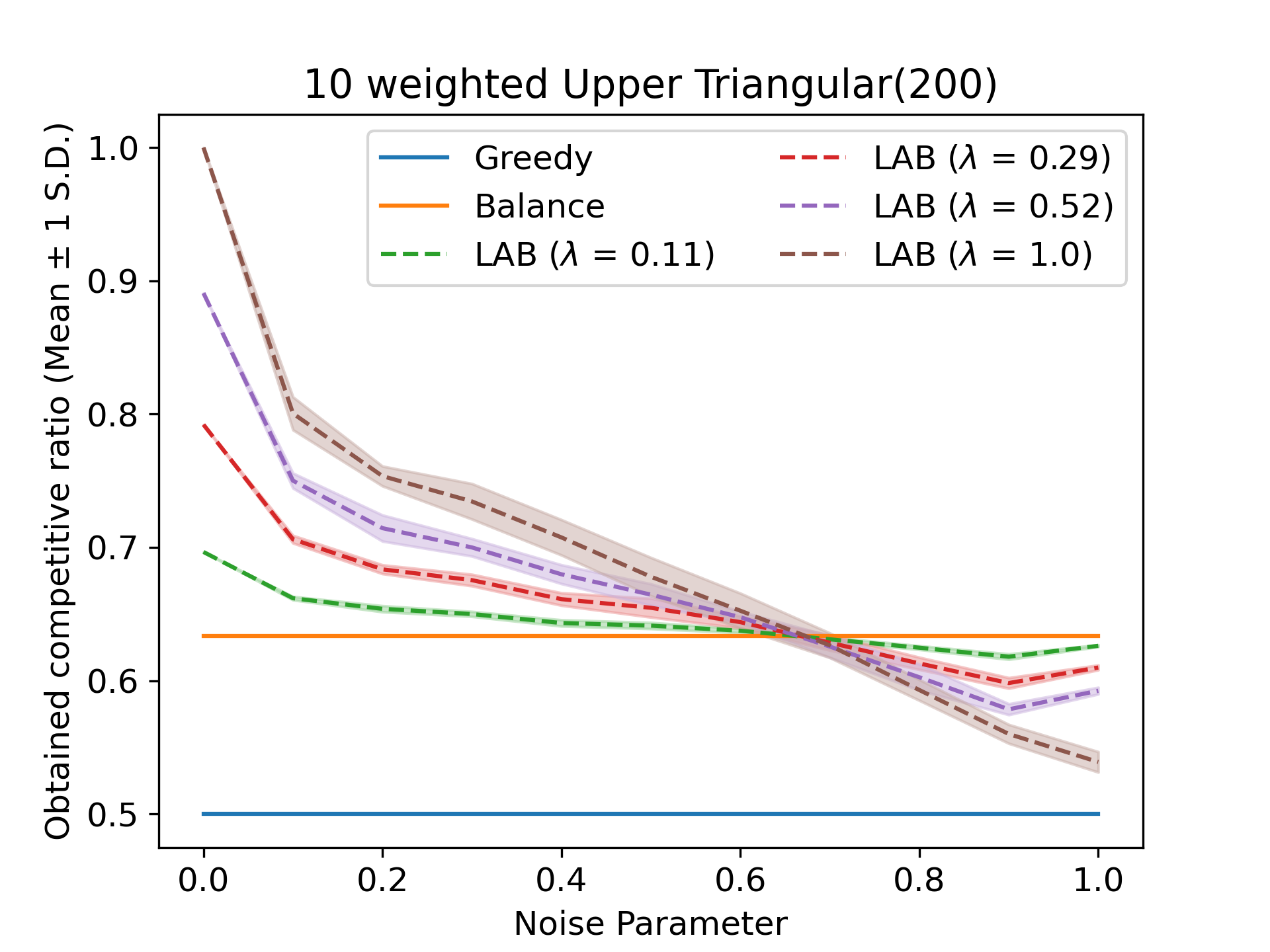}
    \end{subfigure}
    \begin{subfigure}{0.32\textwidth}
        \centering
        \includegraphics[width=\textwidth]{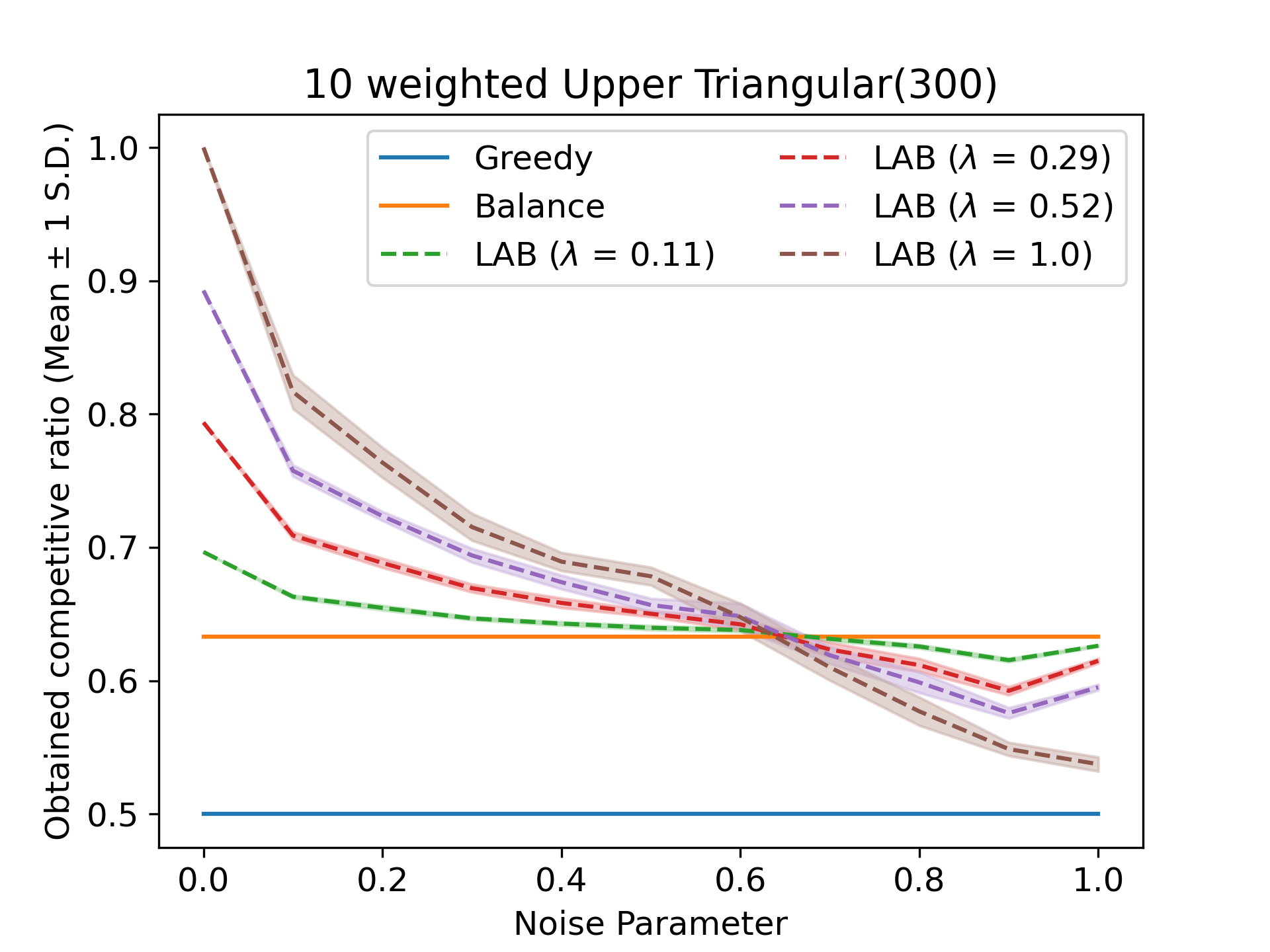}
    \end{subfigure}
    \caption{Empirical results for weighted Upper Triangular graph instances with $n \in \{100, 200, 300\}$.}
    \label{fig:exp-UT-weighted}
\end{figure}

\begin{figure}[htb]
    \centering
    \begin{subfigure}{0.32\textwidth}
        \centering
        \includegraphics[width=\textwidth]{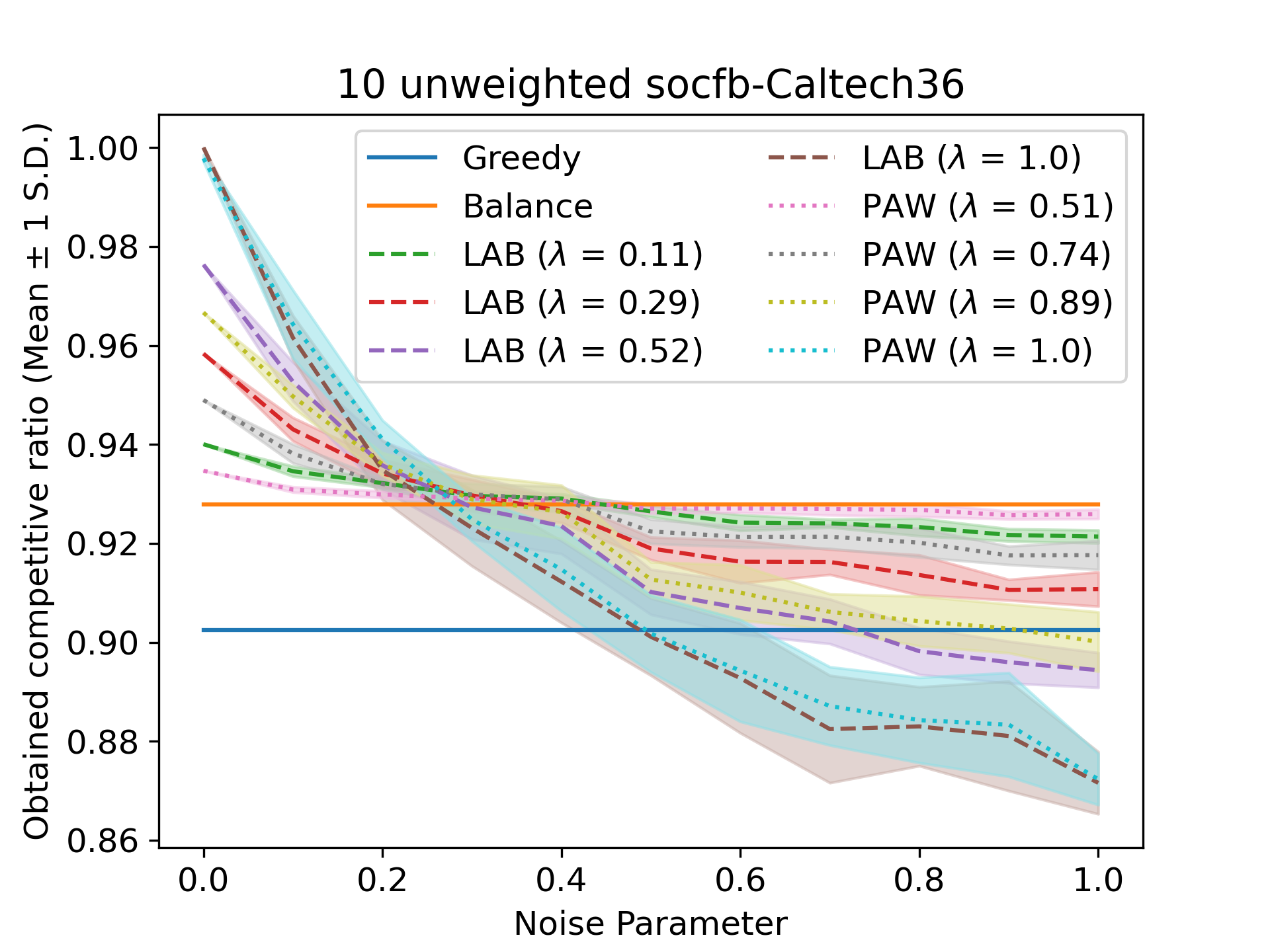}
    \end{subfigure}
    \begin{subfigure}{0.32\textwidth}
        \centering
        \includegraphics[width=\textwidth]{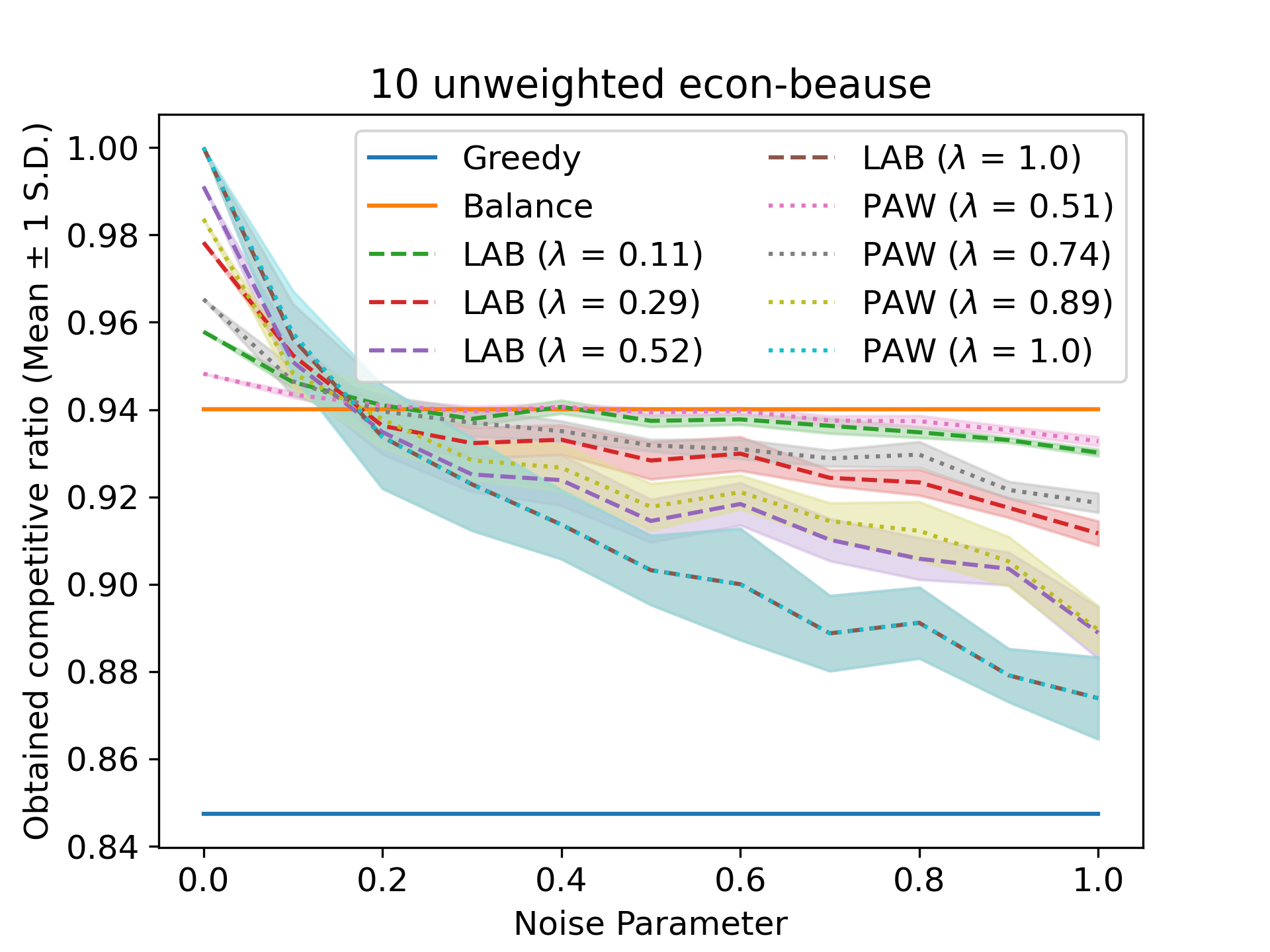}
    \end{subfigure}
    \begin{subfigure}{0.32\textwidth}
        \centering
        \includegraphics[width=\textwidth]{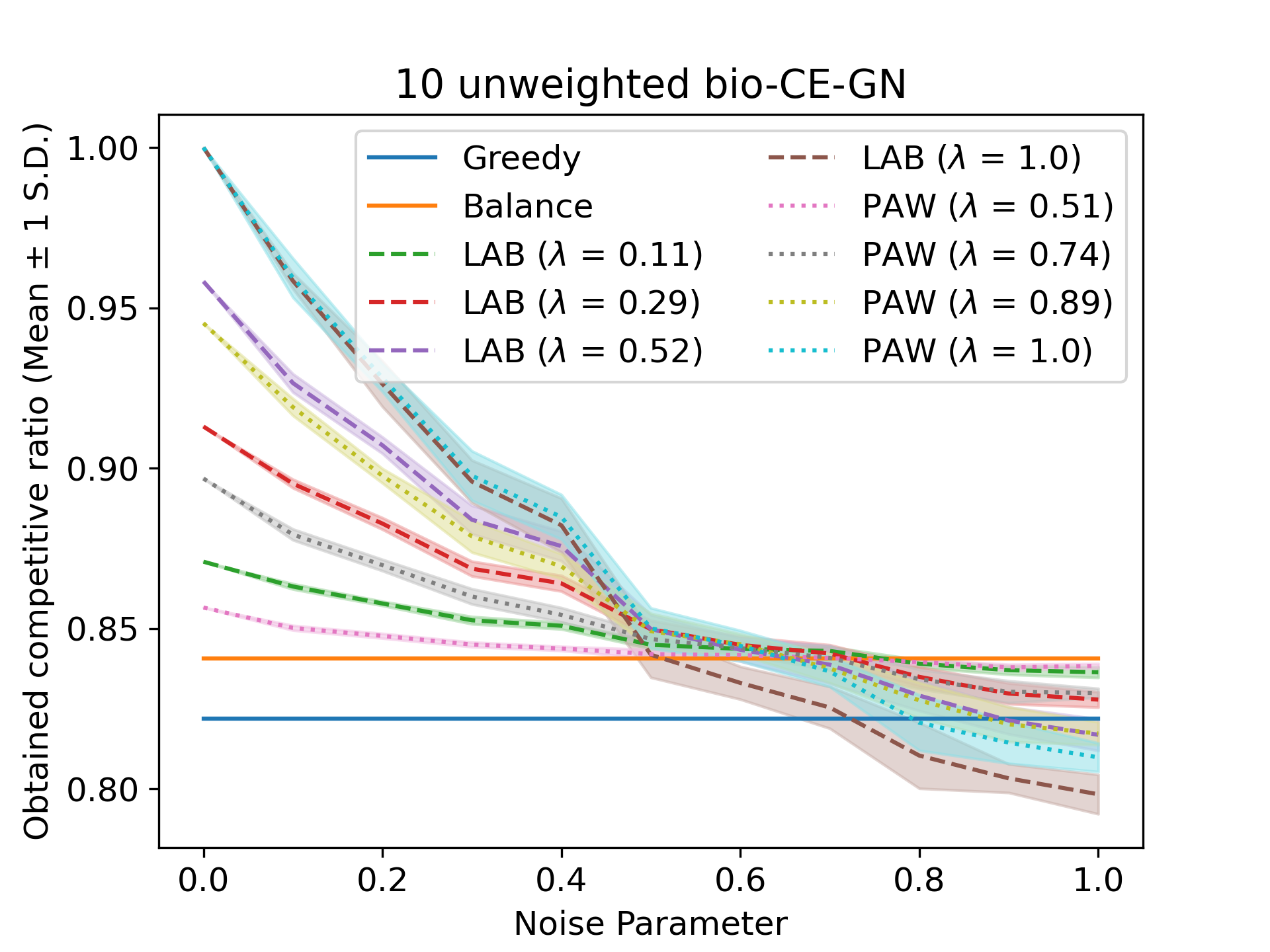}
    \end{subfigure}
    \\
    \begin{subfigure}{0.32\textwidth}
        \centering
        \includegraphics[width=\textwidth]{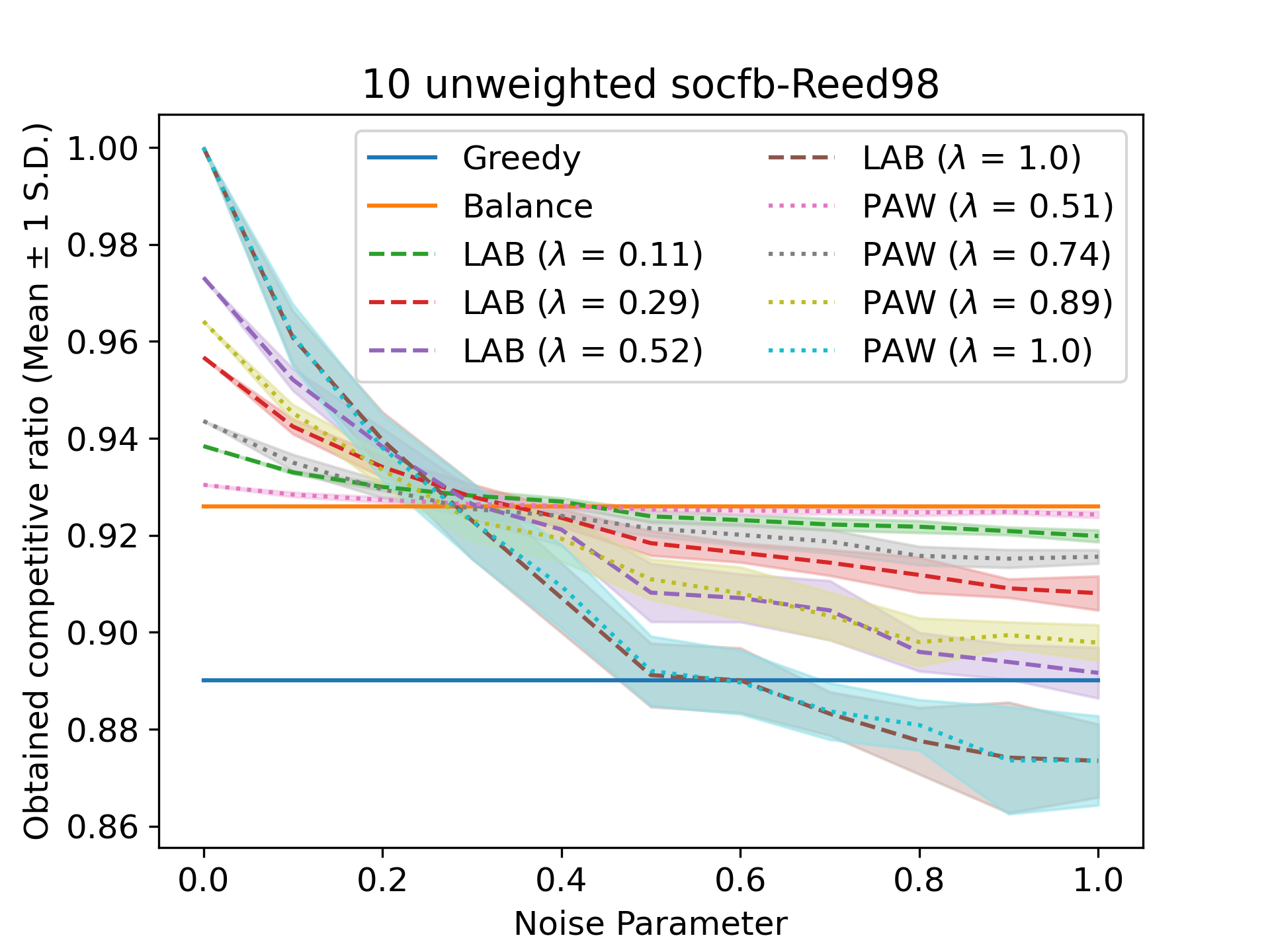}
    \end{subfigure}
    \begin{subfigure}{0.32\textwidth}
        \centering
        \includegraphics[width=\textwidth]{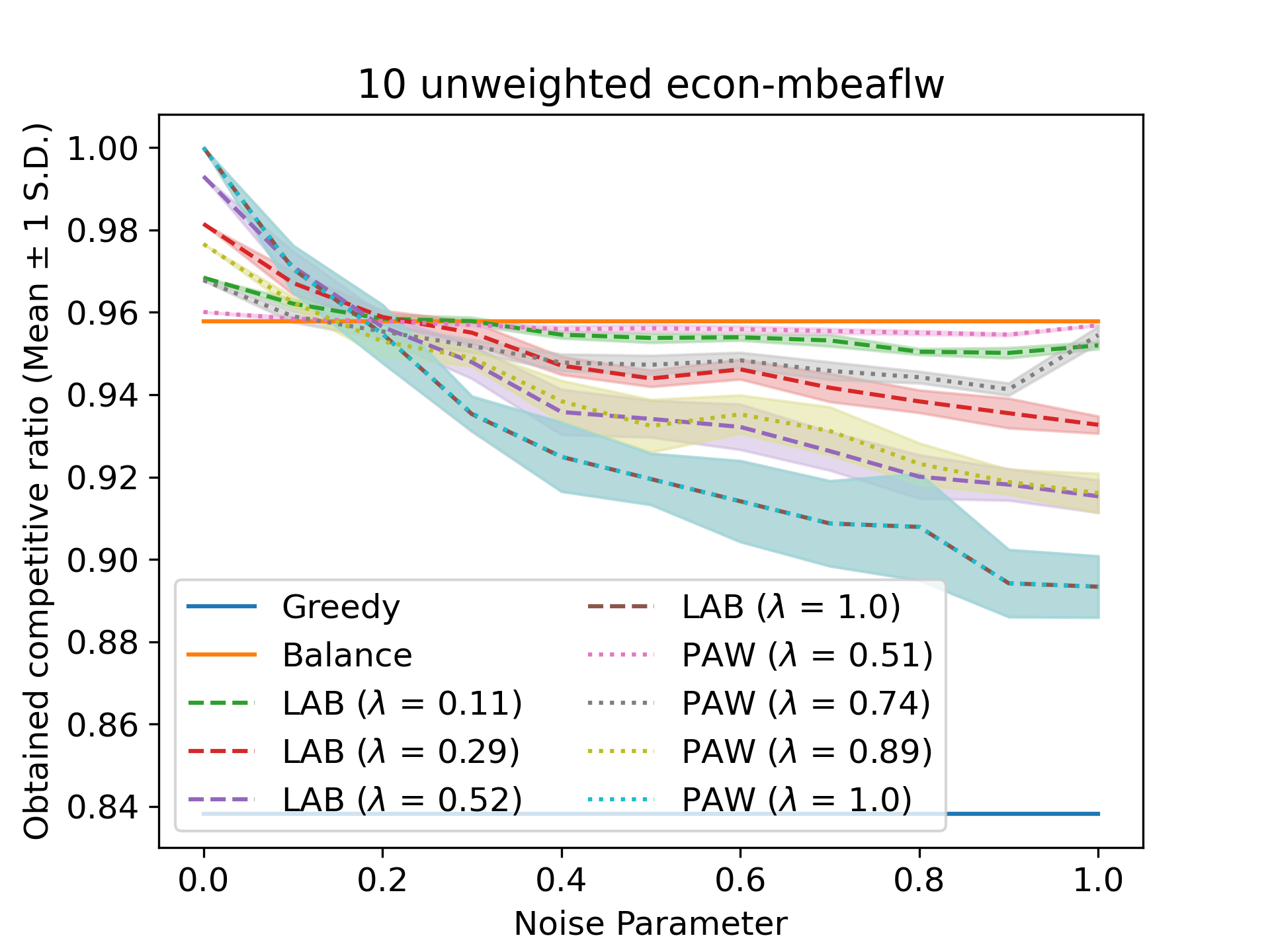}
    \end{subfigure}
    \begin{subfigure}{0.32\textwidth}
        \centering
        \includegraphics[width=\textwidth]{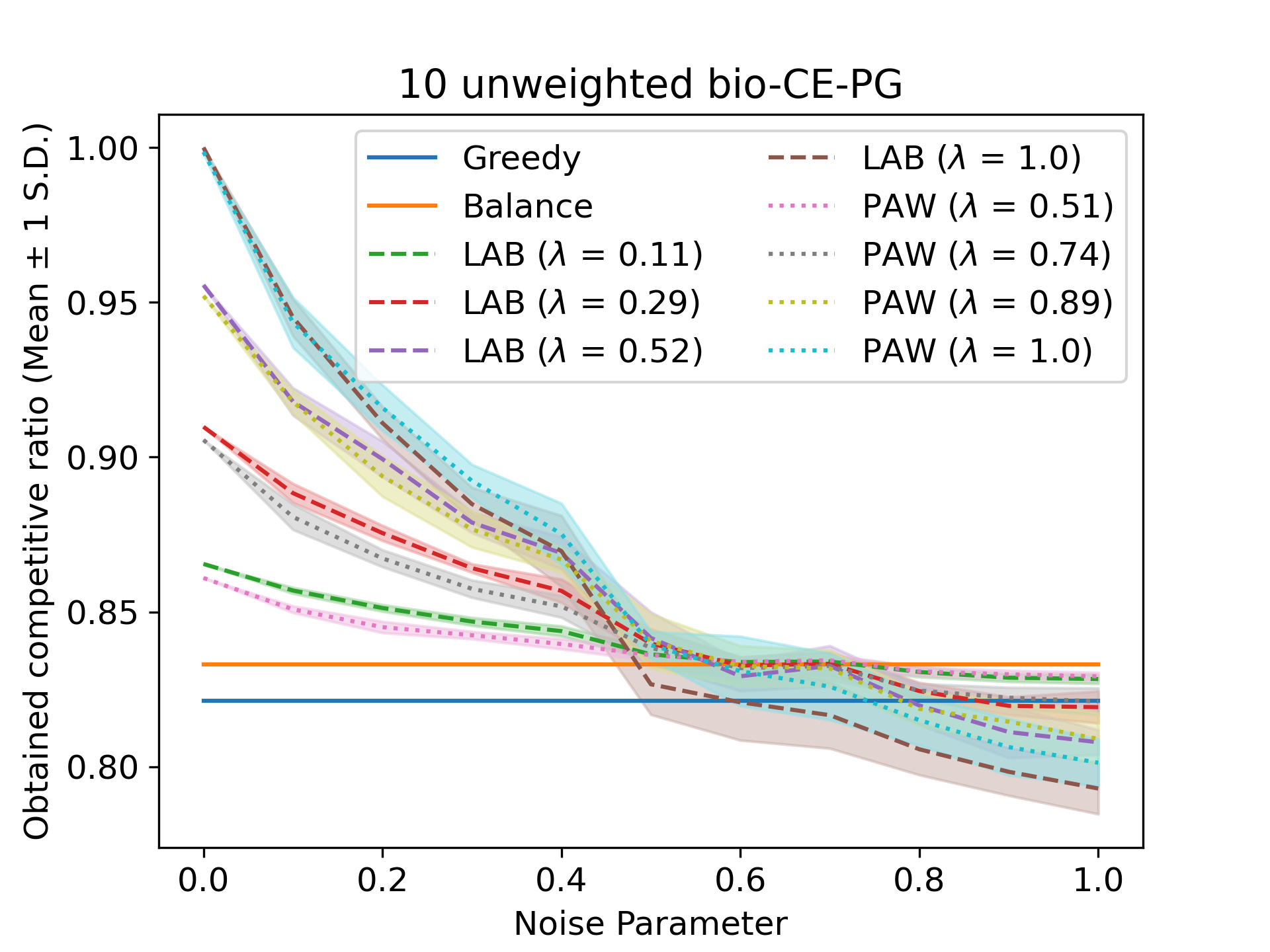}
    \end{subfigure}
    \caption{Empirical results for unweighted real-world graph instances.}
    \label{fig:exp-real-unweighted}
\end{figure}

\begin{figure}[htb]
    \centering
    \begin{subfigure}{0.32\textwidth}
        \centering
        \includegraphics[width=\textwidth]{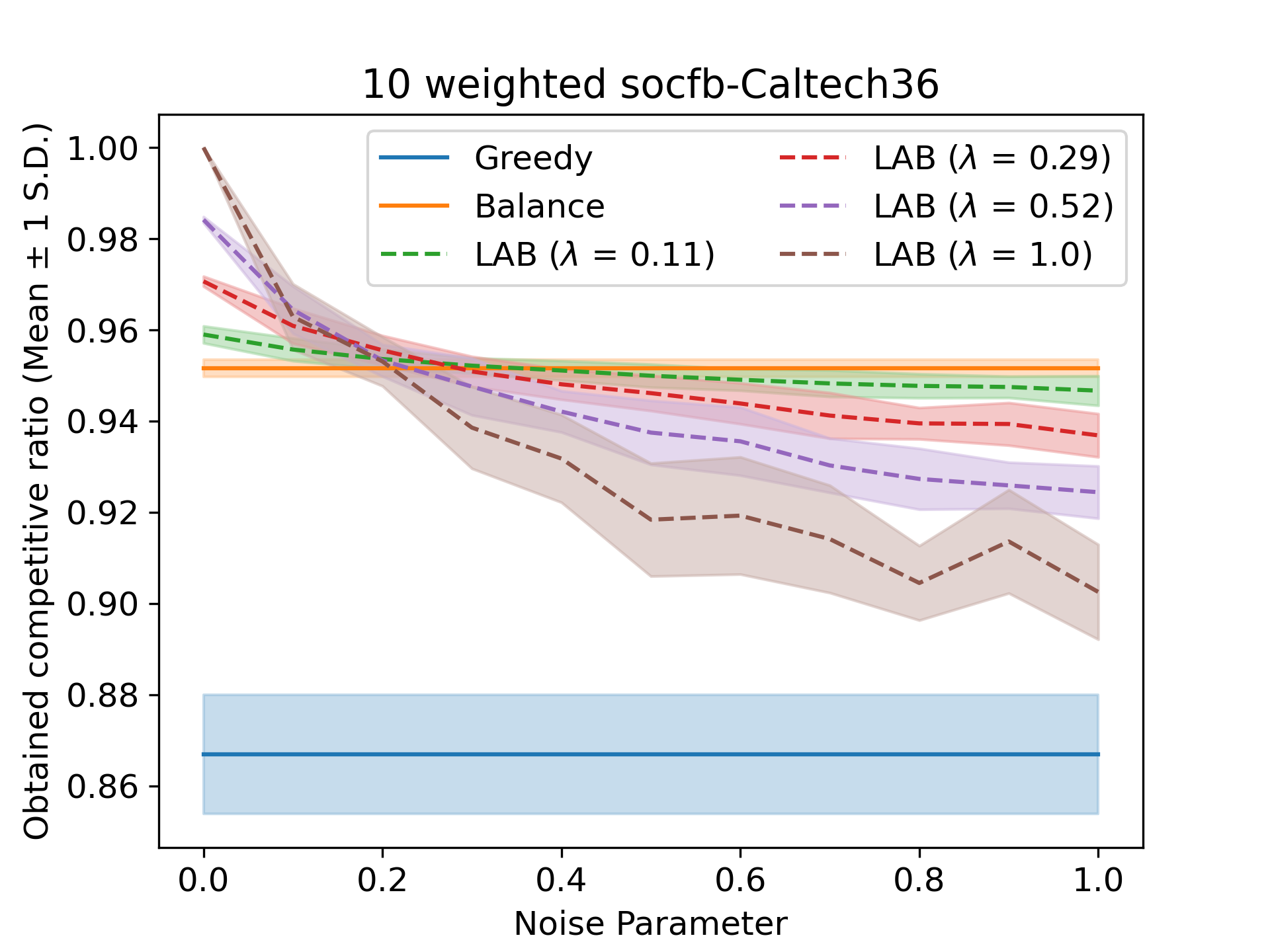}
    \end{subfigure}
    \begin{subfigure}{0.32\textwidth}
        \centering
        \includegraphics[width=\textwidth]{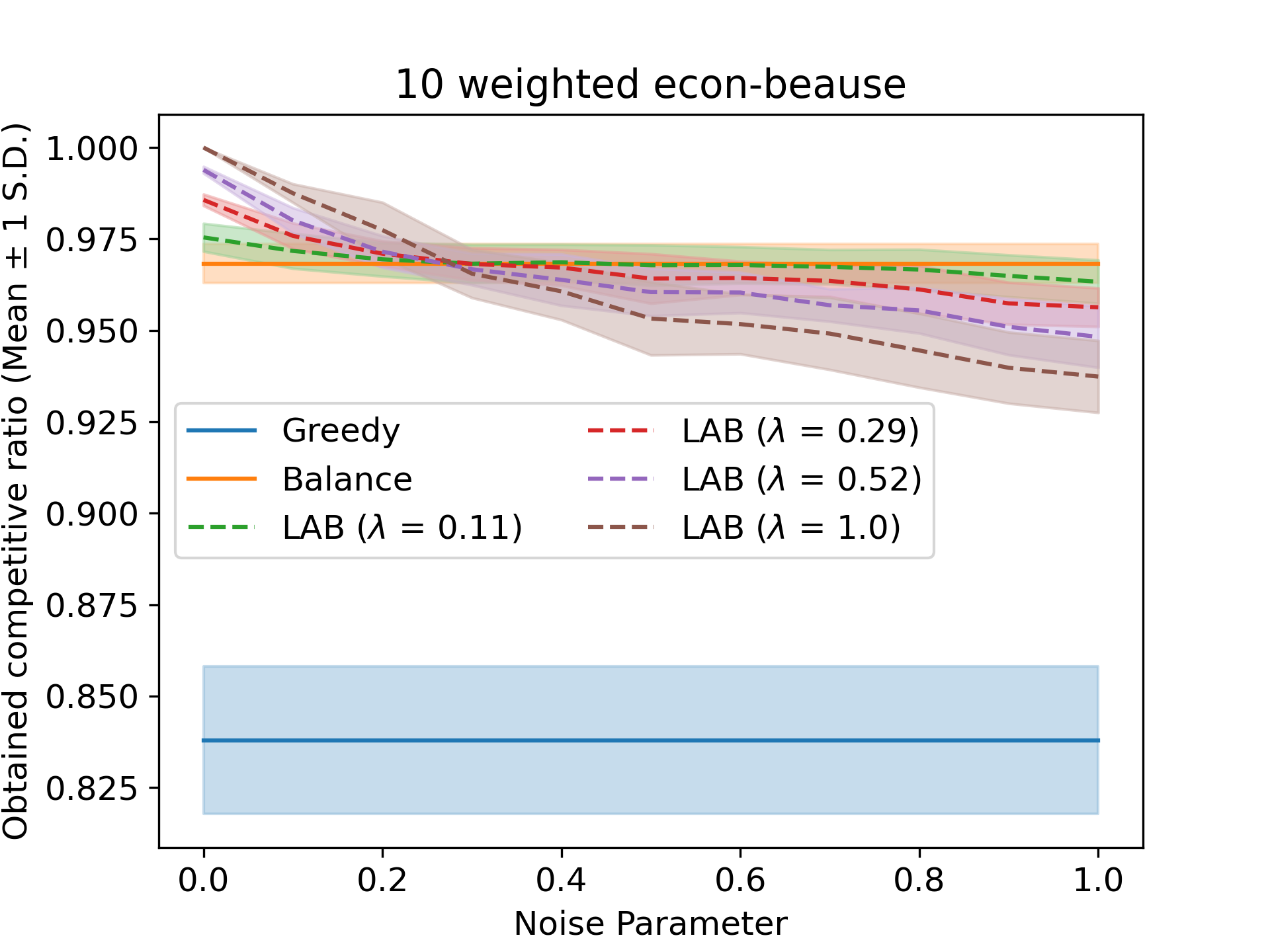}
    \end{subfigure}
    \begin{subfigure}{0.32\textwidth}
        \centering
        \includegraphics[width=\textwidth]{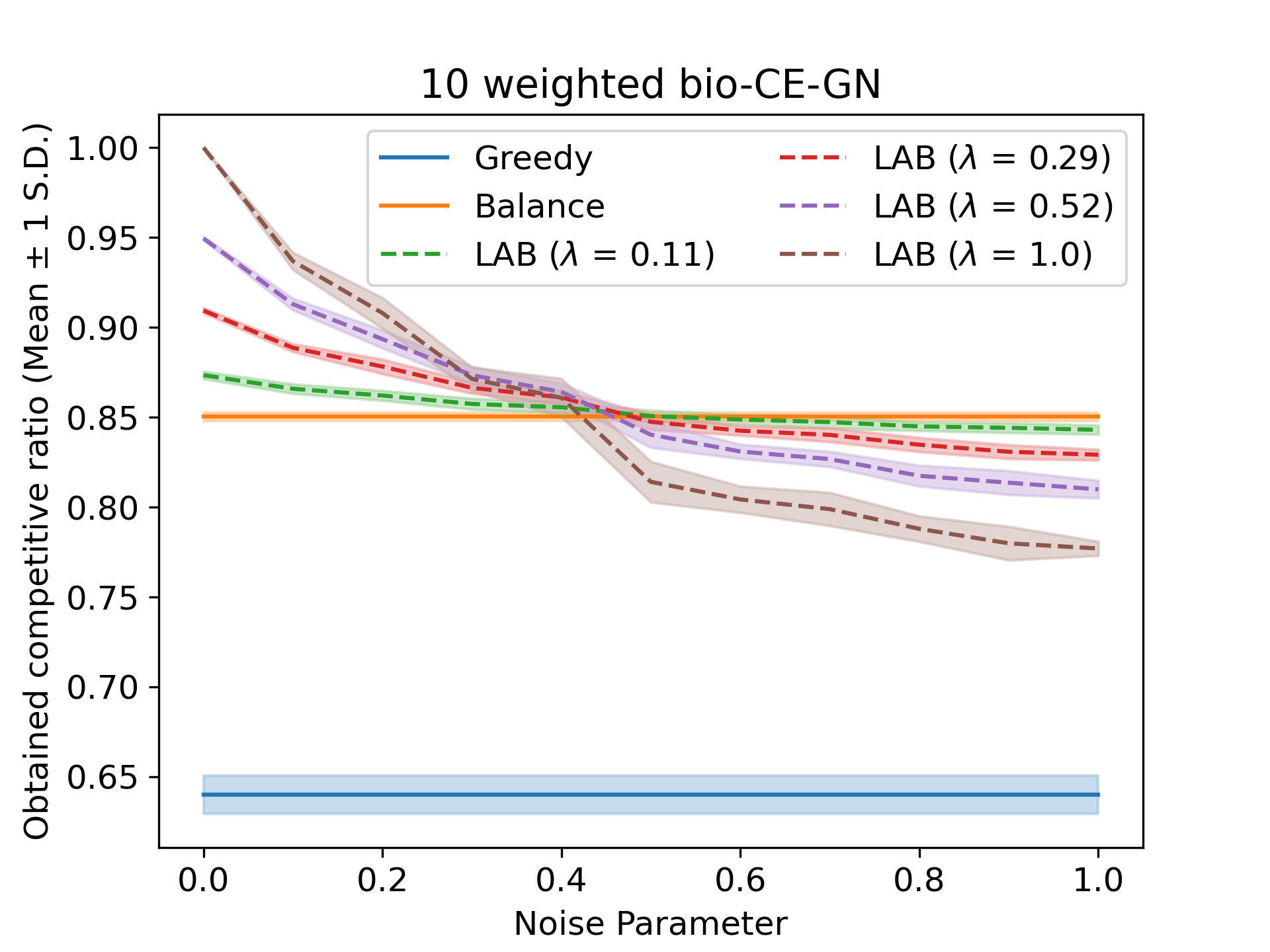}
    \end{subfigure}
    \\
    \begin{subfigure}{0.32\textwidth}
        \centering
        \includegraphics[width=\textwidth]{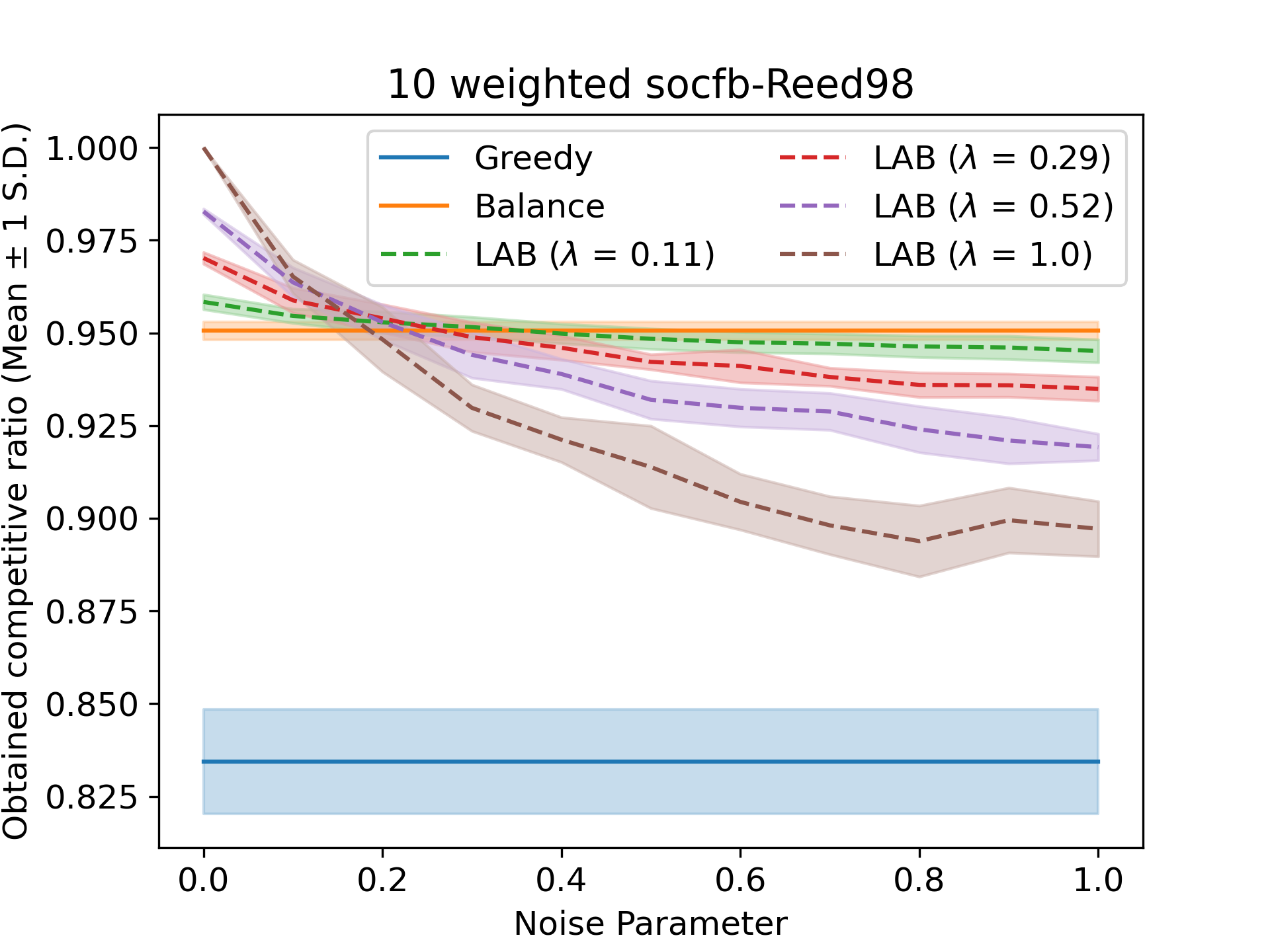}
    \end{subfigure}
    \begin{subfigure}{0.32\textwidth}
        \centering
        \includegraphics[width=\textwidth]{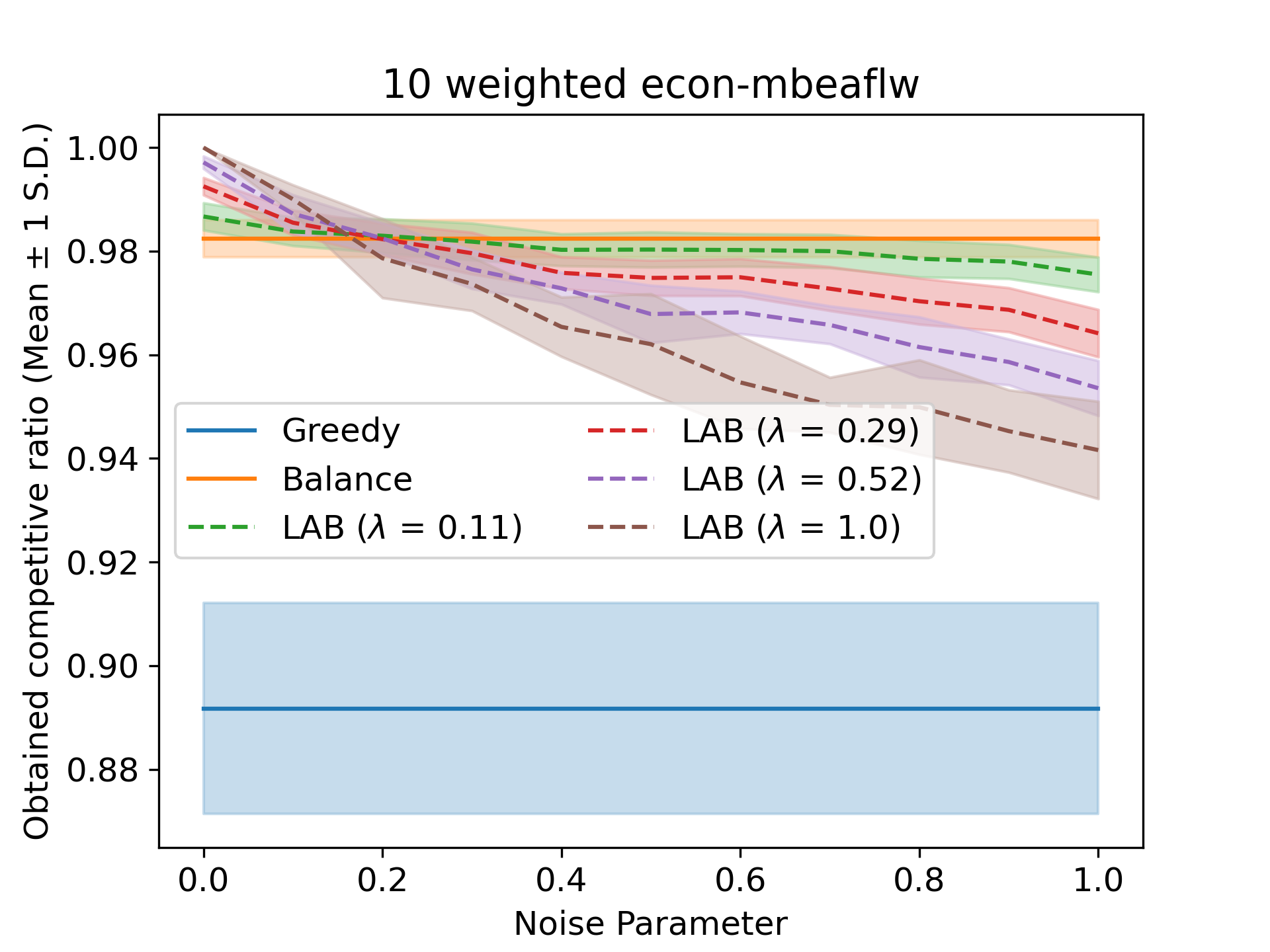}
    \end{subfigure}
    \begin{subfigure}{0.32\textwidth}
        \centering
        \includegraphics[width=\textwidth]{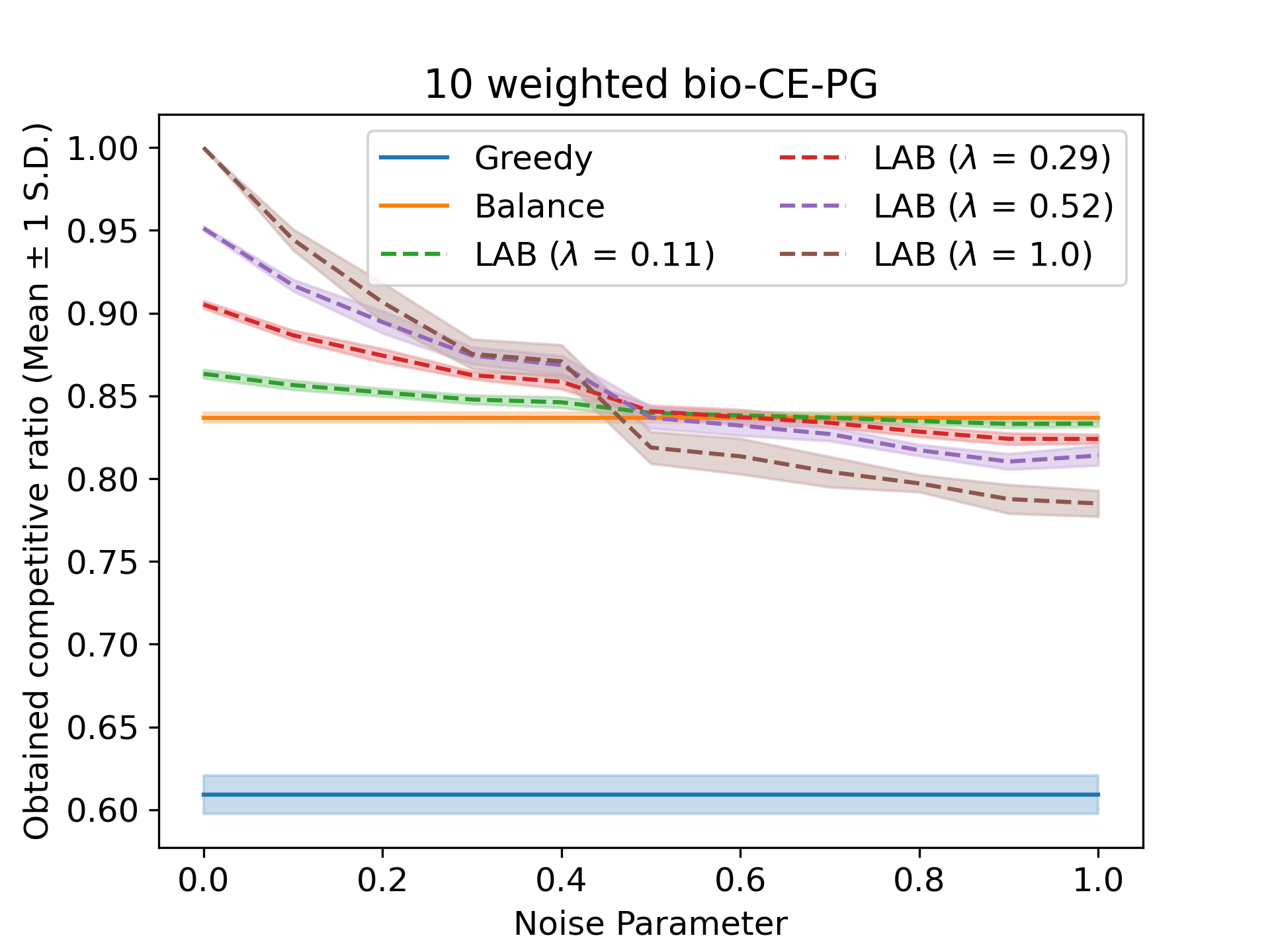}
    \end{subfigure}
    \caption{Empirical results for unweighted real-world graph instances.}
    \label{fig:exp-real-weighted}
\end{figure}

\end{document}